\theoremstyle{thmstyleone}%
\newtheorem{theorem}{Theorem}
\newtheorem{proposition}[theorem]{Proposition}% 
\newtheorem{corollary}[theorem]{Corollary}
\newtheorem{lemma}[theorem]{Lemma}
\theoremstyle{thmstyletwo}%
\newtheorem{example}{Example}%
\newtheorem{remark}{Remark}%
\theoremstyle{thmstylethree}%
\newtheorem{definition}{Definition}%
\numberwithin{equation}{section}
\newcommand{\N}{{\mathbb{N}}}
\newcommand{\R}{{\mathbb{R}}}
\newcommand{\C}{{\mathbb{C}}}
\newcommand{\Q}{{\mathbb{Q}}}
\newcommand{\Z}{{\mathbb{Z}}}
\newcommand{\Bc}{\mathcal{B}}
\newcommand{\Dc}{\mathcal{D}}
\newcommand{\Lc}{\mathcal{L}}
\newcommand{\Sc}{\mathcal{S}}
\newcommand{\Uc}{\mathcal{U}}
\DeclareMathOperator{\im}{im}
\DeclareMathOperator{\rk}{rk}
\newcommand{\ve}[1]{\bm{#1}}
\begin{document}

\title{Data-Based Analysis of Relative Degree and Zero Dynamics in Linear Systems}
\author{Janina Schaa}\address{Institut für Mathematik, Martin-Luther-Universit\"{a}t Halle-Wittenberg, Theodor-Lieser-Straße 5, 06120 Halle (Saale), Germany; \email{janina.schaa@mathematik.uni-halle.de, thomas.berger@mathematik.uni-halle.de}}
\author{Thomas Berger}\sameaddress{1} 
%\orcid{0000-0003-3987-0831}

\begin{abstract}
Data-driven control offers a powerful alternative to traditional model-based methods, particularly when accurate system models are unavailable or prohibitively complex. While existing data-driven control methods primarily aim to construct controllers directly from measured data, our approach uses the available data to assess fundamental system-theoretic properties. This allows the informed selection of suitable control strategies without explicit model identification. We provide data-based conditions characterizing the (vector) relative degree and the stability of the zero dynamics, which are critical for ensuring proper performance of modern controllers. Our results cover both single- and multi-input/output settings of discrete-time linear systems.  We further show how a continuous-time system can be reconstructed from three sampling discretizations obtained via Zero-order Hold at suitable sampling times, thus allowing the extension of the results to the combined data collected from these discretizations.  All results can be applied directly to observed data sets using the proposed algorithms.
\end{abstract}

\subjclass{93B15, 93B20}

\keywords{Data-based control, Behavioral Approach, Data Informativity, Linear Systems, Identifiability}

\maketitle

\section{Introduction}

Data-driven control methods provide a practical alternative to classical model-based approaches, particularly when accurate system models are unavailable or too complex to allow effective controller design \cite{Persis.2023, J.Coulson.2019, Hou.2013}. Using an inaccurate model can lead to poor performance, complete controller failure, or the inability to bound errors \cite{Hou.2013}, motivating the development of controllers designed directly from measured data. Early data-driven control methods concentrated on tuning the parameters of PID controllers~\cite{Ziegler.1942, Desborough.2002}, but poor performance \cite{Ender.1993} motivates modern data-driven strategies. Recent work in data-driven control has largely focused on constructing controllers from either input–state \cite{Bisoffi.2022} or input–output trajectories \cite{J.Berberich.2021}. 

In contrast, in the present work we derive conditions under which a measured data set of input–output sequences contains sufficient information to establish certain system-theoretic properties of the underlying system. Knowledge of properties such as the (vector) relative degree or the stability of zero dynamics is critical for the application of modern control strategies like funnel control \cite{Ilchmann.2002,Berger.2025} or sliding mode control \cite{ShteEdwa14,SiraRamirez.2015}. Hence, based on the data-driven investigation of these properties, we can determine a suitable control law for the system in question.

Our analysis builds on the Fundamental Lemma by Willems et al.~\cite{J.C.Willems.2004}, which provides conditions under which a given set of input-output data suffices to uniquely identify a linear time-invariant system. Using this result, one can define data-dependent matrices which are equivalent to the true system model \cite{C.DePersis.2020}, enabling controller design. We utilize the behavioral approach \cite{PoldWill98} and the concept of most powerful unfalsified models \cite{Heij.1990, Willems.1997} and extend the data informativity framework \cite{Waarde.2020b} to the (vector) relative degree and the stability of the zero dynamics of linear systems. We restrict ourselves to the case of noise-free data.

Specifically, we establish conditions on sampled data that are equivalent to informativity for the system’s relative degree and informativity for the stability of the zero dynamics, and present algorithms to verify these conditions for a given data set.  The structure of the paper is as follows. Section~\ref{secLinearSystems} introduces the central concepts. Section~\ref{secRelativeDegreeandZD} presents the main results on the relative degree, while Section~\ref{secZD} presents the main results on the stability of the zero dynamics and a numerical example of the proposed algorithms. Section~\ref{secContinuousTime} discusses the extension of the results to continuous-time systems, reconstructed from sampling discretizations obtained via Zero-order Hold. Section~\ref{secConclusion} concludes the paper.

\textbf{Notation.}\ 
In the following, let $\N$ denote the natural numbers, $\N_0 = \N \cup\{0\}$, $\Z$ the integers, $\Q$ the rational numbers and $\R$ the real numbers. For a matrix $A\in\R^{n\times m}$, let $A^\top$, $\im A$, $\ker A$ denote its transpose, image and kernel, resp., as well as $A_i$ its  $i$-th row for $i=1,\ldots,n$. For indices $i \leq j$, the notation $A_{i,\hdots,j}$ denotes the submatrix consisting of rows $i$ through $j$, and $A_{i,\hdots,j}^{\top}$ denotes its transpose. The symbol $A^\dagger$ denotes the Moore-Penrose pseudoinverse of $A$. Moreover, for easy reference, the following list collects some notation, which is frequently used throughout the paper, but defined at later points.\\

\begin{tabularx}{\linewidth}{ >{$}r<{$} >{\raggedright\arraybackslash}X } \toprule 
\mathcal{B} & behavior \\ 
\mathcal{L}^w & class of complete discrete-time LTI systems \\ 
\mathcal{C}^{m,p} & class of continuous-time LTI systems \\ 
(\mathbf{A},\mathbf{B},\mathbf{C},\mathbf{D}) & state-space realization \\ 
l(\mathcal{B}) & lag of the behavior $\mathcal{B}$ \\
m(\mathcal{B}) & input dimension of the behavior $\mathcal{B}$ \\ 
p(\mathcal{B}) & output dimension of the behavior $\mathcal{B}$ \\ 
n(\mathcal{B}) & McMillan degree of the behavior $\mathcal{B}$ \\ 
r(\mathcal{B}) & vector relative degree of the behavior $\mathcal{B}$ \\ 
\mathcal{B}_{\mathrm{MPUM}} & most powerful unfalsified model \\ 
\mathcal{B}_{\mathrm{ZD}} & zero dynamics \\ 
\mathcal{O}_L(\mathbf{C},\mathbf{A}) & truncated observability matrix of depth $L$ \\
\mathbf{H} & impulse response \\
\mathcal{H}_k(\ve{f}) & Hankel matrix generated by $\ve{f}$ \\
\mathcal{T}_k(\ve{f}) & Toeplitz matrix generated by $\ve{f}$\\ 
\bottomrule \end{tabularx}
 
\section{Preliminaries: Linear systems} \label{secLinearSystems}

\subsection{Behavioral approach and integer invariants}\label{Ssec:behaviors}

We recall some basic concepts of the behavioral approach from~\cite{Willems.1986a}. 

A dynamical system $\mathcal{S} = (T,W,\mathcal{B})$ is determined by the set of possible trajectories $\mathcal{B} \subseteq W^T = \{ f: T \rightarrow W \}$, called the behavior of the system, where $T \subseteq \R$ is a time set and $W$ is the signal alphabet. Notice that this approach does not differ between input and output variables. Initially, we will consider the case of discrete-time systems, hence $T= \N_0$, and a finite dimensional signal space $W = \R^w$. Denote the restriction of the behavior to a finite interval $[a,b]$ as 
$$\mathcal{B}_{[a,b]} \coloneqq \{ \ve{\tilde{w}_r} \in W^{b-a+1} :\ \exists\, \ve{\tilde{w}} \in \mathcal{B} \text{ with } \ve{\tilde{w}}_{\vert [a,b]} = \ve{\tilde{w}_r} \}.$$

Let $\sigma$ denote the backwards shift operator on the set $W^{\N_0}$, defined for each $w \in W^{\N_0}$ by $(\sigma w)(t) = w(t+1)$ for every $t \in \N_0$. A dynamical system $\mathcal{S} = (\N_0, \R^w, \mathcal{B})$ is called time-invariant, if the behavior $\mathcal{B}$ is invariant under application of the shift operator, $\sigma \mathcal{B} \subseteq \mathcal{B}$, and called linear, if the behavior $\mathcal{B}$ is a vector space. $\mathcal{S}$ is called complete, if the behavior $\mathcal{B}$ is closed in the topology of pointwise convergence. Denote the set of linear, time-invariant, complete systems with a $w$-dimensional signal space by $\mathcal{L}^w$. Since the system is determined by its behavior, we also write $\mathcal{B} \in \mathcal{L}^w$ instead of $\mathcal{S} = (\N_0, \R^w, \mathcal{B}) \in \mathcal{L}^w$, and the terms system and behavior will be used interchangeably.

Recall the concept of free subspaces from~\cite{J.C.Willems.2004}. 
The coordinates in a free subspace of a behavior are not restricted by the past of the trajectory, since any future sequence is a valid continuation of the trajectory. This property does also describe input variables, as they cannot be predicted by the past behavior of a causal system. The maximal dimension of a free subspace of the behavior $\mathcal{B}$ is denoted by $m(\mathcal{B})$ and is called the input cardinality of the system. The output cardinality of the system $\mathcal{B} \in \mathcal{L}^w$ is denoted as $p(\mathcal{B}) = w - m(\mathcal{B})$.

As another concept, recall that linear, time invariant and complete systems admit a state space representation as shown in \cite{Willems.1986a}. That is, every system $\mathcal{S} \in \mathcal{L}^w$ can be represented as an input/state/output system parametrized by the matrices $\ve{A} \in \R^{n \times n}, \ve{B} \in \R^{n \times m}, \ve{C} \in \R^{p \times n}$ and $\ve{D} \in \R^{p \times m}$, and denoted by $(\ve{A},\ve{B},\ve{C},\ve{D})$, where $W = \R^{m+p}, m = m(\mathcal{B})$ and $p = p(\mathcal{B})$, by defining the behavior $\mathcal{B}$ as the set of trajectories $(\ve{u},\ve{y}) \in (\R^{m+p})^{\N_0}$ satisfying the system equations
\begin{equation} \label{isosystem}
\begin{aligned}
\ve{x}(t+1) &= \ve{A}\ve{x}(t) + \ve{B}\ve{u}(t)\\
\ve{y}(t) &=  \ve{C}\ve{x}(t) +  \ve{D}\ve{u}(t)
\end{aligned}
\end{equation}
for some state trajectory $\ve{x} \in (\R^n)^{\N_0}$. Such a representation is called minimal, if the dimension $n$ of the state space is minimal in the set of possible state space representations of the behavior. This minimal state space dimension of the behavior $\mathcal{B}$ is denoted by $n(\mathcal{B})$ and is called the McMillan degree of the system.

\begin{definition} 
A system $\mathcal{B} \in \mathcal{L}^w$ is called controllable, if for any finite past trajectory $\ve{w_{p}} \in \mathcal{B}_{[0,T]}$ for some $T \in \N$, and any future trajectory $\ve{w_f} \in \mathcal{B}$, there is a time distance $\tilde{T} \in \N$ such that there exists a common trajectory $\ve{v} \in \mathcal{B}$ satisfying $\ve{v}_{\vert [0,T]} = \ve{w_p}$ and $\ve{v}(t) = \ve{w_f}(t-T-\tilde{T})$ for any $t \geq T + \tilde{T}$.
\end{definition}

Note that not every state space representation of a controllable behavior is state controllable (that is, $\rk  \begin{bmatrix}  \ve{B} &  \ve{AB} & \hdots &  \ve{A}^{n-1}  \ve{B} \end{bmatrix} = n$), since it is possible that there are additional uncontrollable and unobservable states that do not change the input/output behavior of the system. Since these can be canceled without changing the input/output behavior of the system, every minimal state space representation of a controllable behavior is state controllable, as discussed in~\cite{Markovsky.2021}. Furthermore, the following connection between the concepts of minimality and observability is shown in~\cite{PoldWill98}.

\begin{lemma} \label{minimalObservable}
Every minimal state space representation $(\ve{A},\ve{B},\ve{C},\ve{D})$ of a system $\mathcal{B} \in \mathcal{L}^w$ is observable, that is $\rk \begin{bmatrix}  \ve{C}^{\top} & ( \ve{CA})^{\top} & \hdots & ( \ve{CA}^{n(\mathcal{B})-1})^{\top} \end{bmatrix}^{\top} = n(\mathcal{B})$.
\end{lemma}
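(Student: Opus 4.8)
The plan is to argue by contraposition: I will show that if a state space representation $(\ve{A},\ve{B},\ve{C},\ve{D})$ with state dimension $n$ fails to be observable, then it cannot be minimal, since one can strip off the unobservable directions and obtain a strictly smaller representation of the same behavior. Throughout I write $n = n(\mathcal{B})$ and abbreviate the observability matrix by $\mathcal{O} = \begin{bmatrix} \ve{C}^\top & (\ve{CA})^\top & \cdots & (\ve{CA}^{n-1})^\top \end{bmatrix}^\top$.

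First I would analyse the unobservable subspace $\mathcal{N} \coloneqq \ker \mathcal{O} = \bigcap_{k=0}^{n-1} \ker(\ve{C}\ve{A}^k)$. By the Cayley--Hamilton theorem every power $\ve{A}^k$ with $k \geq n$ is a linear combination of $\ve{I}, \ve{A}, \ldots, \ve{A}^{n-1}$, so in fact $\mathcal{N} = \bigcap_{k \geq 0} \ker(\ve{C}\ve{A}^k)$. From this description it is immediate that $\mathcal{N}$ is $\ve{A}$-invariant: if $\ve{x} \in \mathcal{N}$ then $\ve{C}\ve{A}^k(\ve{A}\ve{x}) = \ve{C}\ve{A}^{k+1}\ve{x} = 0$ for all $k$, hence $\ve{A}\ve{x} \in \mathcal{N}$. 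I now assume, for contradiction, that the representation is not observable, i.e.\ $\rk \mathcal{O} < n$, so that $q \coloneqq \dim \mathcal{N} > 0$.

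Next I would pass to adapted coordinates. Choosing a basis of $\R^n$ whose last $q$ vectors span $\mathcal{N}$ and applying the corresponding similarity transformation, the $\ve{A}$-invariance of $\mathcal{N}$ forces a block-lower-triangular form, while the inclusion $\mathcal{N} \subseteq \ker \ve{C}$ forces the output matrix to vanish on the last block:
\begin{equation*}
\tilde{\ve{A}} = \begin{bmatrix} \ve{A}_{11} & \ve{0} \\ \ve{A}_{21} & \ve{A}_{22} \end{bmatrix}, \qquad \tilde{\ve{B}} = \begin{bmatrix} \ve{B}_{1} \\ \ve{B}_{2} \end{bmatrix}, \qquad \tilde{\ve{C}} = \begin{bmatrix} \ve{C}_{1} & \ve{0} \end{bmatrix},
\end{equation*}
where $\ve{A}_{11} \in \R^{(n-q)\times(n-q)}$. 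A similarity transformation leaves the behavior unchanged, so it suffices to show that the reduced tuple $(\ve{A}_{11}, \ve{B}_1, \ve{C}_1, \ve{D})$, whose state dimension is $n - q < n$, generates the same behavior $\mathcal{B}$, which contradicts the minimality of $n$.

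The key computation is that the block-triangular structure renders the second state component invisible at the output. Since $\tilde{\ve{A}}^k$ is again block-lower-triangular with $(1,1)$-block $\ve{A}_{11}^k$ and zero $(1,2)$-block, one obtains $\tilde{\ve{C}}\tilde{\ve{A}}^k = \begin{bmatrix} \ve{C}_1 \ve{A}_{11}^k & \ve{0}\end{bmatrix}$ for every $k \geq 0$. Writing the input/output map explicitly as $\ve{y}(t) = \tilde{\ve{C}}\tilde{\ve{A}}^t \tilde{\ve{x}}(0) + \sum_{j=0}^{t-1} \tilde{\ve{C}}\tilde{\ve{A}}^{t-1-j}\tilde{\ve{B}}\,\ve{u}(j) + \ve{D}\ve{u}(t)$, both the free response $\tilde{\ve{C}}\tilde{\ve{A}}^t\tilde{\ve{x}}(0) = \ve{C}_1 \ve{A}_{11}^t \ve{x}_1(0)$ and the Markov parameters $\tilde{\ve{C}}\tilde{\ve{A}}^k \tilde{\ve{B}} = \ve{C}_1 \ve{A}_{11}^k \ve{B}_1$ depend only on the first block. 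I expect this to be the main obstacle: one must verify that the two representations agree not merely on the zero-state (impulse) response but on the entire behavior, i.e.\ across all admissible initial states. The fact that the free response factors through $\ve{x}_1(0)$ alone is what closes this gap, so that as $\ve{x}_1(0)$ ranges over $\R^{n-q}$ and $\ve{u}$ over all inputs, the reduced representation produces exactly the same set of input/output trajectories. This yields a representation of $\mathcal{B}$ of dimension $n - q < n = n(\mathcal{B})$, contradicting minimality; hence $\mathcal{N} = \{\ve{0}\}$ and $\rk \mathcal{O} = n(\mathcal{B})$.
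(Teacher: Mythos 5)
Your proof is correct. Note that the paper itself does not prove this lemma at all: it is stated as a known result imported from the cited reference \cite{PoldWill98}, so there is no in-paper argument to compare against. What you have written is the standard Kalman observability decomposition argument — Cayley--Hamilton to get $\ve{A}$-invariance of $\ker\mathcal{O}$, adapted coordinates yielding the block structure with $\tilde{\ve{C}}\tilde{\ve{A}}^k = \begin{bmatrix}\ve{C}_1\ve{A}_{11}^k & \ve{0}\end{bmatrix}$, and the observation that both the free response and the Markov parameters factor through the first block, so the truncated tuple $(\ve{A}_{11},\ve{B}_1,\ve{C}_1,\ve{D})$ realizes the same behavior with strictly smaller state dimension, contradicting minimality — which is essentially the proof found in the cited textbook. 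You also correctly flag and close the one genuine subtlety: behavioral equality requires agreement over all initial states, not just the zero-state response, and this is exactly what the factorization of the free response through $\ve{x}_1(0)$ delivers.
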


Another important integer invariant of a dynamical system $\mathcal{B} \in \mathcal{L}^w$ is its lag $l(\mathcal{B})$, defined as the smallest integer $L \in \N_0$ such that the truncated observability matrix $\mathcal{O}_L( \ve{C}, \ve{A}) = \begin{bmatrix}  \ve{C}^{\top} & ( \ve{CA})^{\top} & \hdots & ( \ve{CA}^{L-1 })^{\top} \end{bmatrix}^{\top}$ of a minimal state space representation $(\ve{A},\ve{B},\ve{C},\ve{D})$ of suitable sizes has full rank $n(\mathcal{B})$, see~\cite{IvanMarkovsky.2008}. Note that this notion is well-defined, as it is straightforward to see that two differing minimal representations yield the same lag.

The following lemma from \cite{IvanMarkovsky.2008} shows that the initial state of a fixed minimal representation of a system is uniquely determined by the observation of the input/output behavior of the system for at least $l(\mathcal{B})$ time steps. This enables us to characterize the state space by using input and output sequences of length $l(\mathcal{B})$ and will therefore play a crucial role for identifying systems from their input/output behavior.

\begin{lemma} \label{InitialCondition} 
Let $\mathcal{B} \in \mathcal{L}^w$ be a system, fix a minimal state space representation and an initializing sequence $(\ve{u},\ve{y}) \in \mathcal{B}_{[0,T-1]}$ of length $T \geq l(\mathcal{B})$. Then the resulting initial state $\ve{x}(T) $ of the state space representation is uniquely determined by the initializing sequence.
\end{lemma}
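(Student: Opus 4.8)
The plan is to exploit the observability of the fixed minimal representation $(\ve{A},\ve{B},\ve{C},\ve{D})$, which is guaranteed by Lemma~\ref{minimalObservable}, and to reduce the claim to a rank condition on the truncated observability matrix. First I would compare two candidate state trajectories: since $(\ve{u},\ve{y}) \in \mathcal{B}_{[0,T-1]}$, there exists at least one state trajectory of the representation producing this segment, and I would suppose $\ve{x}$ and $\ve{\tilde{x}}$ are any two such trajectories, so that both satisfy \eqref{isosystem} with the \emph{same} input $\ve{u}$ and output $\ve{y}$ on $[0,T-1]$. By linearity, their difference $\ve{z} \coloneqq \ve{x} - \ve{\tilde{x}}$ then satisfies the autonomous recursion $\ve{z}(t+1) = \ve{A}\ve{z}(t)$ together with the output constraint $\ve{C}\ve{z}(t) = 0$ for $t = 0,\ldots,T-1$, because the common terms $\ve{B}\ve{u}(t)$ and $\ve{D}\ve{u}(t)$ cancel.

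Iterating the recursion gives $\ve{z}(t) = \ve{A}^t \ve{z}(0)$, so the output constraints read $\ve{C}\ve{A}^t \ve{z}(0) = 0$ for $t = 0,\ldots,T-1$. Stacking these yields $\mathcal{O}_T(\ve{C},\ve{A})\,\ve{z}(0) = 0$. The next step is to argue that $\mathcal{O}_T(\ve{C},\ve{A})$ has full column rank $n(\mathcal{B})$, and this is exactly where the hypothesis $T \geq l(\mathcal{B})$ enters. By the definition of the lag, $\mathcal{O}_{l(\mathcal{B})}(\ve{C},\ve{A})$ already attains rank $n(\mathcal{B})$; since for $T \geq l(\mathcal{B})$ the matrix $\mathcal{O}_T(\ve{C},\ve{A})$ contains $\mathcal{O}_{l(\mathcal{B})}(\ve{C},\ve{A})$ as a block of its rows, its rank is at least $n(\mathcal{B})$, and as it has exactly $n(\mathcal{B})$ columns it has full column rank. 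Consequently $\ve{z}(0) = 0$.

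Finally, I would propagate this forward: from $\ve{z}(0) = 0$ and $\ve{z}(t) = \ve{A}^t\ve{z}(0)$ we obtain $\ve{z}(t) = 0$ for every $t \in \N_0$, in particular $\ve{z}(T) = 0$, i.e. $\ve{x}(T) = \ve{\tilde{x}}(T)$. Hence any two state trajectories consistent with the initializing sequence coincide at time $T$, which is precisely the asserted uniqueness of the resulting state $\ve{x}(T)$.

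I expect the main obstacle to be the rank argument in the middle step, namely making the passage from the defining property of the lag (the \emph{smallest} $L$ at which $\mathcal{O}_L(\ve{C},\ve{A})$ first attains rank $n(\mathcal{B})$) to full column rank of $\mathcal{O}_T(\ve{C},\ve{A})$ for \emph{all} $T \geq l(\mathcal{B})$ fully rigorous; the submatrix observation together with Lemma~\ref{minimalObservable} settles this cleanly. The remaining observation, that in fact $\ve{z}(0)$ and hence every $\ve{z}(t)$ vanishes rather than merely $\ve{z}(T)$, is a minor strengthening that falls out of the same argument and shows that the entire state trajectory on $[0,T]$ is pinned down by the data.
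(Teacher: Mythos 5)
Your proof is correct. The paper itself states this lemma without proof, citing it from~\cite{IvanMarkovsky.2008}; your argument --- forming the difference of two candidate state trajectories, observing it is an unobservable autonomous solution, and using that $\mathcal{O}_T(\ve{C},\ve{A})$ inherits full column rank $n(\mathcal{B})$ from $\mathcal{O}_{l(\mathcal{B})}(\ve{C},\ve{A})$ whenever $T \geq l(\mathcal{B})$ --- is exactly the standard observability argument underlying the cited result, with the rank step and the propagation to $\ve{z}(T)=\ve{0}$ handled cleanly.
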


Next we characterize under which conditions the behaviors defined by different state space representations are equivalent.

\begin{definition}
The impulse response $\ve{H}$ of a state space representation $(\ve{A},\ve{B},\ve{C},\ve{D})$ is defined as the sequence $\big(\ve{D}, (\ve{CA}^{k}\ve{B})_{k \in \N_0}\big)$. 

Two state space representations $(\ve{A},\ve{B},\ve{C},\ve{D})$ and $(\ve{A'},\ve{B'}, \ve{C'},\ve{D'})$ are called behaviorally equivalent, if they induce the same behavior $\mathcal{B}$, meaning that there are integers $n,n' \in \N_0$ such that for each trajectory $(\ve{u},\ve{y}):\N_0 \rightarrow \R^{m+p}$ that is contained in $\mathcal{B}$, there are two state trajectories $\ve{x}: \N_0 \rightarrow \R^n$ and $\ve{x'}: \N_0 \rightarrow \R^{n'}$ such that $(\ve{u},\ve{x},\ve{y})$ satisfies (\ref{isosystem}) for the matrices $\ve{A},\ve{B},\ve{C},\ve{D}$ and $(\ve{u},\ve{x'},\ve{y})$ satisfies (\ref{isosystem}) for the matrices $\ve{A'},\ve{B'},\ve{C'},\ve{D'}$. 
\end{definition}

Observe that it is possible for representations with differing state space dimensions to be behaviorally equivalent. This stresses the importance of the concept of minimal representations. The following characterization of behavioral equivalence is a corollary of the decomposition of linear behaviors into controllable and autonomous parts developed in~\cite[Ch.5.2]{PoldWill98}.

\begin{lemma} \label{behaviorallyEquivalent}
Two behaviors $\mathcal{B}, \mathcal{B}' \in \mathcal{L}^w$ are equal if, and only if, for any two state space representations of the respective systems with state space dimensions $n, n' \in \N_0$, the impulse responses are equal, and for each initial state $\ve{x}(0) \in \R^{n}$ there exists an initial state $\ve{x'}(0) \in \R^{n'}$ such that the zero input responses of the systems, starting at the respective initial states, are equal.
\end{lemma}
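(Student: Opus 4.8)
The plan is to exploit the superposition structure of linear input/state/output systems. For a representation $(\ve{A},\ve{B},\ve{C},\ve{D})$ of state dimension $n$, unrolling \eqref{isosystem} gives, for every input $\ve{u}$ and initial state $\ve{x}(0)$,
\begin{equation*}
\ve{y}(t) = \ve{C}\ve{A}^{t}\ve{x}(0) + \sum_{k=0}^{t-1}\ve{C}\ve{A}^{t-1-k}\ve{B}\,\ve{u}(k) + \ve{D}\,\ve{u}(t),\qquad t\in\N_0 .
\end{equation*}
The first summand is the zero-input response $\ve{y}_{zi}^{\ve{x}(0)}(t)=\ve{C}\ve{A}^{t}\ve{x}(0)$, and the remaining convolution is the zero-state response $\ve{y}_{zs}^{\ve{u}}$, which depends only on $\ve{u}$ and on the impulse response $\ve{H}=(\ve{D},(\ve{C}\ve{A}^{k}\ve{B})_{k\in\N_0})$. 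Writing $\mathcal{Y}_{zi}=\{\,\ve{y} : (\ve{0},\ve{y})\in\mathcal{B}\,\}$ for the set of zero-input responses, the behavior decomposes as $\mathcal{B}=\{\,(\ve{u},\ve{y}_{zs}^{\ve{u}}+\ve{z}): \ve{u}\in(\R^{m})^{\N_0},\ \ve{z}\in\mathcal{Y}_{zi}\,\}$, and analogously for $\mathcal{B}'$. I read the zero-input requirement as the equality $\mathcal{Y}_{zi}=\mathcal{Y}_{zi}'$ of the two sets of zero-input responses; the one-directional phrasing in the statement is to be understood symmetrically, since applying it with the roles of the two systems exchanged yields the reverse inclusion.

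The ``if'' direction is then immediate from this decomposition: given $\ve{H}=\ve{H}'$ and $\mathcal{Y}_{zi}=\mathcal{Y}_{zi}'$, any $(\ve{u},\ve{y})\in\mathcal{B}$ has $\ve{y}=\ve{y}_{zs}^{\ve{u}}+\ve{z}$ with $\ve{z}\in\mathcal{Y}_{zi}$; since $\ve{H}=\ve{H}'$ the zero-state response $\ve{y}_{zs}^{\ve{u}}$ coincides in both representations, and since $\ve{z}\in\mathcal{Y}_{zi}'$ the pair $(\ve{u},\ve{y})$ also lies in $\mathcal{B}'$, giving $\mathcal{B}\subseteq\mathcal{B}'$ and, symmetrically, equality.

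For the ``only if'' direction I would assume $\mathcal{B}=\mathcal{B}'$. The zero-input requirement follows at once: setting $\ve{u}=\ve{0}$ kills the zero-state response, so $\mathcal{Y}_{zi}=\{\ve{y}:(\ve{0},\ve{y})\in\mathcal{B}\}=\{\ve{y}:(\ve{0},\ve{y})\in\mathcal{B}'\}=\mathcal{Y}_{zi}'$. The real work is showing $\ve{H}=\ve{H}'$, which I would do column by column. Fix an input channel $i$ and, for a delay $s\in\N_0$, feed the shifted unit impulse $\ve{v}^{(i,s)}$ (equal to the $i$th standard basis vector at time $s$ and $\ve{0}$ otherwise). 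Its zero-state response in representation $1$ is the $s$-delayed $i$th column of $\ve{H}$, and in representation $2$ the $s$-delayed $i$th column of $\ve{H}'$; both are genuine trajectories (take $\ve{x}(0)=\ve{0}$), hence both lie in $\mathcal{B}=\mathcal{B}'$. Because $\mathcal{B}'$ is a vector space, their difference $(\ve{0},\ve{z})$, with $\ve{z}$ the $s$-delayed $i$th column of $\ve{H}-\ve{H}'$, lies in $\mathcal{B}'$, so $\ve{z}$ is a zero-input response of representation $2$, i.e.\ $\ve{z}(t)=\ve{C}'\ve{A}'^{\,t}\ve{x}''$ for some $\ve{x}''\in\R^{n'}$.

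The main obstacle is that representation $2$ need not be minimal, so I cannot invoke observability directly. I resolve this by choosing the delay large, $s\ge n'$: then $\ve{z}$ vanishes on $\{0,\dots,n'-1\}$, that is $\mathcal{O}_{n'}(\ve{C}',\ve{A}')\ve{x}''=\ve{0}$. Since the unobservable subspace $\bigcap_{t\ge0}\ker(\ve{C}'\ve{A}'^{\,t})$ is $\ve{A}'$-invariant and, by the Cayley--Hamilton theorem, coincides with $\ker\mathcal{O}_{n'}(\ve{C}',\ve{A}')$, it follows that $\ve{C}'\ve{A}'^{\,t}\ve{x}''=\ve{0}$ for all $t$, hence $\ve{z}\equiv\ve{0}$. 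Undoing the delay shows that the $i$th columns of $\ve{H}$ and $\ve{H}'$ agree; as $i$ was arbitrary, $\ve{H}=\ve{H}'$. This closes the forward direction and, together with the decomposition above, the proof.
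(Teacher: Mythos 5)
The paper states this lemma explicitly without proof (``The following characterization of behavioral equivalence is stated without proof.''), so there is no in-paper argument to compare yours against; judged on its own merits, your proof is correct and supplies what the paper omits. The superposition decomposition $\mathcal{B}=\{(\ve{u},\ve{y}_{zs}^{\ve{u}}+\ve{z}):\ve{z}\in\mathcal{Y}_{zi}\}$ gives the ``if'' direction immediately, and your ``only if'' direction is sound. Two features are worth recording. First, your symmetric reading of the zero-input condition is not optional: taken literally (only every zero-input response of the first representation must be matched by one of the second), the ``if'' direction is false. For instance, let $\mathcal{B}$ be induced by $\ve{y}(t)=\ve{u}(t)$ with state dimension $0$, and $\mathcal{B}'$ by the scalar representation $\ve{A'}=1$, $\ve{B'}=0$, $\ve{C'}=1$, $\ve{D'}=1$; every representation of either behavior has impulse response $(1,0,0,\dots)$, and the only zero-input response $\ve{0}$ of any representation of $\mathcal{B}$ is matched by choosing $\ve{x'}(0)=\ve{0}$, yet $\mathcal{B}\subsetneq\mathcal{B}'$, since $\mathcal{B}'$ contains the zero-input trajectory with constant output $1$. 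So the lemma is only true under your symmetric interpretation, and your proof establishes exactly that version. Second, since the lemma quantifies over arbitrary --- not necessarily minimal, hence not necessarily observable --- representations, one cannot conclude $\ve{x''}=\ve{0}$ from $\mathcal{O}_{n'}(\ve{C'},\ve{A'})\ve{x''}=\ve{0}$; your device of delaying the impulse by $s\ge n'$ and invoking Cayley--Hamilton, so that vanishing of the first $n'$ samples of a zero-input response of the second representation forces that response to vanish identically, is precisely the right repair for this obstacle and is executed correctly.
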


Another equivalent way of characterizing behaviors $\mathcal{B} \in \mathcal{L}^w$ connected to autoregressive models of time series are kernel representations given by a polynomial matrix $\ve{R}(s) \in \R^{g \times q}[s]$ for some $g,q \in \N$ via $$ \mathcal{B} = \ker \ve{R}(\ve{\tilde{\sigma}}) $$ where $\ve{\tilde{\sigma}}$ denotes the shift operator, $\ve{\tilde{\sigma}}(\ve{f})(t) = \ve{f}(t-1)$ for all $t \in \N$ and all $\ve{f}:\N_0 \rightarrow \R^d$, as shown in \cite{Willems.1986a}.

All $\mathcal{B} \in \mathcal{L}^w$ have a kernel representation, and every behavior defined by a kernel representation is contained in $\mathcal{L}^w$, see \cite{Willems.1986a}. It is straightforward to show that the definition of the lag, which we adopted from~\cite{IvanMarkovsky.2008}, is equivalent to other possible characterizations from~\cite{Maupong.2017} and~\cite{J.C.Willems.2004}, respectively. 

\begin{proposition} \label{CharakterisierungenLag}
Let $\mathcal{B} \in \mathcal{L}^w$. 
The following are equivalent characterizations of the lag $l(\mathcal{B})$.
\begin{itemize}
\item[(i)] $l(\mathcal{B})$ is the minimal integer $L \geq 0$ such that for any sequence $\ve{v} \in (\R^w)^{\N_0}$, if $\ve{v}_{\vert [t, t+L]} \in \mathcal{B}_{[0,L]}$ for all $t \in \N_0$, then $\ve{v} \in \mathcal{B}$.
\item[(ii)] $l(\mathcal{B})$ is the minimal integer $L \geq 0$ such that there exists a kernel representation $\mathcal{B} = \ker(\ve{R}(\ve{\tilde{\sigma}}))$ and all entries of $\ve{R}(s)$ have degree $\leq L$. 
\end{itemize}
\end{proposition}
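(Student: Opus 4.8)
The plan is to write $L_0$ for the integer furnished by the observability-matrix definition of $l(\mathcal{B})$, $L_1$ for the minimal integer in (i), and $L_2$ for the minimal integer in (ii), and to prove $L_0 = L_1 = L_2$ by treating the two equivalences (i)$\Leftrightarrow$(ii) and (definition)$\Leftrightarrow$(i) separately. I would begin with (i)$\Leftrightarrow$(ii), which is a duality between difference equations and observation windows. If $\mathcal{B} = \ker \ve{R}(\tilde{\sigma})$ with all entries of $\ve{R}(s)$ of degree at most $L$, then the equation $\ve{R}(\tilde{\sigma})\ve{v} = 0$ read at a fixed time involves only $L+1$ consecutive samples of $\ve{v}$; hence membership in $\mathcal{B}$ is decidable window-by-window, which is precisely condition (i) at level $L$, giving $L_1 \le L_2$. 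Conversely, $L_1$-completeness means that $\mathcal{B}$ is determined by the finite-dimensional subspace $\mathcal{B}_{[0,L_1]} \subseteq (\R^w)^{L_1+1}$. Collecting a basis of the annihilator of this subspace and reading each annihilating functional $\ve{v} \mapsto \sum_{k=0}^{L_1} \ve{r}_k^\top \ve{v}(k)$ as a polynomial row (reversing the order of the coefficients to match $\tilde{\sigma}$) produces a matrix $\ve{R}(s)$ of degree at most $L_1$; the completeness hypothesis is exactly what is needed to upgrade the inclusion $\ker \ve{R}(\tilde{\sigma}) \supseteq \mathcal{B}$ to an equality, so $L_2 \le L_1$ and $L_1 = L_2$.

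Next I would prove that the definition implies (i), i.e.\ $L_1 \le L_0$. Fix a minimal state-space representation, which by Lemma~\ref{minimalObservable} is observable, and recall that $\mathcal{O}_{L_0}(\ve{C},\ve{A})$ has full column rank by the definition of $L_0$. Let $\ve{v} = (\ve{u},\ve{y}) \in (\R^w)^{\N_0}$ satisfy $\ve{v}_{\vert[t,t+L_0]} \in \mathcal{B}_{[0,L_0]}$ for every $t$. Each such window has length $L_0+1 \ge l(\mathcal{B})$, so by Lemma~\ref{InitialCondition} it is generated by a unique state; denote by $\ve{x}(t)$ the state associated with the window starting at $t$. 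Consecutive windows overlap in $L_0$ samples, and full rank of $\mathcal{O}_{L_0}$ means that $L_0$ consecutive input/output samples determine the state uniquely. Comparing the one-step propagation $\ve{A}\ve{x}(t)+\ve{B}\ve{u}(t)$ with the state $\ve{x}(t+1)$ read off from the next window shows that both reproduce the same output over the overlap and hence coincide; thus $(\ve{x},\ve{u},\ve{y})$ solves \eqref{isosystem} globally and $\ve{v} \in \mathcal{B}$. Therefore $\mathcal{B}$ is $L_0$-complete and $L_1 \le L_0$.

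It remains to show that (i) implies the definition, i.e.\ $L_0 \le L_1$, and this is the step I expect to be the main obstacle. I would argue contrapositively: if $L < L_0$, then $\mathcal{O}_{L}(\ve{C},\ve{A})$ is column-rank-deficient, so there is a nonzero state $\ve{\xi}$ with $\ve{C}\ve{A}^{j}\ve{\xi} = 0$ for $j = 0,\ldots,L-1$, while observability forces $\ve{C}\ve{A}^{j}\ve{\xi} \neq 0$ for some $j \ge L$. The zero-input response from $\ve{\xi}$ is a genuine trajectory whose first $L$ output samples vanish but which is nonzero later; by shifting this response in time one obtains a sequence $\ve{v}$ whose every window of length $L+1$ still coincides with an admissible window in $\mathcal{B}_{[0,L]}$, yet whose global state evolution is inconsistent, so that $\ve{v} \notin \mathcal{B}$. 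This exhibits a failure of $L$-completeness for every $L < L_0$, whence $L_0 \le L_1$. The delicate points are choosing the time-shift in the splicing so that all windows remain admissible in the presence of inputs, and the bookkeeping of the off-by-one between the $L$ block rows of $\mathcal{O}_{L}$, the $L$ samples that pin down a state, and the length-$(L+1)$ windows appearing in (i). Combining the two equivalences gives $L_0 = L_1 = L_2$, which completes the proof.
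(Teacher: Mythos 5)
The paper gives no proof of Proposition~\ref{CharakterisierungenLag} at all --- it is dismissed as ``straightforward to show'' with the work delegated to the cited references --- so there is no argument in the paper to compare yours against; judged on its own terms, your proposal is correct. The decomposition into $L_0=L_1=L_2$ and all four inequalities are sound: reading a degree-$\le L$ kernel representation window-by-window gives $L_1\le L_2$; taking a basis of the annihilator of the finite-dimensional subspace $\mathcal{B}_{[0,L_1]}$ (which equals its double annihilator) gives $L_2\le L_1$; and the overlap argument --- the propagated state $\ve{A}\ve{x}(t)+\ve{B}\ve{u}(t)$ and the state read off from the window at $t+1$ produce the same outputs over $L_0$ samples under the same inputs, so their difference lies in $\ker\mathcal{O}_{L_0}(\ve{C},\ve{A})=\{\ve{0}\}$ --- correctly yields $L_1\le L_0$. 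For the step you flag as delicate ($L_0\le L_1$), your construction does work, and the two points you worry about dissolve on inspection: take $\ve{\xi}\in\ker\mathcal{O}_L(\ve{C},\ve{A})\setminus\{\ve{0}\}$ and let $\ve{v}$ be identically zero on $[0,N)$ and equal to the zero-input response from $\ve{\xi}$ started at time $N$. Since the whole counterexample has zero input, no bookkeeping of inputs across the splice is needed, and since the first $L$ output samples of that response vanish, every window of length $L+1$ straddling the splice is identically zero, hence trivially in $\mathcal{B}_{[0,L]}$, while windows on either side are windows of genuine trajectories. The one point that must be made explicit is why $\ve{v}\notin\mathcal{B}$: choose $N\ge n(\mathcal{B})$; if $\ve{v}$ were in $\mathcal{B}$, zero input together with zero output on the first $n(\mathcal{B})$ samples would force the initial state into $\ker\mathcal{O}_{n(\mathcal{B})}(\ve{C},\ve{A})$, which is trivial by Lemma~\ref{minimalObservable}, so the output would vanish identically, contradicting $\ve{C}\ve{A}^{j}\ve{\xi}\neq\ve{0}$ for some $j\ge L$. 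With that sentence added, your argument is a complete proof of the proposition the paper leaves unproved.
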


\subsection{Identifiability}

Our main objective is to identify certain relevant properties of a system based on a finite sequence of sampled data. The famous Fundamental Lemma by Willems et al.~\cite{J.C.Willems.2004} treats the question, when it is possible to uniquely identify a behavior $\mathcal{B} \in \mathcal{L}^w$ in the case of known input and output data. Recall that for a signal $\ve{u}:[0,T] \rightarrow \R^m$ of finite length, the associated Hankel matrix of order $L$ for some $L \leq T + 1$ is defined as 

$$ \mathcal{H}_L(\ve{u})  = \begin{bmatrix}  \ve{u}(0) & \ve{u}(1) & \hdots & \ve{u}(T-L+1) \\ \ve{u}(1) & \ve{u}(2) & \hdots & \ve{u}(T-L+2) \\ \vdots & \vdots & \ddots & \vdots \\ \ve{u}(L-1) & \ve{u}(L) & \hdots & \ve{u}(T) \end{bmatrix}. $$

The signal is called persistently exciting of order $L$, if $\rk \mathcal{H}_L(\ve{u}) = Lm$. 

\begin{theorem}[Fundamental Lemma~\cite{J.C.Willems.2004}] \label{Fundamentallemma}
Let $\mathcal{B} \in \mathcal{L}^w$ be controllable, and $(\ve{u_d}, \ve{y_d}) = \ve{w_d} : [0,T] \rightarrow \R^w$ be a sampled trajectory of the system, $\ve{w_d} \in \mathcal{B}_{[0,T]}$, such that $\ve{u_d}:[0,T] \rightarrow \R^{m({\mathcal{B}})}$ is persistently exciting of order $L+n(\mathcal{B})$. Then
$$ \mathcal{B}_{[0,L-1]} = \im \mathcal{H}_L(\ve{w_d}) .$$
\end{theorem}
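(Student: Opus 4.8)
The plan is to prove the two inclusions separately: the inclusion $\im \mathcal{H}_L(\ve{w_d}) \subseteq \mathcal{B}_{[0,L-1]}$ is structural and holds without any excitation hypothesis, while the reverse inclusion is obtained from a dimension count in which persistency of excitation enters only at the very last step. For the first inclusion, observe that the $\tau$-th column of $\mathcal{H}_L(\ve{w_d})$ is the window $(\ve{w_d})_{\vert [\tau,\tau+L-1]}$ of the recorded trajectory; by time-invariance of $\mathcal{B}$ every such window lies in $\mathcal{B}_{[0,L-1]}$, and since $\mathcal{B}$ is linear, $\mathcal{B}_{[0,L-1]}$ is a subspace of $(\R^w)^L$ and hence contains every linear combination of these columns. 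As both sides are subspaces of the same finite-dimensional space with one contained in the other, equality reduces to the numerical identity $\rk \mathcal{H}_L(\ve{w_d}) = \dim \mathcal{B}_{[0,L-1]}$, which I establish next.

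Fix a minimal state-space representation $(\ve{A},\ve{B},\ve{C},\ve{D})$ of $\mathcal{B}$, so that $\ve{A}\in\R^{n\times n}$ with $n=n(\mathcal{B})$; by the discussion preceding Lemma~\ref{minimalObservable} the pair $(\ve{A},\ve{B})$ is controllable, and by Lemma~\ref{minimalObservable} the pair $(\ve{C},\ve{A})$ is observable. To compute $\dim\mathcal{B}_{[0,L-1]}$ I parametrize a length-$L$ trajectory by its initial state $\ve{x}(0)\in\R^n$ together with its input window $(\ve{u}(0),\dots,\ve{u}(L-1))\in\R^{mL}$, the output being then determined by \eqref{isosystem}; this linear parametrization maps onto $\mathcal{B}_{[0,L-1]}$, and its kernel is $\ker \mathcal{O}_L(\ve{C},\ve{A})$ in the state coordinate and $\{0\}$ in the input coordinate, whence $\dim\mathcal{B}_{[0,L-1]} = mL + \rk \mathcal{O}_L(\ve{C},\ve{A})$. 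For $\rk\mathcal{H}_L(\ve{w_d})$ I use the free-plus-forced decomposition over each window, $\mathcal{H}_L(\ve{y_d}) = \mathcal{O}_L(\ve{C},\ve{A})\,X + \mathcal{T}_L\,\mathcal{H}_L(\ve{u_d})$, where $X = \begin{bmatrix}\ve{x_d}(0) & \cdots & \ve{x_d}(T-L+1)\end{bmatrix}$ collects the states at the start of each window and $\mathcal{T}_L$ is the lower block-triangular Toeplitz matrix built from $\ve{D}$ and the Markov parameters $\ve{C}\ve{A}^{k}\ve{B}$. Since $\mathcal{H}_L(\ve{w_d})$ is, up to a permutation of rows, the stacked matrix $\begin{bmatrix}\mathcal{H}_L(\ve{u_d})\\ \mathcal{H}_L(\ve{y_d})\end{bmatrix}$, subtracting the Toeplitz term by an invertible row operation gives $\rk\mathcal{H}_L(\ve{w_d}) = \rk\begin{bmatrix}\mathcal{H}_L(\ve{u_d})\\ \mathcal{O}_L(\ve{C},\ve{A})\,X\end{bmatrix}$.

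It then remains to prove the rank lemma that $\begin{bmatrix}\mathcal{H}_L(\ve{u_d})\\ X\end{bmatrix}$ has full row rank $mL + n$: its columns then span $\R^{mL+n}$, so applying the block map $\mathrm{diag}(I,\mathcal{O}_L(\ve{C},\ve{A}))$ yields $\rk\begin{bmatrix}\mathcal{H}_L(\ve{u_d})\\ \mathcal{O}_L(\ve{C},\ve{A})\,X\end{bmatrix} = mL + \rk\mathcal{O}_L(\ve{C},\ve{A})$, which matches $\dim\mathcal{B}_{[0,L-1]}$ and closes the proof. I prove the rank lemma by contradiction: a nontrivial left-kernel vector yields blocks $\ve{a}_0,\dots,\ve{a}_{L-1}\in\R^m$ and a covector $\ve{g}\in\R^n$, not all zero, with $\sum_{k=0}^{L-1}\ve{a}_k^\top\ve{u_d}(t+k) + \ve{g}^\top\ve{x_d}(t) = 0$ for all admissible $t$. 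If $\ve{g}=0$ this is a left-kernel vector of $\mathcal{H}_L(\ve{u_d})$, which is impossible because persistency of excitation of order $L+n$ implies that of order $L$, so $\mathcal{H}_L(\ve{u_d})$ has full row rank $mL$; hence also $\ve{a}=0$, a contradiction. For $\ve{g}\neq 0$ I eliminate the state: using $\ve{x_d}(t+1)=\ve{A}\ve{x_d}(t)+\ve{B}\ve{u_d}(t)$ repeatedly, each $\ve{g}^\top\ve{A}^{j}\ve{x_d}(t)$ becomes a fixed linear combination of $\ve{u_d}(t),\dots,\ve{u_d}(t+L-1+j)$, and substituting $j=0,\dots,n$ into the Cayley--Hamilton identity $\ve{g}^\top\chi(\ve{A})\ve{x_d}(t)=0$ removes the state entirely, producing a linear relation among $\ve{u_d}(t),\dots,\ve{u_d}(t+L+n-1)$, i.e.\ a left-kernel vector of $\mathcal{H}_{L+n}(\ve{u_d})$.

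The main obstacle is to certify that this produced relation is nonzero, since only then does it contradict persistency of excitation of order $L+n$. I track its leading coefficients: the coefficient of the top block $\ve{u_d}(t+L+n-1)$ equals $-\ve{a}_{L-1}$, so the relation is nontrivial whenever $\ve{a}_{L-1}\ne0$; peeling off vanishing top blocks of $(\ve{a}_0,\dots,\ve{a}_{L-1})$ reduces to a strictly shorter window, and once all input blocks vanish the surviving leading coefficients are, up to an invertible triangular transformation, the entries of $\ve{g}^\top\begin{bmatrix}\ve{B} & \ve{AB} & \cdots & \ve{A}^{n-1}\ve{B}\end{bmatrix}$. This is exactly where controllability is indispensable: full row rank of the reachability matrix guarantees that $\ve{g}\ne0$ forces $\ve{g}^\top\begin{bmatrix}\ve{B} & \ve{AB} & \cdots & \ve{A}^{n-1}\ve{B}\end{bmatrix}\ne0$, so the relation cannot be trivial. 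I expect the careful bookkeeping of these leading coefficients, together with the reduction peeling off the vanishing top blocks of $\ve{a}$, to be the most delicate part of the argument.
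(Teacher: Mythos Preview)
The paper does not prove Theorem~\ref{Fundamentallemma}; it is quoted from \cite{J.C.Willems.2004} and used as a black box. Your proposal is therefore not to be compared against any argument in the present paper, but it is a faithful and correct reconstruction of the standard proof of the Fundamental Lemma: the easy inclusion via linearity and shift-invariance, the dimension count $\dim\mathcal{B}_{[0,L-1]}=mL+\rk\mathcal{O}_L(\ve{C},\ve{A})$, the reduction of $\rk\mathcal{H}_L(\ve{w_d})$ via the Toeplitz row operation, and finally the key rank lemma $\rk\begin{bmatrix}\mathcal{H}_L(\ve{u_d})\\ X\end{bmatrix}=mL+n$ proved by Cayley--Hamilton elimination of the state and controllability to certify nontriviality of the resulting left-kernel vector of $\mathcal{H}_{L+n}(\ve{u_d})$.

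Two minor remarks on the write-up. First, in the elimination step it is cleaner to argue directly: take the relation at times $t,t+1,\dots,t+n$, substitute $\ve{x_d}(t+j)=\ve{A}^j\ve{x_d}(t)+\sum_{i<j}\ve{A}^{j-1-i}\ve{B}\ve{u_d}(t+i)$, and combine with the characteristic-polynomial coefficients; this avoids the slightly informal ``each $\ve{g}^\top\ve{A}^j\ve{x_d}(t)$ becomes a fixed linear combination'' and makes the triangular structure of the leading coefficients transparent. Second, note that the range of admissible $t$ after shifting by up to $n$ exactly matches the column index set of $\mathcal{H}_{L+n}(\ve{u_d})$, so the contradiction with persistency of order $L+n$ is clean; you may want to state this explicitly.
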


This result shows that all possible trajectories of length $L$ can be obtained as linear combinations of the windows of length $L$ from the sampled data sequence. Notice that for any $L \geq l(\mathcal{B}) +1$, the behavior is entirely characterized by these sequences, as shown in Proposition \ref{CharakterisierungenLag}. Hence, Theorem \ref{Fundamentallemma} shows that the entire system behavior can be uniquely identified from the information gained from a single data sequence, as long as the system is controllable and the input sequence is persistently exciting of order at least $n(\mathcal{B} ) + l(\mathcal{B}) + 1$.

A combination of Theorem~\ref{Fundamentallemma} with Lemma \ref{InitialCondition} leads to conditions under which we can uniquely determine the future output trajectory from a past initializing sequence and the future input sequence, as shown in~\cite[Prop.~6]{IvanMarkovsky.2008}. 

\begin{corollary} \label{uniqueOutputInitialized} 
Let $\mathcal{B} \in \mathcal{L}^w$ be controllable and $(\ve{u_d},\ve{ y_d}) = \ve{w_d} : [0,T] \rightarrow \R^w$ be a sampled trajectory of the system, $\ve{w_d} \in \mathcal{B}_{[0,T]}$, where $\ve{u_d}:[0,T] \rightarrow \R^{m({\mathcal{B}})}$ is persistently exciting of order $T_{p}+T_{f}+n(\mathcal{B})$. Then for any initializing sequence $(\ve{u_{p}}, \ve{y_{p}}) \in \mathcal{B}_{[0,T_p - 1]}$ and any future input sequence $\ve{u_f} \in (\R^{m(\mathcal{B})})^{T_f}$, there exist $\ve{g} \in \R^{T-T_f-T_p + 2} $ and $\ve{y} \in (\R^{p(\mathcal{B})})^{T_f}$ such that 
$$ \begin{bmatrix} \mathcal{H}_{T_p + T_f}(\ve{u_d})  \\ \mathcal{H}_{T_p + T_f}(\ve{y_d}) \end{bmatrix} \ve{g}  =  \begin{pmatrix} \ve{u_p} \\ \ve{u_f} \\ \ve{y_p} \\ \ve{y} \end{pmatrix},$$
and if $T_p \geq l(\mathcal{B})$, the future output vector $\ve{y}$ is uniquely determined by $\ve{u_p}, \ve{y_p}$, and $\ve{u_f}$.
\end{corollary}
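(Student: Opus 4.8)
The plan is to prove this Corollary as a direct combination of the Fundamental Lemma (Theorem~\ref{Fundamentallemma}) and the uniqueness-of-initial-state result (Lemma~\ref{InitialCondition}). The existence part of the claim should follow essentially by reading off the image characterization of the behavior, while the uniqueness part—under the hypothesis $T_p \ge l(\mathcal{B})$—should follow from the observation that an initializing window of length at least the lag pins down the state, and the state together with the future input determines the future output.

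\textbf{Key steps.} First I would verify the persistence-of-excitation bookkeeping: the trajectory windows under consideration have length $L = T_p + T_f$, and the hypothesis requires $\ve{u_d}$ to be persistently exciting of order $L + n(\mathcal{B}) = T_p + T_f + n(\mathcal{B})$, which is exactly the order assumed. Hence Theorem~\ref{Fundamentallemma} applies with this $L$ and gives
$$ \mathcal{B}_{[0, T_p + T_f - 1]} = \im \begin{bmatrix} \mathcal{H}_{T_p + T_f}(\ve{u_d}) \\ \mathcal{H}_{T_p + T_f}(\ve{y_d}) \end{bmatrix}. $$
Second, for the existence claim I would argue that the concatenation $(\ve{u_p}, \ve{u_f})$ of the given initializing input and future input is a valid input sequence, and since $(\ve{u_p}, \ve{y_p})$ is already an initializing trajectory in $\mathcal{B}_{[0, T_p - 1]}$, the system admits \emph{some} continuation with output $\ve{y}$ on the future window, so that the full stacked vector lies in $\mathcal{B}_{[0, T_p + T_f - 1]}$; by the image identity above, there exists $\ve{g} \in \R^{T - T_f - T_p + 2}$ realizing it as a column combination of the Hankel matrix. (The column count $T - (T_p + T_f) + 1 = T - T_f - T_p + 1$ of the Hankel matrix of order $T_p + T_f$ should be reconciled with the stated dimension $T - T_f - T_p + 2$ of $\ve{g}$; I would double-check this index, as one of the two is off by one.)

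\textbf{Uniqueness.} For the final assertion, fix a minimal state space representation of $\mathcal{B}$. Under $T_p \ge l(\mathcal{B})$, Lemma~\ref{InitialCondition} guarantees that the initializing sequence $(\ve{u_p}, \ve{y_p})$ of length $T_p$ determines a unique initial state $\ve{x}(T_p)$ at the start of the future window. From this state, the response equation~(\ref{isosystem}) propagates deterministically: given $\ve{x}(T_p)$ and the future input $\ve{u_f}$, the state trajectory and hence the output $\ve{y}$ on $[T_p, T_p + T_f - 1]$ are uniquely determined. Therefore any two solutions $\ve{y}, \ve{y}'$ of the equation must coincide.

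\textbf{Main obstacle.} The conceptual content here is light—the result is a corollary that glues two earlier lemmas—so the real work is careful index and dimension tracking. The step I expect to require the most care is matching the state indexing of Lemma~\ref{InitialCondition}, which speaks of the state $\ve{x}(T)$ \emph{after} an initializing sequence of length $T$, to the output recursion on the future window, ensuring no off-by-one error in where the past window ends and the future window begins. A secondary subtlety is confirming the existence step does not require $T_p \ge l(\mathcal{B})$: existence should hold for any admissible $\ve{u_p}, \ve{y_p}, \ve{u_f}$, with the lag condition invoked only to upgrade existence to uniqueness.
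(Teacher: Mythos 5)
Your proof is correct and takes essentially the same route as the paper, which gives no separate proof of this corollary but states that it follows by combining Theorem~\ref{Fundamentallemma} (applied with $L = T_p + T_f$) with Lemma~\ref{InitialCondition}, citing \cite[Prop.~6]{IvanMarkovsky.2008} --- precisely your argument of reading existence off the image identity and upgrading to uniqueness via the state $\ve{x}(T_p)$ pinned down by the initializing window. The off-by-one worry you flag is unfounded: since $\ve{u_d}$ is defined on $[0,T]$ it has $T+1$ samples, so $\mathcal{H}_{T_p+T_f}(\ve{u_d})$ has $(T+1)-(T_p+T_f)+1 = T-T_p-T_f+2$ columns, matching the stated dimension of $\ve{g}$.
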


\subsection{Data informativity}

While the Fundamental Lemma is a powerful tool, it is often possible to identify certain properties from data when the behavior is not controllable or the input sequence is not persistently exciting, and hence the behavior cannot be fully identified. In order to treat such cases, the framework of data informativity has been introduced in~\cite{Waarde.2020b} and we recall the basic concepts here. In the following, we assume that the partition into input and output variables $w = m+p$ is known.

\begin{definition}\label{Def:informativity}
Let $\Sigma\subseteq \mathcal{L}^w$ be a model class containing the true system $\mathcal{B}$, and let $\mathcal{D}\in\R^{kw}$ be a given data set. Then $\Sigma_{\mathcal{D}}$ denotes the set of all systems from $\Sigma$ that can explain the observed data $\mathcal{D}$, that is, there exists some trajectory $f \in \mathcal{B}$ such that $\mathcal{D} = (f(0)^{\top}, \hdots, f(k-1)^{\top})^{\top}$, and are of the assumed input dimension $m(\mathcal{B}) = m$.\\
Let $\mathcal{P}$ be a system-theoretic property, and $\Sigma_{\mathcal{P}}$ be the set of all systems from $\Sigma$ that have the property $\mathcal{P}$. The data set $\mathcal{D}$ is called informative for the property $\mathcal{P}$, if $ \Sigma_{\mathcal{D}} \subseteq \Sigma_{\mathcal{P}}$.
\end{definition}

Another important tool is the most powerful unfalsified model, which goes back to the work of Willems~\cite{Willems.1986a,Willems.1986b,Willems.1987}.

\begin{definition}
In the model class $\mathcal{L}^{m+p}$, in which we assume that the partition of variables into input and output variables is known, we can define the most powerful unfalsified model (MPUM) for a given data set $\mathcal{D}\in\R^{k(m+p)}$ as the smallest behavior that explains the data, $\mathcal{B}_{MPUM} = \bigcap_{\mathcal{B} \in \Sigma_{\mathcal{D}}} \mathcal{B} $.
\end{definition}

\begin{remark}
Observe that the maximal dimension of a free subspace of the Most Powerful Unfalsified Model may be strictly smaller than the maximal dimension of a free subspace of the true system, yielding $m(\mathcal{B}_{MPUM}) \leq m$ and $\mathcal{B}_{MPUM} \notin \Sigma_{\mathcal{D}}$, as illustrated by the following example. Consider the data sequence $$ \ve{u_d} = \begin{pmatrix} 0 & 1 & 0 & 0 & 0 & 0 \\0 & 0 & 0 & 0 & 0 & 0 \end{pmatrix},\quad  \ve{y_d} = \begin{pmatrix} 0 & 0 & 1 & 1 & 1 & 1 \\0 & 0 & 0 & 0 & 0 & 0 \end{pmatrix}.$$ The two behaviors $\mathcal{B}_1,\mathcal{B}_2$ determined by the state space representations $(\ve{A_i},\ve{B_i},\ve{C_i},\ve{D_i})$ where $\ve{A_i} = \begin{bmatrix} 1 & 0 \\ 0 & 1 \end{bmatrix} = \ve{C_i} = \ve{B_1}$, $\ve{D_i} = 0$, and $\ve{B_2} = \begin{bmatrix} 1 & 0 \\ 0 & 2 \end{bmatrix}$ can explain the observed data, and are both of input dimension $m(\mathcal{B}_i) = 2$ for $i=1,2$. However, we can observe that the intersection of the two behaviors does not contain a trajectory where the second input coordinate has nontrivial entries. Hence, this coordinate is not contained in a free subspace, and we can conclude that $m(\mathcal{B}_1 \cap \mathcal{B}_2) = 1$.
\end{remark}

For a given data sequence $(\ve{u_d},\ve{y_d}):[0,T-1]\to\R^{m+p}$, the set of all systems explaining the data $\Dc = (\ve{u_d},\ve{y_d}) =(\ve{u_d}(0),\ldots, \ve{u_d}(T-1),\ve{y_d}(0),\ldots, \ve{y_d}(T-1))\in\R^{2Tw}$ is 
\[
    \{ \mathcal{B} \in \Sigma: (\ve{u_d}, \ve{y_d}) \in \mathcal{B}_{[0,T-1]} \}.
\]
If the data sequence is sampled from a system with known lag $l(\mathcal{B})$, we can determine the MPUM.

\begin{lemma} \label{DetermineMPUM}
Consider $\Bc\in \Sigma = \Lc^w$ with known lag $l(\mathcal{B})$ and let $\mathcal{D} = (\ve{u_d},\ve{y_d})$ be a sampled data sequence of length $T \geq l(\mathcal{B}) + 1$. Then the most powerful unfalsified model $\mathcal{B}_{MPUM}$ for this data sequence is uniquely determined by
$$ \mathcal{B}_{MPUM, [0,l(\mathcal{B})]} = \im \begin{bmatrix} \mathcal{H}_{l(\mathcal{B}) + 1} (\ve{u_d}) \\ \mathcal{H}_{l(\mathcal{B}) + 1} (\ve{y_d}) \end{bmatrix}. $$
\end{lemma}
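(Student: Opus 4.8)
The plan is to establish the set equality $\Bc_{MPUM,[0,l]} = \im \mathcal{H}_{l+1}(\ve{w_d})$ by two inclusions, writing $l \coloneqq l(\Bc)$ and $\ve{w_d} = (\ve{u_d},\ve{y_d})$, and crucially to exploit the \emph{known} lag so that the whole argument may be carried out inside the class of behaviors of lag at most $l$. The true system lies in this class, and by Proposition~\ref{CharakterisierungenLag}(i) every such behavior is completely recovered from its restriction to $[0,l]$: a sequence belongs to it precisely when all its length-$(l+1)$ windows lie in that restriction. This is what makes the intersection defining the MPUM well behaved and is the conceptual key; without the lag bound a competing model of larger lag could be smaller on $[0,l]$, so I would state at the outset that, because $l(\Bc)$ is known, the MPUM is sought among models of lag $\le l$.

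For the inclusion $\im \mathcal{H}_{l+1}(\ve{w_d}) \subseteq \Bc_{MPUM,[0,l]}$ I would use that the MPUM is itself unfalsified, i.e. $\ve{w_d} \in \Bc_{MPUM,[0,T-1]}$. Each column of $\mathcal{H}_{l+1}(\ve{w_d})$ is a window $\ve{w_d}_{\vert[t,t+l]}$ with $0 \le t \le T-1-l$. Extending $\ve{w_d}$ to some $\ve{v} \in \Bc_{MPUM}$ and using time-invariance ($\sigma\Bc_{MPUM}\subseteq\Bc_{MPUM}$), the forward shift $\sigma^t\ve{v}$ again lies in $\Bc_{MPUM}$, and its restriction to $[0,l]$ equals $\ve{w_d}_{\vert[t,t+l]}$ because $t+l \le T-1$. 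Hence every column, and so the whole span $\im \mathcal{H}_{l+1}(\ve{w_d})$, lies in the subspace $\Bc_{MPUM,[0,l]}$.

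For the reverse inclusion I would exhibit a concrete minimal model. Let $N$ be a full-row-rank matrix whose rows span the left kernel of $\mathcal{H}_{l+1}(\ve{w_d})$, read its $w$-column blocks off as the coefficients of a polynomial matrix $\ve{R}(s)$ of degree at most $l$, and set $\hat{\Bc} = \ker \ve{R}(\ve{\tilde{\sigma}})$, which lies in $\Lc^w$ with lag $\le l$ by Proposition~\ref{CharakterisierungenLag}(ii). The identity $N\,\mathcal{H}_{l+1}(\ve{w_d}) = 0$ shows $\ve{w_d}$ satisfies the recursion on all of $[0,T-1]$, so $\ve{w_d}\in\hat{\Bc}_{[0,T-1]}$ and thus $\hat{\Bc}\in\Sigma_{\Dc}$. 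Now take any $\tilde{\Bc}\in\Sigma_{\Dc}$ of lag $\le l$: by the shift argument above $\im \mathcal{H}_{l+1}(\ve{w_d}) \subseteq \tilde{\Bc}_{[0,l]}$, and since $\hat{\Bc}$ has lag $\le l$, Proposition~\ref{CharakterisierungenLag}(i) applied to $\tilde{\Bc}$ forces $\hat{\Bc}\subseteq\tilde{\Bc}$. Therefore $\hat{\Bc}=\bigcap_{\tilde{\Bc}\in\Sigma_{\Dc}}\tilde{\Bc}=\Bc_{MPUM}$, and restricting gives $\Bc_{MPUM,[0,l]}=\hat{\Bc}_{[0,l]}$. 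Finally, the purely linear-algebraic identity $\ker N = (\im \mathcal{H}_{l+1}(\ve{w_d})^{\perp})^{\perp} = \im \mathcal{H}_{l+1}(\ve{w_d})$ together with $\hat{\Bc}_{[0,l]}\subseteq\ker N$ (the window constraint at $t=0$) and the already-shown $\im \mathcal{H}_{l+1}(\ve{w_d})\subseteq\hat{\Bc}_{[0,l]}$ sandwiches everything into $\hat{\Bc}_{[0,l]}=\im \mathcal{H}_{l+1}(\ve{w_d})$. Because $\Bc_{MPUM}=\hat{\Bc}$ has lag $\le l$, Proposition~\ref{CharakterisierungenLag}(i) reconstructs the entire behavior from this single window, which is the asserted unique determination.

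The step I expect to be the main obstacle is the exact identification $\hat{\Bc}_{[0,l]}=\im \mathcal{H}_{l+1}(\ve{w_d})$, and within it the passage from the finite consistent trajectory $\ve{w_d}$ to an honest infinite trajectory of $\hat{\Bc}$ (extendability), i.e. the claim that a single window-length equation is the \emph{only} constraint the kernel representation imposes on the restriction. This is exactly where completeness of $\Lc^w$ and the value of the lag enter, and it is the point that must be argued rather than taken for granted; the orthogonality $\ker N=\im \mathcal{H}_{l+1}(\ve{w_d})$ is what ultimately closes the gap, since it shows every candidate window is already a linear combination of genuine, extendable windows of $\ve{w_d}$.
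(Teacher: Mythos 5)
Your forward inclusion is the paper's own argument (columns of the Hankel matrix are data windows, then shift-invariance and linearity), but your reverse inclusion takes a genuinely different route: where the paper argues by contradiction that a strictly smaller restriction would define an unfalsified model strictly contained in $\Bc_{MPUM}$, you build that model $\hat\Bc$ explicitly from the left kernel $N$ of the stacked Hankel matrix $\mathcal{H}$, bound its lag by Proposition~\ref{CharakterisierungenLag}(ii), and deduce $\hat\Bc\subseteq\tilde\Bc$ for every unfalsified $\tilde\Bc$ of lag at most $l(\Bc)$ from Proposition~\ref{CharakterisierungenLag}(i). Those steps, the identity $\ker N=\im\mathcal{H}$, and the inclusion $\hat\Bc_{[0,l(\Bc)]}\subseteq\ker N$ are all correct, and this part is more constructive and transparent than the paper's argument (note, though, that it quietly replaces the lemma's model class $\Lc^w$ by the lag-bounded one, which your Proposition-(i) step requires; the paper's proof implicitly needs the same restriction).

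The genuine gap is exactly the step you single out at the end, and the orthogonality $\ker N=\im\mathcal{H}$ does \emph{not} close it. Being a linear combination of data windows does not make a window extendable inside $\hat\Bc$, because even the data's own last window need not admit any continuation whose shifted window again lies in $\ker N$. Hence neither ``$N\mathcal{H}=0$ implies $\ve{w_d}\in\hat\Bc_{[0,T-1]}$'' (your proof that $\hat\Bc\in\Sigma_\Dc$) nor $\im\mathcal{H}\subseteq\hat\Bc_{[0,l(\Bc)]}$ is established, and the unproved premise of your first inclusion, that $\Bc_{MPUM}$ is itself unfalsified, fails for the same reason. Concretely: let $w=2$, let the true system be $\Bc=\{(u,y): y\ \text{constant}\}$, so $l(\Bc)=1$, and take $\ve{u_d}=(0,1)$, $\ve{y_d}=(0,0)$, $T=2$. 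Then $\mathcal{H}$ is the single column with entries $u(0)=0$, $u(1)=1$, $y(0)=0$, $y(1)=0$; its left kernel encodes the laws $u(t)=0$, $y(t)=0$, $y(t+1)=0$, so $\hat\Bc=\{0\}$ and $\ve{w_d}\notin\hat\Bc_{[0,1]}$. Moreover the unfalsified lag-one systems $y(t+1)=cy(t)+du(t)$, e.g.\ $(c,d)\in\{(1,0),(2,0),(0,1)\}$, already intersect to $\{0\}$, so $\Bc_{MPUM}$ is not unfalsified and $\Bc_{MPUM,[0,1]}=\{0\}\neq\im\mathcal{H}$. You should know that the paper's proof rests on the very same unproved extendability: it asserts without argument that the window-defined behavior lies in $\Sigma_\Dc$, and it identifies $\bigcap_{\Bc'\in\Sigma_\Dc}\Bc'_{[0,l(\Bc)]}$ with $\Bc_{MPUM,[0,l(\Bc)]}$. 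So you isolated precisely the weak point of this lemma, but your proposed resolution is not a proof; repairing it requires additional hypotheses on the data, or restating the claim for the intersection of restrictions $\bigcap_{\Bc'\in\Sigma_\Dc}\Bc'_{[0,l(\Bc)]}$, which is the quantity that still equals $\im\mathcal{H}$ in the example above.
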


\begin{proof}
We use Proposition \ref{CharakterisierungenLag} in order to notice that $l(\mathcal{B}_{MPUM}) $ is bounded by the maximal degree of the polynomials in any kernel representation of this behavior. We can construct a kernel representation of $\mathcal{B}_{MPUM}$ by using rows of kernel representations with minimal degrees of possible behaviors $\tilde{\mathcal{B}} \in \Sigma_{\mathcal{D}}$, hence the maximal degree of the rows of the kernel representation of $\mathcal{B}_{MPUM}$ will not exceed the maximal degree $l(\tilde{\mathcal{B}})$ of the kernel representations of any possible behavior that explains the data. This shows that $l(\mathcal{B}_{MPUM}) \leq l(\mathcal{B})$. 

By Proposition \ref{CharakterisierungenLag} we further note that characterizing all sequences of length $l(\mathcal{B}) + 1$ of $\mathcal{B}_{MPUM} $ is sufficient in order to characterize the entire behavior uniquely. It is then obvious that 
$$ \im \begin{bmatrix} \mathcal{H}_{l(\mathcal{B}) + 1} (\ve{u_d}) \\ \mathcal{H}_{l(\mathcal{B}) + 1} (\ve{y_d}) \end{bmatrix} \subseteq \bigcap_{\mathcal{B}' \in \Sigma_{\mathcal{D}}} \mathcal{B}'_{[0,l(\mathcal{B})]} = \mathcal{B}_{MPUM,[0,l(\mathcal{B})]},$$ since all sequences that are obtained as a linear combination of the sampled data sequences must be contained in every possible behavior $\mathcal{B}' \in \mathcal{L}^w$ that explains the data. On the other hand, the reverse inclusion is true, because otherwise the behavior $\mathcal{B}'$ defined by $\mathcal{B}'_{[0,l(\mathcal{B})]} = \im \begin{bmatrix} \mathcal{H}_{l(\mathcal{B}) + 1} (\ve{u_d}) \\ \mathcal{H}_{l(\mathcal{B}) + 1} (\ve{y_d}) \end{bmatrix} \subsetneq \mathcal{B}_{MPUM,[0,l(\mathcal{B})]}$ would characterize a behavior $\mathcal{B}'\in \Sigma_\Dc$ that is strictly contained in $\mathcal{B}_{MPUM}$ , which contradicts its definition.
\end{proof}

Since the behavior $\mathcal{B}_{MPUM}$ is uniquely determined by the data $\mathcal{D}$, we can use this data to explicitly characterize sequences of length $l(\mathcal{B}) + k$. 

\begin{lemma} \label{ExtendedSequencesForm}
Consider a true model $\Bc\in\Sigma = \Lc^w$ with known lag $l(\mathcal{B})$ and let $\mathcal{D} = (\ve{u_d},\ve{y_d})$ be a sampled data sequence of length $T \geq l(\mathcal{B}) + 1$. Further let 
\begin{align*}
\ve{U_{-}} &= \mathcal{H}_{l(\mathcal{B}) + 1}(\ve{u_d})_{1, \hdots, l(\mathcal{B})}, &\ve{U_{+}} = \mathcal{H}_{l(\mathcal{B}) + 1} (\ve{u_d})_{2, \hdots , l(\mathcal{B}) + 1}, \ \ve{U_{l(\mathcal{B}) + 1}} &= \mathcal{H}_{l(\mathcal{B}) + 1} (\ve{u_d})_{l(\mathcal{B}) + 1}, \\
\ve{Y_{-}} &= \mathcal{H}_{l(\mathcal{B}) + 1} (\ve{y_d})_{1, \hdots, l(\mathcal{B})}, &\ve{Y_{+}} = \mathcal{H}_{l(\mathcal{B}) + 1} (\ve{y_d})_{2, \hdots, l(\mathcal{B}) + 1}, \ \ve{Y_{l(\mathcal{B}) + 1}} &= \mathcal{H}_{l(\mathcal{B}) + 1} (\ve{y_d})_{l(\mathcal{B}) + 1}.
\end{align*}
Then for each $k \leq T - l(\mathcal{B}) - 1$, the sequences of length $l(\mathcal{B}) +  k +1$ can be determined via 
$$ \mathcal{B}_{MPUM, [0, l(\mathcal{B}) + k]} =  \ve{X_k} \ker \ve{Z_k},$$
where $\ve{X_k} \in \R^{w(l(\mathcal{B}) + k +1) \times (k+1) (T - l(\mathcal{B}))} \text{ and } \ve{Z_k} \in \R^{w((k+1)l(\mathcal{B})) \times (k+1)(T - l(\mathcal{B}))}$
are given by
$$ \ve{X_k} = \begin{bmatrix} \ve{U_-} & \ve{0} & \hdots & \ve{0} \\ \ve{U_{l(\mathcal{B}) + 1}} & \ve{0} & \hdots & \ve{0} \\ \ve{0} & \ve{U_{l(\mathcal{B}) + 1}} & \ve{0} & \vdots \\ \vdots & \ddots & \ddots & \vdots \\ \ve{0} & \hdots & \ve{0} & \ve{U_{l(\mathcal{B}) + 1}} \\ \ve{Y_-} & \ve{0} & \hdots & \ve{0} \\ \ve{Y_{l(\mathcal{B}) + 1}} & \ve{0} & \hdots & \ve{0} \\ \ve{0} & \ve{Y_{l(\mathcal{B}) + 1}} &\ve{ 0} & \vdots \\ \vdots & \ddots & \ddots & \vdots \\ \ve{0} & \hdots & \ve{0} & \ve{Y_{l(\mathcal{B}) + 1}} \end{bmatrix} \
\text{ and } \ \ve{Z_k} = \begin{bmatrix}  \ve{U_+} & -\ve{U_-} & \ve{0} & \hdots & \ve{0} \\ \ve{0} & \ve{U_+} & -\ve{U_-} & \ve{0} & \vdots \\ \vdots & \vdots & \ddots & \ddots & \vdots \\ \ve{0} & \hdots & \ve{0} & \ve{U_+} & -\ve{U_-} \\ \ve{Y_+} & -\ve{Y_-} & \ve{0} & \hdots & \ve{0} \\ \ve{0} & \ve{Y_+} & -\ve{Y_-} & \ve{0} & \vdots \\ \vdots & \vdots & \ddots & \ddots & \vdots \\ \ve{0} & \hdots & \ve{0} & \ve{Y_+} & -\ve{Y_-}  \end{bmatrix} .$$
\end{lemma}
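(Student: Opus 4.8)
The plan is to combine the explicit description of the MPUM from Lemma~\ref{DetermineMPUM} with the window characterisation of the lag in Proposition~\ref{CharakterisierungenLag}. Write $\ell = l(\mathcal{B})$ and $\ve{M} = \big[\mathcal{H}_{\ell+1}(\ve{u_d})^\top\ \mathcal{H}_{\ell+1}(\ve{y_d})^\top\big]^\top$, so that Lemma~\ref{DetermineMPUM} gives $\mathcal{B}_{MPUM,[0,\ell]} = \im \ve{M}$ and, as seen in its proof, $l(\mathcal{B}_{MPUM}) \le \ell$. The first step is a finite-horizon version of Proposition~\ref{CharakterisierungenLag}(i): a sequence $\ve{w}$ of length $\ell+k+1$ lies in $\mathcal{B}_{MPUM,[0,\ell+k]}$ if, and only if, each window $\ve{w}_{\vert[i,i+\ell]}$, $i=0,\dots,k$, lies in $\mathcal{B}_{MPUM,[0,\ell]} = \im\ve{M}$. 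The ``only if'' part follows from time-invariance (restricting a trajectory and shifting it), while the ``if'' part I would obtain from a kernel representation $\mathcal{B}_{MPUM} = \ker\ve{R}(\ve{\tilde{\sigma}})$ with $\deg \ve{R} \le \ell$, available by Proposition~\ref{CharakterisierungenLag}(ii): every row of $\ve{R}(\ve{\tilde{\sigma}})$ constrains at most $\ell+1$ consecutive instants, so the constraints hold on all of $[0,\ell+k]$ exactly when they hold on every length-$(\ell+1)$ window.

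Next I would encode these window conditions algebraically. A window lies in $\im\ve{M}$ precisely when it equals $\ve{M}\ve{g}_i$ for some $\ve{g}_i\in\R^{T-\ell}$. Splitting $\ve{M}$ into its first $\ell$ block rows $(\ve{U_-},\ve{Y_-})$, its last $\ell$ block rows $(\ve{U_+},\ve{Y_+})$, and its final block row $(\ve{U_{l(\mathcal{B})+1}},\ve{Y_{l(\mathcal{B})+1}})$, two consecutive windows share the instants $i+1,\dots,i+\ell$; expressing this shared part from window $i$ and from window $i+1$ yields the overlap equations $\ve{U_+}\ve{g}_i = \ve{U_-}\ve{g}_{i+1}$ and $\ve{Y_+}\ve{g}_i = \ve{Y_-}\ve{g}_{i+1}$ for $i=0,\dots,k-1$. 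Stacking $\ve{g} = (\ve{g}_0^\top,\dots,\ve{g}_k^\top)^\top$, these are exactly the equations $\ve{Z_k}\ve{g}=\ve{0}$, whereas collecting the non-redundant entries of the windows — the first $\ell$ instants via $\ve{U_-}\ve{g}_0$, then each newly added instant via $\ve{U_{l(\mathcal{B})+1}}\ve{g}_i$, and symmetrically for the outputs — is precisely the assignment $\ve{w}=\ve{X_k}\ve{g}$.

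With this dictionary in place the statement splits into two inclusions. For $\mathcal{B}_{MPUM,[0,\ell+k]}\subseteq\ve{X_k}\ker\ve{Z_k}$, take $\ve{w}$ in the behaviour; by the first step each window lies in $\im\ve{M}$, so I can pick $\ve{g}_i$ with $\ve{w}_{\vert[i,i+\ell]}=\ve{M}\ve{g}_i$, and since the windows come from the single sequence $\ve{w}$ they agree on overlaps, forcing $\ve{Z_k}\ve{g}=\ve{0}$ and $\ve{X_k}\ve{g}=\ve{w}$. For the reverse inclusion, given $\ve{g}\in\ker\ve{Z_k}$ I set $\ve{w}=\ve{X_k}\ve{g}$ and must check $\ve{w}_{\vert[i,i+\ell]}=\ve{M}\ve{g}_i$ for every $i$. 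Writing $R_0,\dots,R_\ell$ for the block rows of $\mathcal{H}_{\ell+1}(\ve{u_d})$, the relation $\ve{U_+}\ve{g}_i=\ve{U_-}\ve{g}_{i+1}$ reads $R_j\ve{g}_i=R_{j-1}\ve{g}_{i+1}$ for $j=1,\dots,\ell$; telescoping these identities shows $\ve{w}(i+j)=R_j\ve{g}_i$ for all $0\le j\le\ell$, the two ranges $i+j<\ell$ and $i+j\ge\ell$ corresponding to the two block types of $\ve{X_k}$, and analogously for the outputs. Hence every window equals $\ve{M}\ve{g}_i\in\im\ve{M}$, and the first step gives $\ve{w}\in\mathcal{B}_{MPUM,[0,\ell+k]}$.

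I expect the reverse inclusion to be the main obstacle, and within it the telescoping bookkeeping: one must verify that the shift relations extracted from $\ve{Z_k}\ve{g}=\ve{0}$ propagate consistently, so that the entries assembled by $\ve{X_k}$ genuinely reconstitute each overlapping window as one and the same vector $\ve{M}\ve{g}_i$, uniformly across the two index ranges. The finite-horizon strengthening of Proposition~\ref{CharakterisierungenLag}(i) in the first step is the other delicate point, but it becomes routine once a kernel representation of degree at most $\ell$ is invoked.
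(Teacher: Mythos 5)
Your proposal is correct and follows essentially the same route as the paper's proof: both reduce membership in $\mathcal{B}_{MPUM,[0,l(\mathcal{B})+k]}$ to the condition that every window of length $l(\mathcal{B})+1$ lies in the image of the stacked Hankel matrix from Lemma~\ref{DetermineMPUM}, encode the agreement of consecutive windows on their overlap as the condition $\ve{g}\in\ker\ve{Z_k}$, and reconstruct the full sequence via $\ve{X_k}\ve{g}$, proving the two inclusions exactly as you do. Your telescoping of the shift relations plays the role of the paper's auxiliary matrices $\ve{X_{k,i}}$, and you are in fact slightly more careful than the paper in flagging (and sketching a justification for) the finite-horizon window-gluing step, which the paper invokes implicitly.
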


\begin{proof}
Let $(\ve{g_1}^{\top}, \hdots, \ve{g_{k+1}}^{\top})^{\top} = \ve{g} \in \ker \ve{Z_k}$, where $\ve{g_i} \in \R^{T - l(\mathcal{B})}$. Observe that by the structure of $\ve{Z_k}$, we get $\ve{U_+ g_i} = \ve{U_- g_{i+1}}$ and $\ve{Y_+ g_i} =\ve{ Y_- g_{i+1}}$ for each $1 \leq i \leq k$. We want to show that all windows of the sequence of length $l(\mathcal{B}) + 1$ are contained in $\mathcal{B}_{MPUM,[0,l(\mathcal{B})]}$, which equals $\im \begin{bmatrix} \ve{U_-} \\ \ve{U_{l(\mathcal{B}) + 1}} \\ \ve{Y_-} \\ \ve{Y_{l(\mathcal{B}) + 1}} \end{bmatrix}$ by Lemma \ref{DetermineMPUM}.  Consider $\ve{U_1} = \mathcal{H}_{l(\mathcal{B}) + 1}(\ve{u_d})_1$ and $\ve{Y_1} = \mathcal{H}_{l(\mathcal{B}) + 1} (\ve{y_d})_1$ and define $$\ve{X_{k,i}} =
\scalebox{0.85}{$  
\begin{bNiceMatrix}
\ve{U_1}^{\top}	& \ve{0} 		& \hdots	& \hdots	& \ve{0} 					& \ve{0}		& \ve{0}					& \ve{Y_1}^{\top}	& \ve{0} 		& \hdots 	& \hdots 	& \ve{0} 					& \ve{0} 		& \ve{0} \\
\ve{0} 		& \ddots 	& \ve{0} 		& \ve{0} 		& \hdots				&  \ddots 	& \hdots				& \ve{0} 		& \ddots 	& \ve{0} 		& \ve{0} 		& \hdots 				&  \ddots 	& \hdots \\ 
\vdots 	& \ve{0} 		& \ve{U_1}^{\top} 	& \ve{U_+}^{\top} 	& \ve{0} 					& \ddots 	& \hdots 				& \vdots 	& \ve{0} 		& \ve{Y_1}^{\top} 	& \ve{Y_+}^{\top} 	& \ve{0} 					& \ddots 	& \hdots  \\
\vdots 	& \vdots 	& \ve{0} 		& \ve{0} 		& \ve{U_{l(\mathcal{B}) + 1}}^{\top} 	& \ve{0} 		& \hdots 				& \vdots 	& \vdots 	& \ve{0} 		& \ve{0} 		& \ve{Y_{l(\mathcal{B}) + 1}}^{\top} 	& \ve{0} 		& \hdots  \\
\vdots 	& \vdots 	& \vdots 	& \ddots 	& \ddots 				& \ddots 	& \ve{0} 					& \vdots 	& \vdots 	& \vdots 	& \ddots 	& \ddots 				& \ddots 	& \ve{0} \\
\ve{0} 		& \ve{0} 		& \ve{0} 		& \ve{0} 		& \ve{0} 					& \ve{0} 		& \ve{U_{l(\mathcal{B}) + 1}}^{\top} 	& \ve{0} 		& \ve{0} 		& \ve{0} 		& \ve{0} 		& \ve{0} 					& \ve{0} 		& \ve{Y_{l(\mathcal{B}) + 1}}^{\top}
\end{bNiceMatrix}^{\top} $} $$
for each $1 \leq i \leq l(\mathcal{B}) + 1$, where the rows containing $\ve{U_+}^{\top}$ and $\ve{Y_+}^{\top}$ are those with the indices $(i-1) (T-  l(\mathcal{B})) +1, \hdots ,i(T-l(\mathcal{B}))$. Observe that for each $\ve{g} \in \ker \ve{Z_k}$, the structure of $\ve{Z_k}$ shows that $\ve{X_k} \ve{g} = \ve{X_{k,i}} \ve{g} $ for each $1 \leq i \leq k+1$, by inductively using $\ve{U_+ g_i} = \ve{U_- g_{i+1}}$ and $\ve{Y_+ g_i }= \ve{Y_- g_{i+1}}$ for each $1 \leq i \leq k$.

Further, we notice that any window $(\ve{u_i}, \hdots \ve{u_{i + l(\mathcal{B})}})$ of length $l(\mathcal{B}) + 1$ of an arbitrary element $\ve{u} \in \ve{X_k } \ker \ve{Z_k}$ can be determined as 
\begin{align*}
 \begin{bmatrix} \ve{X_{k; i, \hdots, i+l(\mathcal{B})}} \\ \ve{X_{k; l(\mathcal{B})+k+1+i, \hdots , 2l(\mathcal{B}) + k + 1 + i}} \end{bmatrix} \ve{g} &= \begin{bmatrix} \ve{X_{k,i; i, \hdots, i + l(\mathcal{B})}} \\ \ve{X_{k,i; l(\mathcal{B}) + k + 1 + i, \hdots , 2l(\mathcal{B}) + k + 1 + i}} \end{bmatrix} \ve{g} \\
 &= \begin{bmatrix} \ve{U_1} \\ \ve{U_+} \\ \ve{Y_1} \\ \ve{Y_+} \end{bmatrix} \ve{g_i} \in \mathcal{B}_{MPUM,[0,l(\mathcal{B})]}.
\end{align*}
This shows that all windows of length $l(\mathcal{B}) + 1$ of any element of $\ve{X_k} \ker \ve{Z_k}$ are contained in $\mathcal{B}_{MPUM,[0,l(\mathcal{B})]}$, hence $\ve{X_k} \ker \ve{Z_k} \subseteq \mathcal{B}_{MPUM,[0,l(\mathcal{B}) + k]}$. In order to obtain the other inclusion, choose an arbitrary sequence $(\ve{u},\ve{y}) \in \mathcal{B}_{[0,l(\mathcal{B}) + k]}$. By Lemma \ref{DetermineMPUM}, there exist $\ve{g_i}   \in \R^{T - l(\mathcal{B})}$ such that $(\ve{u_i}, \hdots, \ve{u_{i+l(\mathcal{B})}}, \ve{y_i}, \hdots, \ve{y_{i+l(\mathcal{B})}})= \begin{bmatrix} \ve{U_-} \\ \ve{U_{l(\mathcal{B}) + 1}} \\ \ve{Y_-} \\ \ve{Y_{l(\mathcal{B}) + 1}} \end{bmatrix} \ve{g_i}$ for each $1 \leq i \leq k+1$. Define $\ve{g} = (\ve{g_1}^{\top}, \hdots , \ve{g_{k+1}}^{\top})^{\top}$. Then $(\ve{u},\ve{y}) = \ve{X_k g}$ by construction, and $\ve{g} \in \ker \ve{Z_k}$, since for each $1 \leq i \leq k$, we have that $\ve{U_{+} g_i} = (\ve{u_{i+1}} , \hdots , \ve{u_{i+l(\mathcal{B})}}) = \ve{U_- g_{i+1}}, $ and $\ve{Y_+ g_i} = (\ve{y_{i+1}}, \hdots , \ve{y_{i + l(\mathcal{B})}}) = \ve{Y_{i + l(\mathcal{B})}}$. This yields $(\ve{u},\ve{y}) \in \ve{X_k} \ker \ve{Z_k}$ and completes the proof. 
\end{proof}

\section{Relative degree of linear systems} \label{secRelativeDegreeandZD}

In this section, the objective is to identify the (vector) relative degree (cf.~\cite{Isidori.1995}) of a linear system from a sampled data sequence. Roughly speaking, the relative degree is an integer invariant of a system that characterizes the delay between the application of an input and the observation of its effect at the output of the system. The knowledge of the system's relative degree is a common prerequisite for the design of adaptive controllers, such as in sliding mode control~\cite{Levant.2003} or funnel control~\cite{Berger.2018}. Moreover, a certain relative degree might be required for the application of specific control strategies, see~\cite{Ling.2012}. Accordingly, determining the relative degree of an unknown system is of some interest.

\subsection{Relative degree using persistently exciting input data}

We first discuss the case of single-input single-output (SISO) systems $\mathcal{B} \in \mathcal{L}^2$, i.e., $m(\mathcal{B}) = p(\mathcal{B}) = 1$. Recall that the relative degree $r\in\N_0$ of a SISO system defined by~\eqref{isosystem} is either $r = 0$, if $\ve{D} \neq 0$, or such that 
$ \ve{CA}^k\ve{B} = 0$, for all $k < r-1$, and $\ve{CA}^{r-1}\ve{B} \neq 0$, if $\ve{D} = 0$. If $\ve{CA}^{k-1}\ve{B} = 0 = \ve{D}$ for all $k \in \N$, the relative degree of the system $r=\infty$. Therefore, the relative degree of a state space representation is determined by its impulse response, which is the same for behaviorally equivalent representations of a system, as shown in Lemma~\ref{behaviorallyEquivalent}. Hence, we can define the relative degree $r(\Bc)$ of a behavior $\mathcal{B} \in \mathcal{L}^2$ as the relative degree of an arbitrary state space representation of the behavior.

\begin{lemma} \label{rLeqL}
Consider $\mathcal{B} \in \mathcal{L}^2$ with $m(\mathcal{B}) = p(\mathcal{B}) = 1$. The relative degree $r(\Bc)\in\N_0$ fulfills either $r(\Bc) \leq l(\mathcal{B}) $ or $r(\Bc) = \infty$. If $\mathcal{B}$ is controllable and $n(\mathcal{B})>0$, or if $n(\mathcal{B}) = 0$ and there exists $(u,y) \in \mathcal{B}$ with $y \neq 0$, then the relative degree is finite.
\end{lemma}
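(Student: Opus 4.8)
The plan is to fix a minimal state space representation $(\ve A, \ve B, \ve C, \ve D)$ of $\mathcal{B}$, which is observable by Lemma~\ref{minimalObservable}, and to phrase everything through the Markov parameters $h_0 = \ve D$ and $h_k = \ve C\ve A^{k-1}\ve B$ for $k \ge 1$. By definition $r(\mathcal{B})$ is the smallest index $k$ with $h_k \ne 0$, and $r(\mathcal{B}) = \infty$ precisely when every $h_k$ vanishes; this is the description I would work with throughout.

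For the dichotomy, write $L = l(\mathcal{B})$ and prove the contrapositive: if $r(\mathcal{B}) > L$ then $r(\mathcal{B}) = \infty$. Assuming $r(\mathcal{B}) > L$ forces $\ve D = 0$ together with $\ve C\ve A^j\ve B = 0$ for $0 \le j \le L-1$. The structural ingredient is that $\mathcal O_L(\ve C, \ve A)$ has full column rank $n(\mathcal{B})$ by the definition of the lag, so its rows $\ve C, \ve C\ve A, \ldots, \ve C\ve A^{L-1}$ span all of $\R^{1 \times n(\mathcal{B})}$; in particular there are scalars $\alpha_0, \ldots, \alpha_{L-1}$ with $\ve C\ve A^L = \sum_{i=0}^{L-1}\alpha_i\,\ve C\ve A^i$. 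I would then show $\ve C\ve A^j\ve B = 0$ for all $j \ge 0$ by strong induction: for $J \ge L$ the identity $\ve C\ve A^J\ve B = \sum_{i=0}^{L-1}\alpha_i\,\ve C\ve A^{\,i+J-L}\ve B$ involves only exponents $i + J - L \in \{J-L, \ldots, J-1\}$, each strictly less than $J$, which are zero by the inductive hypothesis. Hence all Markov parameters vanish and $r(\mathcal{B}) = \infty$. The degenerate cases are immediate: if $r(\mathcal{B}) = 0$ then $r(\mathcal{B}) \le L$ trivially, and $L = 0$ occurs exactly when $n(\mathcal{B}) = 0$, where the claim is clear.

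For the finiteness statement I would argue by contradiction, supposing $r(\mathcal{B}) = \infty$, i.e.\ $\ve D = 0$ and $\ve C\ve A^j\ve B = 0$ for every $j \ge 0$. If $n(\mathcal{B}) = 0$, the behavior consists only of trajectories with $\ve y = \ve D\ve u = 0$, contradicting the existence of $(u,y) \in \mathcal{B}$ with $y \ne 0$. If instead $\mathcal{B}$ is controllable with $n(\mathcal{B}) > 0$, the minimal representation is also state controllable (as noted after the definition of controllability), so the square matrices $\mathcal O = \mathcal O_{n(\mathcal{B})}(\ve C, \ve A)$ and $\mathcal C = \begin{bmatrix} \ve B & \ve A\ve B & \cdots & \ve A^{\,n(\mathcal{B})-1}\ve B \end{bmatrix}$ are both invertible. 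Their product $\mathcal O\mathcal C$ has $(i,j)$-entry $\ve C\ve A^{\,i+j}\ve B = 0$, so $\mathcal O\mathcal C = 0$; but a product of two invertible $n(\mathcal{B}) \times n(\mathcal{B})$ matrices with $n(\mathcal{B}) > 0$ is invertible and cannot vanish, a contradiction. Therefore $r(\mathcal{B})$ is finite.

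The step I expect to be the main obstacle is the induction in the second paragraph: one has to turn the single linear dependence $\ve C\ve A^L = \sum_i \alpha_i \ve C\ve A^i$ supplied by the lag into the vanishing of the whole tail of Markov parameters, keeping the shifted exponents inside the already-proven range, while being careful that the degenerate configurations ($n(\mathcal{B}) = 0$ and $\ve D \ne 0$) are handled rather than silently excluded.
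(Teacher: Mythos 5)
Your proposal is correct, but it proves the dichotomy by a genuinely different route than the paper. The paper argues behaviorally: it feeds the impulse input $(0_{l(\Bc)},\,1,\,0_{l(\Bc)})$ with identically zero output into the behavior over a window of length $2l(\Bc)+1$, invokes Lemma~\ref{InitialCondition} to pin the state of a minimal representation to zero before the impulse, and then uses left-invertibility of $\mathcal{O}_{l(\Bc)}(\ve{C},\ve{A})$ together with a Toeplitz identity from~\cite{IvanMarkovsky.2008} to conclude that the state is again zero after the impulse; the trajectory then extends by zeros, so the entire impulse response vanishes and $r(\Bc)=\infty$. You instead stay entirely at the level of Markov parameters: the rank condition defining the lag makes the rows $\ve{C},\ve{CA},\ldots,\ve{CA}^{l(\Bc)-1}$ span $\R^{1\times n(\Bc)}$, so $\ve{CA}^{l(\Bc)}$ is a linear combination of lower powers, and your strong induction then annihilates the whole tail $\ve{CA}^{J}\ve{B}$, $J\ge l(\Bc)$. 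Your version is more elementary and self-contained (it needs neither Lemma~\ref{InitialCondition} nor the identity from~\cite{IvanMarkovsky.2008}), while the paper's argument stays inside the trajectory-based framework it uses throughout, which is closer in spirit to the data-driven results that follow. For the finiteness claim the two proofs share the same substance: both use observability (Lemma~\ref{minimalObservable}) and state controllability of a minimal representation; the paper concludes directly from $\ve{B}\neq 0$ and injectivity of $\mathcal{O}_{n(\Bc)}(\ve{C},\ve{A})$ that some $\ve{CA}^{k-1}\ve{B}\neq 0$, whereas you reach a contradiction from $\mathcal{O}\mathcal{C}=0$ with both square factors invertible --- equivalent, just phrased slightly less directly. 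Your treatment of the degenerate cases ($r(\Bc)=0$; $l(\Bc)=0\Leftrightarrow n(\Bc)=0$; and $n(\Bc)=0$ with $\ve{D}=0$) is also sound and matches what the paper needs.
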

\begin{proof}
Let $\ve{u} = (0_{l(\mathcal{B})}, \ 1, \ 0_{l(\mathcal{B})})^\top \in \R^{2l(\mathcal{B})+ 1}$ and $ \ve{y} = \ve{0} \in \R^{2l(\mathcal{B}) + 1}$. Assuming $r(\Bc) > l(\mathcal{B})$, we seek to show $r(\Bc) = \infty$. Observe that $(\ve{u},\ve{y}) \in \mathcal{B}_{[0,2l(\mathcal{B})]}$, since the first $l(\mathcal{B})+1$ entries of the impulse response are 0. Choose any minimal representation $(\ve{A},\ve{B},\ve{C},\ve{D})$ of $\mathcal{B}$. By Lemma \ref{InitialCondition}, the first window of length $l(\mathcal{B})$ will initialize the state to be zero, that is $\ve{x}(l(\mathcal{B})) = 0$. Using an identity from the proof of~\cite[Lem.~1]{IvanMarkovsky.2008}, we notice that 
\begin{align*}
\ve{0} &= \ve{y}_{l(\mathcal{B})+1, \hdots, 2l(\mathcal{B})} = \mathcal{O}_{l(\mathcal{B})}(\ve{C},\ve{A}) \ve{x}(l(\mathcal{B})+1) + \ve{\mathcal{T}}_{l(\mathcal{B})}(\ve{H}) \ve{u}_{l(\mathcal{B})+1, \hdots, 2l(\mathcal{B})} \\
&= \mathcal{O}_{l(\mathcal{B})}(\ve{C},\ve{A}) \ve{x}(l(\mathcal{B})+1),
\end{align*}
where $\ve{H}$ is the impulse response of $(\ve{A},\ve{B},\ve{C},\ve{D})$ and
\[
    \ve{\mathcal{T}}_{l(\mathcal{B})}(\ve{H}) = \begin{bmatrix} \ve{H}(0) & 0 & \cdots & 0\\ \ve{H}(1) & \ve{H}(0) & \ddots &  0\\ \vdots & \ddots & \ddots & 0 \\ \ve{H}(l(\mathcal{B})-1) & \cdots & \ve{H}(1) & \ve{H}(0)\end{bmatrix},
\]
and due to left-invertibility of $\mathcal{O}_{l(\mathcal{B})}(\ve{C},\ve{A})$ this shows that $\ve{x}(l(\mathcal{B})+1) = 0$. Hence, $(\ve{u},0,\hdots; \ve{y},0,\hdots) \in \mathcal{B}$. The impulse response of the system is therefore the constant zero sequence $0 \in \R^{\N_0}$, thus $r(\Bc) = \infty$.

Now assume that $\mathcal{B}$ is controllable. In case of $n(\mathcal{B}) > 0$, let $(\ve{A},\ve{B},\ve{C},\ve{D})$ be a minimal representation of $\mathcal{B}$. We can conclude that $\ve{B} \neq 0$, and since this representation is observable by Lemma~\ref{minimalObservable}, $\mathcal{O}_{n(\Bc)}(\ve{C},\ve{A})$ has full column rank. Hence $\mathcal{O}_{n(\Bc)}(\ve{C},\ve{A}) \ve{B} \neq 0$, thus there exists $k \in \N$ such that $\ve{CA}^{k-1}\ve{B} \neq 0$, i.e., the relative degree is finite. 

In case of $n(\mathcal{B}) = 0$, there is a representation $\ve{y}(t) = \ve{Du}(t)$, $t\in\N_0$, of the system. Hence, it follows from the assumption that $\ve{D} \neq 0$, and therefore $r(\Bc) = 0 = l(\mathcal{B})$.
\end{proof}

We have seen in Theorem~\ref{Fundamentallemma} that observing a system under a persistently exciting input sequence  allows us to determine the entire system behavior. In the following theorem, we will use this fact to identify the relative degree directly from the observed data.

\begin{theorem} \label{SISORelativgrad}
Let $\mathcal{B} \in \mathcal{L}^2$ be a controllable SISO system, and $(\ve{u_d},\ve{y_d}) \in \mathcal{B}_{[0,T]}$ with $\ve{u_d}$ persistently exciting of order $L+n(\mathcal{B})$, for some $L \geq l(\mathcal{B}) + n(\mathcal{B}) + 1$. Then the relative degree of the system is $r(\Bc) = j^* - 1$, where $j^*$ is the minimum of all integers $1 \leq j \leq L-l(\mathcal{B})$ that satisfy 
$$ \mathcal{H}_L(\ve{y_d})_{l(\mathcal{B})+j}^{\top} \notin \im \begin{bmatrix} \mathcal{H}_L(\ve{u_d})_{1,\ldots,l(\Bc)}\\ \mathcal{H}_L(\ve{y_d})_{1,\ldots,l(\Bc)}\end{bmatrix}^\top.
$$
If there is no $1 \leq j \leq n(\mathcal{B}) + 1$ that fulfills this condition, then $r(\Bc) = \infty$. 
\end{theorem}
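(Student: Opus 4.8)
The plan is to translate the purely linear-algebraic row-span condition into a statement about the impulse response $\ve{H}$ of $\mathcal{B}$, using that for a SISO system $r(\Bc)$ is precisely the smallest index $s\geq 0$ with $\ve{H}(s)\neq 0$ (and $r(\Bc)=\infty$ if $\ve{H}\equiv 0$), as follows from the definition recalled before Lemma~\ref{rLeqL}. First I would invoke the Fundamental Lemma (Theorem~\ref{Fundamentallemma}): since $\mathcal{B}$ is controllable and $\ve{u_d}$ is persistently exciting of order $L+n(\mathcal{B})$, the columns of $\mathcal{H}_L(\ve{w_d})$ span $\mathcal{B}_{[0,L-1]}$. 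Hence a linear relation among the rows of $\mathcal{H}_L(\ve{w_d})$ holds across all columns if and only if it holds as a functional identity on every trajectory $(\ve{u},\ve{y})\in\mathcal{B}_{[0,L-1]}$. Thus membership of $\mathcal{H}_L(\ve{y_d})_{l(\mathcal{B})+j}^{\top}$ in the space spanned by the first $l(\mathcal{B})$ rows of $\mathcal{H}_L(\ve{u_d})$ and of $\mathcal{H}_L(\ve{y_d})$ is equivalent to the existence of coefficients $a_0,\ldots,a_{l(\mathcal{B})-1},b_0,\ldots,b_{l(\mathcal{B})-1}$ with
$$\ve{y}(l(\mathcal{B})+j-1) = \sum_{i=0}^{l(\mathcal{B})-1} a_i\, \ve{u}(i) + \sum_{i=0}^{l(\mathcal{B})-1} b_i\, \ve{y}(i) \quad\text{for all } (\ve{u},\ve{y})\in\mathcal{B}_{[0,L-1]}. \qquad (\star)$$

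Next I would fix a minimal representation $(\ve{A},\ve{B},\ve{C},\ve{D})$ and expand the left-hand side along the state recursion as the sum of a free response $\ve{C}\ve{A}^{j-1}\ve{x}(l(\mathcal{B}))$ and a forced response $\sum_{k=0}^{j-1}\ve{H}(j-1-k)\,\ve{u}(l(\mathcal{B})+k)$. By Lemma~\ref{InitialCondition}, $\ve{x}(l(\mathcal{B}))$ is a linear function of the initializing window $(\ve{u}(0),\ldots,\ve{u}(l(\mathcal{B})-1),\ve{y}(0),\ldots,\ve{y}(l(\mathcal{B})-1))$, i.e.\ exactly the quantities on the right of $(\star)$; so the free response is always of the admissible form, and $(\star)$ is solvable if and only if the forced response does not depend on the future inputs $\ve{u}(l(\mathcal{B})),\ldots,\ve{u}(l(\mathcal{B})+j-1)$. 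Since the input is free, I can fix the zero initializing window (so $\ve{x}(l(\mathcal{B}))=0$) and let these $j$ future inputs range arbitrarily; then $(\star)$ forces the forced response to vanish identically, which is possible exactly when $\ve{H}(0)=\cdots=\ve{H}(j-1)=0$, that is $r(\Bc)\geq j$. Conversely, if $r(\Bc)\geq j$ the forced response is identically zero and $(\star)$ holds. Therefore membership holds iff $r(\Bc)\geq j$, and the negated condition of the theorem holds iff $r(\Bc)\leq j-1$.

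It then follows that $j^*=\min\{\,j: r(\Bc)\leq j-1\,\}=r(\Bc)+1$ whenever $r(\Bc)$ is finite. Since finite relative degrees satisfy $r(\Bc)\leq l(\mathcal{B})\leq n(\mathcal{B})$ by Lemma~\ref{rLeqL} together with the observability bound of Lemma~\ref{minimalObservable}, and $L\geq l(\mathcal{B})+n(\mathcal{B})+1$, this $j^*$ lies in the search range $1\leq j\leq L-l(\mathcal{B})$, giving $r(\Bc)=j^*-1$. For the infinite case I would argue contrapositively: if no $1\leq j\leq n(\mathcal{B})+1$ satisfies the condition, then $r(\Bc)\geq n(\mathcal{B})+1>l(\mathcal{B})$, so Lemma~\ref{rLeqL} forces $r(\Bc)=\infty$.

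The main obstacle is the second step, namely rigorously passing from the image-membership condition to the vanishing of the leading impulse-response coefficients. This hinges on two points that deserve care: that image membership is genuinely equivalent to the functional identity $(\star)$ on the full restricted behavior (which relies on the spanning property from the Fundamental Lemma), and that the future inputs can be varied freely while the initializing window is held at zero, so that $(\star)$ can be satisfied only by a degenerate, input-independent forced response. Everything else is bookkeeping with indices of the Hankel rows and the state recursion.
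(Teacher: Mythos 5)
Your proposal is correct, and it reaches the result by a recognizably different organization than the paper, although both rest on the same pillars (Theorem~\ref{Fundamentallemma}, Lemma~\ref{InitialCondition}, freeness of inputs, and Lemma~\ref{rLeqL} for the infinite case and the range check). You work \emph{dually}: you read membership of a row of $\mathcal{H}_L(\ve{y_d})$ in the span of the initializing rows as a linear functional identity valid on all of $\mathcal{B}_{[0,L-1]}$ (legitimate precisely because the Hankel columns span the restricted behavior), and then prove, for every $j$, the clean equivalence ``membership $\Leftrightarrow \ve{H}(0)=\cdots=\ve{H}(j-1)=0 \Leftrightarrow r(\Bc)\ge j$'', from which both the lower and the upper bound on $r(\Bc)$ fall out simultaneously as $j^*=r(\Bc)+1$. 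The paper instead works \emph{primally}: it constructs an explicit column combination $\ve{g}$ realizing the impulse-response experiment (zero initializing window followed by a unit pulse), invokes Corollary~\ref{uniqueOutputInitialized} and the impulse-response characterization of \cite[Alg.~4]{IvanMarkovsky.2008} to identify the resulting output with the impulse response, deduces the lower bound $r(\Bc)\ge j^*-1$ from minimality of $j^*$, and closes the upper bound by a separate contradiction argument with a second vector $\ve{g}$ annihilating the initializing rows. What each route buys: yours is self-contained (no appeal to the external algorithm), and the per-$j$ equivalence is a slightly stronger intermediate statement that makes the minimality bookkeeping trivial; the paper's construction is more explicitly constructive, exhibiting the impulse response values themselves from the data, which is reused later (e.g.\ in Remark~\ref{ACanBeDetermined} and in building the decoupling matrix in Theorem~\ref{MIMOSquareRelativgrad}). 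Two small points in your write-up deserve one extra line each in a final version: the linearity (not just uniqueness) of the map from the initializing window to $\ve{x}(l(\mathcal{B}))$ follows from Lemma~\ref{InitialCondition} combined with superposition, and the row-span condition of the theorem refers to the permuted matrix $\begin{bmatrix}\mathcal{H}_L(\ve{u_d})\\ \mathcal{H}_L(\ve{y_d})\end{bmatrix}$ rather than $\mathcal{H}_L(\ve{w_d})$ itself, which is harmless since one is a row permutation of the other.
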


\begin{remark}
The assumption of controllability of the system ensures that the relative degree of the system is finite by Lemma~\ref{rLeqL}, except for the case that all $(\ve{u},\ve{y}) \in \mathcal{B}$ satisfy $\ve{y} = \ve{0}$. In this case, the relative degree is infinite, and $\mathcal{H}_L(\ve{y_d})^{\top}_{l(\mathcal{B}) + j} = \ve{0}$ for all $1 \leq j \leq L - l(\mathcal{B})$, hence the condition is not fulfilled for any $j$.
\end{remark}

\begin{proof}[Proof of Theorem \ref{SISORelativgrad}]
We determine the index of the first non-trivial element of the impulse response of $\Bc$. We use the characterization of the first $L - l(\mathcal{B})$ elements of the impulse response given in \cite[Alg.~4]{IvanMarkovsky.2008} with $T_{ini} = l(\mathcal{B})$. All that is left to show is that the assumption is sufficient to ensure that there exists $\ve{g} \in \R^{T-L+2}$ such that 
$$\mathcal{H}_L(\ve{u_d}) \ve{g} = (0_{l(\mathcal{B})} \ 1 \ 0_{L-1-l(\mathcal{B})})^{\top}, \ \ve{g} \in \ker(\mathcal{H}_L(\ve{y_d})_{1,\hdots,l(\mathcal{B}) + j^* -1} )\text{, and }\mathcal{H}_L(\ve{y_d})_{l(\mathcal{B}) + j^*} \ve{g} \neq \ve{0}.$$
First, note that $(\ve{u},\ve{y}) = (\ve{0},\ve{0}) \in \mathcal{B}_{[0,l(\mathcal{B}) - 1]}$ is a valid initializing sequence, since $\mathcal{B}_{[0,l(\mathcal{B}) - 1]}$ is a vector space. This sequence will fix the initial state $\ve{x}(l(\mathcal{B})) = \ve{0}$ of any minimal representation of the system by Lemma \ref{InitialCondition}. Since the input sequence can be chosen freely for any state of the system, and the entire behavior is characterized by the image of the Hankel matrix by Theorem~\ref{Fundamentallemma}, there must be a solution $\ve{g} \in \R^{T-L+2}$ such that both conditions $\mathcal{H}_L(\ve{u_d}) \ve{g} = (0_{l(\mathcal{B})} \ 1 \ 0_{L-1-l(\mathcal{B})})^{\top} \text{ and } \ve{g} \in \ker(\mathcal{H}_L(\ve{y_d})_{1,\hdots,l(\mathcal{B})} )$ hold. The obtained output vector $\ve{y} \coloneqq \mathcal{H}_L(\ve{y_d})_{l(\mathcal{B}) + 1, \hdots, L} \ve{g}$ is determined uniquely by these conditions, as we discussed in Corollary \ref{uniqueOutputInitialized}. Due to the initialization $\ve{x}(l(\mathcal{B})) = \ve{0}$, the output vector $\ve{y}$ contains the first $L-l(\mathcal{B})$ elements of the impulse response.

If no $1 \leq j \leq n(\mathcal{B}) + 1$ fulfills the required condition, then $\ve{y} = \ve{0} \in \R^{n(\mathcal{B}) +1}$, hence $r(\mathcal{B}) \geq n(\mathcal{B}) + 1 > l(\mathcal{B})$. It then follows from Lemma~\ref{rLeqL} that $r(\mathcal{B}) = \infty$.

In case $j^*$ exist, it follows from its minimality that $y_j = 0$ for all $1 \leq j < j^*$. This shows a lower bound on the relative degree of the system, $r(\mathcal{B}) \geq j^*-1$, since the first $j^*-2$ elements of the impulse response are zero. It remains to show that $r(\mathcal{B}) \leq j^*-1$.  

By assumption, we have that there exists $\ve{g} \in \ker(\mathcal{H}_L(\ve{u_d})_{1,\hdots,l(\mathcal{B})}) \  \cap \  \ker(\mathcal{H}_L(\ve{y_d})_{1,\hdots,l(\mathcal{B})}) $ such that $\mathcal{H}_L(\ve{y_d})_{l(\mathcal{B}) + j^*} \ve{g} \neq \ve{0}$. Therefore, the input sequence $\ve{u} \coloneqq \mathcal{H}_L(\ve{u_d})_{l(\mathcal{B} ) + 1 , \hdots , L} \ve{g}$ yields an output $\ve{y}\coloneqq\mathcal{H}_{L}(\ve{y_d})_{l(\mathcal{B}) +1, \hdots, L}\ve{g}$ with a non-zero $j^*$th component, $y_{j^*} \neq 0$. Assume that $r(\Bc) \geq j^*$, then the $j^*$th component is not affected by the input $\ve{u}$. The non-zero output has to be caused by the zero input response of the state $\ve{x}(l(\mathcal{B}))$, which is initialized to be zero, a contradiction. This shows that $r(\mathcal{B}) = j^*-1$.
\end{proof}

Next, we extend the previous result to multiple-input, multiple-output (MIMO) systems. First, we recall the concept of vector relative degree.

\begin{definition} \label{RelativeDegreeMIMO}
Let $\mathcal{B} \in \mathcal{L}^{w}$  and $(\ve{A}, \ve{B}, \ve{C}, \ve{D})$ be a state space representation of $\Bc$. Then $\Bc$  has vector relative degree $\ve{r}(\Bc) = (r_1, \hdots, r_{p(\mathcal{B})}) \in \N_0^{p(\mathcal{B})}$ if, and only if,
\begin{itemize}
\begin{item}[(i)] 
$\ve{C_jA}^k\ve{B} = \ve{0} $ for all $1 \leq j \leq p(\mathcal{B}),\ 0 \leq k \leq r_j - 2$, and $\ve{D_j} = 0$ if $r_j > 0$, and
\end{item}
\begin{item}[(ii)]
(\textit{decoupling condition}) the decoupling matrix $\ve{G}\in\R^{p(\mathcal{B})\times m(\Bc)}$ consisting of the rows 
$$\ve{G_j} = \begin{cases} \ve{D_j}, & \text{ if } r_j = 0 \\ \ve{C_j A}^{r_j -1} \ve{B}, & \text{ if } r_j > 0 \end{cases} $$
for $1 \leq j \leq p(\mathcal{B})$ has rank $p(\mathcal{B})$.
\end{item}
\end{itemize}
The system $\Bc$ has strict relative degree $r(\Bc) \in \N_0$, if it has the vector relative degree $(r(\Bc), \hdots, r(\Bc))$.
\end{definition}

\begin{remark}
Similar to the case of SISO systems, the vector relative degree of the state space representation of a MIMO system is determined by its impulse response and hence is independent of the choice of representation, which allows us to define the vector relative degree of a behavior as that of an arbitrary state space representation of it.
\end{remark}

Any MIMO system $\Bc$ with $m$ inputs and $p$ outputs induces a set of $mp$ SISO systems $\mathcal{B}^{i,j}$ by determining how an output coordinate $y_i$, $1 \leq i \leq p$, depends on an input coordinate $u_j$, $1 \leq j \leq m$, while all other inputs $u_k, k \neq j$, are constantly zero. Output transformations may change the values $r_i$ that satisfy the first condition. This is solved by imposing the decoupling condition. If two state space representations can be derived from each other by an output transformation, and both have a vector relative degree, then these vector relative degrees are equal.

Obviously, controllability of a MIMO system $\mathcal{B}$ does not imply controllability of all induced SISO systems. Let  $(\ve{u_d},\ve{y_d}):[0,T]\to\R^w$ be a data sequence  that satisfies the requirements of Theorem~\ref{Fundamentallemma}. We employ this data sequence to characterize any $\mathcal{B}^{i,j}$ by picking suitable rows of the respective Hankel matrices.
\begin{align*}
\mathcal{B}_{[0,L-1]}^{i,j} &= \left\{ (u,y) : \exists\, (\ve{v},\ve{w}) \in \mathcal{B}_{[0,L-1]} \text{ s.t. } u = v_j, y = w_i \text{ and } v_k = 0, \forall\, k \neq j \right \} \\
&= \left\{ (u,y) : \exists\, \ve{g} \in \bigcap_{k \neq j} \ker \mathcal{H}_L(\ve{u_d})^k  \text{ s.t. } \ve{u} = \mathcal{H}_L(\ve{u_d})^{j} \ve{g}, \ve{y} = \mathcal{H}_L(\ve{y_d})^i \ve{g} \right \} \\
&= \left\{ \begin{bmatrix} \mathcal{H}_L(\ve{u_d})^{j} \\ \mathcal{H}_L(\ve{y_d})^i \end{bmatrix} \ve{g} : \ve{g} \in \bigcap_{k \neq j} \ker \mathcal{H}_L(\ve{u_d})^k \right \},
\end{align*}
where, for $\ve{w} \in (\R^v)^{T+1}$, $\mathcal{H}_L(\ve{w})^k$  denotes the submatrix of $\mathcal{H}_L(\ve{w})$ consisting of the rows with indices $k + av$ for $0 \leq a \leq L-1$. Therefore, we can characterize induced SISO systems that are not controllable using Theorem~\ref{Fundamentallemma} for the entire system. In this case, we can determine the relative degree of the induced SISO systems in the way discussed in Theorem~\ref{SISORelativgrad}, even if they are not controllable. 

\begin{remark} \label{MIMOfromSISO}
Note that if a MIMO system $\mathcal{B} \in \mathcal{L}^w$ has a vector relative degree $\ve{r}(\mathcal{B}) = (r_1, \hdots, r_{p(\mathcal{B})}) \in \N_0^{p(\mathcal{B})}$, we can determine the coordinate $r_i$ belonging to the output coordinate $y_i$, $1 \leq i \leq p(\mathcal{B})$, by taking the minimum $r_i = \min_{1 \leq j \leq m(\mathcal{B})} r_{ij} $ over all relative degrees $r_{ij} \in \N_0 \cup \{ \infty \}$ of induced SISO systems $\mathcal{B}^{i,j}$. This follows directly from the definition of the vector relative degree: Condition~(i) of Definition~\ref{RelativeDegreeMIMO} shows that $r_{ij} \geq r_i$ for all $1 \leq j \leq m(\mathcal{B})$, while the decoupling condition shows that there exists at least one index $j$ such that $(\ve{C_i A}^{r_i - 1} \ve{B})_j \neq 0$, yielding $r_j \leq r_{ij}$. It is possible for some $r_{ij}$ to be infinite, if there are output coordinates $y_i$ that are not affected by certain input coordinates $u_j$. If the system is controllable, every output coordinate will depend on at least one input coordinate, hence their minimum $r_i$ will always be finite. Therefore, it is sufficient to determine those relative degrees $r_{ij}$ for which the induced SISO system $\mathcal{B}^{i,j}$ has finite relative degree. 
\end{remark}

This interpretation of the vector relative degree and Theorem~\ref{SISORelativgrad} are used in the following theorem to determine the candidate for the vector relative degree as well as to prove its existence. We again assume controllability, hence all $r_i$ are finite. 

\begin{theorem} \label{MIMOSquareRelativgrad} 
Let $\mathcal{B} \in \mathcal{L}^{w}$ be controllable, and $(\ve{u_d},\ve{y_d}) \in \mathcal{B}_{[0,T]}$ with $\ve{u_d}$ persistently exciting of order $L + n(\mathcal{B})$, for some $L \geq l(\mathcal{B}) + n(\mathcal{B}) + 1$. Let $r_{ij}$ be the relative degree of the induced SISO system $\mathcal{B}^{i,j}$ determined by Theorem \ref{SISORelativgrad}, whenever it is controllable, and $r_{ij} = \infty$ else. Let $\ve{r} = (r_1, \hdots, r_{p(\mathcal{B})})$ where $r_i := \min_{1 \leq j \leq m(\mathcal{B})} r_{ij}$. \\
Let $\ve{e_j} \in \R^{m(\mathcal{B})}$ be the $j$th canonical unit vector for $1 \leq j \leq m(\mathcal{B})$, and determine, according to Corollary~\ref{uniqueOutputInitialized}, the unique output vectors $\ve{z^{(j)}}$ by solving 
$$ \begin{bmatrix} \mathcal{H}_L(\ve{u_d}) \\ \mathcal{H}_L(\ve{y_d}) \end{bmatrix} \ve{g} = \begin{pmatrix} \ve{0}_{m(\mathcal{B})l(\mathcal{B})} \\ \ve{e_j} \\ \ve{0}_{m(\mathcal{B})(L-l(\mathcal{B})-1)}  \\ \ve{0}_{m(\mathcal{B})l(\mathcal{B})} \\ \ve{z^{(j)}}  \end{pmatrix}, \text{ and split } \ve{z^{(j)}} = \begin{pmatrix} \ve{z_0^{(j)}} \\ \vdots \\ \ve{z_{L-l(\mathcal{B})-1}^{(j)}} \end{pmatrix},$$   
 where all $\ve{z_i^{(j)}} \in \R^{m(\mathcal{B})}$. Define a matrix by its entries $a_{ij} = \left(\ve{z_{r_i}^{(j)}}\right)_i \in \R$ for all $1 \leq i \leq p(\mathcal{B})$ and all $1 \leq j \leq m(\mathcal{B})$,  where $( \cdot )_i$ denotes the $i$th entry of the respective vector.\\
Then all $r_i$ are finite and $\ve{r} = \ve{r}(\Bc)$ if, and only if, the matrix $(a_{ij})_{1 \leq i \leq p(\mathcal{B}), 1 \leq j \leq m(\mathcal{B})}$ has full rank $p(\mathcal{B})$.
\end{theorem}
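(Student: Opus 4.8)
The plan is to reduce the statement to a direct verification of the two defining conditions of the vector relative degree in Definition~\ref{RelativeDegreeMIMO}, the key being to recognize the data-computed vectors $\ve{z^{(j)}}$ as the columns of the impulse response of $\Bc$. First I would fix a minimal state space representation $(\ve{A},\ve{B},\ve{C},\ve{D})$ of $\Bc$ and read off the prescribed right-hand side: it encodes the zero initializing sequence $(\ve{u_p},\ve{y_p}) = (\ve{0},\ve{0})$ of length $l(\Bc)$ followed by the future input $\ve{u_f} = (\ve{e_j},\ve{0},\ldots,\ve{0})$, i.e.\ a unit impulse in channel $j$ applied at time $l(\Bc)$. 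Since $T_p = l(\Bc)$ and $\ve{u_d}$ is persistently exciting of order $L+n(\Bc)$, Corollary~\ref{uniqueOutputInitialized} guarantees that $\ve{z^{(j)}}$ exists and is uniquely determined, and Lemma~\ref{InitialCondition} forces $\ve{x}(l(\Bc)) = \ve{0}$. Hence $\ve{z^{(j)}}$ is the pure zero-state response to an impulse in channel $j$, so $\ve{z_k^{(j)}} = \ve{H}(k)\ve{e_j}$ is the $j$th column of the impulse response matrix, with $\ve{H}(0) = \ve{D}$ and $\ve{H}(k) = \ve{C}\ve{A}^{k-1}\ve{B}$ for $k \geq 1$. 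Consequently $a_{ij} = (\ve{z_{r_i}^{(j)}})_i = \ve{H}(r_i)_{ij}$, which equals $(\ve{D_i})_j$ if $r_i = 0$ and $(\ve{C_i}\ve{A}^{r_i-1}\ve{B})_j$ if $r_i > 0$.

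This computation shows that the $i$th row of $(a_{ij})$ is exactly the row $\ve{G_i}$ of the decoupling matrix from Definition~\ref{RelativeDegreeMIMO}, evaluated at the candidate tuple $\ve{r}$; thus the data matrix $(a_{ij})$ \emph{equals} the decoupling matrix $\ve{G}$ associated with $\ve{r}$, and condition~(ii) of the definition is literally the requirement that $(a_{ij})$ have rank $p(\Bc)$. It remains to settle condition~(i) for the candidate $\ve{r}$. Here I would invoke the analysis preceding Remark~\ref{MIMOfromSISO}: each entry $(\ve{C_i}\ve{A}^k\ve{B})_j$ is the $(k{+}1)$st impulse response coefficient of the induced SISO system $\Bc^{i,j}$, so that the choice $r_i = \min_j r_{ij}$ yields $\ve{C_i}\ve{A}^k\ve{B} = \ve{0}$ for $0 \leq k \leq r_i - 2$ and $\ve{D_i} = \ve{0}$ whenever $r_i > 0$, which is precisely condition~(i). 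Controllability of $\Bc$ ensures, via Remark~\ref{MIMOfromSISO}, that every output coordinate is excited by some input channel, so that all $r_i$ are finite and $\ve{r} \in \N_0^{p(\Bc)}$ is a genuine candidate.

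With these identifications the equivalence is immediate. For the forward direction, if all $r_i$ are finite and $\ve{r} = \ve{r}(\Bc)$, then Definition~\ref{RelativeDegreeMIMO}(ii) says $\ve{G} = (a_{ij})$ has rank $p(\Bc)$. Conversely, if $(a_{ij})$ has rank $p(\Bc)$, then conditions~(i) and~(ii) both hold for $\ve{r}$, so $\ve{r}$ is by definition the vector relative degree of $\Bc$. I expect the main obstacle to be the careful treatment of condition~(i) in the presence of non-controllable induced SISO systems, for which the value $\infty$ is assigned: one must guarantee that the data-based determination of the $r_{ij}$ still returns the true minimum $r_i$ and that no spurious over-estimation of some $r_i$ can occur, since such an over-estimation would break condition~(i) while potentially leaving the rank of $(a_{ij})$ full and thereby invalidating the equivalence. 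Establishing that the row-selection scheme described before Remark~\ref{MIMOfromSISO} recovers the correct leading-zero structure of each output row of $\ve{H}$, even for uncontrollable channels, is the delicate step that underpins the whole argument.
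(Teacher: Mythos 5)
Your proposal is correct and follows essentially the same route as the paper's proof: the zero initializing window together with Lemma~\ref{InitialCondition} forces $\ve{x}(l(\Bc))=\ve{0}$, giving $\ve{z_0^{(j)}}=\ve{D}\ve{e_j}$ and $\ve{z_k^{(j)}}=\ve{C}\ve{A}^{k-1}\ve{B}\ve{e_j}$, whence $(a_{ij})$ coincides row by row with the decoupling matrix of the candidate $\ve{r}$, and the equivalence follows from Definition~\ref{RelativeDegreeMIMO}. The only cosmetic difference is that the paper treats the degenerate case $n(\Bc)=0$ in a separate short step (there all $r_i=0$ and $(a_{ij})=\ve{D}=\ve{G}$).

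On the ``delicate step'' you flag at the end: you are right to single it out, and you should know that the paper does not close it either; its proof simply asserts that ``due to the choice of the $r_i$'' the rows $\ve{C_i}\ve{A}^{k-1}\ve{B}$ vanish for all $k<r_i$. That assertion is exactly the claim you isolate, namely that setting $r_{ij}=\infty$ for uncontrollable induced SISO systems never pushes $r_i=\min_j r_{ij}$ above the true relative degree of such a channel. Taken literally this can fail: for $\ve{A}=\begin{bmatrix}0&1\\0&2\end{bmatrix}$, $\ve{B}=\ve{I_2}$, $\ve{C}=(1,0)$, $\ve{D}=\ve{0}$ the behavior is controllable, $\Bc^{1,1}$ has relative degree $1$ but is uncontrollable (with $u_2\equiv 0$ the eigenvalue-$2$ mode is autonomous yet observable in that channel), while $\Bc^{1,2}$ is controllable with relative degree $2$; the literal assignment then yields $r_1=2$ and $(a_{11},a_{12})=\ve{C}\ve{A}\ve{B}=(0,1)$, which has full rank $p(\Bc)=1$, although $\ve{r}(\Bc)=(1)\neq(2)$. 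The intended repair lies in the discussion preceding Remark~\ref{MIMOfromSISO}: since the data satisfies the Fundamental Lemma for the full system, the restricted behaviors $\Bc^{i,j}_{[0,L-1]}$ are characterized by row selection even for uncontrollable channels, so the procedure of Theorem~\ref{SISORelativgrad} in fact determines the true $r_{ij}$ for every channel; with the $r_{ij}$ read in that sense, your condition-(i) argument (and the paper's) is complete. So your closing caveat is not a deficiency of your write-up relative to the paper; it pinpoints the one place where both arguments depend on this more generous reading of how the $r_{ij}$ are obtained.
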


\begin{proof}
All $r_i$ that satisfy condition~(i) of Definition~\ref{RelativeDegreeMIMO} are finite, as discussed in Remark \ref{MIMOfromSISO}, due to controllability of the system. If the system does have a vector relative degree, it can be determined by all relative degrees $r_{ij}$ of the induced SISO systems $\mathcal{B}^{i,j}$. Hence, the chosen vector $\ve{r}$ is the only possible candidate for a vector relative degree, and it is left to show that the decoupling condition~(ii) of Definition~\ref{RelativeDegreeMIMO} is equivalent to the invertibility of the constructed matrix $(a_{ij})$.

To establish this equivalence, we first assume $n(\mathcal{B}) > 0$ and choose a minimal state space representation $(\ve{A}, \ve{B}, \ve{C}, \ve{D})$ of $\Bc$ and calculate the vectors $\ve{z^{(j)}}$ in terms of these matrices. By Lemma \ref{InitialCondition}, we have that the state of the system after the initializing input and output sequences is $\ve{x}(l(\mathcal{B}) + 1) = 0$. Hence, we can calculate the response to the input $\ve{e_j}$ for any $1 \leq j \leq m(\mathcal{B})$ as
$$\begin{cases}\ve{z_0^{(j)}} = \ve{De_j} = \ve{D_j}, \\ \ve{z_k^{(j)}} =\sum_{i = 1}^k \ve{CA}^{k-i}\ve{Bu}(i) = \ve{CA}^{k-1}\ve{B e_j}\quad \text{ for all } 1 \leq k \leq L - l(\mathcal{B}). \end{cases}$$ 
Due to the choice of the $r_i$, we have that the row $(\ve{CA}^{k-1}\ve{B})_i = \ve{C_i A}^{k-1} \ve{B} = \ve{0}$ for all $k < r_i$, as well as $\ve{D}_i = 0$ if $r_i > 0$, and
$$a_{ij} = \left(\ve{z_{r_i}^{(j)}}\right)_i = \begin{cases} (\ve{De_j})_i = \ve{D}_{ij} &\text{ if } r_i = 0 \\ (\ve{CA}^{r_i - 1} \ve{B e_j})_i = \ve{C_i A}^{r_i - 1} \ve{B e_j} &\text{ if } r_i > 0 \end{cases}$$ for all $1 \leq i \leq p(\mathcal{B})$ and $1 \leq j \leq m(\mathcal{B})$. This yields the equality for the row
$$(a_{i1}, \hdots, a_{i,m(\mathcal{B})}) = \begin{cases} \ve{D_i} &\text{ if } r_i = 0 \\ \ve{C_iA^{r_i - 1}B} &\text{ if } r_i > 0 \end{cases}$$ for all $1 \leq i \leq p(\mathcal{B})$, which are the rows $\ve{G_i}$ of the decoupling matrix. This shows that the matrix $(a_{ij})$ has rank $p(\mathcal{B})$ if, and only if, the decoupling matrix $\ve{G}$ has full rank $p(\mathcal{B})$, since the matrices are equal.

In the case $n(\mathcal{B}) = 0$, choose a state space representation $\ve{y}(t) = \ve{Du}(t)$, $t\in\N_0$. The system is assumed to be controllable, therefore all $r_i = 0$, and we can proceed as before, obtaining $\ve{z_0^{(j)}} = \ve{D_j}$ and $(a_{ij}) = \ve{D} = \ve{G}$. This completes the proof.
\end{proof}

\begin{remark} We like to emphasize that the proof of Theorem~\ref{MIMOSquareRelativgrad} is constructive and, in particular, the decoupling matrix $\ve{G} =(a_{ij})_{1 \leq i \leq p(\mathcal{B}), 1 \leq j \leq m(\mathcal{B})}$ (sometimes called high-gain matrix in the literature on adaptive control) can be explicitly constructed from the persistently exciting data sequence. This is important when additional conditions for the application of certain control algorithms, such as sign definiteness of the decoupling matrix required for funnel control~\cite{Berger.2021}, need to be checked.
\end{remark}

\subsection{Relative degree and data informativity}

In this section, we extend the previous considerations to the case where the behavior $\Bc$ might not be controllable and the input data sequence might not be persistently exciting, thus going beyond the scope of the Fundamental Lemma. The question is, what are the weakest conditions on the data, such that the information about the (vector) relative degree can be extracted from it. Hence, in the language of Definition~\ref{Def:informativity}, we seek to determine under which conditions a given data set is informative for the (vector) relative degree. We denote the set of all systems with (vector) relative degree $\ve{r}\in\N_0^p$ from the model class $\Sigma= \mathcal{L}^{m+p}$ as $\Sigma_{\ve{r}}$, where $m,p \in \N$.

First, we again consider SISO systems. The following theorem provides sufficient conditions, which allow to identify the relative degree, that are weaker than the conditions of Theorem~\ref{SISORelativgrad}.

\begin{theorem} \label{SISORelativgradScharf}
Let $\mathcal{B} \in \mathcal{L}^2$ be a SISO system, and $(\ve{u_d},\ve{y_d}) \in \mathcal{B}_{[0,T]}$ be a data sequence. Define
\[
    M_L := \im \begin{bmatrix} \mathcal{H}_L(\ve{u_d})_{1,\ldots,l(\Bc)}\\ \mathcal{H}_L(\ve{y_d})_{1,\ldots,l(\Bc)}\end{bmatrix}^\top,\quad L\ge 1.
\]
If there exists $L \geq l(\mathcal{B}) + n(\mathcal{B}) + 1$ such that the data sequence fulfills 
\begin{align} \label{ConditionRelativeDegree}
\mathcal{H}_L(\ve{y_d})_i^{\top} \notin M_L \text{ for some } 1 \leq i \leq T,
\end{align}
then the relative degree of the system is
$r(\Bc) = k_y - k_u$, where 
\begin{align*}
k_u &= \min \left\{ 1 \leq i \leq L : \mathcal{H}_L (\ve{u_d})_i^{\top} \notin M_L \right\}, \\
 k_y &= \min \left\{ 1 \leq i \leq L : \mathcal{H}_L (\ve{y_d})_i^{\top} \notin M_L \right\}.
\end{align*}
\end{theorem}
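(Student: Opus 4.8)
The plan is to reduce the statement to a concrete, coordinate-based computation of the impulse response, using the same machinery as in Theorem~\ref{SISORelativgrad} but without assuming controllability or persistency of excitation. The central object is the space $M_L$, which by Lemma~\ref{DetermineMPUM} represents the ``past'' part of $\Bc_{MPUM,[0,l(\Bc)]}$: a row $\mathcal{H}_L(\ve{u_d})_i^\top$ or $\mathcal{H}_L(\ve{y_d})_i^\top$ lies in $M_L$ exactly when the information at time $i-1$ is a linear function of the information in the first $l(\Bc)$ time steps. First I would interpret $k_u$ and $k_y$: the index $k_u$ is the first time step at which the input row fails to be determined by the past $l(\Bc)$ rows, and since the data comes from a system of lag $l(\Bc)$, I expect $k_u = l(\Bc)+1$ generically (the input is free beyond the lag window). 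The index $k_y$ is the first time step at which the output row escapes $M_L$, and the claim $r(\Bc)=k_y-k_u$ says the output ``reacts'' exactly $r(\Bc)$ steps after the input first becomes free.

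Next I would set up the standard state-space argument. Fix a minimal representation $(\ve{A},\ve{B},\ve{C},\ve{D})$ of $\Bc_{MPUM}$ (noting that the MPUM has lag at most $l(\Bc)$ by Lemma~\ref{DetermineMPUM}, so working with it is legitimate) and recall from the proof of Theorem~\ref{SISORelativgrad} that an initializing window of $l(\Bc)$ zero input/output samples forces the state $\ve{x}(l(\Bc))=\ve{0}$, via the left-invertibility of $\mathcal{O}_{l(\Bc)}(\ve{C},\ve{A})$ (Lemma~\ref{InitialCondition}, Lemma~\ref{minimalObservable}). The key step is then to express membership in $M_L$ through this state-space picture: a column $\ve{g}$ annihilating the first $l(\Bc)$ rows of both Hankel blocks corresponds to a trajectory with zero initial window, hence zero initial state, hence a trajectory equal to the impulse response (up to scaling by the free input). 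I would show that $\mathcal{H}_L(\ve{u_d})_i^\top \in M_L$ iff the $i$th input sample is redundant given the first $l(\Bc)$, and analogously that $\mathcal{H}_L(\ve{y_d})_{l(\Bc)+k}^\top \in M_L$ iff the $(k-1)$th impulse-response coefficient vanishes — this is precisely the equivalence already extracted in Theorem~\ref{SISORelativgrad}, where $r(\Bc)=j^*-1$ with $j^*$ the first escaping output index counted from $l(\Bc)$.

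The computation then runs as follows. Condition~\eqref{ConditionRelativeDegree} guarantees that $k_y$ is well-defined (some output row genuinely escapes $M_L$), which by Lemma~\ref{rLeqL} rules out $r(\Bc)=\infty$. I would argue that $k_u=l(\Bc)+1$: the first $l(\Bc)$ input rows span the input part of the past space by definition of $M_L$, so they lie in $M_L$ trivially, while the row at $l(\Bc)+1$ carries a genuinely free input direction (the data must contain at least one trajectory exciting the input beyond the lag window, else no output could escape $M_L$ either). Writing $j^\ast := k_y - l(\Bc)$, the index $k_y$ is the first output position beyond the lag at which the impulse-response-type coefficient is nonzero, so by the argument of Theorem~\ref{SISORelativgrad} we get $r(\Bc) = j^\ast - 1 = k_y - l(\Bc) - 1 = k_y - k_u$. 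The main obstacle I anticipate is handling the case where $k_y \le l(\Bc)$, i.e.\ an \emph{output} row already escapes $M_L$ within the lag window while the correspondingly-indexed input row still lies in $M_L$; here $k_u$ might not equal $l(\Bc)+1$, and one must verify carefully that the offset $k_y-k_u$ still equals the delay between free input and reacting output. Resolving this requires relating escape of the $i$th row of either block to the same underlying trajectory column $\ve{g}$, showing that an output escape at position $i$ forces an input escape at position $i-r(\Bc)$, so that the two minima are shifted by exactly $r(\Bc)$; this bookkeeping, rather than any deep idea, is where the care is needed.
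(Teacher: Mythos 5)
Your proposal has a genuine gap, and it is fatal to the main line of argument: the claim that $k_u = l(\Bc)+1$ under condition~\eqref{ConditionRelativeDegree} is false. Your justification (``the data must contain at least one trajectory exciting the input beyond the lag window, else no output could escape $M_L$ either'') only shows that \emph{some} input row escapes $M_L$, i.e.\ that $k_u$ is well defined; it does not show that the \emph{first} escape occurs at position $l(\Bc)+1$. Concrete counterexample: take the system $y(t)=u(t-1)$, i.e.\ $(\ve{A},\ve{B},\ve{C},\ve{D})=(0,1,1,0)$, so $l(\Bc)=n(\Bc)=r(\Bc)=1$, with data on $[0,9]$ given by $\ve{u_d}=(1,0,1,0,1,0,1,0,5,0)$, $\ve{y_d}=(0,1,0,1,0,1,0,1,0,5)$ (initial state $0$), and $L=4\ge l(\Bc)+n(\Bc)+1$. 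Then $M_4=\operatorname{span}\{(1,0,1,0,1,0,1)^\top,(0,1,0,1,0,1,0)^\top\}$, and row $2$ of $\mathcal{H}_4(\ve{u_d})$ equals row $1$ of $\mathcal{H}_4(\ve{y_d})$, hence lies in $M_4$: the input obeys the feedback law $u(t)=y(t-1)$ at all times entering the window relation and breaks it only at $t=8$. Meanwhile row $3$ of $\mathcal{H}_4(\ve{u_d})$ and row $4$ of $\mathcal{H}_4(\ve{y_d})$ do escape $M_4$. So condition~\eqref{ConditionRelativeDegree} holds, $k_u=3=l(\Bc)+2$, $k_y=4$, and the theorem correctly returns $r(\Bc)=k_y-k_u=1$, whereas your formula $r(\Bc)=k_y-l(\Bc)-1$ returns $2$. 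Data of this feedback type, which are not persistently exciting, are exactly what this theorem (as opposed to Theorem~\ref{SISORelativgrad}) is meant to cover, so the case $k_u>l(\Bc)+1$ cannot be dismissed as non-generic.

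A second, smaller problem: the case you single out as the ``main obstacle,'' namely $k_y\le l(\Bc)$, cannot occur at all, since rows $1,\dots,l(\Bc)$ of $\mathcal{H}_L(\ve{y_d})$ are among the rows whose span defines $M_L$, so $k_y\ge l(\Bc)+1$ (and likewise $k_u\ge l(\Bc)+1$) automatically. Your care is thus spent on a vacuous case, while the actual content --- proving $k_y=k_u+r(\Bc)$ for \emph{arbitrary} $k_u$ --- is deferred as ``bookkeeping.'' That content is precisely the paper's proof: every $\ve{g}\perp M_L$ yields, as a linear combination of columns of the Hankel matrices, a trajectory of $\Bc$ itself (no detour through $\Bc_{MPUM}$ is needed) whose first $l(\Bc)$ inputs and outputs vanish, so the state of a minimal representation is zero at time $l(\Bc)$, and whose inputs vanish before position $k_u$; the convolution formula then forces the outputs to vanish before position $k_u+r(\Bc)$, so $k_y\ge k_u+r(\Bc)$, and for the particular $\ve{g}$ witnessing the escape at $k_y$ one shows $u_{k_u}\neq 0$, whence $y_{k_u+r(\Bc)}=\ve{H}(r(\Bc))\,u_{k_u}\neq 0$ and $k_y=k_u+r(\Bc)$. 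Your closing sentence gestures at exactly this shift argument, but it contradicts your main construction; to repair the proposal you must drop the claim $k_u=l(\Bc)+1$ entirely and carry out that argument in full.
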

\begin{proof} 
We seek to determine how many time steps are needed until the output is affected by the input. In case of $n(\mathcal{B}) > 0$, we  fix the initial state $\ve{x}(l(\mathcal{B})) = \ve{0}$ using Lemma \ref{InitialCondition} by assuming the initializing behavior $(\ve{u}_{\text{past}}, \ve{y}_{\text{past}}) = \ve{0}_{2l(\mathcal{B})}$ for some fixed minimal representation of the system. In case of $n(\mathcal{B}) = 0$, we do not need to initialize as there is no state in the minimal representation. 

By assumption (\ref{ConditionRelativeDegree}), there exists $\ve{g} \in M_L^{\perp}$ such that $\ve{g}^\top \mathcal{H}_L(\ve{y_d})_{k_y} \neq 0$. This yields
$$ \begin{pmatrix} \ve{0}_{l(\mathcal{B})} \\ \ve{u} \\ \ve{0}_{l(\mathcal{B})} \\ \ve{y} \end{pmatrix} = \begin{bmatrix} \mathcal{H}_L(\ve{u_d}) \\ \mathcal{H}_L(\ve{y_d}) \end{bmatrix} \ve{g} $$
for some $\ve{u}, \ve{y} \in \R^{L-l(\mathcal{B})}$ with $y_{k_y} \neq 0$. First, we show that $\ve{u} \neq 0$. If $n(\mathcal{B}) = 0$, there is a matrix $\ve{D} \in \R^{1 \times 1}$ such that $y_{k_y} = \ve{D} u_{k_y}$, hence $\ve{u} \neq 0$. Furthermore, $\ve{D}\neq 0$ and hence $r(\Bc)=0$ in this case. If $n(\mathcal{B}) > 0$, the system equations show that the state would be trivial at every time step, after initialization, $\ve{x}(l(\mathcal{B}) + i) = \ve{0}$ for all $1 \leq i \leq L -  l(\mathcal{B})$. Since the output $y_{k_y}$ linearly depends on the state $\ve{x}(l(\mathcal{B}) + k_y - 2)$ in the previous time step and the input $u_{k_y}$ applied in the current time step, one of these two needs to be non-zero, hence $\ve{u} \neq 0$. This shows that there is an index $1 \leq i \leq L$ such that $\mathcal{H}_L (\ve{u_d})_i^{\top} \notin M_L$, thus we can conclude that $k_u$ is well-defined, and $k_y \geq k_u$.

By choice of $k_u$, there exists $\ve{\tilde{g}} \in M_L^{\perp} $ such that 
$$ \mathcal{H}_L(\ve{u_d}) \ve{\tilde{g}} = (\ve{0}_{l(\mathcal{B}) + k_u - 1} \ \tilde{u}_{k_u} \  \hdots)^{\top}, $$
for some $\tilde{u}_{k_u} \neq 0$. In the case of $n(\mathcal{B}) = 0$, this yields $ \mathcal{H}_L(\ve{y_d}) \tilde{g} = (\ve{0}_{l(\mathcal{B}) + k_u - 1} \ \tilde{y}_{k_u} \  \hdots)^{\top}$ for some $\tilde{y}_{k_u} = \ve{D} \tilde{u}_{k_u} \neq 0$, implying $k_u \geq k_y$, thus $k_u = k_y$ and $r(\Bc) = 0 = k_y - k_u$ as claimed. Now assume that $n(\mathcal{B}) > 0$ and, seeking a contradiction, $u_{k_u} = 0$. It is obvious that the state satisfies $\ve{x}(l(\mathcal{B}) + i) = 0$ for every $0 \leq i \leq k_u - 1$. By shift invariance of the behavior, this $\ve{\tilde{g}}$ yields an output vector $\ve{\tilde{y}}$ where $\tilde{y}_i \neq 0$ for some $i < k_y$, which contradicts the minimality in the choice of $k_y$. Hence, $\mathcal{H}_L(\ve{u_d})_{k_u}^\top \ve{g}  = u_{k_u} \neq 0$.

Since in the case of $n(\mathcal{B}) > 0$, we have that $\ve{x}(l(\mathcal{B}) + k_u) = \ve{0}$, we can express the output vector $\ve{y}$ by using the impulse response $\ve{H}$ of the system,
$$ y_i = \begin{cases} 0, &\text{ if } i < k_u \\ \sum_{j = 0}^{i-k_u} \ve{H}(i - k_u - j) u_{k_u + j}, &\text{ if } i \geq k_u.  \end{cases}$$
Since $\ve{H}(j) = 0$ for all $j < r(\Bc)$, we can conclude that $y_i = 0$ for all $i < k_u + r(\Bc)$, and $y_{k_u + r(\Bc)} = \ve{H}(r(\Bc))u_{k_u} \neq 0$. Therefore, $k_y = k_u + r(\Bc)$, which implies $r(\Bc) = k_y-k_u$, as claimed.
\end{proof}

\begin{remark}
With the help of Theorem~\ref{SISORelativgradScharf} it is possible to determine the relative degree of a SISO system $\mathcal{B} \in \mathcal{L}^2$ from data under strictly weaker conditions than in Theorem~\ref{SISORelativgrad}, since it does not require controllability of the system, and it does not require the input sequence to be persistently exciting. Instead, its assumptions depend on the output sequence, and it will therefore not be possible to check the conditions prior to collecting the output data. Additionally, note that Theorem~\ref{SISORelativgrad} is able to detect $r = \infty$, while the assumptions of Theorem~\ref{SISORelativgradScharf} imply that the relative degree of the system is finite. 
\end{remark}

In order to find conditions that are equivalent to the informativity for the relative degree, we express this property in terms of the MPUM.

\begin{lemma} \label{InformativityMPUM}
Let $\mathcal{B} \in \mathcal{L}^2$ be the true SISO system with known lag $l(\mathcal{B})$, let $\mathcal{D} = (\ve{u_d},\ve{y_d}) \in \mathcal{B}_{[0,T-1]}$ be a sampled data sequence and $r \in \N_0$. Then $\mathcal{D}$ is informative for the relative degree~$r$ if, and only if, there exists $a \in \R \setminus \{ 0 \}$ such that $\ve{v}(a,r) = (\ve{u}, \ve{y}(a,r)) \in \mathcal{B}_{MPUM,[0,l(\mathcal{B})+r]}$, where $\ve{u} = (\ve{0}_{l(\mathcal{B})}, \ 1,  \ \hdots ) \in \R^{l(\mathcal{B})+r+1}$ and $\ve{y}(a,r) = (\ve{0}_{l(\mathcal{B})+r }, \ a) \in \R^{l(\mathcal{B})+r+1}$ are sequences whose omitted entries can be chosen arbitrarily.
\end{lemma}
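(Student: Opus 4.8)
The plan is to reduce the whole informativity statement to a single property of $\mathcal{B}_{MPUM}$. Two observations drive this. First, $\mathcal{B}_{MPUM}$ itself explains the data, so $\mathcal{B}_{MPUM}\in\Sigma_{\mathcal{D}}$; hence if $\mathcal{D}$ is informative for relative degree $r$, then $\mathcal{B}_{MPUM}\in\Sigma_r$, i.e.\ $r(\mathcal{B}_{MPUM})=r$. Second, by definition of the MPUM we have $\mathcal{B}_{MPUM}\subseteq\mathcal{B}'$ for every $\mathcal{B}'\in\Sigma_{\mathcal{D}}$. The proof therefore splits into (a) showing that the existence of a trajectory $\ve{v}(a,r)$ with $a\neq 0$ is equivalent to $r(\mathcal{B}_{MPUM})=r$, and (b) showing that every behavior in $\Sigma_{\mathcal{D}}$ shares the relative degree of $\mathcal{B}_{MPUM}$, so that informativity is in fact equivalent to $r(\mathcal{B}_{MPUM})=r$. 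Combining (a) and (b) yields the claimed equivalence.

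For step (a) I would read the relative degree of $\mathcal{B}_{MPUM}$ off the trajectory. By Lemma~\ref{DetermineMPUM} the MPUM has lag $l(\mathcal{B}_{MPUM})\le l(\mathcal{B})$, so the leading block $\ve{0}_{l(\mathcal{B})}$ of the input together with the vanishing initial output fixes the state of a fixed minimal representation to $\ve{x}(l(\mathcal{B}))=\ve{0}$ by Lemma~\ref{InitialCondition}. Expressing the output after this instant as a convolution with the impulse response $\ve{H}$, the unit entry $u_{l(\mathcal{B})}=1$ followed by arbitrary entries gives $y_{l(\mathcal{B})+k}=\ve{H}(k)$ whenever $\ve{H}(0)=\dots=\ve{H}(k-1)=0$. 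An inductive reading then shows that the prescribed zeros $y_{l(\mathcal{B})},\dots,y_{l(\mathcal{B})+r-1}=0$ force $\ve{H}(0)=\dots=\ve{H}(r-1)=0$, while $y_{l(\mathcal{B})+r}=a$ forces $\ve{H}(r)=a$. Hence such a trajectory with $a\neq 0$ lies in $\mathcal{B}_{MPUM,[0,l(\mathcal{B})+r]}$ exactly when $\ve{H}(0)=\dots=\ve{H}(r-1)=0$ and $\ve{H}(r)\neq 0$, i.e.\ exactly when $r(\mathcal{B}_{MPUM})=r$; the converse inclusion is obtained by building the trajectory from the impulse response of a minimal realization of $\mathcal{B}_{MPUM}$ with zero input after the impulse.

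Step (b) is the main obstacle. I would prove that $\mathcal{B}_{MPUM}\subseteq\mathcal{B}'$, with both systems of input cardinality $1$, forces $\mathcal{B}'$ to have the same impulse response, and hence the same relative degree, as $\mathcal{B}_{MPUM}$. Taking minimal kernel representations $\mathcal{B}_{MPUM}=\ker R(\ve{\tilde{\sigma}})$ and $\mathcal{B}'=\ker R'(\ve{\tilde{\sigma}})$, both of full row rank $p(\mathcal{B})=1$ and thus of size $1\times 2$, the inclusion yields a scalar polynomial $F$ with $R'=F\,R$ by the standard characterization of behavioral inclusion for full-row-rank representations \cite{PoldWill98}. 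Writing $R=[\,Q,\ -P\,]$, this means $R'=[\,FQ,\ -FP\,]$, so the transfer function $P/Q$, and therefore the Markov parameters obtained from any minimal state space realization, are unchanged by the uncontrollable factor $F$. Consequently the impulse responses of $\mathcal{B}'$ and $\mathcal{B}_{MPUM}$ coincide and $r(\mathcal{B}')=r(\mathcal{B}_{MPUM})$, so $\Sigma_{\mathcal{D}}\subseteq\Sigma_r$ holds precisely when $r(\mathcal{B}_{MPUM})=r$.

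The delicate points I expect to require the most care are the invariance claim in step (b)—in particular justifying that the cancelled factor $F$ contributes only uncontrollable modes and leaves the Markov parameters untouched—and the implicit restriction to systems of input cardinality $1$ in the model class (equivalently, fixing $u$ as the input). The latter is needed both for the relative degree to be defined on every element of $\Sigma_{\mathcal{D}}$ and for the kernel representations to carry the uniform $1\times 2$ shape used above; without it the full signal space $(\R^2)^{\N_0}$ would lie in $\Sigma_{\mathcal{D}}$ yet outside $\Sigma_r$, so this convention must be made explicit.
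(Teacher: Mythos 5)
Your reduction hinges, in both steps, on an unverified premise: that $\mathcal{B}_{MPUM}$ is itself a SISO input/output system---input cardinality $1$ with $\ve{u}$ free---so that it has a minimal representation of the form~\eqref{isosystem}, an impulse response, a well-defined relative degree $r(\mathcal{B}_{MPUM})$, and a full-row-rank $1\times 2$ kernel representation $R=[\,Q,\ -P\,]$. This premise follows neither from the definition of the MPUM as an intersection nor from informativity, and it fails in general; worse, once you restrict the model class to input-cardinality-one systems (a restriction you yourself note is indispensable), the intersection defining the MPUM typically leaves that class, so your opening claim that $\mathcal{B}_{MPUM}\in\Sigma_{\mathcal{D}}$, hence $r(\mathcal{B}_{MPUM})=r$, is unfounded. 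Concretely, let the true system be $y(t+1)=y(t)+u(t+1)$ (lag $1$, relative degree $0$) with data $\ve{u_d}=(0,1,0)$, $\ve{y_d}=(0,1,1)$. Then $\mathcal{B}_1:\ y(t+1)=y(t)+u(t+1)$ and $\mathcal{B}_2:\ y(t+1)=2y(t)+u(t+1)-u(t)$ both explain the data and both have relative degree $0$ (the data is in fact informative for relative degree $0$), yet their impulse responses are $(1,1,1,1,\dots)$ and $(1,1,2,4,\dots)$, and $\mathcal{B}_1\cap\mathcal{B}_2=\{(\ve{0},\ve{0})\}$. So the intersection of the explaining behaviors is autonomous: it explains no data, lies outside the model class, and carries no relative degree; and since $\mathcal{B}_1,\mathcal{B}_2$ both contain the MPUM but have different impulse responses, behavioral inclusion of the MPUM cannot force a common impulse response. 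This also shows your step (b) proves far too much: if inclusion forced equality of impulse responses, informativity for the relative degree would imply identifiability of the entire transfer function, which is precisely what the informativity framework is designed to avoid. Technically, the factorization $R'=FR$ with a \emph{scalar} polynomial $F$ is available only when $R$ is $1\times 2$ of full row rank; for a degenerate MPUM the minimal kernel representation has two rows, $F$ is $1\times 2$, and the cancellation argument is void.

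The paper's proof is built exactly to avoid attributing any input/output structure to $\mathcal{B}_{MPUM}$. For sufficiency, it reads the relative degree off the window $\ve{v}(a,r)$ inside each explaining behavior $\mathcal{B}'\in\Sigma_{\mathcal{D}}$ (each of which \emph{is} a genuine SISO system), never off the MPUM. For necessity---the direction your proposal essentially begs---it argues by contradiction: if two explaining behaviors had relative degree $r$ but distinct leading Markov parameters $a_1\neq a_2$, then the blended behavior with output $\frac{-a_2}{a_1-a_2}\ve{y_1}+\frac{a_1}{a_1-a_2}\ve{y_2}$ (realized by a block-diagonal state-space construction) is again linear, time-invariant and complete, explains the data, but has relative degree strictly greater than $r$ because the $r$th Markov parameters cancel; this contradicts informativity, so all explaining behaviors share one \emph{common} window $\ve{v}(a,r)$, which therefore lies in the intersection. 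That construction is the essential content of the lemma, and your proposal contains no substitute for it.
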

\begin{proof}
First, assume that $\ve{v}(a,r) \in \mathcal{B}_{MPUM,[0,l(\mathcal{B})+r]}$. Then $\ve{v}(a,r)$ is contained in any behavior that explains the data, hence the first $r+1$ entries of the impulse response of every such behavior are $(\ve{0}_{r}, a)$, which yields a relative degree of~$r$, since $a \neq 0$. Therefore, $\Sigma_{\mathcal{D}} \subseteq \Sigma_{r}$.

Now let $\mathcal{D}$ be informative for the relative degree $r$, and assume, seeking a contradiction, that there are two behaviors $\mathcal{B}_1, \mathcal{B}_2 \in \Sigma_{\mathcal{D}} \subseteq \Sigma_r$ with one-dimensional input and output, but do not have a trajectory $\ve{v}(a,r)$ with $a\neq 0$ in common. Observe that the behaviors each contain a sequence $\ve{v}(a_i,r_i) \in \mathcal{B}_{i, [0, l(\mathcal{B}) + r_i]}$ with $a_i \neq 0$ and $r_i\in\N_0$ for $i=1,2$. Since $r(\Bc_1)=r(\Bc_2)=r$, it follows that $r_1=r_2=r$ and $a_1 \neq a_2$ such that $\ve{v}(a_i,r) \in \mathcal{B}_{i,[0,l(\mathcal{B}) + r]}$, $i=1,2$. Using these coefficients, define the behavior 
$$\scalebox{0.95}{$\mathcal{B}_3 = \left\{ (\ve{u},\ve{y}) : (\ve{u}, \ve{y_1}) \in \mathcal{B}_1, (\ve{u}, \ve{y_2}) \in \mathcal{B}_2 \text{ and } \ve{y} = \frac{- a_2}{a_1 - a_2} \ve{y_1} + \frac{a_1}{a_1 - a_2} \ve{y_2} \right\},$}$$ which may be interpreted as a convex combination of the previously fixed behaviors. We can easily confirm that $\mathcal{B}_3 \in \mathcal{L}^{2}$ by fixing minimal representations $(\ve{A_1}, \ve{B_1}, \ve{C_1}, \ve{D_1})$ of $\mathcal{B}_1$ and $(\ve{A_2}, \ve{B_2}, \ve{C_2}, \ve{D_2})$ of $\mathcal{B}_2$. Let $\ve{x_1}, \ve{x_2}$ denote the state vectors corresponding to the state space representations. Then we can construct a state space representation of $\mathcal{B}_3$ with $\ve{x} = (\ve{x_1}^{\top}, \ve{x_2}^{\top})^{\top}$ as state vector, and the characterizing matrices $$\scalebox{0.95}{$\ve{A} = \begin{bmatrix} \ve{A_1} & \ve{0} \\ \ve{0} & \ve{A_2} \end{bmatrix},\ \ve{B} = \begin{bmatrix} \ve{B_1}\\ \ve{B_2} \end{bmatrix},\ \ve{C} = \left(\frac{-a_2}{a_1 - a_2}\ve{C_1}, \frac{a_1}{a_1 - a_2}\ve{C_2}\right),\ \ve{D} = \frac{-a_2}{a_1 - a_2}\ve{D_1} + \frac{a_1}{a_1 - a_2}\ve{D_2}.$}$$ This yields 
\begin{equation} \label{ConvexCombinationOfSystems}
\left.
\begin{aligned}
\ve{x}(t+1) &= \ve{Ax}(t) + \ve{Bu}(t) = \begin{pmatrix} \ve{A_1x_1}(t) + \ve{B_1u}(t) \\ \ve{A_2x_2}(t) + \ve{B_2u}(t) \end{pmatrix} = \begin{pmatrix} \ve{x_1}(t+1) \\ \ve{x_2}(t+1) \end{pmatrix}\\
\ve{y}(t) &= \frac{-a_2}{a_1 - a_2}\ve{C_1x_1}(t) +\frac{a_1}{a_1 - a_2}\ve{C_2x_2}(t) + \frac{-a_2}{a_1 - a_2}\ve{D_1u}(t) + \frac{a_1}{a_1 - a_2}\ve{D_2u}(t) \\
&= \frac{-a_2}{a_1 - a_2}\ve{y_1}(t) +\frac{a_1}{a_1 - a_2} \ve{y_2}(t).
\end{aligned}
\right\}
\end{equation}
Therefore, we can conclude $\mathcal{B}_3 \in \mathcal{L}^{2}$.  

Furthermore, $\mathcal{B}_3 \in \Sigma_{\mathcal{D}}$ since $(\ve{u_d}, \ve{y_d}) \in \mathcal{B}_{1,[0, T-1]} \cap \mathcal{B}_{2, [0, T-1]}$ implies that there are trajectories $(\ve{u_1},\ve{y_1}) \in \mathcal{B}_1$ and $(\ve{u_2}, \ve{y_2}) \in \mathcal{B}_2$ with $\ve{u}_{\ve{1} \vert [0,T-1]} = \ve{u_d} = \ve{u}_{\ve{2} \vert [0, T-1]}$ and $\ve{y}_{\ve{1} \vert [0,T-1]} = \ve{y_d} = \ve{y}_{\ve{2} \vert [0, T-1]}$. Since the input coordinates are free, we can assume without loss of generality that $\ve{u_1} = \ve{u_2}$. By definition, 
$$\displaystyle{(\ve{u},\ve{y}) \coloneqq \frac{-a_2} { a_1 - a_2} (\ve{u_1}, \ve{y_1}) + \frac{a_1}{a_1 - a_2} (\ve{u_2}, \ve{y_2}) = \left(\ve{u_2}, \ve{y_2} + \frac{a_2}{a_1 - a_2} (\ve{y_2}-\ve{y_1})\right) \in \mathcal{B}_3}.$$
Therefore, $\ve{u}_{\vert [0,T-1]} = \ve{u_d}$ and $\ve{y}_{\vert [0, T-1]} = \ve{y_d}$, thus $\mathcal{B}_3 \in \Sigma_{\mathcal{D}}$.

Next we show that $\mathcal{B}_3 \notin \Sigma_{r}$. Consider an input sequence $\ve{u}:\N_0\to\R$ with $u(i) = 0$ for all $i \neq L+1$, and  $u(L+1) = 1$ for some $L \geq \max \{l(\mathcal{B}_1), l(\mathcal{B}_2) \} $. 
Further, let $(\ve{u}, \ve{y_1}) \in \mathcal{B}_1 $ and $(\ve{u}, \ve{y_2}) \in \mathcal{B}_2$ be trajectories such that $\ve{y_1}(t) = 0 = \ve{y_2}(t) $ for all $t \leq L$. Then in case of $n(\mathcal{B}) > 0$, we obtain $\ve{x_1}(L+1) = \ve{0} = \ve{x_2}(L+1)$ by Lemma \ref{InitialCondition}, and therefore $\ve{y_1}(t) = \ve{0} = \ve{y_2}(t)$ for all $t \leq L + r$ by definition of the relative degree. In case of $n(\mathcal{B}) = 0$, we have $r = 0$ and hence $\ve{y_1}(t) = 0 = \ve{y_2}(t)$ for all $t \leq L + r = L$ obviously holds. Moreover, we can observe that $ \ve{y_1}(L+r+1) = a_1$ and $\ve{y_2}(L+r+1) = a_2$ by choice of the behaviors $\Bc_1,\Bc_2$. Now define 
\[
\ve{y_3} = \frac{-a_2}{a_1 - a_2} \ve{y_1} + \frac{a_1} { a_1 - a_2} \ve{y_2},
\]
for which we find that $(\ve{u},\ve{y_3})\in \mathcal{B}_3$ and $\ve{y_3}(t) = 0$ for all $t \leq L + r$. Considering the representation constructed in~\eqref{ConvexCombinationOfSystems}, we note that $\ve{x}(L+1) = 0$, and therefore the shifted output sequence $\ve{y_3}(L+1+\cdot)$ is the impulse response of the system $\Bc_3$. Since 
$$\ve{y_3}(L+1+r) = \frac{-a_2}{a_1 - a_2} \ve{y_1}(L+r+1) + \frac{a_1}{a_1 - a_2} \ve{y_2}(L+r+1) = \frac{-a_1a_2}{a_1 - a_2} + \frac{a_1a_2}{a_1 - a_2} = 0,$$ 
we find that $r(\mathcal{B}_3)>r$. This shows that $\mathcal{B}_3 \in \Sigma_{\mathcal{D}} \setminus \Sigma_{r}$, which contradicts the assumption that $\mathcal{D}$ is informative for the relative degree $r$. We can conclude that there exists $a\neq 0$ such that all behaviors $\mathcal{B} \in \Sigma_{\mathcal{D}}$ contain a trajectory $\ve{v}(a,r)$. Then the assertion of the lemma follows from the definition of $\Bc_{MPUM}$ as the intersection over all $\Bc\in\Sigma_\Dc$.
\end{proof}

Next, we provide a necessary and sufficient condition on the data to be informative for the relative degree. Even more so, this result is constructive and allows to be checked by an algorithmic procedure which is provided in Algorithm~\ref{AlgoRelativeDegree}. If the data are informative for the relative degree, then this algorithm additionally computes the value of it. 

\begin{algorithm} 
\caption{Determine Relative Degree of SISO system}
\label{AlgoRelativeDegree}
\begin{algorithmic}
\State \hspace*{\algorithmicindent} \textbf{Input} lag $l\coloneqq l (\mathcal{B})$, data $(\ve{u_d},\ve{y_d})$
\vspace{0.3cm}
\State $i \coloneqq l + 1$
\State $M_j \coloneqq \im \begin{bmatrix} \mathcal{H}_{l+1} (\ve{u_d})_{1,\hdots, j-1}\\ \mathcal{H}_{l+1} (\ve{y_d})_{1, \hdots, l} \end{bmatrix}^\top$ for all $1 \leq j \leq l+1$
\While {$i > 0$}
\If {$\mathcal{H}_{l+1} (\ve{u_d})_i^{\top} \in M_i$}
\State return $\tilde{r} = -1$
\ElsIf {$\mathcal{H}_{l+1} (\ve{y_d})_{l+1}^{\top} \notin M_i, $} 
\State return $\tilde{r} = l+1-i$
\EndIf
\State $i \coloneqq i-1$
\EndWhile
\State set $\tilde{r} = -1$
\vspace{0.3cm}
\State \hspace*{\algorithmicindent} \textbf{Output} $\tilde{r}$
\end{algorithmic}
\vspace{0.4em}
\hrule
\vspace{0.3em}
{\itshape The \textbf{return} statements indicate termination of the algorithm once a result is obtained.}
\end{algorithm}

\begin{theorem} \label{SISORelativgradAlgo} Let $\mathcal{B} \in \mathcal{L}^2$ be an unknown system with known lag $l(\mathcal{B})$, and let $\mathcal{D} = (\ve{u_d},\ve{y_d}) \in \mathcal{B}_{[0,T]}$ be a sampled data sequence. Then the following statements hold:
\begin{enumerate}
    \item[(i)] Fix $r\in\N_0$. The data $\mathcal{D}$ is informative for the relative degree~$r$ if, and only if, Algorithm~\ref{AlgoRelativeDegree} returns $\tilde r = r$. In this case we have $r(\Bc) = r$.
    \item[(ii)] The data $\mathcal{D}$ is not informative for the relative degree~$r$ for all $r \in \N_0$ if, and only if, Algorithm~\ref{AlgoRelativeDegree} returns $\tilde{r} = -1$.
\end{enumerate}
\end{theorem}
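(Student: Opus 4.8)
The plan is to prove part~(i) first and obtain part~(ii) as its complement, using that Algorithm~\ref{AlgoRelativeDegree} always outputs exactly one value in $\{-1\}\cup\{0,\dots,l\}$ (with $l:=l(\mathcal{B})$) and that a system has a unique relative degree. The entry point is Lemma~\ref{InformativityMPUM}, which I would immediately recast as: $\mathcal{D}$ is informative for the relative degree $r$ if, and only if, $r(\mathcal{B}_{MPUM})=r$. Indeed, $\mathcal{B}_{MPUM}\in\Sigma_{\mathcal{D}}$, so informativity forces $r(\mathcal{B}_{MPUM})=r$ (and likewise $r(\mathcal{B})=r$, since $\mathcal{B}\in\Sigma_{\mathcal{D}}$, giving the final clause of~(i)); conversely, if $r(\mathcal{B}_{MPUM})=r$ then its impulse response $H$ satisfies $H(r)\neq0$ and the restriction of the impulse-response trajectory to $[0,l+r]$ is exactly the witness $\ve{v}(H(r),r)$ required by Lemma~\ref{InformativityMPUM}. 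By the proof of Lemma~\ref{DetermineMPUM} we have $l(\mathcal{B}_{MPUM})\le l$, so Lemma~\ref{rLeqL} applied to $\mathcal{B}_{MPUM}$ gives $r(\mathcal{B}_{MPUM})\le l$ or $r(\mathcal{B}_{MPUM})=\infty$; hence the scan $i=l+1,\dots,1$, that is $r=0,\dots,l$, is exhaustive. Via Lemma~\ref{DetermineMPUM} I identify each $\ve{g}\in M_i^{\perp}$ with the trajectory $\big(\mathcal{H}_{l+1}(\ve{u_d})\ve{g},\,\mathcal{H}_{l+1}(\ve{y_d})\ve{g}\big)\in\mathcal{B}_{MPUM,[0,l]}$, whose first $i-1$ input entries and first $l$ output entries vanish.

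The technical core is a zero-state identity. Fixing a minimal realization of $\mathcal{B}_{MPUM}$, the matrix $\mathcal{O}_l$ has full column rank by Lemma~\ref{minimalObservable} since $l\ge l(\mathcal{B}_{MPUM})$. Assume $H(s)=0$ for all $s<r$, and let $(\ve{u},\ve{y})\in\mathcal{B}_{MPUM,[0,l]}$ have $y_0=\dots=y_{l-1}=0$ and $u_0=\dots=u_{l-r-1}=0$. Expanding $y_t=\ve{C}\ve{A}^t\ve{x}(0)+\sum_{j=0}^{t}H(t-j)u_j$ for $t\le l-1$, a nonzero convolution term would require $j\ge l-r$ and $t-j\ge r$ simultaneously, which is impossible for $t\le l-1$; thus $\mathcal{O}_l\ve{x}(0)=0$, so $\ve{x}(0)=0$, and propagating yields $y_l=H(r)\,u_{l-r}$. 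This is precisely how the algorithm extracts the impulse response from windows of length only $l+1$: although merely $l-r=i-1$ leading inputs are pinned to zero instead of the full $l$ that would zero the state directly via Lemma~\ref{InitialCondition}, the $l$ vanishing outputs compensate and force $\ve{x}(0)=0$ anyway.

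With this identity, the per-iteration analysis is linear-algebraic duality between $M_i$ and $M_i^{\perp}$. Check~1 fails at $i$ exactly when some admissible $\ve{g}$ has $u_{l-r}=\mathcal{H}_{l+1}(\ve{u_d})_i\ve{g}\neq0$, and check~2 passes exactly when some admissible $\ve{g}$ has $y_l=\mathcal{H}_{l+1}(\ve{y_d})_{l+1}\ve{g}\neq0$. When $m(\mathcal{B}_{MPUM})=1$ and $r\le r(\mathcal{B}_{MPUM})$, the shift of the genuine impulse-response trajectory to $[r,r+l]$ equals $(0^{l-r},1,0^{r}\,;\,0^{l},H(r))$ and lies in $\mathcal{B}_{MPUM,[0,l]}$, so its nonzero impulse makes check~1 fail. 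Since $y_l=H(r)u_{l-r}$ for every admissible trajectory, once check~1 has failed check~2 passes if, and only if, $H(r)\neq0$, i.e.\ $r=r(\mathcal{B}_{MPUM})$. An induction on $i$ with hypothesis $H(0)=\dots=H(r-1)=0$ then shows the loop advances precisely while $r<r(\mathcal{B}_{MPUM})$ and returns $\tilde r=r(\mathcal{B}_{MPUM})$ at $i=l+1-r(\mathcal{B}_{MPUM})$.

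It remains to treat $r(\mathcal{B}_{MPUM})=\infty$, which covers both $H\equiv0$ and $m(\mathcal{B}_{MPUM})=0$. In either case the zero-state identity gives $y_l=0$ for every admissible trajectory, so check~2 never passes; the algorithm then returns $-1$, either because check~1 triggers (no impulse is injectable, as when $m(\mathcal{B}_{MPUM})=0$) or because the loop runs to completion. This matches non-informativity for every $r$ through the recast Lemma~\ref{InformativityMPUM}, completing part~(i); part~(ii) follows by complementation, as the output equals $-1$ exactly when it equals no $r$. I expect the zero-state identity of the second paragraph to be the main obstacle, since it is what legitimises reading $H(r)$ off a single length-$(l+1)$ window; a secondary subtlety is verifying that check~1 triggering is equivalent to the absence of a freely injectable input impulse, hence to $r(\mathcal{B}_{MPUM})=\infty$.
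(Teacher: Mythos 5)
Your proposal is correct, and it takes a genuinely different route from the paper's proof. The paper never fixes a realization of $\mathcal{B}_{MPUM}$: it works with the witness trajectory $\ve{v}(a,r)$ of Lemma~\ref{InformativityMPUM} directly, building it window-by-window in Step~1 through an induction (hypotheses $\ve{v_{k-1}}\in A$ and $j\le r(\Bc)+2$) whose uniqueness arguments invoke Lemma~\ref{InitialCondition} for a minimal realization of the \emph{true} system $\Bc$, and then in Steps~2--3 verifies or refutes the three conditions~\eqref{ConditionInputFree}--\eqref{ConditionOutputResponse} from the existence or non-existence of that witness. You instead (a) strengthen Lemma~\ref{InformativityMPUM} to ``$\Dc$ is informative for relative degree $r$ iff $r(\Bc_{MPUM})=r$'' --- legitimate because $\Bc_{MPUM}\in\Sigma_\Dc$, a fact the paper itself uses elsewhere (e.g.\ Step~2 of Theorem~\ref{StrictCharacterizationZeroDynamics}) and which follows from Lemmas~\ref{DetermineMPUM} and~\ref{ExtendedSequencesForm} --- and (b) show the algorithm computes $r(\Bc_{MPUM})$, with $-1$ encoding $\infty$, by fixing a minimal realization of the MPUM and proving your zero-state identity; that identity is exactly the right substitute for Lemma~\ref{InitialCondition} in a window where only $i-1<l(\Bc)$ leading inputs vanish, and it is what the paper's more intricate trajectory induction accomplishes implicitly. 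Your route is more modular: it separates ``what the algorithm outputs'' from ``what informativity means,'' and it makes Remark~\ref{ACanBeDetermined} transparent, since $\ve{H}(r)$ is read off as $y_l/u_{l-r}$. The price is that you need realization facts about the MPUM itself: existence of a minimal realization, $l(\Bc_{MPUM})\le l(\Bc)$ (from the proof of Lemma~\ref{DetermineMPUM}), observability via Lemma~\ref{minimalObservable}, and a separate treatment of the degenerate case $m(\Bc_{MPUM})=0$, where your zero-state identity does not literally apply because there is no input--output splitting; there the correct argument is that any window with $l(\Bc)$ vanishing ``inputs'' and ``outputs'' is identically zero by observability of the autonomous MPUM, so check~1 already triggers at $i=l(\Bc)+1$. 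Finally, your closing remark that check~1 triggering is \emph{equivalent} to $r(\Bc_{MPUM})=\infty$ overstates things: when $m(\Bc_{MPUM})=1$ and $\ve{H}\equiv 0$, check~1 never triggers and the loop instead runs to completion; this is only an implication at steps the loop actually reaches, but since your case analysis explicitly allows both exits to $-1$, nothing in the argument breaks.
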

\begin{proof}
We will show the necessity and sufficiency for both claims individually.\\
\textbf{Step 1}: We show necessity in statement (i). Assume that the output of Algorithm~\ref{AlgoRelativeDegree} is $\tilde{r} \geq 0$. In this case, we have that the following conditions hold:
\begin{align} 
\mathcal{H}_{l(\mathcal{B})+1} (\ve{u_d})_i^{\top} &\notin M_i, \text{ for all } i \geq l(\mathcal{B})+1-\tilde{r}  \label{ConditionInputFree} \\
\mathcal{H}_{l(\mathcal{B})+1}(\ve{y_d})_{l(\mathcal{B})+1}^{\top} &\in M_i, \text{ for all } i \geq l(\mathcal{B})+2-\tilde{r} \label{ConditionOutputZero} \\
\mathcal{H}_{l(\mathcal{B})+1}(\ve{y_d})_{l(\mathcal{B})+1}^{\top} &\notin M_{l(\mathcal{B})+1-\tilde{r}}. \label{ConditionOutputResponse}  
\end{align}
We show that $\tilde{r} = r(\Bc)$. Using the notation from Lemma~\ref{InformativityMPUM}, it is sufficient to show that there exists $\tilde{a} \neq 0$ such that $\ve{v}(\tilde{a},\tilde{r}) \in \mathcal{B}_{MPUM, [0,l(\mathcal{B}) + \tilde{r}]}$, since this will imply that $\mathcal{D}$ is informative for the relative degree $\tilde{r}$, which in particular gives $\Bc\in \Sigma_\Dc \subseteq \Sigma_{\tilde r}$.

By the characterization of the lag in Proposition \ref{CharakterisierungenLag}, it is sufficient to ensure that all subsequences of length $l(\mathcal{B}) + 1$ of $\ve{v}(\tilde{a},\tilde{r})$, denoted by $ \ve{v_j} \coloneqq \ve{v}(\tilde{a},\tilde{r})_{j, \hdots, j+l(\mathcal{B})}$ for $0 \leq j \leq \tilde{r}$, are contained in $\mathcal{B}_{MPUM,[0,l(\mathcal{B})]} =:A$, which can be expressed explicitly by the data $(\ve{u_d}, \ve{y_d})$ using Lemma \ref{DetermineMPUM}. To this end, we show that $\ve{v_j} \in \im \begin{bmatrix} \mathcal{H}_{l(\mathcal{B}) + 1} (\ve{u_d}) \\ \mathcal{H}_{l(\mathcal{B}) + 1} (\ve{y_d}) \end{bmatrix} = A$ for $1 \leq j \leq \tilde{r}+1$ inductively. 

The induction hypothesis for some fixed $1 \leq j \leq \tilde{r} + 1$ consists of the following two assumptions:
\begin{enumerate}
    \item[(a)] $\ve{v_{k-1}} \in A$ holds for all $1 \leq k \leq j$.
    \item[(b)] $j \leq r(\Bc) + 2$.
\end{enumerate}
Obviously, the constant zero trajectory is contained in $\mathcal{B}_{MPUM},$ which implies that $\ve{0}_{2(l(\mathcal{B})+1)} =: \ve{v_0} \in A$. This shows (a) for $j = 1$. Since $r(\Bc)\in \mathbb{N}_0$, assumption (b) is trivial for the case $j = 1$. We will continue by induction over $1 \leq j \leq \tilde{r} + 1$. We can construct a composed trajectory $(\ve{0}_{l(\mathcal{B})}, 1, \hdots; \ve{0}_{l(\mathcal{B}) + j-1}) \in \mathcal{B}_{MPUM,[0, l(\mathcal{B}) + j - 2]}$, using the fact that the values past the $(l(\mathcal{B}) + 1)$st entry of the input sequence do not affect the output sequence, since the true relative degree $r(\Bc)$ of the system is by assumption at least $j-2$. This implies $j \leq r(\Bc)+1$, showing claim (b) for $j+1$. Condition (\ref{ConditionInputFree}) for $i = l(\mathcal{B})+2-j \geq l(\mathcal{B})+1-r(\Bc)$ shows that there is a sequence $(\ve{0}_{l(\mathcal{B})+1-j}, 1, \hdots; \ve{0}_{l(\mathcal{B})}, a) \in A$ for some $a\in\R$. Now, if $j\le \tilde{r}$, then condition~$(\ref{ConditionOutputZero})$ for $i = l(\mathcal{B})+2-j$ implies that $a = 0$, therefore $\ve{v_{j}} \in A$, showing assumption (a) for $j+1$. 

However, if $j=\tilde{r} +1$, then condition~$(\ref{ConditionOutputResponse})$ shows that there is $b \neq 0$ such that $(\ve{0}_{l(\mathcal{B})+1-j}, c, \hdots; \ve{0}_{l(\mathcal{B})}, b) \in A \subseteq \mathcal{B}_{[0,l(\mathcal{B})]}$ for some $c\in\R$. Applying Lemma \ref{InitialCondition} for an arbitrary minimal representation of the unknown system $\mathcal{B}$, we conclude that the state $\ve{x}(l(\mathcal{B}))$ and therefore $b$ is uniquely determined for every fixed choice of input sequence $(\ve{0}_{l(\mathcal{B})+1-j}, c, \hdots)$. Since $r(\Bc)\ge j-1$, the input values past the $(l(\mathcal{B})+2-j)$th index do not affect the output sequence in the observed window, hence there is a unique $(\ve{0}_{l(\mathcal{B})+1-j}, c, \hdots; \ve{0}_{l(\mathcal{B})}, b) \in A$. 

If $c = 0$, then $(\ve{0}_{l(\mathcal{B}) +1}; \ve{0}_{l(\mathcal{B})}, b  ) \in \mathcal{B}_{[0,l(\mathcal{B})]}$, implying $b = 0$ by Lemma \ref{InitialCondition}, which contradicts our assumption. Hence $c \neq 0$, and due to linearity $\displaystyle{\left(\ve{0}_{l(\mathcal{B})+1-j}, 1, \hdots ; \ve{0}_{l(\mathcal{B})}, \frac{b}{c}\right) \in A}$. Therefore, $\tilde{a} = a = \frac{b}{c} \neq 0$, and $\ve{v_j} = \ve{v_{\tilde{r} + 1}} \in A$. We can conclude that $\ve{v}(\tilde{a}, \tilde{r}) \in \mathcal{B}_{MPUM,[0,l(\mathcal{B}) + \tilde{r}]}$, hence $\mathcal{D}$ is informative for the relative degree $r = \tilde{r}$, and $r=r(\Bc)$.

\textbf{Step 2}: We show sufficiency in statement (i). Let $\Dc$ be informative for the relative degree~$r= r(\Bc)$. We show that the conditions $(\ref{ConditionInputFree}), (\ref{ConditionOutputZero})$ and $(\ref{ConditionOutputResponse})$ are satisfied for $\tilde{r} = r(\Bc)$, which is sufficient to conclude that Algorithm~\ref{AlgoRelativeDegree} returns $r(\Bc)$. By Lemma \ref{InformativityMPUM}, there exists $a \neq 0$ such that $v(a,r(\Bc)) \in \Bc_{MPUM,[0,l(\Bc)+r(\Bc)]}$. Since all parts of this sequence of length $l(\Bc)+1$ are contained in $\im \begin{bmatrix} \mathcal{H}_{l(\mathcal{B}) + 1} (\ve{u_d}) \\ \mathcal{H}_{l(\mathcal{B}) + 1} (\ve{y_d}) \end{bmatrix}$, there are vectors $\ve{g_i} \in \R^{T-l(\Bc)+1}$ for each $i\geq l(\Bc)+1-r(\Bc)$ such that $\ve{g_i} \perp M_i$ and $\mathcal{H}_{l(\Bc)+1}(\ve{u_d})_{i}\,\ve{g_i} = 1$. We can conclude that condition (\ref{ConditionInputFree}) holds for $\tilde{r} = r(\Bc)$.  
Additionally, we know that for each $i \geq l(\Bc)+2+r(\Bc)$, each sequence of the structure $(\ve{0}_{i-1}, b, \hdots; \ve{0}_{l(\Bc)}, c)$ with $b, c \in \R$ has to satisfy $c = 0$ by the uniqueness of the relative degree. Therefore, every $\ve{g_i} \perp M_i$ satisfies $\mathcal{H}_{l(\Bc)+1}(\ve{y_d})_{l(\Bc)+1}\,\ve{g_i} = 0$, yielding condition~$(\ref{ConditionOutputZero})$ for $\tilde{r} = r(\Bc)$.
Moreover, the last part of the sequence $v(a,r)$ shows that there is a vector $\ve{\tilde{g}} \in \R^{T-l(\Bc)+1}$ such that $\ve{\tilde{g}} \perp M_{l(\Bc)+1-r(\Bc)}$ and $ \mathcal{H}_{l(\mathcal{B})+1}(\ve{y_d})_{l(\mathcal{B})+1}\, \ve{\tilde{g}} = a \neq 0$, implying condition $(\ref{ConditionOutputResponse})$ for $\tilde{r} = r(\Bc)$. 

\textbf{Step 3}: We show necessity in statement (ii). Assume that the output of Algorithm~\ref{AlgoRelativeDegree} is $\tilde{r} = -1$. We show that it is not possible to determine the true relative degree of the system $\mathcal{B}$ by only using the provided data $\mathcal{D} = (\ve{u_d}, \ve{y_d})$. For this, by Lemma~\ref{InformativityMPUM} it is sufficient  to show that there is no trajectory $\ve{v}(a,r)$ for $r \in \N_0$ and $a \neq 0$ such that each window $\ve{v_j}$ of length $l(\mathcal{B}) + 1$ is contained in $A$. 

There are two lines in the algorithm that cause the output to be $\tilde{r} = -1$. We first want to discuss the case that there is $1\le i \le l(\Bc)+1$ such that 
\begin{align}
\mathcal{H}_{l(\mathcal{B})+1} (\ve{u_d})_i^{\top} & \in M_i \text{, and } \label{ConditionUiZero} \\
\mathcal{H}_{l(\mathcal{B})+1}(\ve{y_d})_{l(\mathcal{B})+1}^{\top} &\in M_j \text{ for all }i< j \le l(\Bc)+1. \label{ConditionOutputIsZero}
\end{align}
Assume that the data $\mathcal{D}$ is informative for the relative degree~$r(\Bc)$, which means by Lemma \ref{InformativityMPUM} that there is a $\ve{v}(a,r(\Bc))$ for some $a \neq 0$ such that all windows of length $l(\mathcal{B}) + 1$ are contained in $A$. Condition (\ref{ConditionOutputIsZero}) for $j = i + 1 $ shows that $(\ve{0}_{i}, \hdots; \ve{0}_{l(\mathcal{B})}, a) \in A$ implies $a = 0$. Therefore, the input values $u(i+1), \hdots, u(l(\mathcal{B}) +1)$ do not affect the output $y(l(\mathcal{B}) + 1)$. Hence, $r(\Bc) \geq (l(\mathcal{B} ) + 2) - (i + 1) = l(\mathcal{B}) +1-i$, implying that $\ve{v_{l(\mathcal{B}) + 1 - (i-1)}} = (\ve{0}_{i-1}, 1, \hdots; \ve{0}_{l(\mathcal{B})}, a) \in A $. By definition of $\ve{A}$, this sequence $\ve{v_{l(\Bc)+2-i}}$ is contained in $\im \begin{bmatrix} \mathcal{H}_{l(\mathcal{B}) + 1} (\ve{u_d}) \\ \mathcal{H}_{l(\mathcal{B}) + 1} (\ve{y_d}) \end{bmatrix}$, hence by choice of the subspace $M_i$ there exists $\ve{g} \in\R^{T-l(\Bc)+1}$ such that $\ve{g} \perp M_i$ and $\mathcal{H}_{l(\mathcal{B}) + 1} (\ve{u_d})_i\, \ve{g} = 1$. Therefore, $\mathcal{H}_{l(\Bc)+1}(\ve{u_d})_i^{\top} \notin M_i$, contradicting condition (\ref{ConditionUiZero}). This shows that there exists no sequence of the shape $\ve{v}(a,r)$ for $a \neq 0$ such that all windows of length $l(\mathcal{B}) + 1$ are contained in $A$.

It remains to discuss the case of the output $\tilde{r} = -1$ being caused by the conditions
\begin{align}
\mathcal{H}_{l(\mathcal{B})+1}(\ve{u_d})_i^{\top} &\notin M_i \text{ for all } 1 \leq i \leq l(\mathcal{B})+1 \\
\mathcal{H}_{l(\mathcal{B})+1}(\ve{y_d})_{l(\mathcal{B})+1}^{\top} &\in M_i  \text{ for all } 1 \leq i \leq l(\mathcal{B})+1. \label{ConditionOutputFixed}
\end{align}
Note that condition (\ref{ConditionOutputFixed}) for $i = 1$ shows that any $(\hdots; \ve{0}_{l(\mathcal{B})}, a) \in A$ has to satisfy $a = 0$, hence there is no sequence of the shape $\ve{v}(a,r)$ with $a \neq 0$ for which every window of length $l(\mathcal{B}) + 1$ is contained in $A$. 

\textbf{Step 4}: We show sufficiency in statement (ii). Assume that  $\mathcal{D}$ is not informative for the relative degree~$r$ for all $r \in \N_0$. Seeking a contradiction, assume that Algorithm~\ref{AlgoRelativeDegree} does not return $\tilde{r} = -1$, which means that it returns $\tilde r \ge 0$. Then statement~(i) implies that $\mathcal{D}$ is informative for the relative degree~$\tilde r$, a contradiction. This completes the proof.
\end{proof}

\begin{remark} \label{ACanBeDetermined}
A proper inspection of the proof of Theorem~\ref{SISORelativgradAlgo} reveals that, whenever the data is informative for the relative degree~$r$, then it is also possible to determine the first non-zero value of the impulse response $\ve{H}(r-1) \neq 0$ from the data.
\end{remark}

\begin{remark}
We can observe that Algorithm~\ref{AlgoRelativeDegree} relies on previous knowledge about the lag of the system. Data sequences can be used to determine an upper and lower bound on the lag of the system, if an upper bound on the McMillan degree $n(\mathcal{B})$ is known, see~\cite{Camlibel.2024}. If we use a potentially larger value of $l$ than the true lag of the system, Algorithm~\ref{AlgoRelativeDegree} may indicate that the data is not informative of the relative degree of the system, even though it is informative of the relative degree of the system when the correct lag $l(\mathcal{B})$ is used. However, if the algorithm claims informativity for a certain relative degree, this result is still reliable. If a potentially smaller value of $l$ is used, then the data set may incorrectly be determined to be informative for some relative degree, even though the data set is not informative of any relative degree or may be informative of a different relative degree. However, if the algorithm claims that the data set is not informative for any relative degree, this result is reliable.
\end{remark}

In order to extend Theorem~\ref{SISORelativgradAlgo} to the MIMO case, we do not need to know the relative degree of every induced SISO system precisely. Since we will minimize over a certain set of relative degrees, it will be sufficient to find lower bounds on the relative degrees that we cannot determine.

\begin{lemma}\label{Lem:lower_bound_RG}
Consider a SISO system $\mathcal{B} \in \mathcal{L}^2$ with known lag $l(\mathcal{B})$ and a data sequence $\mathcal{D} = (\ve{u_d},\ve{y_d}) \in \mathcal{B}_{[0,T]}$. Then we have $r(\Bc) \geq l(\mathcal{B}) + 2 - j$, where 
$$ j = \min \left\{ 1 \leq i \leq l(\mathcal{B}) + 1 : \mathcal{H}_{l(\mathcal{B}) + 1}(\ve{u_d})_i^{\top} \notin \im\begin{bmatrix} \mathcal{H}_{l(\mathcal{B}) + 1}(\ve{u_d})_{1,\hdots,i-1}\\ \mathcal{H}_{l(\mathcal{B}) + 1}(\ve{y_d})\end{bmatrix}^\top \right\}.$$
\end{lemma}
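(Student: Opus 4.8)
The plan is to convert the algebraic condition defining $j$ into a statement about a concrete trajectory and then read off the relative degree from it. By minimality of $j$, the $j$-th row of $\mathcal{H}_{l(\Bc)+1}(\ve{u_d})$ lies outside the displayed image, so there is a coefficient vector $\ve{g}$ that is orthogonal to the first $j-1$ rows of $\mathcal{H}_{l(\Bc)+1}(\ve{u_d})$ and to all rows of $\mathcal{H}_{l(\Bc)+1}(\ve{y_d})$, but not to its $j$-th row. Applying the stacked Hankel matrix gives a sequence $(\ve{u},\ve{y}) = \begin{bmatrix}\mathcal{H}_{l(\Bc)+1}(\ve{u_d})\\ \mathcal{H}_{l(\Bc)+1}(\ve{y_d})\end{bmatrix}\ve{g}$ which, by Lemma~\ref{DetermineMPUM}, lies in $\Bc_{MPUM,[0,l(\Bc)]}\subseteq \Bc_{[0,l(\Bc)]}$ and satisfies $u(0)=\dots=u(j-2)=0$, $u(j-1)\neq 0$, and $y(0)=\dots=y(l(\Bc))=0$. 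Fixing a minimal state space representation $(\ve{A},\ve{B},\ve{C},\ve{D})$ of $\Bc$, this finite trajectory comes with a state trajectory satisfying~\eqref{isosystem}.

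The core of the argument is then a relative-degree bookkeeping. Writing the output via $y(t) = \ve{C}\ve{A}^{t}\ve{x}(0) + \sum_{k} \ve{H}(t-k)\,u(k)$ and using that the relative degree forces $\ve{H}(0)=\dots=\ve{H}(r(\Bc)-1)=0$ while $\ve{H}(r(\Bc))\neq 0$, the first nonzero input, applied at time $j-1$, can influence the output no earlier than time $j-1+r(\Bc)$; concretely $y(j-1+r(\Bc)) = \ve{H}(r(\Bc))\,u(j-1)$ up to the initial-state contribution. Since the observed output vanishes at every time up to $l(\Bc)$ while $u(j-1)\neq 0$ and $\ve{H}(r(\Bc))\neq 0$, one is forced to conclude $j-1+r(\Bc) > l(\Bc)$, which rearranges to the claimed inequality $r(\Bc)\geq l(\Bc)+2-j$. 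If no index $i$ meets the defining condition, the bound is vacuous and there is nothing to prove.

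The main obstacle is neutralising the free-response term $\ve{C}\ve{A}^{t}\ve{x}(0)$: the relative degree governs only the zero-state (impulse) response, so the vanishing of $\ve{y}$ over the window yields information about $r(\Bc)$ only once the influence of the initial state has been removed from the relevant output equations. To handle this I would use that both $\ve{u}$ and $\ve{y}$ vanish on the leading block, appeal to Lemma~\ref{InitialCondition} to pin down the state after the all-zero portion, and then combine the $l(\Bc)+1$ vanishing output samples with the observability of the minimal representation (Lemma~\ref{minimalObservable}) to force the zero-input contributions in the equations for $y(j-1),\dots,y(l(\Bc))$ to cancel. Carrying this state-elimination through carefully is the delicate step on which the whole estimate rests; once it is done, the identity $y(j-1+r(\Bc)) = \ve{H}(r(\Bc))\,u(j-1)$ closes the argument.
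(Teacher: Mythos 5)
Your proposal follows the same route as the paper's own (very terse) proof: from the defining property of $j$ you extract a trajectory $(\ve{u},\ve{y})\in\Bc_{[0,l(\Bc)]}$ with $u(0)=\dots=u(j-2)=0$, $u(j-1)\neq 0$, $\ve{y}=\ve{0}$, and you play it off against the impulse-response pattern $\ve{H}(0)=\dots=\ve{H}(r(\Bc)-1)=0$, $\ve{H}(r(\Bc))\neq 0$ to force $j-1+r(\Bc)>l(\Bc)$; the paper encodes the second ingredient in the sequence $\ve{v}(a,r(\Bc))$ and shift-invariance, you use the convolution formula, but it is the same comparison. The genuine gap is exactly the step you flag as delicate, and your proposed repair does not close it. Lemma~\ref{InitialCondition} pins down the state only after an initializing window of length at least $l(\Bc)$ on which \emph{both} input and output are known; your leading all-zero input block has length $j-1$, which is smaller than $l(\Bc)$ in every case where the bound is nontrivial. (Only for $j=l(\Bc)+1$ does your argument go through: then $\ve{x}(l(\Bc))=\ve{0}$, so $0=y(l(\Bc))=\ve{D}u(l(\Bc))$ forces $\ve{D}=0$, i.e.\ $r(\Bc)\geq 1$.) Observability (Lemma~\ref{minimalObservable}) does not force the free-response terms out of the equations for $y(j-1),\dots,y(l(\Bc))$ either: a nonzero $\ve{x}(0)$ can have its zero-input response cancel the forced response at every time in the window, and then $\ve{y}=\ve{0}$ yields no contradiction at all.

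This cancellation genuinely occurs, namely along zero-dynamics trajectories, so no finer bookkeeping can complete the step; the statement needs an additional hypothesis on the data. Concretely, let $\ve{A}\in\R^{3\times 3}$ be the nilpotent Jordan block (ones on the superdiagonal), $\ve{B}=(0,1,1)^\top$, $\ve{C}=(1,0,0)$, $\ve{D}=0$. This representation is observable with $\mathcal{O}_3(\ve{C},\ve{A})=\ve{I}$, controllable, and minimal (its zero-input responses fill a three-dimensional space, so no smaller representation exists); hence $n(\Bc)=l(\Bc)=3$, and $\ve{C}\ve{B}=0$, $\ve{C}\ve{A}\ve{B}=1$ give $r(\Bc)=2$. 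Starting at $\ve{x}(0)=(0,0,-1)^\top$ and applying $u(t)=(-1)^t$ produces identically zero output, so $\Dc=(\ve{u_d},\ve{y_d})$ with $\ve{u_d}=(1,-1,1,-1,1,-1)$ and $\ve{y_d}=\ve{0}$ is a valid data sequence in $\Bc_{[0,5]}$. Then $\mathcal{H}_4(\ve{y_d})=\ve{0}$ while $\mathcal{H}_4(\ve{u_d})_1=(1,-1,1)\neq\ve{0}$, so $j=1$, and the asserted inequality would read $r(\Bc)\geq l(\Bc)+2-j=4$, contradicting $r(\Bc)=2$: here the free response $(0,0,-1,0)$ of $\ve{x}(0)$ exactly cancels the forced response $(0,0,1,0)$ on the window, which is precisely the situation your state-elimination would have to exclude but cannot. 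Note that the paper's own three-line proof makes the identical comparison and is equally silent about the initial state (it implicitly treats the extracted trajectory as if it started from the zero state), so you have faithfully reconstructed the published argument, weak spot included; but your concluding claim that Lemmas~\ref{InitialCondition} and~\ref{minimalObservable} allow the elimination to be carried through is false as stated.
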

\begin{proof}
By definition of~$j$ there exists $(\ve{0}_{j-1}, 1, \hdots; \ve{0}_{l(\mathcal{B}) + 1}) \in \mathcal{B}_{[0,l(\mathcal{B})]}$. By definition of the relative degree, there is a sequence $\ve{v}(a,r(\Bc)) \in \mathcal{B}_{[0,l(\mathcal{B})+r(\Bc)]}$ for some $a \neq 0$. Hence, $(\ve{0}_{j-1}, 1, \hdots; \ve{0}_{j-1+ r(\Bc)}, a, \hdots)\in\Bc$, and therefore $j-1+r(\Bc) \geq l(\mathcal{B}) + 1$.
\end{proof}

We are now in the position to determine the vector relative degree of a MIMO system from data, where instead of assuming informativity for each relative degree $r_{ij}$ of the induced SISO systems $\mathcal{B}^{i,j}$, we rely on Remark~\ref{MIMOfromSISO} and lower bounds for the unidentifiable relative degrees in terms of the known ones. In order to check whether these conditions hold true, the lower bounds from Lemma~\ref{Lem:lower_bound_RG} may be employed.

\begin{corollary} \label{RelativgradMIMOVerschärft}
Let $\mathcal{B} \in \mathcal{L}^{w}$ be an unknown system and $\mathcal{D} = (\ve{u_d},\ve{y_d}) \in \mathcal{B}_{[0,T]}$ be a data sequence. Let $A_{\mathcal{D}}$ be the set of pairs $(i,j)$ with $1 \leq i \leq p(\mathcal{B})$ and $1 \leq j \leq m(\mathcal{B})$, so that $\mathcal{D}$ is informative for the relative degree $r_{ij}$ of the induced SISO system $\mathcal{B}^{i,j}$. Assume that for each $1 \leq i \leq p(\mathcal{B})$, there exists $1\leq j \leq m(\mathcal{B})$ such that $(i,j) \in A_{\mathcal{D}}$, and for each pair $(i,j) \notin A_{\mathcal{D}}$, we have 
\begin{equation} \label{eq:CondRij}
r_{ij} \geq \min_{(i,k) \in A_{\mathcal{D}}} r_{ik} + 1.
\end{equation}
Let $\ve{G}\in\R^{p(\Bc)\times m(\Bc)}$ be the matrix with entries $\ve{G}_{ij}$ constructed according to Remark~\ref{ACanBeDetermined} for $(i,j) \in A_{\mathcal{D}}$ with $r_{ij} = \min_{(i,k) \in A_{\mathcal{D}}} r_{ik}$, and $\ve{G}_{ij}=0$ else.
Then the data $\mathcal{D}$ is informative for the system having decoupling matrix $\ve{G}$, and if the vector relative degree exists, then $\mathcal{D}$ is informative for the vector relative degree $\ve{r} = (r_1, \hdots, r_{p(\mathcal{B})})$, where $r_i = \min_{(i,j) \in A_{\mathcal{D}}} r_{ij}$.
\end{corollary}
\begin{proof}
Let $\tilde \Bc\in\Sigma_\Dc$. We show that $\tilde \Bc$ has decoupling matrix $\ve{G}$. Observe that, for $1 \leq i \leq p(\mathcal{B})$, $r_i(\tilde \Bc)$ is  well-defined by the assumption that $(i,j) \in A_{\mathcal{D}}$ for at least one $1 \leq j \leq m(\mathcal{B})$, and that $r_{i} = \min_{(i,j) \in A_{\mathcal{D}}} r_{ij} = \min_{1 \leq j \leq m(\mathcal{B})} r_{ij}$ by the imposed lower bound. Therefore, $\ve{r}= (r_1, \hdots, r_{p(\mathcal{B})})$ is the candidate for the vector relative degree $\ve{r}(\tilde \Bc)$. For each pair $(i,j) \notin A_{\mathcal{D}}$, the assumed lower bound $r_{ij} \geq\min_{(i,k) \in A_{\mathcal{D}}} r_{ik} + 1 > r_i$ implies that the $(i,j)$-entry of the decoupling matrix of $\tilde \Bc$ is zero. Similarly, for each pair $(i,j) \in A_{\mathcal{D}}$ with $r_{ij} > \min_{(i,k) \in A_{\mathcal{D}}} r_{ik}$, the $(i,j)$-entry of the decoupling matrix of $\tilde \Bc$ is zero. Furthermore, by the construction in Remark~\ref{ACanBeDetermined}, $\Dc$ is informative for $\mathcal{B}^{i,j}$ having decoupling matrix (a scalar in this case) $\ve{G}_{ij}$. As a consequence, the data $\mathcal{D}$ is informative for the system having decoupling matrix $\ve{G}$.

If the vector relative degree exists, then the decoupling matrix of $\Bc$ has full row rank, and since $\Bc\in\Sigma_\Dc$ it follows that $\rk \ve{G} = p(\Bc)$. This implies that any $\tilde \Bc\in\Sigma_\Dc$ has vector relative degree $\ve{r} = (r_1, \hdots, r_{p(\mathcal{B})})$.
\end{proof}

Note that Corollary \ref{RelativgradMIMOVerschärft} is not an equivalence as Theorem \ref{SISORelativgradAlgo}, since it may be possible to check the decoupling condition without knowledge of all entries of the decoupling matrix. In this case, the lower bound can be relaxed to $r_{ij} \geq \min_{(i,k) \in A_{\mathcal{D}}} r_{ik}$ for some pairs $(i,j)$, and the respective entry of the decoupling matrix may be zero or non-zero. This will be demonstrated in the following example.

\begin{example}
We want to consider the following data sequence for a system $\mathcal{B} \in \mathcal{L}^4$ with $m(\mathcal{B}) = p(\mathcal{B}) = l(\mathcal{B}) = 2$, which consists of input and output vectors for the time steps $0$ through $8$:
\[
\begin{array}{ccccccccccc}
  &       &       & \!\!\!\downarrow &       &         &       &       & \!\!\!\downarrow &     \\
  \ve{U} = \bigg[ &
  \begin{pmatrix} 0 \\ 0 \end{pmatrix}, &
  \begin{pmatrix} 0 \\ 0 \end{pmatrix}, &
  \begin{pmatrix} 1 \\ 0 \end{pmatrix}, &
  \begin{pmatrix} 0 \\ 0 \end{pmatrix}, &
  \begin{pmatrix} 0 \\ 0 \end{pmatrix}, &
  \begin{pmatrix} 0 \\ 0 \end{pmatrix}, &
  \begin{pmatrix} 0 \\ 0 \end{pmatrix}, &
  \begin{pmatrix} 0 \\ 1 \end{pmatrix}, &
  \begin{pmatrix} 0 \\ 0 \end{pmatrix}
  \bigg], \vspace{0.3cm} \\
  \ve{Y} = \bigg[ &
  \begin{pmatrix} 0 \\ 0 \end{pmatrix}, &
  \begin{pmatrix} 0 \\ 0 \end{pmatrix}, &
  \begin{pmatrix} 0 \\ 0 \end{pmatrix}, &
  \begin{pmatrix} 0 \\ 0 \end{pmatrix}, &
  \begin{pmatrix} 1 \\ 0 \end{pmatrix}, &
  \begin{pmatrix} 0 \\ 0 \end{pmatrix}, &
  \begin{pmatrix} 0 \\ 0 \end{pmatrix}, &
  \begin{pmatrix} 0 \\ 0 \end{pmatrix}, &
  \begin{pmatrix} 0 \\ 1 \end{pmatrix}
  \bigg].
\end{array}
\]

Observe that, by Lemma \ref{InitialCondition}, the state of any minimal representation $(\ve{A},\ve{B},\ve{C},\ve{D})$ of $\Bc$ at the indicated time steps 2 and 7 is zero by linearity. This shows that the vectors $\ve{y}(2), \hdots, \ve{y}(6)$ are the impulse response of the system corresponding to the first input coordinate, that is
\[
   \ve{C A}^k\ve{B} \begin{pmatrix} 1 \\ 0 \end{pmatrix} = \ve{y}(2+k),\quad k=0,\ldots,4.
\]
We can conclude that the relative degrees $r_{ij}$ of the induced SISO systems $\Bc^{i,j}$ for $j=1$ are $r_{11} = 2$, while $r_{21} = \infty$ by Lemma \ref{rLeqL} and the fact that the observed impulse response covers more than $l(\mathcal{B}) = 2$ time steps. Moreover, the impulse response corresponding to the second input coordinate is given by $\ve{C A}^k\ve{B} \begin{pmatrix} 0 \\ 1 \end{pmatrix} = \ve{y}(7+k)$ for $k=0,1$, thus $r_{22} = 1$ and $r_{12} > 1$. While the lower bound for the relative degree $r_{12} \geq 2$ does not satisfy the condition required in Corollary~\ref{RelativgradMIMOVerschärft}, this allows us to conclude that the decoupling matrix is of the shape $\ve{G} = \begin{bmatrix} 1 & b \\ 0 & 1 \end{bmatrix}$ for some $b \in \R$, hence the system has vector relative degree $\ve{r} = (2,1)$, since $\ve{G}$ is invertible for every $b \in \R$. 

However, it is not possible to determine whether $b$ is zero or non-zero from the provided data, i.e., it is not possible to completely identify the decoupling matrix. We will construct a representation of a system $\mathcal{B}_a \in \mathcal{L}^4$ depending on the parameter $a \in \R$ that satisfies $(\ve{U},\ve{Y}) \in \mathcal{B}_{a, [0,8]}$ for each choice of the parameter $a$ such that the undetermined entry in the decoupling matrix will be $a$. Consider the state space representation
\begin{align*}
\ve{x_1}(t+1) &= \ve{x_2}(t) + a\ve{x_3}(t)\\
\ve{x_2}(t+1) &= \ve{u_1}(t) \\
\ve{x_3}(t+1) &= \ve{u_2}(t) \\
\ve{y_1}(t) &=  \ve{x_1}(t)\\
\ve{y_2}(t) &= \ve{x_3}(t)
\end{align*}
and let $\Bc_a$ be the behavior induced by it. Then $l(\mathcal{B}_a) = 2$, since we can determine the truncated observability matrix $\mathcal{O}_2(\ve{C},\ve{A}) = \begin{bmatrix} 1 & 0 & 0 \\ 0 & 0 & 1 \\ 0 & 1 & 0 \\ 0 & 0 & 0 \end{bmatrix}$, which has full rank $3$, and it is easy to confirm that $(\ve{U},\ve{Y}) \in \mathcal{B}_{a, [0,8]}$ with the corresponding state sequence 
\[
\begin{array}{ccccccccccc}
  \ve{X} = \Bigg[ &
  \begin{pmatrix} 0 \\ 0 \\ 0 \end{pmatrix}, &
  \begin{pmatrix} 0 \\ 0 \\ 0 \end{pmatrix}, &
  \begin{pmatrix} 0 \\ 0 \\ 0 \end{pmatrix}, &
  \begin{pmatrix} 0 \\ 1 \\ 0 \end{pmatrix}, &
  \begin{pmatrix} 1 \\ 0 \\ 0 \end{pmatrix}, &
  \begin{pmatrix} 0 \\ 0 \\ 0 \end{pmatrix}, &
  \begin{pmatrix} 0 \\ 0 \\ 0 \end{pmatrix}, &
  \begin{pmatrix} 0 \\ 0 \\ 0 \end{pmatrix}, &
  \begin{pmatrix} 0 \\ 0 \\ 1 \end{pmatrix}
  \Bigg].
\end{array}
\]
However, the relative degree of the induced SISO system $\mathcal{B}^{1,2}_a$ is $$r_{12} = \begin{cases} 2, & \text{ if } a \neq 0 \\ \infty, & \text{ if } a = 0 \end{cases}$$ and the decoupling matrix of $\mathcal{B}_a$ is $\ve{G} = \begin{bmatrix} 1 & a \\ 0 & 1 \end{bmatrix}$, which depends on the choice of $a \in \R$.
\end{example}

\begin{remark}
Observe that the conditions stated in Corollary~\ref{RelativgradMIMOVerschärft} are equivalently characterizing the informativity for the decoupling matrix of the system. The informativity of the vector relative degree is a weaker condition, as seen in the previous example. By exchanging condition~\eqref{eq:CondRij} with 
\begin{equation}
r_{ij} \geq \min_{(i,k) \in A_{\mathcal{D}}} r_{ik},
\end{equation}
we can determine the subset of entries of the decoupling matrix satisfying either $(i,j) \in A_{\Dc}$, or condition~\eqref{eq:CondRij} holds. These entries characterize all restrictions that a decoupling matrix of this system needs to satisfy, hence describing the set of possible decoupling matrices $G_{\Dc}$. The data set $\Dc$ is informative for the vector relative degree of the system if, and only if, $G_{\Dc} \subseteq \{ A \in \R^{p(\mathcal{B})\times m(\mathcal{B})} \vert \operatorname{rank} A = p(\mathcal{B}) \}$ holds.
\end{remark}

\section{Zero dynamics of linear systems}\label{secZD}

The zero dynamics of a system refers to the internal dynamics that remain under the constraint that the output is constantly zero. The condition of stability of the zero dynamics is essential for many control approaches, because it ensures the boundedness of internal states under output tracking. It is a classical assumption in adaptive control~\cite{ByrnWill84, KhalSabe87, Mare84, Mors83}, often known as \textit{minimum phase}, see also~\cite{IlchWirt13}. Furthermore, it is relevant for the closed-loop stability when using plant inversion feedback control methods~\cite{Buffington.1998}. If the zero dynamics are unstable, it may not be possible to stabilize the system with certain control strategies, as discussed in~\cite{Hu.2023}. This motivates the need for identification of stability of the zero dynamics for systems without a fully known model.

\subsection{Stability of zero dynamics using persistently exciting input data}

In this section, we determine the stability of the zero dynamics of a system by using input/output data. Recall that the zero dynamics of a system $\mathcal{B} \in \mathcal{L}^w$ is defined by
\[
    \mathcal{B}_{ZD} = \{ \ve{u}: \N_0 \rightarrow \R^{m(\mathcal{B})} :  (\ve{u},\ve{0}) \in \mathcal{B}\}.
\]
Stability of the zero dynamics is defined as follows.

\begin{definition}
For a system $\Bc\in\Lc^w$, we call its zero dynamics $\mathcal{B}_{ZD}$ stable, if for any $\ve{u} \in \mathcal{B}_{ZD}$ and any minimal state space representation~\eqref{isosystem} of $\mathcal{B}$, the state sequence for any initial condition $\ve{x}(0) = \ve{x^0}$ and the input sequence converge to zero, $(\ve{x}(i), \ve{u}(i)) \rightarrow \ve{0}$ for $i \rightarrow \infty$. 
\end{definition}

\begin{remark} \label{StabilityDependsOnlyOnU}
The stability of the zero dynamics of a system $\Bc\in\Lc^w$ does not depend on the choice of the minimal state space representation, as we show in the following. If $n(\Bc)=0$, then any minimal representation has the form $\ve{y}(t) = \ve{Du}(t)$, $t\in\N_0$, for some suitable matrix $\ve{D}$, with zero-dimensional state space. In this case, $\ve{u} \in \mathcal{B}_{ZD}$ is equivalent to $\ve{u}(i) \in \ker \ve{D}$ for all $i \in \N_0$. Therefore, stability of the zero dynamics is equivalent to $\ker \ve{D} = \{ \ve{0} \}$.

If $n(\Bc) > 0$, fix two minimal representations characterized by matrices $(\ve{A},\ve{B},\ve{C},\ve{D})$ and $(\tilde{\ve{A}},\tilde{\ve{B}},\tilde{\ve{C}},\tilde{\ve{D}})$, respectively, and choose  arbitrary $\ve{z} \in \R^{n(\Bc)}$. The zero input response $(\ve{0}; (\ve{CA^kz})_{k\in \N_0})$ is contained in the first behavior, and since both representations are behaviorally equivalent, there exists an initial state $\ve{\tilde{z}} \in \R^{n(\Bc)}$ such that $(\ve{0}; (\ve{CA^kz})_{k\in \N_0}) = (\ve{0}; (\ve{\tilde{C}\tilde{A}^k\tilde{z}})_{k\in \N_0})$.
Define the transformation $\ve{\beta}:\R^{n(\Bc)}\rightarrow \R^{n(\Bc)}$ via $z\mapsto \tilde{z}$. Observe that this is well-defined, since $\tilde{z}$ is uniquely determined due to the observability (by Lemma~\ref{minimalObservable}) of the second minimal representation. Moreover, $\ve{\beta}$ is linear, and it is injective due to the observability of the first minimal representation. Since $\ve{\beta}$ maps between two spaces with matching dimensions, it is therefore bijective. Fix a trajectory $\ve{u} \in \mathcal{B}_{ZD}$ and let $(\ve{x}(k))_{k \in \N_0}, (\ve{\tilde{x}}(k))_{k \in \N_0}$ be the state trajectories corresponding to the two representations. For each $k \in \N_0$, the application of the input sequence $(\ve{u}(k), \hdots)$ to the initial state $\ve{\beta}(\ve{x}(k))$ of the second representation yields a constant zero output sequence, since the zero input response of this initial state satisfies $(\ve{\tilde{C}\tilde{A}}^j\ve{\beta}(\ve{x}(k)))_{j\in \N_0} = (\ve{CA}^j\ve{x}(k))_{j \in \N_0}$, the impulse responses of both representations are equal by Lemma~\ref{behaviorallyEquivalent} and 
\[
    \left((\ve{u}(k), \hdots),\,\Big(\ve{CA}^j\ve{x}(k) + \sum\nolimits_{l=0}^{j-1} \ve{CA}^l\ve{B}u(k+j-l-1)\Big)_{j \in \N_0}\right) \in \Bc.
\]

Hence, the state trajectory $(\ve{\beta}(\ve{x}(k)))_{k \in \N_0}$ is a possible choice for $(\ve{\tilde{x}}(k))_{k \in \N_0}$. Due to observability of minimal representations, this choice is unique, hence the states satisfy $\beta(\ve{x}(k)) = \ve{\tilde{x}}(k)$ for each $k \in \N_0$. Assume that for the first representation we have $(\ve{x}(i), \ve{u}(i)) \rightarrow \ve{0}$ for $i \rightarrow \infty$. Then $(\tilde{\ve{x}}(i), \ve{u}(i)) = (\ve{\beta}(\ve{x}(i)), \ve{u}(i)) \rightarrow \ve{0}$ for $i \rightarrow \infty$, hence the stability of the zero dynamics does not depend on the choice of the representation.
\end{remark}

Next we discuss the notion of stable zero dynamics in the case $n(\Bc)=0$.

\begin{proposition}
Let $\mathcal{B} \in \mathcal{L}^w$ be a system with McMillan degree $n(\mathcal{B}) = 0$ and $\mathcal{D} = (\ve{u_d},\ve{y_d})\in \Bc_{[0,T]}$ be a sampled data sequence. Then $\Dc$ is informative for stability of the zero dynamics if, and only if, the condition $\rk \mathcal{H}_1(\ve{y_d})=m(\mathcal{B})$ holds. 
\end{proposition}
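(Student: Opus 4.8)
The plan is to reduce the claim to a rank condition on the feedthrough matrix and then exploit the explicit parametrization of all static systems consistent with the data. Because the true system satisfies $n(\mathcal{B}) = 0$, equivalently $l(\mathcal{B}) = 0$, every system in the model class is static, of the form $\ve{y}(t) = \ve{D'}\ve{u}(t)$ with $\ve{D'} \in \R^{p(\mathcal{B}) \times m(\mathcal{B})}$, and by Remark~\ref{StabilityDependsOnlyOnU} its zero dynamics are stable if and only if $\ker \ve{D'} = \{\ve{0}\}$, i.e.\ $\ve{D'}$ has full column rank $m(\mathcal{B})$. Fixing a feedthrough matrix $\ve{D}$ of $\Bc$, the data obey $\ve{y_d}(t) = \ve{D}\ve{u_d}(t)$ for all $t$, so $\mathcal{H}_1(\ve{y_d}) = \ve{D}\,\mathcal{H}_1(\ve{u_d})$. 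A static system $\ve{D'}$ explains the data exactly when $\ve{D'}\mathcal{H}_1(\ve{u_d}) = \ve{D}\mathcal{H}_1(\ve{u_d})$, that is, when $\ve{D'}$ and $\ve{D}$ agree on the subspace $S := \im \mathcal{H}_1(\ve{u_d}) \subseteq \R^{m(\mathcal{B})}$; this makes $\Sigma_{\Dc}$ fully explicit.

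For the sufficiency direction I would assume $\rk \mathcal{H}_1(\ve{y_d}) = m(\mathcal{B})$. The chain $m(\mathcal{B}) = \rk \mathcal{H}_1(\ve{y_d}) = \rk\!\big(\ve{D}\,\mathcal{H}_1(\ve{u_d})\big) \leq \rk \mathcal{H}_1(\ve{u_d}) \leq m(\mathcal{B})$ forces $\rk \mathcal{H}_1(\ve{u_d}) = m(\mathcal{B})$, hence $S = \R^{m(\mathcal{B})}$ and $\ve{D}$ is the only matrix agreeing with $\ve{D}$ on $S$, so that $\Sigma_{\Dc} = \{\Bc\}$. Because $S = \R^{m(\mathcal{B})}$, the same computation gives $\rk \ve{D} = \rk\!\big(\ve{D}\,\mathcal{H}_1(\ve{u_d})\big) = m(\mathcal{B})$, whence $\ker \ve{D} = \{\ve{0}\}$ and $\Bc$ has stable zero dynamics. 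Thus $\Sigma_{\Dc} = \{\Bc\} \subseteq \Sigma_{\mathcal{P}}$ and the data are informative.

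For necessity I would argue by contraposition, assuming $\rk \mathcal{H}_1(\ve{y_d}) < m(\mathcal{B})$ and exhibiting a system in $\Sigma_{\Dc}$ whose zero dynamics are unstable. If $\rk \mathcal{H}_1(\ve{u_d}) = m(\mathcal{B})$, then $S = \R^{m(\mathcal{B})}$, so $\Sigma_{\Dc} = \{\Bc\}$ and $\rk \ve{D} = \rk \mathcal{H}_1(\ve{y_d}) < m(\mathcal{B})$, meaning the true system itself has $\ker \ve{D} \neq \{\ve{0}\}$. If instead $\rk \mathcal{H}_1(\ve{u_d}) < m(\mathcal{B})$, then $S$ is a proper subspace; choosing a nontrivial complement $S^{c}$ and letting $\ve{D'}$ agree with $\ve{D}$ on $S$ and vanish on $S^{c}$ produces $\ve{D'} \in \Sigma_{\Dc}$ with $S^{c} \subseteq \ker \ve{D'} \neq \{\ve{0}\}$. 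In either case $\Sigma_{\Dc} \not\subseteq \Sigma_{\mathcal{P}}$, so the data are not informative.

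The step I expect to be the main obstacle is the rank bookkeeping that turns the single scalar condition $\rk \mathcal{H}_1(\ve{y_d}) = m(\mathcal{B})$ into the two facts that informativity actually requires---unique identifiability of $\ve{D}$ (equivalently $\rk \mathcal{H}_1(\ve{u_d}) = m(\mathcal{B})$, so that the inputs span $\R^{m(\mathcal{B})}$) and injectivity of $\ve{D}$ (stability of the zero dynamics)---together with verifying that the zero extension $\ve{D'}$ built in the non-spanning case genuinely lies in $\Sigma_{\Dc}$ while carrying a nontrivial kernel. Once the identity $\im\!\big(\ve{D}\,\mathcal{H}_1(\ve{u_d})\big) = \ve{D}(S)$ is in place, both directions follow from elementary rank and subspace arguments.
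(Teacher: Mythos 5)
Your proposal is correct and follows essentially the same route as the paper's proof: both reduce everything to the static parametrization $\ve{y}(t)=\ve{D'}\ve{u}(t)$ with stability equivalent to $\ker\ve{D'}=\{\ve{0}\}$, prove sufficiency by a rank chain forcing every consistent feedthrough to have full column rank, and prove necessity by the same case split on $\rk\mathcal{H}_1(\ve{u_d})$, constructing in the rank-deficient case a data-consistent $\ve{D'}$ that vanishes off $\im\mathcal{H}_1(\ve{u_d})$ (the paper's $V,W,[V,W]^{-1}$ construction is exactly your ``agree on $S$, vanish on a complement''). The only cosmetic differences are that you argue necessity by contraposition with the image identity $\im(\ve{D}\mathcal{H}_1(\ve{u_d}))=\ve{D}(S)$ where the paper argues by contradiction via Sylvester's rank inequality, and your sufficiency takes a small detour through unique identifiability that the paper avoids.
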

\begin{proof}
Since $n(\mathcal{B}) = 0$, any minimal state space representation of an arbitrary $\Bc' \in \Sigma_{\Dc}$ has the form $\ve{y}(t) = \ve{Du}(t)$, $t\in\N_0$, for some $\ve{D} \in \R^{p(\mathcal{B}) \times m(\mathcal{B})}$. For any such $\ve{D}$ we have $\im \mathcal{H}_1(\ve{y_d}) \subseteq \im \ve{D}$ thus $\rk \mathcal{H}_1(\ve{y_d}) \le \rk \ve{D}$. Now, if $\rk \mathcal{H}_1(\ve{y_d})=m(\mathcal{B})$, then  $m(\mathcal{B})\le \rk \ve{D} \le m(\mathcal{B})$, hence $\rk \ve{D} = m(\mathcal{B})$ and $\ker \ve{D} = \{ \ve{0} \}$. This shows that all $\mathcal{B'}_{ZD}$ are stable, hence $\Dc$ is informative of the stability of the zero dynamics.

Now let $\Dc$ be informative for the stability of the zero dynamics  of the system, which in particular means that $\mathcal{B}_{ZD}$ are stable. Hence, for any $\ve{D} \in \R^{p(\mathcal{B}) \times m(\mathcal{B})}$ such that $\ve{y}(t) = \ve{Du}(t)$, $t\in\N_0$, is a minimal state space representation of $\Bc$, we have that $\ker \ve{D} = \{ \ve{0} \}$. Fix any such $\ve{D}$. Then, in particular, we have $\ve{y_d}(i) = \ve{D u_d}(i)$ for $0\le i\le T$, thus $\mathcal{H}_1(\ve{y_d}) = \ve{D}\mathcal{H}_1(\ve{u_d})$ and hence, by Sylvester's rank inequality,
\[
    \rk \mathcal{H}_1(\ve{y_d}) \ge \rk\ve{D} + \rk \mathcal{H}_1(\ve{u_d}) - m(\Bc).
\]
Now, seeking a contradiction, assume that $\rk \mathcal{H}_1(\ve{y_d}) < m(\mathcal{B})$. If $\rk \mathcal{H}_1(\ve{u_d}) = m(\mathcal{B})$, then the above inequality implies $\rk \mathcal{H}_1(\ve{y_d}) \ge \rk\ve{D} = m(\Bc)$ as $\ve{D}$ has full column rank, a contradiction. If, on the other hand, $\rk \mathcal{H}_1(\ve{u_d}) < m(\mathcal{B})$, then there exists $\ve{u} \in \R^{m(\mathcal{B})} \setminus \im \mathcal{H}_1(\ve{u_d}) $. Next, we define a matrix $\tilde{\ve{D}} \in \R^{p(\mathcal{B}) \times m(\mathcal{B})}$ such that $\tilde{\ve{D}} \ve{u_d}(i) = \ve{y_d}(i)$, for all $0 \leq i \leq T$, and 
$\tilde{\ve{D}} \ve{u} = \ve{0}$. To this end, let $V\in\R^{m(\Bc)\times k}$ be a matrix with $\rk V = k$ and $\im V = \im  \mathcal{H}_1(\ve{u_d})$. Further let $W\in\R^{m(\Bc)\times (m(\Bc)-k)}$ be such that $[V,W]$ is invertible. Then there exist $U\in \R^{k\times (T+1)}$ and $z\in\R^{m(\Bc)-k}$ such that $\mathcal{H}_1(\ve{u_d}) = VU$ and $\ve{u} = W z$. Define 
\[
    \tilde{\ve{D}} := {\ve{D}} [V,0] [V,W]^{-1}.
\]
Then we have 
\begin{align*}
    \tilde{\ve{D}}\mathcal{H}_1(\ve{u_d}) &= \tilde{\ve{D}} [V,W] \begin{bmatrix} U \\ 0 \end{bmatrix} = {\ve{D}} [V,0] \begin{bmatrix} U \\ 0 \end{bmatrix} =  {\ve{D}} \mathcal{H}_1(\ve{u_d}),\\
     \tilde{\ve{D}} \ve{u} &=  \tilde{\ve{D}} [V,W] \begin{pmatrix} 0 \\ z \end{pmatrix} = {\ve{D}} [V,0] \begin{pmatrix} 0 \\ z \end{pmatrix} = \ve{0}
\end{align*}
as desired. Therefore, $\ve{0} \neq \ve{u} \in \ker \tilde{\ve{D}}$, but $\ve{y}(t) = \tilde{\ve{D}} \ve{u}(t)$, $t\in\N_0$, is also the minimal state space representation of a system $\mathcal{B}' \in \Sigma_{\mathcal{D}}$ that explains the data, and does not have stable zero dynamics as $(i\mapsto \ve{u}(i) = \ve{u})\in \Bc_{ZD}'$, a contradiction. 
\end{proof}

\begin{remark}
Observe that in the case of $n(\mathcal{B}) = 0$, it is only possible for the data $\mathcal{D}$ to be informative for the stability of the zero dynamics of the system $\Bc$, if it can be uniquely identified by the data provided. Furthermore, the case of multiple data sequences $(\ve{u_d^{i}}, \ve{y_d^{i}}) \in \mathcal{B}_{[0,T_i]}$ for $1 \leq i \leq q$, is equivalent to the composed data sequence $(\ve{u_d^{1}}, \hdots, \ve{u_d^{q}}; \ve{y_d^{1}}, \hdots, \ve{y_d^{q}}) \in \mathcal{B}_{[0, \sum_{i = 1}^q T_i]}$ , since there is no dependency between time steps. Hence, the same rank condition $m(\mathcal{B}) = \rk \begin{bmatrix} \mathcal{H}_1(\ve{y_d^1}) & \hdots & \mathcal{H}_1(\ve{y_d^q}) \end{bmatrix}$ for the composed Hankel matrix determines whether the data set is informative for the stability of the zero dynamics. 
\end{remark}

Now we turn attention to the case $n(\mathcal{B}) > 0$. We will restrict ourselves to the analysis of square systems where $m(\mathcal{B}) = p(\mathcal{B})$ in order to use the following condition.

\begin{lemma} \label{SameDimensions} 
Consider a square system $\mathcal{B} \in \mathcal{L}^{2m(\mathcal{B})}$ with McMillan degree $n(\mathcal{B}) > 0$ and vector relative degree $\ve{r}(\Bc) = (r_1,\ldots,r_{p(\mathcal{B})}) \in \N_0^{p(\mathcal{B})}$, and let $L \geq l(\mathcal{B}) + \max_{1 \leq i \leq p(\mathcal{B})} r_i + 1$ and $(\ve{u_d},\ve{y_d}) \in \mathcal{B}_{[0,L-1]}$ be a sampled data sequence. Define
\begin{align} \label{DefineSubspaceM}
M = \{ u : (u,0)\in \Bc_{MPUM,[0,L-1]} \} 
\end{align}
and let $\ve{\pi_k}:\R^{m(\mathcal{B})L} \rightarrow \R^{m(\mathcal{B})k}$ be the projection onto the first $m(\mathcal{B}) k$ entries, that is $\ve{\pi_k}(x_1, \hdots, x_{m(\mathcal{B})L}) = (x_1, \hdots, x_{m(\mathcal{B})k})$, for any $1 \leq k \leq L$. Then $$\dim \ve{\pi_{l(\mathcal{B}) + 1}} M = \dim \ve{\pi_{l(\mathcal{B})}} M.$$
\end{lemma}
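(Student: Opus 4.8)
The plan is to relate the two projections through the linear map $\rho:\R^{m(l+1)}\to\R^{ml}$ that deletes the last input block, where I abbreviate $l=l(\Bc)$, $m=m(\Bc)$, $p=p(\Bc)$. Since the first $l$ blocks of any element of $M$ are obtained from its first $l+1$ blocks by forgetting the last one, $\rho$ restricts to a surjection $\pi_{l+1}M\twoheadrightarrow\pi_l M$, and rank--nullity gives
\[
  \dim\pi_{l+1}M=\dim\pi_l M+\dim\ker\!\big(\rho|_{\pi_{l+1}M}\big).
\]
Thus the whole claim reduces to showing that this kernel is trivial: if some $\ve u\in M$ has all of its first $l$ input blocks equal to zero, then its $(l+1)$st block $\ve v:=\ve u(l)$ must vanish as well.

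To analyse such a $\ve u$, I would first fix the initial state. As the data are sampled from $\Bc$, we have $\Bc\in\Sigma_\Dc$, hence $\Bc_{MPUM}\subseteq\Bc$ and every $\ve u\in M$ satisfies $(\ve u,\ve0)\in\Bc_{[0,L-1]}$. Choose a minimal representation $(\ve A,\ve B,\ve C,\ve D)$ of $\Bc$, which exists and has positive state dimension because $n(\Bc)>0$. The assumption on $\ve u$ means that the initializing window on $[0,l-1]$ has zero input and zero output; since the trivial trajectory already realizes this window, Lemma~\ref{InitialCondition} forces $\ve x(l)=\ve0$.

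The decisive step is to recover $\ve v$ from the identically vanishing output via the vector relative degree. Propagating $\ve x(l)=\ve0$ forward yields, for every output index $i$ and every $k\ge1$,
\[
  \ve y_i(l+k)=\sum_{j=0}^{k-1}\ve C_i\ve A^{\,k-1-j}\ve B\,\ve u(l+j)+\ve D_i\,\ve u(l+k).
\]
Setting $k=r_i$ and using condition~(i) of Definition~\ref{RelativeDegreeMIMO} — the Markov parameters $\ve C_i\ve A^{\,r_i-1-j}\ve B$ vanish for $1\le j\le r_i-1$, and $\ve D_i=\ve0$ whenever $r_i>0$ — collapses the sum to the single surviving term $\ve y_i(l+r_i)=\ve C_i\ve A^{\,r_i-1}\ve B\,\ve v=\ve G_i\ve v$, while the case $r_i=0$ gives $\ve y_i(l)=\ve D_i\ve v=\ve G_i\ve v$ at once. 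The hypothesis $L\ge l+\max_i r_i+1$ is precisely what ensures $l+r_i\le L-1$, so each of these outputs genuinely lies in the observed horizon and equals zero. Collecting the rows yields $\ve G\ve v=\ve0$; since the existence of a vector relative degree makes the decoupling matrix $\ve G\in\R^{p\times m}$ of full rank $p$ and, for the square systems relevant to the zero-dynamics analysis ($m=p$), invertible, we obtain $\ve v=\ve0$, which is exactly the triviality of the kernel.

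The step I expect to be most delicate is securing $\ve x(l)=\ve0$ for an \emph{arbitrary unknown} minimal representation, together with the index bookkeeping that keeps $l+r_i$ inside $[0,L-1]$; once these are in place, the collapse of the convolution to $\ve G\ve v$ and the final rank argument are routine. It is also worth noting that, without invertibility of $\ve G$, the same computation only gives $\ve v\in\ker\ve G$, so the kernel of $\rho|_{\pi_{l+1}M}$ has dimension $m-p$ — this is the precise point where the decoupling condition (and the square setting) is indispensable for the asserted equality of dimensions.
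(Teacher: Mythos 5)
Your argument is correct for square systems and follows essentially the same route as the paper's own proof: the paper likewise reduces the claim to showing that an element of $M$ whose first $l(\Bc)$ input blocks vanish must have vanishing $(l(\Bc)+1)$st block (phrased there as a contradiction between two elements of $M$ agreeing on the first $l(\Bc)$ blocks), likewise pins down $\ve{x}(l(\Bc))=\ve{0}$ via Lemma~\ref{InitialCondition}, and then appeals to the vector relative degree.

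What your more explicit computation buys is that it exposes a genuine gap in the paper's proof of the lemma \emph{as stated}. The paper finishes with the assertion that the forced response to an input sequence with nonzero first value ``will be nonzero by the definition of the vector relative degree''. Your convolution identity shows what the definition actually yields: within the window $[l(\Bc),\,l(\Bc)+\max_i r_i]$ the only outputs not automatically annihilated by condition~(i) of Definition~\ref{RelativeDegreeMIMO} are $y_i(l(\Bc)+r_i)=\ve{G_i}\,\ve{u}(l(\Bc))$, so vanishing output only forces $\ve{u}(l(\Bc))\in\ker\ve{G}$. Since the vector relative degree only guarantees full \emph{row} rank $\rk\ve{G}=p(\Bc)$, the conclusion $\ve{u}(l(\Bc))=\ve{0}$ is available exactly when $m(\Bc)=p(\Bc)$. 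For $m(\Bc)>p(\Bc)$ the lemma is in fact false: take $x(t+1)=u_1(t)$, $y(t)=x(t)$ with two inputs, so $\ve{r}(\Bc)=(1)$, $\ve{G}=[1\ \ 0]$, $l(\Bc)=1$; this system is controllable, so persistently exciting data gives $\Bc_{MPUM}=\Bc$, and with $L=3$ one finds $M=\{(0,a,0,b,c,d):a,b,c,d\in\R\}$ (blocks ordered as $(u_1(t),u_2(t))$, $t=0,1,2$), whence $\dim\ve{\pi_2}M=2>1=\dim\ve{\pi_1}M$. So the statement implicitly requires the square hypothesis $m(\Bc)=p(\Bc)$ — which is precisely what all of its later applications (Lemma~\ref{UniqueContinuationInZD}, Lemma~\ref{xConvergence}, Theorem~\ref{ZeroDynamicsStability}) assume — and your restriction to square systems is a necessary correction rather than a weakness of your proof. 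One small inaccuracy at the very end: $\dim\ker\bigl(\rho|_{\ve{\pi_{l(\Bc)+1}}M}\bigr)$ is at most $m(\Bc)-p(\Bc)$, not necessarily equal to it, since it also depends on how much of $\ker\ve{G}$ the data (i.e., the MPUM) actually exhibits.
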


\begin{proof}
Clearly, $\dim \ve{\pi_{l(\mathcal{B})}} M\le \dim \ve{\pi_{l(\mathcal{B}) + 1}} M$. Seeking a contradiction, assume that $\dim \ve{\pi_{l(\mathcal{B}) + 1}} M > \dim \ve{\pi_{l(\mathcal{B})}} M $. Then there exist $\ve{u} \neq \ve{v} \in M$ such that their projections onto the first $l(\mathcal{B})$ elements coincide, $(\ve{u_0}, \hdots, \ve{u_{l(\mathcal{B}) -1}}) = (\ve{v_0}, \hdots, \ve{v_{l(\mathcal{B}) -1}})$, but their $(l(\mathcal{B}) + 1)$st entry differs, $\ve{u_{l(\mathcal{B})}} \neq \ve{v_{l(\mathcal{B})}}$. Their difference is also contained in the subspace, $\ve{q} \coloneqq \ve{\pi_{l(\mathcal{B}) + 1}} (\ve{u}) - \ve{\pi_{l(\mathcal{B}) + 1}} (\ve{v}) \in \ve{\pi_{l(\mathcal{B}) + 1}} M$, and satisfies $(\ve{q_0}, \hdots, \ve{q_{l(\mathcal{B}) -1}}) = \ve{0}$. By definition of $\ve{\pi_{l(\mathcal{B}) + 1}}$, there exists a suitable continuation $\ve{\tilde{q}} \in (\R^{m(\mathcal{B})})^{L-l(\mathcal{B}) -1} $ of the input sequence $\ve{q}$ such that $(\ve{q}, \ve{\tilde{q}}) \in M$. By definition of $M$, this means that  $\left( \ve{q}, \ve{\tilde{q}}, \ve{0}_{L} \right) \in \Bc_{MPUM,[0,L-1]}.$ 
Since $\Bc\in \Sigma_\Dc$, it follows from the definition of the MPUM that $(\ve{q}, \ve{\tilde{q}}; \ve{0}) \in \mathcal{B}_{[0, L-1]}$. 

By Lemma~\ref{InitialCondition}, the state in any minimal representation of the system will be initialised to $\ve{x}(l(\mathcal{B})) = 0$ by the initializing sequence $(\ve{q}_{1, \hdots l(\mathcal{B})},\ve{0}_{l(\mathcal{B})}) \in \mathcal{B}_{[0,l(\mathcal{B})-1]}$. Therefore, the output sequence $(\ve{y}(l(\mathcal{B})), \hdots, \ve{y}(l(\mathcal{B}) + \max_{1 \leq i \leq p(\mathcal{B})} r_i))$ does not contain a nontrivial zero input response component. Since $\ve{q_{l(\mathcal{B})}} = \ve{u_{l(\mathcal{B})}} - \ve{v_{l(\mathcal{B})}} \neq 0$, the output sequence $$\left(\ve{y}(l(\mathcal{B})), \hdots, \ve{y}(l(\mathcal{B}) + \max_{1 \leq i \leq p(\mathcal{B})} r_i) \right)$$ will contain the impulse response to the nonzero input sequence $(\ve{q_{l(\mathcal{B})}}, \ve{\tilde{q}})$. This will be nonzero by the definition of the vector relative degree, and the invertibility of the decoupling matrix, which contradicts $(\ve{q}, \ve{\tilde{q}}) \in M$. Hence, $\dim \ve{\pi_{l(\mathcal{B}) + 1}} M = \dim \ve{\pi_{l(\mathcal{B})}} M$.
\end{proof}

Lemma~\ref{SameDimensions} shows that for a fixed state on a trajectory in $\mathcal{B}_{ZD}$ of a square system, there is a uniquely determined input sequence such that $\ve{y} = 0$ will be satisfied in all future time steps. Note that in the case of a controllable system $\mathcal{B}$ and an input sequence that is persistently exciting of order $L + n(\mathcal{B})$, we can utilize Theorem~\ref{Fundamentallemma} to conclude that $M$ contains all possible input sequences of length $L$ that are in $\mathcal{B}_{ZD}$. 

In the following, we use the projected subspace $\ve{\pi_{l(\mathcal{B})}} M$ to determine the stability of the zero dynamics. The following normal form provides a decomposition of the state space into the observed states $\ve{\xi}$ and the states $\ve{\eta}$ that can be reached on trajectories $\ve{u} \in \mathcal{B}_{ZD}$.\\

Recall that any minimal representation of a square system $\mathcal{B} \in \mathcal{L}^{2m(\mathcal{B})}$ which has a vector relative degree $\ve{r}(\Bc) = (r_1, \hdots, r_{m(\mathcal{B})}) \in \N^{m(\mathcal{B})}$ can be transformed into Byrnes-Isidori normal form to explicitly decouple the zero dynamics, see \cite[Thm.~2.4]{Mueller.2009}, by application of a suitable state transformation onto some minimal representation of the system, yielding the state $\ve{x} = \begin{pmatrix} \ve{\xi} \\ \ve{\eta} \end{pmatrix}$, where $\ve{\xi} =(\ve{\xi_1}, \hdots, \ve{\xi_{m(\mathcal{B})}})^{\top} $, and $\ve{\xi_i} = (\xi_{i,1}, \hdots, \xi_{i,r_i})^\top$ for $1\le i\le r_i$. Then the system equations are of the shape
\begin{equation}\label{eq:BIF}
\begin{aligned}
\xi_{i,j}(t+1) &= \xi_{i,j+1}(t),\qquad\qquad \text{ for all }  1 \leq j \leq r_i - 1, \\
\xi_{i,r_i}(t+1) &= \ve{\alpha_i} \begin{pmatrix} \ve{\xi}(t) \\ \ve{\eta}(t) \end{pmatrix} + \ve{C_i A}^{r_i - 1} \ve{B u}(t),\\
\ve{\eta}(t+1) &= \ve{Q \eta}(t) + \ve{P y}(t),\\
y_i(t) &=  \xi_{i,1}(t)  ,\qquad\qquad\quad  \text{ for all } 1 \leq i \leq m(\mathcal{B}), 
\end{aligned}
\end{equation}
for some $\ve{\alpha_i} \in \R^{1 \times n(\mathcal{B})}$, $\ve{Q} \in \R^{\left(n(\mathcal{B}) - \sum_{i = 1}^{m(\mathcal{B})} r_i \right) \times \left( n(\mathcal{B}) - \sum_{i=1}^{m(\mathcal{B})} r_i \right)}$, $\ve{P} \in \R^{\left( n(\mathcal{B}) - \sum_{i  = 1}^{m(\mathcal{B})} r_i \right) \times m(\mathcal{B})}$. This shows that the set of possible initial states for trajectories in $\mathcal{B}_{ZD}$ is a $\left(n(\mathcal{B}) - \sum_{i = 1}^{m(\mathcal{B})} r_i\right)$-dimensional vector space. Observe that a system $\mathcal{B} \in \mathcal{L}^{2m(\mathcal{B})}$ has stable zero dynamics if, and only if, it has a minimal representation in Byrnes-Isidori normal form where $\ve{Q}$ is a stable matrix, i.e., all eigenvalues of $\ve{Q}$ have absolute value strictly less than 1. In this case, the matrix $\ve{Q}$ is stable for every Byrnes-Isidori normal form.

Next, we describe the zero dynamics using the data sequences that initialize the states $(\ve{0},\ve{\eta})$ that can be reached along trajectories $ \ve{u} \in \mathcal{B}_{ZD}$.

\begin{lemma} \label{UniqueContinuationInZD}
Consider a system $\mathcal{B} \in \mathcal{L}^{2m(\mathcal{B})}$ with vector relative degree $\ve{r}(\Bc) = (r_1,\ldots,r_{m(\mathcal{B})}) \in \N_0^{m(\mathcal{B})}$ and let $(\ve{u_d},\ve{y_d}) \in \mathcal{B}_{[0,L-1]}$ be a sampled data sequence for some $L \geq l(\mathcal{B}) + \max_{1 \leq i \leq m(\mathcal{B})} r_i + 1$ such that $d \coloneqq \dim \ve{\pi_{l(\mathcal{B})}} M = n(\mathcal{B}) - \sum_{i=1}^{m(\mathcal{B})} r_i$, where $M$ is defined as in (\ref{DefineSubspaceM}). Fix an arbitrary basis $\{ \ve{v_1}, \hdots, \ve{v_{d}} \}$ of $\ve{\pi_{l(\mathcal{B})}} M$. Then for each $1 \leq k \leq d$, there is a unique continuation $\ve{u_k} \in \R^{m(\mathcal{B})}$ such that $(\ve{v_k}, \ve{u_k}) \in \ve{\pi_{l(\mathcal{B}) + 1}} M$, and $\ve{z_k} \coloneqq (\ve{v_{k,2}}, \hdots, \ve{v_{k,l(\mathcal{B})}}, \ve{u_{k}}) \in \ve{\pi_{l(\mathcal{B})}} M.$
\end{lemma}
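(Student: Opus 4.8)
The plan is to prove the two assertions separately: the unique continuation $\ve{u_k}$ comes straight from Lemma~\ref{SameDimensions}, while the membership $\ve{z_k}\in\ve{\pi_{l(\mathcal{B})}}M$ rests on identifying $\ve{\pi_{l(\mathcal{B})}}M$ with the analogous object for the true system $\Bc$ and then exploiting the Byrnes--Isidori normal form~\eqref{eq:BIF}. For the first assertion I would introduce the linear map $P\colon\ve{\pi_{l(\mathcal{B})+1}}M\to\ve{\pi_{l(\mathcal{B})}}M$ that deletes the final block in $\R^{m(\mathcal{B})}$. It is surjective by the definition of the two projected spaces, and Lemma~\ref{SameDimensions} gives $\dim\ve{\pi_{l(\mathcal{B})+1}}M=\dim\ve{\pi_{l(\mathcal{B})}}M$, so $P$ is in fact a bijection. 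Hence each basis vector $\ve{v_k}$ has a unique preimage $(\ve{v_k},\ve{u_k})\in\ve{\pi_{l(\mathcal{B})+1}}M$, which is exactly the asserted unique continuation $\ve{u_k}$.

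\textbf{Identifying the projected spaces of the MPUM and of $\Bc$.} The heart of the second assertion is the set equality $\ve{\pi_{l(\mathcal{B})}}M=\ve{\pi_{l(\mathcal{B})}}M^{\Bc}$, where $M^{\Bc}\coloneqq\{u\in\R^{m(\mathcal{B})L}:(u,\ve{0})\in\Bc_{[0,L-1]}\}$ collects the length-$L$ zero-output inputs of the true system. Since $\Bc\in\Sigma_{\Dc}$, the definition of the MPUM yields $\Bc_{MPUM}\subseteq\Bc$, so $M\subseteq M^{\Bc}$ and therefore $\ve{\pi_{l(\mathcal{B})}}M\subseteq\ve{\pi_{l(\mathcal{B})}}M^{\Bc}$. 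I would then bound the right-hand side from above using~\eqref{eq:BIF}: along any length-$L$ zero-output trajectory the identities $y_i=\xi_{i,1}$ and $\xi_{i,j}(t)=y_i(t+j-1)$ force $\ve{\xi}(t)=\ve{0}$ for $0\le t\le L-\max_i r_i$, hence $\ve{\eta}(t+1)=\ve{Q}\ve{\eta}(t)$, and by invertibility of the decoupling matrix $\ve{G}$ applied to $\xi_{i,r_i}(t+1)=\ve{\alpha_i}\ve{x}(t)+\ve{C_i A}^{r_i-1}\ve{B}\ve{u}(t)=0$ the input $\ve{u}(t)$ is a linear function of $\ve{\eta}(0)\in\R^{d}$ for all $0\le t\le l(\mathcal{B})-1$ (here $L\ge l(\mathcal{B})+\max_i r_i+1$ is used). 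Thus the first $l(\mathcal{B})$ input blocks depend linearly on $\ve{\eta}(0)$ and $\dim\ve{\pi_{l(\mathcal{B})}}M^{\Bc}\le d$. Together with the hypothesis $\dim\ve{\pi_{l(\mathcal{B})}}M=d$ this pins both dimensions to $d$ and forces the claimed equality.

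\textbf{The shift claim.} With this identification the shift claim follows by working entirely inside $\Bc$. By construction $(\ve{v_k},\ve{u_k})\in\ve{\pi_{l(\mathcal{B})+1}}M\subseteq\ve{\pi_{l(\mathcal{B})+1}}M^{\Bc}$, so there is a length-$L$ zero-output trajectory of $\Bc$ whose first $l(\mathcal{B})+1$ input blocks are $(\ve{v_k},\ve{u_k})$. Applying the backward shift $\sigma$, which maps $\Bc$ into itself (the backward shift of an input/state/output trajectory is again a trajectory), gives a length-$(L-1)$ zero-output trajectory whose first $l(\mathcal{B})$ input blocks are precisely $\ve{z_k}$. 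By Lemma~\ref{InitialCondition} the window $(\ve{z_k},\ve{0})$ fixes a state $\ve{x}(l(\mathcal{B}))$, and since the output stays zero for $L-1-l(\mathcal{B})\ge\max_i r_i$ further steps, the normal form shows $\ve{\xi}(l(\mathcal{B}))=\ve{0}$, i.e.\ $\ve{x}(l(\mathcal{B}))$ lies in the zero-dynamics subspace. From such a state the invertibility of $\ve{G}$ supplies an input keeping the output zero for all future times, which prolongs the shifted trajectory to a length-$L$ zero-output trajectory of $\Bc$ with first $l(\mathcal{B})$ blocks $\ve{z_k}$. Hence $\ve{z_k}\in\ve{\pi_{l(\mathcal{B})}}M^{\Bc}=\ve{\pi_{l(\mathcal{B})}}M$, completing the argument.

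\textbf{Main obstacle.} The main difficulty is precisely the discrepancy between $M$, which is defined through $\Bc_{MPUM}$, and the true system $\Bc$ that carries the usable zero-dynamics structure: a zero-output window of $\Bc$ need not extend to a zero-output trajectory of the possibly much smaller $\Bc_{MPUM}$, so the boundary block cannot be filled in naively inside the MPUM. The dimension hypothesis $\dim\ve{\pi_{l(\mathcal{B})}}M=d=n(\mathcal{B})-\sum_i r_i$ is exactly the bridge that closes this gap. I expect the delicate step to be the clean verification of $\dim\ve{\pi_{l(\mathcal{B})}}M^{\Bc}\le d$ — in particular the bookkeeping at the final time steps of the window and the degenerate case in which some $r_i=0$, where the normal form~\eqref{eq:BIF} must be adapted.
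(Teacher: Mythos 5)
Your proof is correct and follows essentially the same route as the paper's: the unique continuation comes from the dimension equality of Lemma~\ref{SameDimensions}, and the shift property $\ve{z_k}\in\ve{\pi_{l(\mathcal{B})}}M$ comes from using the Byrnes--Isidori form~\eqref{eq:BIF} to show that a sufficiently long zero-output window forces the $\ve{\xi}$-part of the state to vanish, after which the zero dynamics can be continued indefinitely. The only notable difference is that you make explicit the identification $\ve{\pi_{l(\mathcal{B})}}M=\ve{\pi_{l(\mathcal{B})}}M^{\Bc}$ forced by the hypothesis $\dim\ve{\pi_{l(\mathcal{B})}}M=d$, a step the paper uses only tacitly when it passes from a zero-dynamics trajectory of the true system $\Bc$ to membership in $M$, which is defined through $\Bc_{MPUM}$; this added care (together with your flagging of the $r_i=0$ case, which the paper also glosses over) strengthens rather than changes the argument.
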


\begin{proof}
By definition of $\ve{\pi_{l(\mathcal{B})}} M $, we have that for each $1 \leq k \leq d$, there exists some continuation $\ve{u_k} \in \R^{m(\mathcal{B})}$ such that $(\ve{v_k}, \ve{u_k}) \in \ve{\pi_{l(\mathcal{B}) + 1}} M$. We show that this continuation is unique. Fix some $1 \leq k \leq d$ and, seeking a contradiction, assume that $\ve{u_k} \neq \ve{\tilde{u}_k} \in \R^{m(\mathcal{B})}$ are two distinct continuations of the input sequence $\ve{v_k}$ such that $(\ve{v_k}, \ve{u_k}), (\ve{v_k}, \ve{\tilde{u}_k}) \in \ve{\pi_{l(\mathcal{B}) + 1}} M$. Then $(\ve{0}_{m(\mathcal{B}) l(\mathcal{B})} , \ve{u_k} - \ve{\tilde{u}_k}) \in \ve{\pi_{l(\mathcal{B}) + 1}} M$, and therefore the set $\{ (\ve{v_1},\ve{0}_{m(\mathcal{B})}) , \hdots, (\ve{v_d}, \ve{0}_{m(\mathcal{B})}), (\ve{0}_{m(\mathcal{B}) l(\mathcal{B})} , \ve{u_k} - \ve{\tilde{u}_k}) \}$ is a linearly independent subset of $\ve{\pi_{l(\mathcal{B}) + 1}} M$ that contains $d+1$ elements. Since $\dim \ve{\pi_{l(\mathcal{B}) + 1}} M = \dim \ve{\pi_{l(\mathcal{B})}} M = d$ by Lemma~\ref{SameDimensions}, this is a contradiction. 

Next, we show that for each $1 \leq k \leq d$ and $\ve{v_k} = (\ve{v_{k,1}}, \hdots, \ve{v_{k,l(\mathcal{B})}})$ the sequence $\ve{z_k} = (\ve{v_{k,2}}, \hdots, \ve{v_{k,l(\mathcal{B})}}, \ve{u_{k}})$ is also contained in $\ve{\pi_{l(\mathcal{B})}} M$. Since $(\ve{v_{k,1}}, \ve{z_k}) \in \ve{\pi_{l(\mathcal{B}) + 1}} M$, there is a continuation $\ve{a} \in \left(\R^{m(\mathcal{B})}\right)^{L - l(\mathcal{B}) - 1}$ such that $(\ve{v_{k,1}}, \ve{z_k}, \ve{a}) \in M$. Fix a minimal representation of the system in Byrnes-Isidori normal form, and let $(\ve{\xi}(l(\mathcal{B}) + 1), \ve{\eta}(l(\mathcal{B}) + 1))$ be the state, that is initialized by the sequence $\ve{z_k}$ as discussed in Lemma~\ref{InitialCondition}. We can conclude that $\ve{\xi}(l(\mathcal{B}) + 1) = \ve{0}$ holds, because any non-zero entry of $\ve{\xi}(l(\mathcal{B}) + 1) $ will cause a non-trivial output within the subsequent $\max_{1\leq i \leq m(\mathcal{B})} r_i$ time steps, and by choice of $L$, the continuation $\ve{a} $ covers at least those $\max_{1 \leq i \leq m(\mathcal{B}) } r_i$ time steps. Therefore, there is a trajectory in the zero dynamics starting with $(\ve{v_{k,1}}, \ve{z_k})$. This implies that $\ve{z_k} \in \ve{\pi_{l(\mathcal{B})}}M$. 
\end{proof}

Observe that, using the notation from the previous proof, by choice of $M$, we can conclude that $\ve{z_k} \in \mathcal{B}_{MPUM,ZD,[0,l(\mathcal{B}) - 1]} \subseteq \mathcal{B}_{ZD,[0,l(\mathcal{B}) - 1]}$ for every $1 \leq k \leq d$, and finally use that, by the Byrnes-Isidori normal form, the minimal dimension of the state space of the zero dynamics $\mathcal{B}_{ZD}$ is $d = n(\mathcal{B})-\sum_{i = 1}^{m(\mathcal{B})} r_i$, to conclude that 
\begin{align} \label{viExpressBZD}
\mathcal{B}_{ZD,[0,l(\mathcal{B})-1]} = \operatorname{span}\{\ve{v_1}, \hdots, \ve{v_{d}}\}.
\end{align}
Hence, there are unique coefficients $q_{k,i} \in \R$ such that $\displaystyle{\ve{z_k} = \sum_{i=1}^{d} q_{k,i} \ve{v_i}}$. The following lemma shows how these coefficients determine the stability of the zero dynamics of the system.

\begin{lemma} \label{xConvergence}
Let $\mathcal{B} \in \mathcal{L}^{2m(\mathcal{B})}$ with $n(\mathcal{B}) > 0$ be a controllable system with vector relative degree $\ve{r}(\Bc) = (r_1,\ldots,r_{m(\mathcal{B})}) \in \N_0^{m(\mathcal{B})}$, and let $\mathcal{D} = (\ve{u_d},\ve{y_d})\in\Bc_{[0,T]}$ be a sampled data sequence such that $\ve{u_d}$ is persistently exciting of order $L + n(\mathcal{B})$ for some $L \geq l(\mathcal{B}) + \max_{1 \leq j \leq m(\mathcal{B})} r_j + 1$. Construct the coefficients $q_{k,i}$ for $1 \leq k,i \leq n(\mathcal{B}) - \sum_{j = 1}^{m(\mathcal{B})} r_j =: d$ as shown above. Then, for all $\ve{u} \in \mathcal{B}_{ZD}$, the corresponding sequence of states~$\ve{x}$ of any minimal representation of $\Bc$ satisfies $\ve{x}(i) \rightarrow \ve{0}$ for $i \rightarrow \infty$  if, and only if, $d  = 0$ or $\ve{\tilde{Q}} = (q_{k,i})_{1 \leq k,i \leq d}$ is a stable matrix, i.e., all eigenvalues of $\ve{\tilde{Q}}$ have absolute value strictly less than 1.
\end{lemma}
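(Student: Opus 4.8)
The plan is to identify the data-constructed matrix $\ve{\tilde{Q}}$ with the internal-dynamics matrix $\ve{Q}$ of the Byrnes--Isidori normal form~\eqref{eq:BIF} up to similarity, so that their spectra coincide and the stability criterion transfers. Since stability of the zero dynamics and the convergence $\ve{x}(i)\to\ve{0}$ are independent of the chosen minimal representation (Remark~\ref{StabilityDependsOnlyOnU}), I would first fix a minimal representation of $\Bc$ in the form~\eqref{eq:BIF}. Along any trajectory $(\ve{u},\ve{0})\in\Bc$, the output equations $y_i=\xi_{i,1}=0$ together with the shift structure $\xi_{i,j}(t+1)=\xi_{i,j+1}(t)$ force $\ve{\xi}(t)=\ve{0}$ for all $t$ by induction; the equation for $\xi_{i,r_i}$ then reads $\ve{0}=\ve{\alpha_i}(\ve{0},\ve{\eta}(t))^\top+\ve{C_i A}^{r_i-1}\ve{B}\,\ve{u}(t)$, and invertibility of the decoupling matrix $\ve{G}$ (whose rows are the $\ve{C_i A}^{r_i-1}\ve{B}$) yields a fixed linear law $\ve{u}(t)=\ve{N}\ve{\eta}(t)$, while $\ve{\eta}(t+1)=\ve{Q}\ve{\eta}(t)$. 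As $\ve{x}(t)=(\ve{0},\ve{\eta}(t))$ on $\Bc_{ZD}$ and every $\ve{\eta}(0)\in\R^d$ is realized by some $\ve{u}\in\Bc_{ZD}$, the requirement $\ve{x}(i)\to\ve{0}$ (whence also $\ve{u}(i)=\ve{N}\ve{\eta}(i)\to\ve{0}$) for all $\ve{u}\in\Bc_{ZD}$ is equivalent to $\ve{Q}$ being stable. The case $d=0$ is immediate, since then $\ve{\eta}$ is absent and $\ve{x}(t)=\ve{0}$ identically.

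The core of the proof is to show $\operatorname{spec}\ve{\tilde{Q}}=\operatorname{spec}\ve{Q}$. I would introduce the observability-type map $\mathcal{N}=\operatorname{col}(\ve{N},\ve{N}\ve{Q},\ldots,\ve{N}\ve{Q}^{l(\Bc)-1}):\R^{d}\to\R^{m(\Bc)l(\Bc)}$, which sends an initial zero-dynamics state $\ve{\eta}$ to the length-$l(\Bc)$ input window $(\ve{N}\ve{\eta},\ldots,\ve{N}\ve{Q}^{l(\Bc)-1}\ve{\eta})$ it generates. By~\eqref{viExpressBZD}, which under controllability and persistent excitation identifies $\ve{\pi_{l(\Bc)}}M$ with the genuine zero-dynamics windows $\Bc_{ZD,[0,l(\Bc)-1]}$ via Theorem~\ref{Fundamentallemma} and Lemma~\ref{DetermineMPUM}, every such window arises this way, so $\operatorname{im}\mathcal{N}=\ve{\pi_{l(\Bc)}}M$, a $d$-dimensional space. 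Since $\mathcal{N}$ has domain $\R^d$, it is therefore a linear isomorphism onto $\ve{\pi_{l(\Bc)}}M$.

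Next I would interpret the shift of Lemma~\ref{UniqueContinuationInZD} through $\mathcal{N}$. For $\ve{v_k}=\mathcal{N}\ve{\eta}$ the unique continuation is the next input $\ve{u_k}=\ve{N}\ve{Q}^{l(\Bc)}\ve{\eta}$, so the shifted window $\ve{z_k}=(\ve{v_{k,2}},\ldots,\ve{v_{k,l(\Bc)}},\ve{u_k})$ equals $\mathcal{N}(\ve{Q}\ve{\eta})$. Hence the linear shift operator $S$ on $\ve{\pi_{l(\Bc)}}M$ defined by $S\ve{v_k}=\ve{z_k}$ satisfies $S\mathcal{N}=\mathcal{N}\ve{Q}$, i.e. $S=\mathcal{N}\ve{Q}\mathcal{N}^{-1}$, so $S$ and $\ve{Q}$ are similar. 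Because $\ve{z_k}=\sum_{i=1}^{d}q_{k,i}\ve{v_i}$, the matrix of $S$ in the basis $\{\ve{v_1},\ldots,\ve{v_d}\}$ is $\ve{\tilde{Q}}^{\top}$; as $\ve{\tilde{Q}}$ and $\ve{\tilde{Q}}^{\top}$ share their spectrum, $\operatorname{spec}\ve{\tilde{Q}}=\operatorname{spec}S=\operatorname{spec}\ve{Q}$. Combining this with the first paragraph gives the claimed equivalence.

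I expect the main obstacle to lie in the second and third steps: verifying that $\mathcal{N}$ is precisely the parametrization of $\ve{\pi_{l(\Bc)}}M$ (which rests on the dimension count $\dim\ve{\pi_{l(\Bc)}}M=d$ delivered by the Byrnes--Isidori form together with the identification of $M$ with the true zero dynamics under controllability and persistent excitation) and that the abstract unique continuation of Lemma~\ref{UniqueContinuationInZD} coincides with the state-space shift $\ve{\eta}\mapsto\ve{Q}\ve{\eta}$. Once these identifications are secured, the similarity $S=\mathcal{N}\ve{Q}\mathcal{N}^{-1}$ and the spectral conclusion are routine.
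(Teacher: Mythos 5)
Your proof is correct and takes essentially the same route as the paper: both fix a Byrnes--Isidori normal form, use the Fundamental Lemma to identify $\ve{\pi_{l(\Bc)}}M$ with $\Bc_{ZD,[0,l(\Bc)-1]}$, and conclude by showing that the data-defined matrix $\ve{\tilde{Q}}$ is similar (up to transposition) to the internal-dynamics matrix $\ve{Q}$, so that stability transfers spectrally. The only difference is one of explicitness: the paper obtains the similarity via the basis change $[\ve{w_1},\ldots,\ve{w_d}]=[\ve{v_1},\ldots,\ve{v_d}]\,\ve{T}$ between the $\ve{v_i}$ and the initializing windows $\ve{w_i}$ of the states $\eta_i$, asserting $\ve{T}^{-1}\ve{Q}\ve{T}=\ve{\tilde{Q}}$ ``by construction'', whereas your map $\mathcal{N}$ and shift operator $S=\mathcal{N}\ve{Q}\mathcal{N}^{-1}$ spell out exactly this construction (the column matrix $[\ve{w_1},\ldots,\ve{w_d}]$ \emph{is} your $\mathcal{N}$) and additionally record the harmless transpose $[S]=\ve{\tilde{Q}}^{\top}$ that the paper glosses over.
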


\begin{proof}
If $d = 0$, then $\mathcal{B}_{ZD} = \left\{ \ve{0} \in (\R^{m(\mathcal{B})})^{\N_0} \right\}$, and therefore $\ve{x}(i) = 0 $ for all $i \in \N_0$ for each sequence of states on a trajectory $\ve{u} \in \mathcal{B}_{ZD}$. In case $d > 0$, observe that by Theorem~\ref{Fundamentallemma}, we can uniquely identify $\Bc$ from the data $\mathcal{D}$, hence we can construct matrices that parametrize a minimal representation of the system as shown in~\cite[Thm.~6.4.2]{PoldWill98}. Then we may apply a state space transformation of the system into (some) Byrnes-Isidori normal form. 

We will show that the matrix $\ve{Q}$ of this representation is similar to the constructed matrix~$\ve{\tilde{Q}}$. Let $\ve{w_i} \in \left(\R^{m(\mathcal{B})}\right)^{l(\mathcal{B})}$ be the initializing sequence for the state $\eta_i$, for each $1 \leq i \leq d$. Then $\operatorname{span}\{\ve{v_1}, \hdots, \ve{v_{d}}\} = \mathcal{B}_{ZD,[0,l(\mathcal{B}) - 1]} = \operatorname{span}\{\ve{w_1}, \hdots, \ve{w_{d}}\}$ is a $d$-dimensional vector space, hence there is an invertible matrix $\ve{T} \in \R^{d \times d}$ such that 
\[
     [\ve{w_1}, \ldots, \ve{w_d}] = [\ve{v_1}, \ldots, \ve{v_d}]\, \ve{T}.
\] 
By construction, the application of the coordinate transformation $\begin{bmatrix} \ve{I_{n(\mathcal{B}) - d}} & \ve{0} \\ \ve{0} & \ve{T} \end{bmatrix}$ to the state vector $(\ve{\xi}, \ve{\eta})$ will yield a representation in Byrnes-Isidori normal form where the zero dynamics are characterized via $\ve{T}^{-1 } \ve{Q} \ve{T} = \ve{\tilde{Q}}$. Hence, the zero dynamics of the system $\mathcal{B}$ are stable if, and only if, $\ve{\tilde{Q}}$ is stable.
\end{proof}

We can finally characterize the stability of the zero dynamics of $\mathcal{B}$ by using a persistently exciting input data sequence.

\begin{theorem} \label{ZeroDynamicsStability}
Let $\mathcal{B} \in \mathcal{L}^{2m(\mathcal{B})}$ with $n(\mathcal{B}) > 0$ be a controllable system with vector relative degree $\ve{r}(\Bc) = (r_1,\ldots,r_{m(\mathcal{B})}) \in \N_0^{m(\mathcal{B})}$, and let $\mathcal{D} = (\ve{u_d},\ve{y_d})\in\Bc_{[0,T]}$ a sampled data sequence such that $\ve{u_d}$ is persistently exciting of order $L + n(\mathcal{B})$ for some $L \geq l(\mathcal{B}) + \max_{1 \leq j \leq m(\mathcal{B})} r_j + 1$. Construct $\ve{\tilde{Q}}$ as in Lemma \ref{xConvergence}. Then the zero dynamics of $\mathcal{B} $ are stable if, and only if, $\ve{\tilde{Q}}$ is a stable matrix.
\end{theorem}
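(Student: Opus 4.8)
The plan is to obtain the theorem as a consequence of Lemma~\ref{xConvergence}, whose hypotheses coincide exactly with those imposed here, so that $\ve{\tilde{Q}}$ is the well-defined matrix constructed there and, by the computation carried out in its proof, satisfies $\ve{\tilde{Q}}=\ve{T}^{-1}\ve{Q}\ve{T}$ for the zero-dynamics block $\ve{Q}$ of a Byrnes--Isidori normal form~\eqref{eq:BIF}. Since similar matrices share their spectrum, $\ve{\tilde{Q}}$ is stable if and only if $\ve{Q}$ is. What must be bridged is the gap between the state-only statement of Lemma~\ref{xConvergence} and the full definition of stability of the zero dynamics, which demands $(\ve{x}(i),\ve{u}(i))\to\ve{0}$ for every $\ve{u}\in\mathcal{B}_{ZD}$.

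First I would dispose of the state component. By Lemma~\ref{xConvergence}, $\ve{x}(i)\to\ve{0}$ for all $\ve{u}\in\mathcal{B}_{ZD}$ holds if and only if $d=0$ or $\ve{\tilde{Q}}$ is stable; reading the $0\times0$ matrix that occurs when $d=0$ as (vacuously) stable, this is precisely the condition that $\ve{\tilde{Q}}$ is stable. For the input component I would fix a minimal representation in Byrnes--Isidori normal form. On any trajectory with $\ve{u}\in\mathcal{B}_{ZD}$ the output is identically zero, so the cascade $\xi_{i,j}(t+1)=\xi_{i,j+1}(t)$ together with $y_i=\xi_{i,1}$ forces $\ve{\xi}\equiv\ve{0}$, whence the state reduces to $\ve{x}=(\ve{0},\ve{\eta})$ and $\ve{x}(i)\to\ve{0}$ is equivalent to $\ve{\eta}(i)\to\ve{0}$. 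Substituting $\ve{\xi}\equiv\ve{0}$ and $\ve{y}\equiv\ve{0}$ into the remaining equations yields $\ve{G}\,\ve{u}(t)$ equal to a fixed linear expression in $\ve{\eta}(t)$; since the decoupling matrix $\ve{G}$ is square and of full rank, hence invertible, $\ve{u}(t)$ is itself a linear function of $\ve{\eta}(t)$, so $\ve{\eta}(i)\to\ve{0}$ forces $\ve{u}(i)\to\ve{0}$.

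It then remains to assemble the two directions. If the zero dynamics are stable then, in particular, $\ve{x}(i)\to\ve{0}$ for every $\ve{u}\in\mathcal{B}_{ZD}$, so $\ve{\tilde{Q}}$ is stable by Lemma~\ref{xConvergence}; conversely, if $\ve{\tilde{Q}}$ is stable then Lemma~\ref{xConvergence} gives $\ve{x}(i)\to\ve{0}$, and the normal-form computation supplies the missing $\ve{u}(i)\to\ve{0}$, so both components required by the definition hold and the zero dynamics are stable. I expect the main obstacle to be precisely this input component: Lemma~\ref{xConvergence} controls only the state, so one must argue that invertibility of $\ve{G}$ slaves $\ve{u}$ to the converging coordinate $\ve{\eta}$. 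By comparison, the $d=0$ edge case and the empty-matrix convention are routine matters of bookkeeping.
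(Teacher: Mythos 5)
Your proof is correct and shares the paper's skeleton---both directions are reduced to Lemma~\ref{xConvergence}, and the only real work is supplying the input convergence $\ve{u}(i)\to\ve{0}$ in the sufficiency direction---but you close that gap by a genuinely different mechanism. The paper stays away from the normal form at theorem level: it defines the linear map $\ve{\varphi}:\mathcal{B}_{ZD,[0,l(\mathcal{B})-1]}\to\R^{n(\mathcal{B})}$ taking a zero-output input window to the state it initializes, splits $\ve{u}$ into a component whose windows converge to zero (using injectivity of $\ve{\varphi}$ on a complement of its kernel) and a component whose windows lie in $\ker\ve{\varphi}$, and then kills the latter using finiteness of the relative degrees, $r_i\le l(\mathcal{B})$, obtained from Lemma~\ref{rLeqL} applied to the induced SISO systems. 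You instead work directly in a Byrnes--Isidori normal form: $\ve{y}\equiv\ve{0}$ forces $\ve{\xi}\equiv\ve{0}$ through the cascade, the $\xi_{i,r_i}$-equations then read $0=\ve{\alpha_i}\bigl(\begin{smallmatrix}\ve{0}\\ \ve{\eta}(t)\end{smallmatrix}\bigr)+\ve{G_i}\ve{u}(t)$, and invertibility of the square decoupling matrix $\ve{G}$ yields $\ve{u}(t)$ as a static linear function of $\ve{\eta}(t)$, so state convergence drags the input along. Your route is more transparent and avoids the paper's somewhat delicate sequence decomposition $\ve{u}=\ve{u_0}+\ve{u_{\ker}}$ (whose window-wise splitting requires care, since consecutive windows overlap); the price is that it leans on the normal form~\eqref{eq:BIF}, which the paper states only for $\ve{r}(\Bc)\in\N^{m(\Bc)}$, whereas the theorem allows entries $r_i=0$---for those you would need the routine generalization of the normal form, a caveat that the paper's own Lemma~\ref{xConvergence} (on which both proofs rest) in fact shares.
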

\begin{proof}
First, assume that $\ve{\tilde{Q}}$ is stable, and fix an arbitrary minimal state space representation of $\Bc$. If $n(\mathcal{B}) > 0$, consider the function $\ve{\varphi} : \mathcal{B}_{ZD,[0,l(\mathcal{B}) - 1]} \rightarrow \R^{n(\mathcal{B})},(\ve{u}(0), \hdots, \ve{u}(l(\mathcal{B}) -1)) \mapsto \ve{x}(l(\mathcal{B}))$, where $\ve{x}(l(\mathcal{B}))$ is the state of the fixed minimal representation corresponding to the initializing input sequence $(\ve{u}(0), \hdots, \ve{u}(l(\mathcal{B}) -1))$. Then $\ve{\varphi}$ is well-defined by Lemma~\ref{InitialCondition}, and linear by the linearity of the system. By Lemma~\ref{xConvergence}, each $\ve{u} \in \mathcal{B}_{ZD}$ fulfills $\ve{\varphi}(\ve{u}(i-l(\mathcal{B})), \hdots, \ve{u}(i-1)) = \ve{x}(i) \rightarrow \ve{0}$ for $i\to \infty$. Due to linearity of $\ve{\varphi}$, we can decompose $\ve{u} = \ve{u_{0}}+ \ve{u_{\ker}}$, where $(\ve{u_0}(i-l(\mathcal{B})), \hdots, \ve{u_0}(i-1)) \rightarrow \ve{0}$, and $\ve{\varphi}(\ve{u_{\ker}}(i-l(\mathcal{B})) , \hdots, \ve{u_{\ker}} (i-1)) = 0$ for all $i \geq l(\mathcal{B})$. Since $\ve{u_{\ker}}$ initializes the state to be $\ve{x}(i) = 0$ for all $i \geq l(\mathcal{B})$, and all $r_i$ are assumed to be finite and therefore $r_i \leq l(\mathcal{B})$ by applying Lemma~\ref{rLeqL} to all induced SISO systems, we can conclude $\ve{u_{\ker}}(i+l(\mathcal{B})) = 0$, for all $i \geq 0$. This shows that $\ve{u}(i)=\ve{u_0}(i) \rightarrow \ve{0}$, thus the zero dynamics of $\mathcal{B}$ are stable. 

If we assume that the zero dynamics of $\mathcal{B}$ are stable, then, for any minimal representation of $\Bc$, $\ve{x}(i) \rightarrow \ve{0}$ for $i \rightarrow \infty$ for any possible sequence of states corresponding to any $\ve{u}\in\mathcal{B}_{ZD}$, hence $\ve{\tilde{Q}}$ is stable by Lemma~\ref{xConvergence}.
\end{proof}

\subsection{Stability of zero dynamics and data informativity}

Theorem~\ref{ZeroDynamicsStability} allows us to decide on the stability of the zero dynamics of the system $\mathcal{B}$ by using a data sequence that is sufficiently informative to uniquely identify the system. In this section, we derive a result that allows us to determine the stability of the zero dynamics under weaker conditions using the data informativity framework. The set of all systems $\Bc\in\Lc^w$ with stable zero dynamics is denoted by $\Sigma_{\text{stable}}$. Recall that a data set $\mathcal{D}$ is called informative for stability of the zero dynamics, if $\Sigma_{\mathcal{D}} \subseteq \Sigma_{\text{stable}}$, and it is called informative for instability of the zero dynamics, if $\Sigma_{\mathcal{D}} \cap \Sigma_{\text{stable}} = \emptyset$.

\begin{lemma} \label{StabilityMPUM}
Let $\mathcal{D} = (\ve{u_d}, \ve{y_d})\in\Bc_{[0,T-1]}$ be a data sequence with $T \geq l(\mathcal{B}) + \max_{1 \leq i \leq p(\mathcal{B})} r_i + 1$, sampled from an unknown square system $\mathcal{B} \in \mathcal{L}^{2m(\mathcal{B})}$ with $n(\mathcal{B})> 0$, vector relative degree $\ve{r}(\Bc) = (r_1, \hdots, r_{p(\mathcal{B})}) \in \N_0^{p(\mathcal{B})}$, and known lag $l(\mathcal{B})$. Then there exists a matrix $\ve{\tilde{Q}}\in\R^{d\times d}, d\in\N_0$, uniquely determined by the data sequence $\mathcal{D}$, such that the zero dynamics of $\mathcal{B}_{MPUM}$ are stable if, and only if, $\ve{\tilde{Q}}$ is stable.
\end{lemma}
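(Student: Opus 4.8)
The plan is to reduce the statement to a single linear shift operator on the space of admissible zero-output input windows of the data-determined model $\mathcal{B}_{MPUM}$. Since $l(\mathcal{B})$ is known and $T \geq l(\mathcal{B}) + \max_i r_i + 1$, Lemma~\ref{DetermineMPUM} identifies $\mathcal{B}_{MPUM}$ uniquely from $\mathcal{D}$ and yields $l(\mathcal{B}_{MPUM}) \leq l(\mathcal{B})$; fixing any $L$ with $l(\mathcal{B}) + \max_i r_i + 1 \leq L \leq T$, Lemma~\ref{ExtendedSequencesForm} makes the subspace $M$ from~(\ref{DefineSubspaceM}) and its projections $\ve{\pi_k}M$ explicitly computable from the data. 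By Lemma~\ref{SameDimensions}, $d := \dim \ve{\pi_{l(\mathcal{B})}}M = \dim \ve{\pi_{l(\mathcal{B})+1}}M$, so forgetting the last block is a linear isomorphism $\ve{\pi_{l(\mathcal{B})+1}}M \to \ve{\pi_{l(\mathcal{B})}}M$; its inverse assigns to each window $w$ a unique continuation $c(w)$ with $(w,c(w)) \in \ve{\pi_{l(\mathcal{B})+1}}M$. I then define the shift $S\colon \ve{\pi_{l(\mathcal{B})}}M \to \R^{m(\mathcal{B})l(\mathcal{B})}$ by deleting the first block of $(w,c(w))$ and let $\ve{\tilde{Q}} \in \R^{d\times d}$ be the matrix of $S$ in a fixed basis $\{\ve{v_1},\ldots,\ve{v_d}\}$ of $\ve{\pi_{l(\mathcal{B})}}M$; this is precisely the matrix $(q_{k,i})$ constructed before Lemma~\ref{xConvergence}. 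Changing the basis replaces $\ve{\tilde{Q}}$ by a similar matrix, so its stability is well defined and determined by $\mathcal{D}$.

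The core step is to identify the orbits of $S$ with the zero dynamics $\mathcal{B}_{MPUM,ZD}$. First I would show $S$ maps $\ve{\pi_{l(\mathcal{B})}}M$ into itself: this is the closure statement inside the proof of Lemma~\ref{UniqueContinuationInZD}, whose argument invokes only Lemma~\ref{SameDimensions} and the vector relative degree of the true system $\mathcal{B}$ (forcing the observable part of the state initialised by $w$ to vanish over the next $\max_i r_i$ steps), but not the dimension hypothesis stated in that lemma. Granting this, for any $\ve{u} \in \mathcal{B}_{MPUM,ZD}$ time-invariance places each window $w_t := (\ve{u}(t),\ldots,\ve{u}(t+l(\mathcal{B})-1))$ in $\ve{\pi_{l(\mathcal{B})}}M$, and uniqueness of the continuation forces $w_{t+1} = Sw_t$. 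Conversely, iterating $S$ from an arbitrary $w_0 \in \ve{\pi_{l(\mathcal{B})}}M$ produces an input $\ve{u}$ all of whose length-$(l(\mathcal{B})+1)$ windows, paired with zero output, lie in $\mathcal{B}_{MPUM,[0,l(\mathcal{B})]}$; since $l(\mathcal{B}_{MPUM}) \leq l(\mathcal{B})$, Proposition~\ref{CharakterisierungenLag}(i) gives $(\ve{u},\ve{0}) \in \mathcal{B}_{MPUM}$. Hence zero-dynamics trajectories are exactly the $S$-orbits, with $w_t = S^t w_0$.

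It remains to translate window convergence into the convergence required by the definition of stable zero dynamics. Fixing a minimal representation of $\mathcal{B}_{MPUM}$ (the choice being immaterial by Remark~\ref{StabilityDependsOnlyOnU}), Lemma~\ref{InitialCondition} shows that along a zero-output trajectory the state is a fixed linear image of the input window, $\ve{x}(i) = \ve{\varphi}(w_{i-l(\mathcal{B})})$, while each $\ve{u}(i)$ is a coordinate of the windows. Consequently $(\ve{x}(i),\ve{u}(i)) \to \ve{0}$ if and only if $w_t \to \ve{0}$, so $\mathcal{B}_{MPUM,ZD}$ is stable exactly when $S^t w_0 \to \ve{0}$ for every $w_0 \in \ve{\pi_{l(\mathcal{B})}}M$, that is, exactly when every eigenvalue of $\ve{\tilde{Q}}$ has modulus strictly below $1$; the degenerate case $d=0$ gives $\mathcal{B}_{MPUM,ZD} = \{\ve{0}\}$ and an empty, vacuously stable $\ve{\tilde{Q}}$.

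I expect the principal obstacle to be the closure $S(\ve{\pi_{l(\mathcal{B})}}M) \subseteq \ve{\pi_{l(\mathcal{B})}}M$ together with the backward half of the correspondence. One must check that the relative-degree argument of Lemma~\ref{UniqueContinuationInZD} survives without its stated dimension hypothesis — which persistent excitation supplied earlier but which is unavailable here — and that the reassembly of the iterated windows into a genuine trajectory of $\mathcal{B}_{MPUM}$ is licensed by $l(\mathcal{B}_{MPUM}) \leq l(\mathcal{B})$ and the lag characterisation of Proposition~\ref{CharakterisierungenLag}. A secondary point is the well-definedness and boundedness of the linear map $\ve{\varphi}$ and the similarity-invariance underlying the uniqueness claim for $\ve{\tilde{Q}}$.
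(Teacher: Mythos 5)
Your construction of $\ve{\tilde{Q}}$ is exactly the paper's: a basis of $\ve{\pi_{l(\Bc)}}M$, the unique continuations furnished by Lemma~\ref{SameDimensions}, and the coefficients of the shifted windows; likewise the use of Lemmas~\ref{DetermineMPUM} and~\ref{ExtendedSequencesForm} to make $M$ computable. Where you genuinely depart is the final equivalence: the paper simply cites Theorem~\ref{ZeroDynamicsStability} (hence Lemma~\ref{xConvergence} and the Byrnes--Isidori form), whereas you identify $\Bc_{MPUM,ZD}$ with the orbits of the window shift $S$ and translate window convergence into state-and-input convergence through the linear map given by Lemma~\ref{InitialCondition}. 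This replacement has real merit: Theorem~\ref{ZeroDynamicsStability} is stated for a controllable system under persistently exciting input, neither of which is available for $\Bc_{MPUM}$ here, so the paper's citation tacitly requires exactly the kind of re-derivation you supply.

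There is, however, a genuine gap, and it sits at the point you yourself flag as the principal obstacle: the closure $S(\ve{\pi_{l(\Bc)}}M) \subseteq \ve{\pi_{l(\Bc)}}M$, without which $\ve{\tilde{Q}}$ is not even well defined. Your claim that the proof of Lemma~\ref{UniqueContinuationInZD} does not use its dimension hypothesis is not correct. That proof shows (using only the vector relative degree of $\Bc$) that the state initialised by $\ve{z_k}$ has vanishing observable part, and hence that $(\ve{v_{k,1}},\ve{z_k})$ begins a zero-dynamics trajectory \emph{of the true system} $\Bc$. Passing from this to $\ve{z_k}\in \ve{\pi_{l(\Bc)}}M$ requires that every length-$l(\Bc)$ window of $\Bc_{ZD}$ lies in $\ve{\pi_{l(\Bc)}}M$, i.e.\ that $\ve{\pi_{l(\Bc)}}M = \Bc_{ZD,[0,l(\Bc)-1]}$; since one always has $\ve{\pi_{l(\Bc)}}M \subseteq \Bc_{ZD,[0,l(\Bc)-1]}$ and $\dim \Bc_{ZD,[0,l(\Bc)-1]} \le n(\Bc)-\sum_i r_i$, this equality is precisely what the hypothesis $\dim \ve{\pi_{l(\Bc)}}M = n(\Bc)-\sum_i r_i$ delivers (and what persistency of excitation delivered in Lemma~\ref{xConvergence}). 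In the setting of Lemma~\ref{StabilityMPUM}, $\ve{\pi_{l(\Bc)}}M$ is only known to be \emph{some} subspace of $\Bc_{ZD,[0,l(\Bc)-1]}$, namely the set of length-$l(\Bc)$ prefixes of zero-output sequences obtained by chaining $L-l(\Bc)$ overlapping data windows; such a subspace need not be shift-invariant, and invariance is only guaranteed once these nested prefix spaces have stabilised, which may take up to $n(\Bc)-\sum_i r_i$ chaining steps --- more than $T \ge l(\Bc)+\max_i r_i +1$ provides. So this step needs a new argument, for instance gluing (via Lemma~\ref{ExtendedSequencesForm}) to window lengths $L$ beyond $T$ until the projections $\ve{\pi_{l(\Bc)}}M$ stabilise, the stabilised space being the window set of $\Bc_{MPUM,ZD}$ itself and hence shift-invariant by construction. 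To be fair, the paper's own proof leans on the same citation of Lemma~\ref{UniqueContinuationInZD} outside its stated hypotheses; but your proposal turns the dispensability of the dimension hypothesis into a load-bearing claim, and that claim is unsubstantiated.
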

\begin{proof}
Using Lemma \ref{ExtendedSequencesForm}, we can explicitly determine $N \coloneqq \mathcal{B}_{MPUM,[0,2l(\mathcal{B})]}$ from the data $\mathcal{D}$. Since the vector relative degree of $\Bc$ exists by assumption, we can conclude that $l(\mathcal{B}) \geq r_i$ for all $1 \leq i \leq p(\mathcal{B})$ by applying Lemma~\ref{rLeqL} to all induced SISO systems. Observe that $$ \ve{\pi_{l(\mathcal{B}) + 1}} \{ \ve{u} : (\ve{u},\ve{0}) \in N \} =: N_{l(\mathcal{B}) + 1}$$
contains all possible input sequences of length $l(\mathcal{B}) + 1$ of $\mathcal{B}_{MPUM, ZD}$. Moreover, $N_{l(\Bc)+1} = \ve{\pi_{l(\Bc)+1}}M$ for $M$ as defined in~\eqref{DefineSubspaceM} with $L = 2l(\Bc)+1$. Let $d:= \dim \ve{\pi_{l(\mathcal{B})}} N_{l(\mathcal{B}) + 1}$ and $\left\{\ve{v_1}, \hdots, \ve{v_d} \right\}$ be a basis of $\ve{\pi_{l(\mathcal{B})}} N_{l(\mathcal{B}) + 1}$. As we have seen in Lemma~\ref{UniqueContinuationInZD}, the unique continuation to $(\ve{v_k}, \ve{u_k}) \in N_{l(\mathcal{B}) + 1}$ will allow us to determine the coefficients $\tilde{q}_{k,i} \in \R$ that satisfy $(\ve{v_{k,2}}, \hdots, \ve{v_{k,l(\mathcal{B})}}, \ve{u_k})  = \sum_{i=1}^{d} \tilde{q}_{k,i} \ve{v_i} $. Then $\mathcal{B}_{MPUM,ZD} $ is stable if, and only if, the matrix $\ve{\tilde{Q}} = (\tilde{q}_{k,i})_{1 \leq i,k \leq d}$ is stable, as shown in Theorem~\ref{ZeroDynamicsStability}.
\end{proof}

\begin{lemma} \label{DetermineUnstableZD}
Let $\mathcal{B} \in \mathcal{L}^w$ be a system and $\mathcal{D} = (\ve{u_d}, \ve{y_d})\in\Bc_{[0,T-1]}$ be a sampled data sequence. 
If the zero dynamics of $\mathcal{B}_{MPUM}$ are not stable, then $\mathcal{D} $ is informative for instability of the zero dynamics.
\end{lemma}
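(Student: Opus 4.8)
The plan is to exploit that the MPUM is contained in every behavior explaining the data, so that a single witnessing trajectory suffices. By definition of $\Bc_{MPUM}$ as the intersection $\bigcap_{\Bc'\in\Sigma_{\Dc}}\Bc'$, we have $\Bc_{MPUM}\subseteq\Bc'$ for every $\Bc'\in\Sigma_{\Dc}$, and therefore $\Bc_{MPUM,ZD}\subseteq\Bc'_{ZD}$ at the level of input trajectories, since a zero-output trajectory of $\Bc_{MPUM}$ is a zero-output trajectory of $\Bc'$. Thus it suffices to produce one input sequence $\ve{u}\in\Bc_{MPUM,ZD}$ that certifies instability in a representation-independent way, i.e.\ without reference to the internal state of $\Bc_{MPUM}$ (whose state space may differ in dimension from that of $\Bc'$). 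The natural candidate is an input that \emph{fails} to converge to zero, because the defining property of stable zero dynamics forces $\ve{u}(i)\to\ve{0}$ for every $\ve{u}\in\Bc'_{ZD}$, regardless of $\Bc'$.

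The heart of the argument is therefore the following reduction: if the zero dynamics of a system are unstable, then there already exists a zero-dynamics input that does not converge to zero. To prove it, fix a minimal representation $(\ve{A},\ve{B},\ve{C},\ve{D})$ of $\Bc_{MPUM}$ and suppose, conversely, that every $\ve{u}\in\Bc_{MPUM,ZD}$ satisfies $\ve{u}(i)\to\ve{0}$. For such a $\ve{u}$ with associated (by observability unique) state trajectory $\ve{x}$ we have $\ve{y}\equiv\ve{0}$, and the input/output-to-state identity used in the proof of Lemma~\ref{rLeqL} (cf.~\cite[Lem.~1]{IvanMarkovsky.2008}) yields, over any window of length $l:=l(\Bc_{MPUM})$,
\[
    \ve{0} = \ve{y}_{[t,\,t+l-1]} = \mathcal{O}_{l}(\ve{C},\ve{A})\,\ve{x}(t) + \ve{\mathcal{T}}_{l}(\ve{H})\,\ve{u}_{[t,\,t+l-1]},
\]
where $\ve{H}$ is the impulse response. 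Since the representation is observable by Lemma~\ref{minimalObservable}, $\mathcal{O}_{l}(\ve{C},\ve{A})$ has full column rank and hence a left inverse, so $\ve{x}(t) = -\mathcal{O}_{l}(\ve{C},\ve{A})^{+}\,\ve{\mathcal{T}}_{l}(\ve{H})\,\ve{u}_{[t,\,t+l-1]}$; as $\ve{u}(i)\to\ve{0}$ the right-hand side tends to $\ve{0}$, whence $\ve{x}(i)\to\ve{0}$. Then every zero-dynamics trajectory has both $\ve{u}(i)\to\ve{0}$ and $\ve{x}(i)\to\ve{0}$, i.e.\ the zero dynamics would be stable, contradicting the hypothesis. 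The degenerate case $n(\Bc_{MPUM})=0$ is immediate, since then the state space is trivial and, by Remark~\ref{StabilityDependsOnlyOnU}, instability means $\ker\ve{D}\neq\{\ve{0}\}$, which already furnishes a nonzero constant input in $\Bc_{MPUM,ZD}$.

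Combining the two steps finishes the proof. By the reduction there is $\ve{u}\in\Bc_{MPUM,ZD}$ with $\ve{u}(i)\not\to\ve{0}$. For an arbitrary $\Bc'\in\Sigma_{\Dc}$ the inclusion $\Bc_{MPUM}\subseteq\Bc'$ gives $(\ve{u},\ve{0})\in\Bc'$, so $\ve{u}\in\Bc'_{ZD}$; since $\ve{u}$ does not converge to zero, $\Bc'$ cannot have stable zero dynamics, i.e.\ $\Bc'\notin\Sigma_{\text{stable}}$. As $\Bc'$ was arbitrary, $\Sigma_{\Dc}\cap\Sigma_{\text{stable}}=\emptyset$, which is precisely informativity for instability of the zero dynamics.

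I expect the main obstacle to be exactly this reduction step. Stability is defined through convergence of the internal state of a minimal representation, yet the certificate must transfer to every $\Bc'\in\Sigma_{\Dc}$, whose minimal representations can have different state dimensions than $\Bc_{MPUM}$. Converting the state-level instability of $\Bc_{MPUM}$ into the representation-independent statement $\ve{u}\not\to\ve{0}$, via observability of minimal representations, is what turns a single trajectory into a valid certificate across all of $\Sigma_{\Dc}$.
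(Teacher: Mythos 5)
Your proof is correct, but it takes a genuinely different route from the paper. The paper also starts from the inclusion $\Bc_{MPUM}\subseteq\Bc'$ for all $\Bc'\in\Sigma_\Dc$ and picks a trajectory $(\ve{u},\ve{0})\in\Bc_{MPUM}$ whose state/input pair does not converge; but it then transfers the \emph{state} directly: it asserts the existence of a matrix $T\in\R^{n(\Bc')\times n(\Bc_{MPUM})}$ intertwining the two minimal representations along this trajectory, $\ve{x}'(k)=T\ve{x}(k)$, and concludes $(\ve{u}(k),\ve{x}'(k))\nrightarrow\ve{0}$. You instead eliminate the state from the certificate altogether: using the observability identity $\ve{0}=\mathcal{O}_{l}(\ve{C},\ve{A})\ve{x}(t)+\ve{\mathcal{T}}_{l}(\ve{H})\ve{u}_{[t,t+l-1]}$ you show that on zero-output trajectories the state is a linear function of a finite input window, so input convergence forces state convergence; hence instability of $\Bc_{MPUM,ZD}$ already yields an input with $\ve{u}(i)\nrightarrow\ve{0}$, and only this input needs to be transferred to $\Bc'$. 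Your version buys something concrete: the paper's final step implicitly needs $T$ to be injective (if $\ve{u}(k)\to\ve{0}$ but $\ve{x}(k)\nrightarrow\ve{0}$, the conclusion $T\ve{x}(k)\nrightarrow\ve{0}$ fails for a general $T$), a point the paper neither states nor proves, whereas your input-only certificate is representation-independent by construction and sidesteps the intertwining map entirely. The paper's argument, when the injectivity of $T$ is supplied (it follows from observability of $\Bc_{MPUM}$'s minimal representation), is shorter on the page and reuses the state-correspondence machinery already set up in Remark~\ref{StabilityDependsOnlyOnU}; yours is self-contained and, in my view, the cleaner of the two.
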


\begin{proof}
By definition of the MPUM, we have $\mathcal{B}_{MPUM} \subseteq \mathcal{B}'$ for all $\Bc'\in\Sigma_\Dc$.  The assumption of the zero dynamics of $\mathcal{B}_{MPUM}$ not being stable implies that there is a trajectory $(\ve{u},\ve{0}) \in \mathcal{B}_{MPUM}$ such that $(\ve{u}(k),\ve{x}(k)) \nrightarrow \ve{0}$ for a corresponding state sequence $\ve{x}\in\left(\R^{n(\mathcal{B}_{MPUM})}\right)^{\N_0}$ of a minimal representation of $\mathcal{B}_{MPUM}$. Now fix $\Bc'\in\Sigma_\Dc$  and choose a minimal representation of $\mathcal{B}'$. Since $\mathcal{B}_{MPUM} \subseteq \mathcal{B}'$ and $\ve{u}\in \mathcal{B}'_{ZD}$, there exists a state sequence $\ve{x}'\in\left(\R^{n(\mathcal{B}')}\right)^{\N_0}$ and a matrix $T\in\R^{n(\Bc')\times n(\mathcal{B}_{MPUM})}$ such that $\ve{x}'(k) = T \ve{x}(k)$ for all $k\in\N_0$. Hence $(\ve{u}(k),\ve{x}'(k)) \nrightarrow \ve{0}$ and we can conclude that $\mathcal{B}'$ does not have stable zero dynamics.
\end{proof}

By Lemma~\ref{DetermineUnstableZD}, unstable zero dynamics of the most powerful unfalsified model implies that the zero dynamics of any system that explains the data cannot be stable. In the following we determine further conditions on the data that ensure that the stability of the zero dynamics of the MPUM yields stability of the zero dynamics of all systems that explain the data.

\begin{lemma} \label{EquivalentCharacterizationStateDimension}
Let $\mathcal{B} \in \mathcal{L}^{w}$ be an unknown system  with known lag $l(\mathcal{B})$ and known McMillan degree $n(\mathcal{B}) > 0$, and let $\mathcal{D} = (\ve{u_d}, \ve{y_d})\in\Bc_{[0,T-1]}$ be a sampled data sequence of length $T \geq 2l(\mathcal{B}) + 1 $. Then 
$$n(\mathcal{B}) = \dim \left( \mathcal{H}_{l(\mathcal{B}) + 1}(\ve{y_d}) \ker(\mathcal{H}_{l(\mathcal{B}) + 1} (\ve{u_d}) ) \right) $$ 
if, and only if, $n(\mathcal{B}) = n(\mathcal{B}_{MPUM})$.
\end{lemma}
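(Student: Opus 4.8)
The plan is to reduce the stated equivalence to a single dimension computation, namely to show that $\dim\!\left(\mathcal{H}_{l(\mathcal{B})+1}(\ve{y_d})\ker \mathcal{H}_{l(\mathcal{B})+1}(\ve{u_d})\right) = n(\mathcal{B}_{MPUM})$ holds unconditionally under the hypotheses. Once this is established, substituting it into the left-hand identity turns the assertion into ``$n(\mathcal{B}) = n(\mathcal{B}_{MPUM})$ if and only if $n(\mathcal{B}) = n(\mathcal{B}_{MPUM})$'', so the equivalence is immediate and the entire content resides in the dimension formula.

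First I would give the set $\mathcal{H}_{l(\mathcal{B})+1}(\ve{y_d})\ker \mathcal{H}_{l(\mathcal{B})+1}(\ve{u_d})$ a behavioral reading. Since $T \geq 2l(\mathcal{B})+1 \geq l(\mathcal{B})+1$, Lemma~\ref{DetermineMPUM} applies and gives $\mathcal{B}_{MPUM,[0,l(\mathcal{B})]} = \im \begin{bmatrix} \mathcal{H}_{l(\mathcal{B})+1}(\ve{u_d}) \\ \mathcal{H}_{l(\mathcal{B})+1}(\ve{y_d}) \end{bmatrix}$. A vector $\ve{g}$ lies in $\ker \mathcal{H}_{l(\mathcal{B})+1}(\ve{u_d})$ exactly when the trajectory it generates has input identically zero on the window $[0,l(\mathcal{B})]$; hence the set in question is precisely $\{\ve{y} : (\ve{0},\ve{y}) \in \mathcal{B}_{MPUM,[0,l(\mathcal{B})]}\}$, the space of windowed outputs of $\mathcal{B}_{MPUM}$ produced by inputs vanishing on $[0,l(\mathcal{B})]$.

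Next I would pass to a minimal state space representation $(\ve{A'},\ve{B'},\ve{C'},\ve{D'})$ of $\mathcal{B}_{MPUM}$ with $n(\mathcal{B}_{MPUM})$ states. For a trajectory whose input vanishes on $[0,l(\mathcal{B})]$, the output there is the pure zero-input response $\ve{y}(t) = \ve{C'}(\ve{A'})^t \ve{x}(0)$, so the above set equals $\im \mathcal{O}_{l(\mathcal{B})+1}(\ve{C'},\ve{A'})$, every initial state being realizable by choosing the zero input sequence; its dimension is therefore $\rk \mathcal{O}_{l(\mathcal{B})+1}(\ve{C'},\ve{A'})$. To evaluate this rank I would invoke the bound $l(\mathcal{B}_{MPUM}) \leq l(\mathcal{B})$ obtained in the proof of Lemma~\ref{DetermineMPUM}: since $l(\mathcal{B})+1 > l(\mathcal{B}) \geq l(\mathcal{B}_{MPUM})$, the observability matrix of the minimal representation already attains full column rank $n(\mathcal{B}_{MPUM})$ at order $l(\mathcal{B}_{MPUM})$, and appending further block rows cannot decrease the rank, whence $\rk \mathcal{O}_{l(\mathcal{B})+1}(\ve{C'},\ve{A'}) = n(\mathcal{B}_{MPUM})$. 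This closes the dimension computation and hence the proof.

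The step I expect to be delicate is the behavioral identification in the third paragraph: one must argue that demanding zero input only on the finite window $[0,l(\mathcal{B})]$, rather than for all time, already forces the windowed output to coincide with a genuine zero-input response, and that the assignment $\ve{x}(0) \mapsto \mathcal{O}_{l(\mathcal{B})+1}(\ve{C'},\ve{A'})\ve{x}(0)$ is a bijection onto the set. The lag inequality $l(\mathcal{B}_{MPUM}) \leq l(\mathcal{B})$ is exactly what guarantees the window is long enough to make this map injective, so I would take care to cite it explicitly; the nominal hypothesis $T \geq 2l(\mathcal{B})+1$ enters only to secure the applicability of Lemma~\ref{DetermineMPUM}.
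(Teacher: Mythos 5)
Your proof is correct and follows essentially the same route as the paper: both establish the identity $\dim\left(\mathcal{H}_{l(\mathcal{B})+1}(\ve{y_d})\ker\mathcal{H}_{l(\mathcal{B})+1}(\ve{u_d})\right) = n(\mathcal{B}_{MPUM})$ by identifying the left-hand set with the windowed zero-input responses of $\mathcal{B}_{MPUM}$ via Lemma~\ref{DetermineMPUM}, then with the image of the truncated observability matrix of a minimal representation of $\mathcal{B}_{MPUM}$, whose rank equals $n(\mathcal{B}_{MPUM})$ because $l(\mathcal{B}_{MPUM}) \leq l(\mathcal{B})$. The only cosmetic difference is that the paper first verifies $n(\mathcal{B}_{MPUM})>0$ under either hypothesis before invoking the observability argument, whereas you treat the dimension formula as unconditional, which is fine since the degenerate case $n(\mathcal{B}_{MPUM})=0$ makes it hold trivially.
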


\begin{proof}
Assume that the first equation holds. The dimension of the state space of a minimal representation of $\mathcal{B}_{MPUM}$ is non-zero, since 
$$\dim \left( \mathcal{H}_{l(\mathcal{B}) + 1}(\ve{y_d}) \ker \mathcal{H}_{l(\mathcal{B}) + 1} (\ve{u_d})  \right) = n(\mathcal{B}) > 0$$
together with Lemma~\ref{DetermineMPUM} implies that there is $(\ve{u},\ve{y}) \in \mathcal{B}_{MPUM}$ with $\ve{u}(i) = \ve{0}$ and $\ve{y}(i) \neq\ve{ 0}$ for some $i \in \N_0$, hence the behavior $\mathcal{B}_{MPUM}$ cannot be represented by $\ve{y}(t) = \ve{Du}(t)$, $t\in\N_0$, for some matrix $\ve{D}$. In case the second equation holds, $n(\mathcal{B}_{MPUM}) = n(\mathcal{B}) > 0$. This shows that the McMillan degree of $\mathcal{B}_{MPUM}$ is non-trivial, as soon as either one of the two equations hold.

Now choose a minimal representation $(\ve{A}, \ve{B}, \ve{C}, \ve{D})$  of $\mathcal{B}_{MPUM}$. By the definition of the lag, we have $\rk \mathcal{O}_{l(\mathcal{B}_{MPUM})} (\ve{C},\ve{A}) = n(\mathcal{B}_{MPUM})$, thus $\mathcal{O}_{l(\mathcal{B}_{MPUM}) }(\ve{C},\ve{A}) $ is left invertible. This implies that for any two distinct initial states $\ve{x^0} \neq \ve{\tilde{x}^0} \in \R^{n(\mathcal{B}_{MPUM})}$,  we have $\ve{x^0} - \ve{\tilde{x}^0}\notin\ker \mathcal{O}_{l(\mathcal{B}_{MPUM}) } (\ve{C},\ve{A}) = \{ \ve{0} \}$, hence the zero input responses differ, $(\ve{CA}^k\ve{x^0})_{0 \leq k \leq l(\mathcal{B}_{MPUM})} \neq (\ve{CA}^{k}\ve{\tilde{x}^0})_{0 \leq k \leq l(\mathcal{B}_{MPUM})}$. 
It follows from Lemma~\ref{DetermineMPUM} in combination with Proposition~\ref{CharakterisierungenLag} that $l(\mathcal{B}_{MPUM}) \leq l(\mathcal{B})$, thus $(\ve{CA}^k\ve{x^0})_{0 \leq k \leq l(\mathcal{B})} \neq (\ve{CA}^{k}\ve{\tilde{x}^0})_{0 \leq k \leq l(\mathcal{B})}$. We can thus define a linear map $\ve{\psi}: \R^{n(\mathcal{B}_{MPUM})} \rightarrow \left( \R^{p(\mathcal{B})} \right)^{l(\mathcal{B}) + 1}$, $\ve{x^0} \mapsto(\ve{CA}^k\ve{x^0})_{0 \leq k \leq l(\mathcal{B})}$ that is injective. Therefore, 
\begin{align*}
n(\mathcal{B}_{MPUM}) &= \dim \im \ve{\psi} =   \dim \left\{ (\ve{0}, \ve{y}) \in \mathcal{B}_{MPUM,[0,l(\mathcal{B})]} \right\} \\
&= \dim \left( \mathcal{H}_{l(\mathcal{B}) + 1}(\ve{y_d}) \ker \mathcal{H}_{l(\mathcal{B}) + 1} (\ve{u_d}) \right),
\end{align*}
where the last equality follows with Lemma \ref{DetermineMPUM}. This shows the claimed equivalence.
\end{proof}
  
We will refer to the property that any system $\mathcal{B} \in \Sigma_\Dc$ has a well-defined vector relative degree as $\mathcal{D}$ being informative for the existence of a vector relative degree. Furthermore, for a system $\Bc\in\Lc^w$ which has vector relative degree $\ve{r}(\Bc) = (r_1, \hdots, r_{p(\mathcal{B})}) \in \N_0^{p(\mathcal{B})}$, we denote by $r_s(\Bc) = \sum_{i=1}^{p(\mathcal{B})} r_i$ the vector relative degree sum of $\Bc$. For fixed $s\in\N_0$, the class of all systems with vector relative degree sum~$s$ is denoted by $\Sigma^{\rm sum}_s$. The data $\Dc$ are called informative for vector relative degree sum~$s$, if $\Sigma_\Dc\subseteq\Sigma^{\rm sum}_s$.

\begin{theorem} \label{StrictCharacterizationZeroDynamics}
Let $\mathcal{B} \in \mathcal{L}^{2m(\mathcal{B})}$ be an unknown square system with known lag $l(\mathcal{B})$, known McMillan degree $n(\mathcal{B}) > 0$, and known vector relative degree sum~$r_s(\Bc)$. Further, let $\mathcal{D} = (\ve{u_d}, \ve{y_d})\in\Bc_{[0,T-1]}$ be a sampled data sequence of length $T \geq 2l(\mathcal{B})+ 1$. Then $\mathcal{D}$ is informative for the existence of the vector relative degree and informative for the stability of the zero dynamics if, and only if, 
\begin{enumerate}[label = (\alph*)]
\item \label{SameDimensionN} $n(\mathcal{B}) = \dim \left( \mathcal{H}_{l(\mathcal{B}) + 1}(\ve{y_d}) \ker(\mathcal{H}_{l(\mathcal{B}) + 1} (\ve{u_d}) ) \right)$, 
\item \label{InformativeOfRelativeDegree} $\mathcal{D}$ is informative for vector relative degree sum~$r_s(\Bc)$, and 
\item \label{RequireMPUMStable} the zero dynamics of $\mathcal{B}_{MPUM}$ are stable. 
\end{enumerate}
\end{theorem}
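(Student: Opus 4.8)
The statement is an equivalence, so I plan to prove the two implications separately, working throughout in the model class fixed by the known invariants, namely $\Sigma = \{\mathcal{B}' \in \mathcal{L}^{2m(\mathcal{B})} : n(\mathcal{B}') = n(\mathcal{B})\}$. Fixing the McMillan degree is essential: in the unrestricted class one could always enlarge an explaining system by appending an \emph{observable}, input-decoupled mode $\eta_0$ with $\eta_0(t+1)=\lambda\eta_0(t)$, $|\lambda|>1$, coupled only into the $\xi_{i,r_i}$-equation so as to lie in the zero dynamics; this yields an explaining system of larger order with unstable zero dynamics and the same decoupling matrix, which would defeat the direction ``(a),(b),(c) $\Rightarrow$ informativity for stability''. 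Throughout I use that $\mathcal{B}_{MPUM}\subseteq\mathcal{B}'$ for every $\mathcal{B}'\in\Sigma_{\mathcal{D}}$, and the translation of (a) via Lemma \ref{EquivalentCharacterizationStateDimension} into $n(\mathcal{B}_{MPUM})=n(\mathcal{B})$.

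For the implication (a),(b),(c) $\Rightarrow$ informativity, the existence part is immediate: by (b) every $\mathcal{B}'\in\Sigma_{\mathcal{D}}$ has vector relative degree sum $r_s(\mathcal{B})$, and a system possessing such a sum in particular possesses a vector relative degree. For stability I would show that (a) forces unique identification within $\Sigma$, i.e. $\Sigma_{\mathcal{D}}=\{\mathcal{B}_{MPUM}\}$. Fix $\mathcal{B}'\in\Sigma_{\mathcal{D}}$; then $\mathcal{B}_{MPUM}\subseteq\mathcal{B}'$ and, by (a) and the definition of $\Sigma$, $n(\mathcal{B}_{MPUM})=n(\mathcal{B})=n(\mathcal{B}')$. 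The inclusion yields equal impulse responses (a zero window of length $\ge\max\{l(\mathcal{B}_{MPUM}),l(\mathcal{B}')\}$ followed by an impulse forces the state to vanish in both representations by Lemma \ref{InitialCondition}, so the subsequent Markov parameters coincide). Moreover, the space of zero-input responses of a minimal representation has dimension equal to its McMillan degree by observability (Lemma \ref{minimalObservable}); hence the zero-input responses of $\mathcal{B}_{MPUM}$ form a subspace of those of $\mathcal{B}'$ of equal dimension and therefore coincide. Lemma \ref{behaviorallyEquivalent} then gives $\mathcal{B}_{MPUM}=\mathcal{B}'$. By (c) this single element has stable zero dynamics, so $\mathcal{D}$ is informative for stability.

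For the converse I assume informativity for existence and for stability and derive (a)--(c). Condition (c) follows from the contrapositive of Lemma \ref{DetermineUnstableZD}: if the zero dynamics of $\mathcal{B}_{MPUM}$ were unstable, then every $\mathcal{B}'\in\Sigma_{\mathcal{D}}$ would have unstable zero dynamics, contradicting $\Sigma_{\mathcal{D}}\subseteq\Sigma_{\text{stable}}$ together with $\mathcal{B}\in\Sigma_{\mathcal{D}}$. Condition (b) follows from the equal-impulse-response argument above: every $\mathcal{B}'\in\Sigma_{\mathcal{D}}$ shares the impulse response of $\mathcal{B}_{MPUM}$, hence of $\mathcal{B}$; since the vector relative degree and its sum are determined by the Markov parameters together with the rank of the decoupling matrix, and $\mathcal{B}'$ has a vector relative degree by existence informativity, we obtain $r_s(\mathcal{B}')=r_s(\mathcal{B})$. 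Finally, for (a), I argue by contraposition: one always has $n(\mathcal{B}_{MPUM})\le n(\mathcal{B})$ since the zero-input responses of $\mathcal{B}_{MPUM}$ embed into those of $\mathcal{B}$, so if (a) fails then $n(\mathcal{B}_{MPUM})<n(\mathcal{B})$. I would then augment a minimal Byrnes--Isidori representation of $\mathcal{B}_{MPUM}$ by $n(\mathcal{B})-n(\mathcal{B}_{MPUM})$ additional $\eta$-states, at least one being an observable mode of modulus $>1$ coupled into the zero dynamics as above; the resulting $\mathcal{B}'\in\Sigma_{\mathcal{D}}$ has order $n(\mathcal{B})$, contains $\mathcal{B}_{MPUM}$ (hence explains the data), and has unstable zero dynamics, contradicting informativity for stability.

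The main obstacle is this final construction: I must produce an admissible behavior in $\mathcal{L}^{2m(\mathcal{B})}$ of McMillan degree \emph{exactly} $n(\mathcal{B})$ that still explains the data and whose added modes genuinely enlarge the zero dynamics, which forces them to be observable (ruling out the trivial cancellation of unobservable modes that would leave the behavior unchanged) while entering only through the $\eta$-block and the $\xi_{i,r_i}$-equations so that the decoupling matrix, and thus the vector relative degree, is preserved. Making this augmentation rigorous, and dually pinning down the model class so that the order restriction is justified, is the delicate part; the remaining steps reduce to the already-established Lemmas \ref{behaviorallyEquivalent}, \ref{EquivalentCharacterizationStateDimension}, \ref{StabilityMPUM} and \ref{DetermineUnstableZD}.
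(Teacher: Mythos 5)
Your overall architecture is sound, and two of your steps take a genuinely different route from the paper. For sufficiency, the paper never proves unique identification: it uses conditions~\ref{SameDimensionN} and~\ref{InformativeOfRelativeDegree} only to equalize the dimension $k=n(\Bc)-r_s(\Bc)$ of the zero-dynamics initial-condition spaces of all $\tilde\Bc\in\Sigma_\Dc$, and then builds, via Lemma~\ref{UniqueContinuationInZD}, a bijection $\ve{T}\colon V\to V_{\tilde\Bc}$ between these spaces along which instability of $\tilde\Bc_{ZD}$ would pull back to $\mathcal{B}_{MPUM}$, contradicting~\ref{RequireMPUMStable}. Your alternative — inclusion plus equal McMillan degree forces equality of behaviors (equal Markov parameters via a long zero window and Lemma~\ref{InitialCondition}, equal zero-input-response spaces by a dimension count, then Lemma~\ref{behaviorallyEquivalent}) — is a stronger and shorter statement, and your Markov-parameter argument for condition~\ref{InformativeOfRelativeDegree} in the converse direction likewise collapses the paper's long Step~2b (input transformation $\ve{u}\mapsto\ve{G}^{-1}\ve{u}$, induced SISO systems, the matrix $\ve{G_v}$) to a few lines. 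Your reading of the model class (McMillan degree pinned by the known $n(\Bc)$) is also the one the paper implicitly needs, since its Step~1 uses $n(\tilde\Bc)\le n(\Bc)$ without comment.

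There is, however, a genuine gap in your sufficiency direction as stated: condition~\ref{SameDimensionN} alone does \emph{not} force $\Sigma_\Dc=\{\mathcal{B}_{MPUM}\}$. Your impulse-response step presupposes that the zero-window-then-impulse trajectories belong to $\mathcal{B}_{MPUM}$, i.e.\ that the MPUM has input cardinality $m(\Bc)$, and~\ref{SameDimensionN} does not give this. Concretely, take $\ve{u_d}\equiv\ve{0}$ and $\ve{y_d}(t)=\ve{CA}^t\ve{x^0}$ a free response from a cyclic initial state of a non-derogatory $\ve{A}$: then $\ker\mathcal{H}_{l(\Bc)+1}(\ve{u_d})$ is the whole space and $\rk\mathcal{H}_{l(\Bc)+1}(\ve{y_d})=n(\Bc)$, so~\ref{SameDimensionN} holds, yet $\Sigma_\Dc$ contains $(\ve{A},\ve{B'},\ve{C},\ve{D'})$ for \emph{every} $\ve{B'},\ve{D'}$, and $\mathcal{B}_{MPUM}$ is an autonomous behavior with no free inputs, so no impulse can be applied in it. (In this example condition~\ref{InformativeOfRelativeDegree} fails, so the theorem survives, but your intermediate claim ``(a) $\Rightarrow$ unique identification'' is false.) The repair is to use~\ref{InformativeOfRelativeDegree} \emph{before} the identification argument: by~\ref{SameDimensionN} and Lemma~\ref{EquivalentCharacterizationStateDimension} the MPUM has McMillan degree $n(\Bc)$ and explains the data, hence $\mathcal{B}_{MPUM}\in\Sigma_\Dc$, and then~\ref{InformativeOfRelativeDegree} forces it to possess a vector relative degree — in particular to be a genuine $m(\Bc)$-input system — after which your argument applies. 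The same caveat affects your Markov-parameter proof of~\ref{InformativeOfRelativeDegree} in the necessity direction; there the paper supplies the missing piece in its Step~2 preamble (``the vector relative degree of $\mathcal{B}_{MPUM}$ exists'').

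Finally, for necessity of~\ref{SameDimensionN} you only sketch the augmentation, and what you describe is exactly the paper's Step~2a: append to a Byrnes--Isidori form of $\mathcal{B}_{MPUM}$ a diagonal block $\ve{z}(t+1)=\Lambda\ve{z}(t)$ with $|\lambda_1|\ge 1$, coupled through a vector $\ve{\alpha}$ into the $\xi_{i,r_i}$-equations only; the data are explained with $\ve{z}(0)=\ve{0}$, while from $\ve{z}(0)=\ve{e_1}$ the unique output-zeroing input is $\ve{u}(t)=-\alpha_1\lambda_1^t\ve{G}^{-1}(1,\ldots,1)^\top$, which does not converge to zero. The point you correctly flag as delicate — observability of the augmented representation, so that its McMillan degree is exactly $n(\Bc)$ and it lies in the model class — is real (take the $\lambda_i$ pairwise distinct), and is in fact glossed over by the paper itself; but as submitted, this step of your proof remains a sketch rather than a proof.
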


\begin{proof}
\textbf{Step 1: }
We show that the conditions~\ref{SameDimensionN}, \ref{InformativeOfRelativeDegree} and~\ref{RequireMPUMStable} are sufficient to conclude the informativity for the stability of the zero dynamics and the informativity for the existence of the vector relative degree. By condition~\ref{InformativeOfRelativeDegree} we have $r_s(\tilde\Bc) = r_s(\Bc)$ for all $\tilde{\mathcal{B}} \in \Sigma_{\mathcal{D}}$. Therefore, it is obvious that $\mathcal{D}$ is informative for the existence of a vector relative degree. By the Byrnes-Isidori normal form~\eqref{eq:BIF}, a minimal representation of the zero dynamics of $\tilde{\mathcal{B}}$ has a $k(\tilde{\mathcal{B}})\coloneqq \big(n(\tilde\Bc) - r_s(\tilde \Bc)\big)$-dimensional vector space $V_{\tilde\Bc}$ of initial conditions. This dimension $k:=k(\tilde\Bc)$ is the same for any $\tilde{\mathcal{B}} \in \Sigma_{\mathcal{D}}$ by condition~\ref{InformativeOfRelativeDegree} and the fact that
\[
    n(\mathcal{B}_{MPUM}) \le n(\tilde\Bc)  \le n(\mathcal{B}) 
\]
and $n(\mathcal{B}) = n(\mathcal{B}_{MPUM})$ by condition~\ref{SameDimensionN} and Lemma~\ref{EquivalentCharacterizationStateDimension}.

We can conclude that the initial conditions of trajectories in $\mathcal{B}_{MPUM,ZD}$ with respect to any minimal representation of $\mathcal{B}_{MPUM}$ form a $k$-dimensional vector space $V$. For any basis of initial values $\{ \ve{x_i}(0) : 1 \leq i \leq k \}$ of $V$, let $\ve{u_i} \in \mathcal{B}_{MPUM,ZD}$ be an input sequence that needs to be applied to the initial value $\ve{x_i}(0)$ to cause a constant zero output, for each $1 \leq i \leq k$. The sequences $\ve{u_i}$ are uniquely determined, as we can conclude from Lemma~\ref{UniqueContinuationInZD}, and the family $\{ \ve{u_i} : 1 \leq i \leq k \} \subseteq \mathcal{B}_{MPUM,ZD}$ is linearly independent by the observability of the minimal representation, which follows from Lemma~\ref{minimalObservable}.

Fix an arbitrary $\tilde{\mathcal{B}} \in \Sigma_{\mathcal{D}}$. Since $\ve{u_i} \in \mathcal{B}_{MPUM,ZD} \subseteq \tilde{\mathcal{B}}_{ZD}$, there are initial conditions $\ve{\tilde{x}_i}(0) \in V_{\tilde\Bc}$ for the zero dynamics of system $\tilde{\mathcal{B}}$ with corresponding input sequence $\ve{u_i}$, for each $1 \leq i \leq k$. The family $\{  \ve{\tilde{x}_i}(0) : 1 \leq i \leq k \}$ is a basis of $V_{\tilde\Bc}$, since  $\dim V = k = \dim V_{\tilde\Bc}$ and any non-trivial linear combination of zero means that the same linear combination of the corresponding input sequences from  $\{ \ve{u_i} : 1 \leq i \leq k \}$ is zero, which contradicts its linear independence. Define the linear map $\ve{T}: V \rightarrow V_{\tilde\Bc}$ such that $\ve{T}(\ve{x_i}(0)) = \ve{\tilde{x}_i}(0) $ for each $1 \leq i \leq k$, and observe that~$\ve{T}$ is well defined and bijective, as it is a basis transformation. 

Now, seeking a contradiction, assume that the zero dynamics of $\tilde{\mathcal{B}} \in \Sigma_{\mathcal{D}}$ are not stable, and choose an initial value $\ve{\tilde{x}_0} \in V_{\tilde\Bc}$ and the respective input sequence $\ve{u} \in \tilde{\mathcal{B}}_{ZD}$ such that the resulting trajectory does not converge to zero, $(\ve{u},\ve{\tilde x}) \nrightarrow \ve{0}$. Let $\ve{x_0} \coloneqq \ve{T}^{-1}(\ve{\tilde{x}_0})$ be the respective initial value for the system $\mathcal{B}_{MPUM}$. Then $(\ve{u},\ve{0}) \in \mathcal{B}_{MPUM}$, and by construction, the state sequence along this trajectory is $\ve{{x}}(i)=\ve{T}^{-1}(\ve{\tilde{x}}(i))$, $i\in\N_0$, and since $\ve{T}$ is bijective, $(\ve{u},\ve{x}) \nrightarrow \ve{0}$, which contradicts condition~\ref{RequireMPUMStable}. We can conclude that any $\tilde{\mathcal{B}} \in \Sigma_{\mathcal{D}}$ has stable zero dynamics, hence $\mathcal{D}$ is informative for the stability of the zero dynamics.

\textbf{Step 2: }
Assume that the data $\mathcal{D}$ is informative for the stability of the zero dynamics and informative for the existence of a vector relative degree. Thus, as a consequence of Lemma~\ref{DetermineUnstableZD}, condition \ref{RequireMPUMStable} holds and, furthermore, the vector relative degree of $\mathcal{B}_{MPUM}$ exists. In particular, its decoupling matrix $\ve{G}\in\R^{m(\Bc)\times m(\Bc)}$ (see Definition~\ref{RelativeDegreeMIMO}) is invertible and for any minimal state space representation of $\mathcal{B}_{MPUM}$ there exists a Byrnes-Isidori normal form (see~\eqref{eq:BIF}). \\
\textbf{Step 2a:} 
We show condition \ref{SameDimensionN}. We have $n(\mathcal{B}_{MPUM}) \leq n(\mathcal{B})$, since $\mathcal{B}_{MPUM} \subseteq \mathcal{B}$. In order to show equality, seeking a contradiction, we 
assume that $n(\mathcal{B}_{MPUM}) < n(\mathcal{B})$. We aim to construct a behavior $\mathcal{B}_{\Lambda} \in \Sigma_{\mathcal{D}}$ with additional states $\ve{z}$, such that $n(\mathcal{B}_{\Lambda}) > n(\mathcal{B}_{MPUM})$ and the zero dynamics $\mathcal{B}_{\Lambda,ZD}$ are not stable. To this end, let $\Lambda = \text{diag}(\lambda_1, \hdots, \lambda_{n(\mathcal{B}) - n(\mathcal{B}_{MPUM})}) \in \R^{(n(\mathcal{B}) - n(\mathcal{B}_{MPUM})) \times (n(\mathcal{B}) - n(\mathcal{B}_{MPUM}))}$ be a diagonal matrix with $\lambda_i \in \R$ and $|\lambda_1|\ge 1$, and define $\mathcal{B}_{\Lambda}$ by the following state-space representation for $1 \leq i \leq m(\mathcal{B})$:
\begin{align*}
\xi_{i,j}(t+1) &= \xi_{i,j+1} (t)\qquad\qquad \text{ for all }  1 \leq j \leq r_i - 1 \\
\xi_{i,r_i}(t+1) &= \ve{\beta_i} \begin{pmatrix} \ve{\xi} (t) \\ \ve{\eta} (t) \end{pmatrix} +  \ve{G_i u}(t) + \ve{\alpha}^{\top} \ve{z}(t)\\
\ve{\eta}(t+1) &= \ve{Q \eta} (t)+ \ve{P y}(t)\\
\ve{z}(t+1) &= \ve{\lambda z}(t)\\
y_i(t) &=  \xi_{i,1}(t),
\end{align*}
where $\ve{\xi} =(\ve{\xi_1}, \hdots, \ve{\xi_{m(\mathcal{B})}})^{\top} $ with $\ve{\xi_i} = (\xi_{i,1}, \hdots, \xi_{i,r_i})^\top$, and
\begin{itemize}
    \item $\ve{G_i}$ are the rows of the decoupling matrix $\ve{G}\in\R^{m(\Bc)\times m(\Bc)}$ of $\mathcal{B}_{MPUM}$,
    \item $\ve{\beta_i} \in \R^{1 \times n(\mathcal{B}_{MPUM})}$, $\ve{Q} \in \R^{\left(n(\mathcal{B}_{MPUM}) - r \right) \times \left( n(\mathcal{B}_{MPUM}) - r\right)}$  and $\ve{P} \in \R^{\left( n(\mathcal{B}_{MPUM}) - r \right) \times m(\mathcal{B})}$ are chosen such
    that they parametrize a Byrnes-Isidori normal form of a minimal representation of  $\mathcal{B}_{MPUM}$, $r = r_s(\mathcal{B}_{MPUM})$,    and
    \item $\ve{\alpha} \in \R^{n(\mathcal{B}) - n(\mathcal{B}_{MPUM})}$ with $\alpha_k \neq 0$ for all $1 \leq k \leq n(\mathcal{B}) - n(\mathcal{B}_{MPUM})$.
\end{itemize}
The behavior $\mathcal{B}_{\Lambda}$ may be interpreted as an extension of a Byrnes-Isidori normal form of $\mathcal{B}_{MPUM}$ onto a higher dimensional state space. Observe that $\mathcal{B}_{\Lambda} \in \mathcal{L}^{2m(\Bc)}$, and that $\mathcal{B}_{MPUM} \subseteq \mathcal{B}_{\Lambda}$ by choosing the initial value $\ve{z}(0) =\ve{ 0}$ for the additional coordinates, therefore $\mathcal{B}_{\Lambda } \in \Sigma_{\mathcal{D}}$. However, since $\alpha_1 \neq 0$ and $\vert \lambda_1 \vert \geq 1$, the behavior $\mathcal{B}_{\Lambda}$ does not have stable zero dynamics, because by choosing the initial values $\ve{\eta} (0) = \ve{0}$, $\ve{\xi}(0) = \ve{0}$ and $\ve{z}(0) = (1 \ 0 \  \hdots \ 0)^{\top} \in \R^{n(\mathcal{B}) - n(\mathcal{B}_{MPUM})}$, we can find the unique input sequence $\ve{u} \in \mathcal{B}_{\Lambda, ZD}$ that will cause a constant zero output sequence when applied to the system under the those initial conditions. This sequence satisfies 
$$\forall\, i=1,\ldots,m(\Bc):\ \ve{G_i}\ve{u}(t) = -\alpha_1 \ve{z_1}(t) = -\alpha_1 \lambda_1^t,$$
hence
\[
    \ve{u}(t) = -\alpha_1 \lambda_1^t \ve{G}^{-1} \begin{pmatrix} 1\\ \vdots\\ 1\end{pmatrix} \to \infty,\quad t\to\infty,
\]
since $\ve{G}$ is invertible, $\alpha_1 \neq 0$ and $\vert \lambda_1 \vert \geq 1$. Hence, $\mathcal{B}_{\Lambda} \in \Sigma_{\mathcal{D}}$ does not have stable zero dynamics, which contradicts the assumption that $\mathcal{D}$ is informative for the stability of the zero dynamics. Thus, $n(\mathcal{B}_{MPUM}) = n(\mathcal{B})$ and then Lemma~\ref{EquivalentCharacterizationStateDimension} implies condition~\ref{SameDimensionN}.\\
\textbf{Step 2b:}
We show condition \ref{InformativeOfRelativeDegree}. Without loss of generality we assume that the entries of the vector relative degree of $\mathcal{B}_{MPUM}$ are ordered in ascending order, $r_1(\mathcal{B}_{MPUM}) \leq \hdots \leq r_{m(\mathcal{B})}(\mathcal{B}_{MPUM})$, which can be achieved by permuting the output coordinates. Define a transformation of the input coordinates $\R^{m(\mathcal{B})}\ni \ve{u} \mapsto \ve{G}^{-1}\ve{u} =:\ve{v}$. Let $(\ve{A},\ve{B},\ve{C},\ve{D})$ be a minimal state space representation of $\mathcal{B}_{MPUM}$ with respect to the original input coordinates. Then $(\ve{A}, \ve{BG}^{-1}, \ve{C}, \ve{DG}^{-1})$ is a minimal representation of it with respect to the input coordinates~$\ve{v}$. We can determine the decoupling matrix $\ve{G_v}$ of this transformed system  via 
$$(\ve{G_v})_i = \left. \begin{cases} \ve{D_iG}^{-1}, & \text{ if } r_i(\mathcal{B}_{MPUM}) = 0 \\ \ve{C_iA}^{r_i(\mathcal{B}_{MPUM}) - 1}\ve{BG}^{-1},& \text{ if } r_i(\mathcal{B}_{MPUM}) > 0 \end{cases} \right\} = \ve{G}_i \ve{G}^{-1},\quad 1\le i\le m(\Bc),$$
hence the decoupling matrix is the identity matrix, $\ve{G_v} = \ve{I_{m(\mathcal{B})}}$.

Now let $\tilde{\mathcal{B}} \in \Sigma_{\mathcal{D}}$ be arbitrary. By assumption, $\tilde \Bc$ has a well-defined vector relative degree $\ve{r}(\tilde{\mathcal{B}}) = (r_1(\tilde{\mathcal{B}}), \hdots, r_{m(\mathcal{B})}(\tilde{\mathcal{B}}))$. 
Denoting with $r_{kj} (\tilde{\mathcal{B}})$ the relative degree of the induced SISO system $\tilde{\mathcal{B}}^{k,j}$, we have that $r_{ij}(\mathcal{B}_{MPUM}) \in \{ r_{ij}(\tilde{\mathcal{B}}), \infty \}$. This holds since $r_{ij}(\mathcal{B}_{MPUM})  < \infty$ implies by Lemma~\ref{InformativityMPUM} that there is $a \neq 0$ such that $\ve{v}(a,r_{ij}(\mathcal{B}_{MPUM})) \in \mathcal{B}_{MPUM,[0,l(\mathcal{B})+r_{ij}(\mathcal{B}_{MPUM})]}^{i,j}$. This trajectory is also contained in $\tilde{\mathcal{B}}_{[0,l(\mathcal{B}) + r_{ij}(\mathcal{B}_{MPUM})]}^{i,j}$, yielding equality of the two relative degrees $r_{ij}(\mathcal{B}_{MPUM}) = r_{ij}(\tilde{\mathcal{B}})$. Moreover, this shows that 
\begin{equation} \label{RDMPUMgeqTrue} 
r_i(\mathcal{B}_{MPUM}) = \min_{1 \leq j \leq m(\mathcal{B})} r_{ij}(\mathcal{B}_{MPUM}) \geq \min_{1 \leq j \leq m(\mathcal{B})}  r_{ij}(\tilde{\mathcal{B}})  = r_i(\tilde{\mathcal{B}})
\end{equation} 
for any $1 \leq i \leq m(\mathcal{B})$.

We aim to show that  $r_s(\tilde{\mathcal{B}}) =r_s({\mathcal{B}})$. Since $\Bc\in\Sigma_\Dc$, it suffices to show that $r_s(\tilde{\mathcal{B}}) =r_s(\mathcal{B}_{MPUM})$. Seeking a contradiction, assume that $r_s(\tilde{\mathcal{B}}) \neq r_s(\mathcal{B}_{MPUM})$ and let $1 \leq i \leq m(\mathcal{B})$ be the maximal index such that $r_i(\tilde{\mathcal{B}}) \neq r_i(\mathcal{B}_{MPUM})$. Using the transformation to the input coordinates~$\ve{v}$, we investigate the structure of the decoupling matrix $\ve{G_v}(\tilde{\mathcal{B}})$. 

Choose arbitrary indices $1 \leq k \leq i$ and $i \leq j \leq m(\mathcal{B})$. Seeking a contradiction, assume that $\ve{G_v}(\tilde{\mathcal{B}})_{kj} \neq 0$. Then we have
\begin{align} \label{RelativeDegreeInequality}
r_{kj} (\tilde{\mathcal{B}}) &= r_k(\tilde{\mathcal{B}}) \stackrel{\eqref{RDMPUMgeqTrue}}{\leq} r_k(\mathcal{B}_{MPUM}) \leq r_j(\mathcal{B}_{MPUM}) = r_{jj}(\mathcal{B}_{MPUM}). 
\end{align}
If $j = i = k$, then we have $r_i(\tilde{\mathcal{B}} ) < r_i(\mathcal{B}_{MPUM}) = r_{ii}(\mathcal{B}_{MPUM})$ by choice of the index $i$ and equation~\eqref{RDMPUMgeqTrue}. This yields $r_{ii}(\tilde{\mathcal{B}}) = r_{ii}(\mathcal{B}_{MPUM}) > r_i(\tilde{\mathcal{B}})$, and therefore $\ve{G_v(}\tilde{\mathcal{B}})_{ii} = 0$. In all other cases, $k < j$, and $r_{kj}(\mathcal{B}_{MPUM}) = \infty$. 

Note that $\mathcal{B}_{MPUM}$ is completely determined by the data $\Dc$ due to Lemma~\ref{DetermineMPUM}, hence all relative degrees $r_{kj}(\mathcal{B}_{MPUM})$ of the induced SISO systems and the vector relative degree of $\mathcal{B}_{MPUM}$ is determined by the data and can be assumed to be known. Then, by Lemma~\ref{InformativityMPUM}, there exists $a \neq 0$ such that $(\ve{0}_{l(\mathcal{B}) - r_{jj}(\mathcal{B}_{MPUM})}, 1, \hdots; \ve{0}_{l(\mathcal{B})}, a) \in \mathcal{B}^{j,j}_{MPUM,[0,l(\mathcal{B})]}$. Due to the characterization of the most powerful unfalsified model obtained in Lemma \ref{DetermineMPUM}, the input vector $\ve{u} \coloneqq \begin{pmatrix} \ve{0}_{l(\mathcal{B}) - r_{jj}(\mathcal{B}_{MPUM})} & 1 & \hdots \end{pmatrix}^{\top} \in \im \mathcal{H}_{l(\mathcal{B}) + 1} (\ve{u_d^j})$, where $\ve{u_d^j}$ refers to the sequence of the $j$th entries of $\ve{u_d}$. We can conclude that $\begin{pmatrix} \ve{u} \\ \ve{y} \end{pmatrix} \in \im \begin{bmatrix} \mathcal{H}_{l(\mathcal{B}) +  1}(\ve{u_d^j}) \\ \mathcal{H}_{l(\mathcal{B}) + 1}(\ve{y_d^k}) \end{bmatrix} = \mathcal{B}_{MPUM,[0,l(\mathcal{B})]}^{k,j}$ for some vector $\ve{y} \in (\R^{m(\mathcal{B})})^{l(\mathcal{B}) + 1}$. Let $\ve{x}(0) \in \R^{n(\mathcal{B})}$ be the unique initial state of the representation $(\ve{A},\ve{B},\ve{C},\ve{D})$ corresponding to the trajectory $(\ve{u},\ve{y}) \in \mathcal{B}_{MPUM,[0,l(\mathcal{B})]}^{k,j}$. Due to condition~\ref{SameDimensionN}, 
we can observe $n(\mathcal{B})$ linearly independent zero input responses, implying that the zero input response $(\ve{CA}^i\ve{x}(0))_{0 \leq i \leq l(\mathcal{B})}$ is contained in $\mathcal{H}_{l(\mathcal{B}) + 1} (\ve{y_d}) \ker (\mathcal{H}_{l(\mathcal{B}) + 1} (\ve{u_d})).$ Considering only the $k$th output coordinate, this shows that  $\left(\ve{0}_{l(\mathcal{B}) + 1}; (\ve{C_kA}^{p} \ve{x}(0))_{0 \leq p \leq l(\mathcal{B})}\right) \in \mathcal{B}_{MPUM,[0,l(\mathcal{B})]}^{k,j}$. Moreover, since $\mathcal{B}_{MPUM}^{k,j}$ is a vector space, the difference of these trajectories is also contained in the behavior,
$$ \begin{pmatrix} \ve{u} \\ \ve{\tilde{y}} \end{pmatrix} = \begin{pmatrix} \ve{u} \\ \ve{y} - (\ve{C_kA}^{p} \ve{x}(0))_{0 \leq p \leq l(\mathcal{B})} \end{pmatrix} \in \mathcal{B}_{MPUM,[0,l(\mathcal{B})]}^{k,j},$$
and has initial state $\ve{\tilde{x}}(0) = \ve{0}$ by construction. This shows that a suitable part of this output sequence, $\ve{\tilde{y}}_{l(\mathcal{B}) - r_{jj}(\mathcal{B}_{MPUM}) + 2, \hdots, l(\mathcal{B})+1}$, contains the first $r_{jj}(\mathcal{B}_{MPUM})$ samples 
of the impulse response of $\mathcal{B}_{MPUM}^{k,j}$. Since this trajectory is also contained in $\tilde{\mathcal{B}}^{k,j}$ by the definition of $\mathcal{B}_{MPUM}$, these are also the first $r_{jj}(\mathcal{B}_{MPUM})$ samples of the impulse response of $\tilde{\mathcal{B}}^{k,j}$.
Since we have seen in (\ref{RelativeDegreeInequality}) that $r_{kj}(\tilde{\mathcal{B}})  \leq r_{jj}(\mathcal{B}_{MPUM})$, not all of these entries are zero, implying that $v(a,r_{kj}(\tilde{\Bc})) \in \Bc_{MPUM}^{k,j}$ for some $a \neq 0$. 
Then Lemma~\ref{InformativityMPUM} implies that $\mathcal{D}$ is informative for the $(k,j)$ induced SISO system (of any system explaining the data) having relative degree $r_{kj}(\tilde{\Bc}) < \infty$, contradicting $r_{kj}(\mathcal{B}_{MPUM}) = \infty$, which was found before. 

This implies that $\ve{G_v}(\tilde{\mathcal{B}})_{kj} = 0$ for all $1 \leq k \leq i$ and $i \leq j \leq m(\mathcal{B})$, and hence the first $i$ rows of $\ve{G_v}(\tilde{\mathcal{B}})$ are linearly dependent. Since the matrix is square, it does not have full rank, and $\tilde{\mathcal{B}} \in \Sigma_{\mathcal{D}}$ does not have a well-defined vector relative degree. This contradicts the premise, hence $r_s(\tilde{\mathcal{B}}) =r_s({\mathcal{B}})$ for all $\tilde\Bc\in\Sigma_\Dc$, i.e., condition~\ref{InformativeOfRelativeDegree} holds, which completes the proof.
\end{proof}

\begin{remark}
Observe that equation~\eqref{RDMPUMgeqTrue} shows that the informativity for the vector relative degree sum $r_s(\mathcal{B})$ and the informativity for the vector relative degree $\ve{r}(\mathcal{B})$ are equivalent conditions. However, note that verification of both conditions imposes different difficulties. For informativity for the vector relative degree sum not each entry of the vector $\ve{r}(\mathcal{B})$ must be known, but only their sum $r_s(\mathcal{B})$.
\end{remark}

Theorem~\ref{StrictCharacterizationZeroDynamics} provides conditions under which a data set $\mathcal{D}$ is informative for the stability of the zero dynamics, while Lemmas~\ref{StabilityMPUM} and~\ref{DetermineUnstableZD} allow to check whether the data is informative for the instability of the zero dynamics. We can combine the checks for both into an algorithm, that will return whether the provided data is informative for either the stability or the instability of the zero dynamics, or neither.

\begin{algorithm}[h]
\caption{Determine Stability of Zero Dynamics}
\label{AlgoZeroDynamics}
\begin{algorithmic}
\State \hspace*{\algorithmicindent} \textbf{Input} lag $l\coloneqq l (\mathcal{B})$, McMillan degree $n \coloneqq n(\mathcal{B})$, vector relative degree sum~$r_s:=r_s(\Bc)$, data $\mathcal{D} = (\ve{u_d},\ve{y_d})$
\vspace{0.3cm}
\State $M \coloneqq \mathcal{B}_{MPUM, [0, 2l]}$
\State $d \coloneqq \dim \ve{\pi_l} \{ \ve{u} : (\ve{u},\ve{0}) \in M \}$
\State $\{ \ve{v_1}, \hdots, \ve{v_d} \}$ basis of $\ve{\pi_l} \{ \ve{u} : (\ve{u},\ve{0}) \in M \}$
\For {$1 \leq i \leq d$}
\State Pick $\ve{u_i} \in \left( \left\{ \ve{\pi_{l+1}} \{ \ve{u} : (\ve{u}, \ve{0}) \in M \} - (\ve{v_i}, \ve{0}) \right\} \cap \left\{ \{ \ve{0}_{m(\mathcal{B})} \}^{l} \times \R^{m(\mathcal{B})} \right\} \right)_{lm(\mathcal{B})+1, \hdots, (l+1)m(\mathcal{B})}$
\EndFor
\State $\ve{V^{\dagger}} \in \R^{m(\mathcal{B})l \times d}$ right inverse of $\begin{bmatrix} \ve{v_1} \\ \vdots \\ \ve{v_d} \end{bmatrix} \in \R^{d \times m(\mathcal{B})l}$
\State $\ve{Q} \coloneqq \begin{bmatrix} \ve{v_{1,2}}, & \hdots, & \ve{v_{1,l}}, & \ve{u_1} \\ & \vdots & & \\ \ve{v_{d,2}}, & \hdots, & \ve{v_{d,l}}, & \ve{u_d} \end{bmatrix} \ve{V^{\dagger}} \in \R^{d \times d}$
\If {$\ve{Q}$ not stable}
\State return $s = -1$
\EndIf
\If {$n = \dim \left( \mathcal{H}_{l+1}(\ve{y_d}) \ker (\mathcal{H}_{l+1} (\ve{u_d})) \right)$ and $\mathcal{D}$ is informative for the vector relative degree sum~$r_s$} 
\State return $s = 1$
\Else
\State return $s = 0$
\EndIf
\vspace{0.3cm}
\end{algorithmic}
\vspace{0.4em}
\hrule
\vspace{0.3em}
{\itshape The \textbf{return} statements indicate termination of the algorithm once a result is obtained.}
\end{algorithm}

\begin{theorem}
Let $\mathcal{B} \in \mathcal{L}^{2m(\mathcal{B})}$ be an unknown square system with known lag $l(\mathcal{B})$, known McMillan degree $n(\mathcal{B}) > 0$, and known vector relative degree sum~$r_s(\Bc)$. Further, let $\mathcal{D} = (\ve{u_d}, \ve{y_d})\in\Bc_{[0,T-1]}$ be a sampled data sequence of length $T \geq 2l(\mathcal{B})+ 1$ that is informative for the existence of the vector relative degree. Then the following statements hold:
\begin{enumerate}
    \item[(i)] $\Dc$ is informative for the stability of the zero dynamics if, and only if, Algorithm~\ref{AlgoZeroDynamics} returns $s=1$.
    \item[(ii)] $\Dc$ is informative for the instability of the zero dynamics if, and only if, Algorithm~\ref{AlgoZeroDynamics} returns $s=-1$.
    \item[(iii)] $\Sigma_\Dc\cap \Sigma_{\rm stable}\neq\emptyset$ and $\Sigma_\Dc\subsetneq \Sigma_{\rm stable}$ if, and only if,  Algorithm~\ref{AlgoZeroDynamics} returns $s=0$.
\end{enumerate}
\end{theorem}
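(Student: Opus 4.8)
The plan is to match the three return values of Algorithm~\ref{AlgoZeroDynamics} with the three informativity alternatives by assembling the results already at hand, once we observe that the matrix $\ve{Q}$ built in the algorithm is precisely the matrix $\ve{\tilde{Q}}$ of Lemma~\ref{StabilityMPUM}. Writing $V$ for the matrix with rows $\ve{v_1}, \hdots, \ve{v_d}$ (a basis of $\ve{\pi_{l(\Bc)}}M$) and $Z$ for the matrix with rows $\ve{z_k} = (\ve{v_{k,2}}, \hdots, \ve{v_{k,l(\Bc)}}, \ve{u_k})$, the defining identity $\ve{z_k} = \sum_{i=1}^d \tilde{q}_{k,i}\ve{v_i}$ from Lemma~\ref{StabilityMPUM} reads $Z = \ve{\tilde{Q}}\,V$; since the $\ve{v_i}$ are linearly independent, multiplying on the right by any right inverse $\ve{V^\dagger}$ of $V$ gives $Z\ve{V^\dagger} = \ve{\tilde{Q}}$, which is exactly the algorithm's $\ve{Q}$. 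The hypotheses of Lemma~\ref{StabilityMPUM} are met, since the standing assumption that $\mathcal{D}$ is informative for the existence of the vector relative degree forces $r_i \le l(\Bc)$ (by applying Lemma~\ref{rLeqL} to the induced SISO systems), whence $T \ge 2l(\Bc)+1 \ge l(\Bc) + \max_i r_i + 1$. Therefore $\ve{Q}$ is stable if, and only if, the zero dynamics of $\mathcal{B}_{MPUM}$ are stable, i.e.\ condition~\ref{RequireMPUMStable} of Theorem~\ref{StrictCharacterizationZeroDynamics} holds.

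With this identification the algorithm returns $s=-1$ exactly when $\ve{Q}$ is unstable, $s=1$ exactly when $\ve{Q}$ is stable and conditions~\ref{SameDimensionN} and~\ref{InformativeOfRelativeDegree} both hold, and $s=0$ exactly when $\ve{Q}$ is stable but at least one of~\ref{SameDimensionN},~\ref{InformativeOfRelativeDegree} fails; these three outcomes are mutually exclusive and exhaustive. I would prove~(ii) first: $s=-1$ holds iff $\ve{Q}$ is unstable iff $\mathcal{B}_{MPUM}$ has unstable zero dynamics. By Lemma~\ref{DetermineUnstableZD} the latter yields informativity for instability; conversely, since $\mathcal{B}_{MPUM}\in\Sigma_{\Dc}$, informativity for instability means $\Sigma_{\Dc}\cap\Sigma_{\rm stable}=\emptyset$, hence $\mathcal{B}_{MPUM}\notin\Sigma_{\rm stable}$, i.e.\ its zero dynamics are unstable. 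This settles~(ii).

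For~(i) the standing assumption is decisive: under it, Theorem~\ref{StrictCharacterizationZeroDynamics} reduces to the equivalence that $\mathcal{D}$ is informative for the stability of the zero dynamics if, and only if, conditions~\ref{SameDimensionN}, \ref{InformativeOfRelativeDegree} and~\ref{RequireMPUMStable} hold simultaneously. Since $s=1$ is returned precisely when $\ve{Q}$ is stable (condition~\ref{RequireMPUMStable}) together with~\ref{SameDimensionN} and~\ref{InformativeOfRelativeDegree}, the event $\{s=1\}$ is exactly this conjunction, which gives~(i). Finally,~(iii) follows by complementarity: as the three outputs are exhaustive and~(i),~(ii) are equivalences, $s=0$ holds iff $\mathcal{D}$ is neither informative for stability nor for instability. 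Unpacking the two negations, ``not informative for instability'' is $\Sigma_{\Dc}\cap\Sigma_{\rm stable}\neq\emptyset$ while ``not informative for stability'' is $\Sigma_{\Dc}\not\subseteq\Sigma_{\rm stable}$, which is precisely the condition characterizing~$s=0$ in~(iii).

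The only genuine obstacle I anticipate is the bookkeeping identification $\ve{Q}=\ve{\tilde{Q}}$ together with checking that all hypotheses of Lemmas~\ref{StabilityMPUM},~\ref{DetermineUnstableZD} and of Theorem~\ref{StrictCharacterizationZeroDynamics} remain valid under the weaker length bound $T\ge 2l(\Bc)+1$ and the standing informativity assumption; beyond this, the statement is a direct assembly of those three results with the elementary fact that $\mathcal{B}_{MPUM}\in\Sigma_{\Dc}$.
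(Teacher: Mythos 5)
Your proposal is correct and follows essentially the same route as the paper's proof: identify the algorithm's matrix $\ve{Q}$ with the matrix $\ve{\tilde{Q}}$ of Lemma~\ref{StabilityMPUM}, settle~(ii) via Lemma~\ref{DetermineUnstableZD} and the fact that $\mathcal{B}_{MPUM}\in\Sigma_{\Dc}$, settle~(i) via Theorem~\ref{StrictCharacterizationZeroDynamics} under the standing informativity assumption, and obtain~(iii) by exhaustiveness of the outputs $\{-1,0,1\}$. Your extra bookkeeping (the identity $Z\ve{V^\dagger}=\ve{\tilde{Q}}$ and the verification $T\ge 2l(\Bc)+1\ge l(\Bc)+\max_i r_i+1$ via Lemma~\ref{rLeqL}) only makes explicit steps the paper leaves implicit.
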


\begin{proof}
Observe that there is a unique choice for each $\ve{u_i} \in \R^{m(\mathcal{B})}$ in Algorithm~\ref{AlgoZeroDynamics}, which is the unique continuation constructed in Lemma \ref{UniqueContinuationInZD}. By definition of $\ve{Q} = (q_{k,i})_{1 \leq k,i \leq d}$, the rows satisfy $(\ve{v_{k,2}}, \hdots, \ve{v_{k,l}}, \ve{u_k}) = \sum_{i = 1}^{d} q_{k,i} \ve{v_i}$ for each $1 \leq k \leq d$. Therefore, $\ve{Q}$ is the matrix that was constructed in the proof of Lemma \ref{StabilityMPUM}. The matrix $\ve{Q}$ is not stable if, and only if, the zero dynamics of $\mathcal{B}_{MPUM}$ are not stable, which implies informativity for the instability of the zero dynamics by Lemma \ref{DetermineUnstableZD}. Conversely, if the data $\mathcal{D}$ is informative for the instability of the zero dynamics, then the zero dynamics of $\mathcal{B}_{MPUM}$ are unstable, hence $\ve{Q}$ is not stable by Lemma~\ref{StabilityMPUM}. This yields claim~(ii). If $\ve{Q}$ is stable, the prerequisite~\ref{RequireMPUMStable} of Theorem~\ref{StrictCharacterizationZeroDynamics} is satisfied. Due to the assumption of $\mathcal{D}$ being informative for the existence of the vector relative degree, in this case the conditions \ref{InformativeOfRelativeDegree} and \ref{SameDimensionN} are  equivalent to the informativity for the stability of the zero dynamics by Theorem~\ref{StrictCharacterizationZeroDynamics}. The algorithm returns $s = 1$ if, and only if, those two conditions are satisfied, yielding claim~(i). Since the set of possible outputs of the algorithm is $\{ -1, 0, 1\}$, claim~(iii) follows directly from the previous two claims.
\end{proof}

\begin{remark}
We can observe that Algorithm~\ref{AlgoZeroDynamics} relies on the previous knowledge of the system parameters $l(\mathcal{B})$ and $n(\mathcal{B})$. Assume that an estimate $N$ of the McMillan degree $n(\mathcal{B})$ is available. If $N > n(\mathcal{B})$, then condition~\ref{SameDimensionN} cannot be satisfied for any data set, hence Algorithm~\ref{AlgoZeroDynamics} cannot detect the informativity for the stability of the zero dynamics. Similarly, if $N < n(\mathcal{B})$, then the algorithm may incorrectly indicate that the data set is informative for the stability of the zero dynamics. However, in both cases, the potential result of informativity for the instability of the zero dynamics is reliable. \\
If the McMillan degree is known, and we use the results developed in~\cite{Camlibel.2024} in order to determine an upper bound of the lag of the system, then the algorithm might claim that a data set which is informative of the stability of the zero dynamics is not informative of either stability or instability of the zero dynamics. However, whenever informativity for either the stability or the instability of the zero dynamics is claimed, this result is reliable, and every data set that is informative of the instability of the zero dynamics is characterized correctly. 
\end{remark}

\begin{remark}
The arguments made in the Theorems~\ref{SISORelativgrad},~\ref{MIMOSquareRelativgrad}, and~\ref{ZeroDynamicsStability} can be extended to the case of multiple data sets using \cite[Thm.~2]{Waarde.2020}, if the data sets are collectively persistently exciting of order $L + n(\mathcal{B})$. Since in this case, all trajectories of length $L$ are characterized as the image of a composed Hankel matrix, the proofs of all statements extend, \textit{mutatis mutandis}, to this case. Similarly, all arguments from the proof of Theorem~\ref{StrictCharacterizationZeroDynamics} also apply to the case of multiple data sequences.
\end{remark}

\subsection{Numerical Example} \label{sec:Example}

To illustrate the previous algorithms, we will consider a simple numerical example. Consider the true system defined by the matrices 
\begin{equation}
\ve{A} = \begin{bmatrix}  0 & 1 & 0 & 0 \\ 0 & 0 & 1 & 0 \\ 0 & 0 & 0 & 1 \\ -0.1 & 0.5 & -1 & 1.5 \end{bmatrix}, \quad B = \begin{bmatrix} 0 \\ 0 \\ 0 \\ 1 \end{bmatrix}, \quad C = \begin{bmatrix} 2 & -3 & 1 & 0 \end{bmatrix}, \quad D = 0,
\end{equation}
and observe that it satisfies $n(\mathcal{B}) = 4 = l(\mathcal{B})$ and $r(\mathcal{B}) = 2$. Application of the input sequence 
$$ \ve{u_d} = \begin{pmatrix} 1 & 0 & 1 & 1 & 1 & 0 & 1 & 0 & 1 & 1 & 0 & 1 & 0 \end{pmatrix}, $$
which is not persistently exciting of order $l(\mathcal{B}) + n(\mathcal{B})$ and hence not sufficient for the identification of the system via Theorem~\ref{Fundamentallemma}, to the initial value $ \ve{x_0} = (1, 0, 0, 1)^{\top}$ yields the output sequence 
$$ \ve{y_d} = \begin{pmatrix}  2 & 1 & -0.6 & -2.6 & 0 & 0.2 & -0.94 & -2.35 & 0.515 & -0.3675 & 0.85275 & 0.139125 & -1.8793125 \end{pmatrix} . $$
Application of Algorithm~\ref{AlgoRelativeDegree} yields the informativity of the data sequence for the relative degree $r = 2$, while application of the Algorithm~\ref{AlgoZeroDynamics} yields the informativity for the instability of the zero dynamics. 

\begin{remark}
In real-world applications, the collected data is not noise-free, so accounting for noise is an important practical consideration. The presented algorithms are sensitive to noise, as the subspace conditions do not vary continuously with the collected data. For this reason, it is beneficial to incorporate numerical stabilization techniques in the implementation of the presented algorithms to improve their robustness and practical reliability.
\end{remark}

\section{Application to continuous-time systems}\label{secContinuousTime}

In applications, the relevant systems are generally continuous rather than discrete in time. In this section, we investigate whether it is possible to apply the results obtained in the previous sections so that conclusions about the relative degree and the stability of the zero dynamics of continuous-time systems can be drawn. 

Since data cannot be collected in continuous time, we first recall the concept of sampling by Zero-order Hold inputs, see e.g.~\cite{Pechlivanidou.2022}. 

\begin{definition} 
A linear, time-invariant continuous-time system with $m$-dimensional input and $p$-dimensional output is defined by matrices $\ve{A} \in \R^{n \times n}, \ve{B} \in \R^{n \times m}, \ve{C} \in \R^{p \times n}$ and $\ve{D} \in \R^{p \times m}$ for some $n \in \N_0$ as the set of trajectories $(\ve{u},\ve{y}) : \R_{\geq 0} \rightarrow \R^{m+p}$ satisfying
\begin{equation} \label{contisystem}
\begin{aligned}
\dot{\ve{x}}(t) &= \ve{Ax}(t) + \ve{Bu}(t) \\
\ve{y}(t) &= \ve{Cx}(t) + \ve{Du}(t)
\end{aligned}
\end{equation}
for some state trajectory $\ve{x}:\R_{\geq 0} \rightarrow \R^{n}$. 
The set of all these systems is denoted by $\mathcal{C}^{m,p}$.
\end{definition}

Observe that any linear, time-invariant continuous-time system $\mathcal{S} \in \mathcal{C}^{m,p}$ is a dynamical system (in the context of Section~\ref{Ssec:behaviors}) where $T = \R_{\geq 0}$ and $W = \R^{m+p}$. Hence the differential equation~\eqref{contisystem} is only one possible state space representation $(\ve{A}, \ve{B}, \ve{C}, \ve{D})$ for the set of trajectories (the behavior) of $\Sc$; the notion of a minimal representation carries over from the discrete-time case. Likewise, the (vector) relative degree of a continuous-time system~\eqref{contisystem} is defined as in the discrete-time case. 

Let 
\[
    \Uc_h = \{ \ve{u}: \R_{\geq 0} \rightarrow \R^m : \ve{u}_{\vert [kh, (k+1)h)} = \ve{c_k} \ \text{for some}\ \ve{c_k} \in \R^m,\  k \in \N_0 \}
\]
denote the set of piecewise constant functions with constant step size $h > 0$. The restriction of the input sequences $\ve{u}:\R_{\geq 0} \rightarrow \R^m$ to $\Uc_h$ (the Zero-order hold) allows us to associate  a state space representation $(\ve{A}, \ve{B}, \ve{C}, \ve{D})$ of a system $\mathcal{S} \in \mathcal{C}^{m,p}$ to the state space representation of a discrete-time system $\mathcal{B} \in \mathcal{L}^{m+p}$, as derived in \cite[Chap.~4.2]{Chen.1999}.

\begin{lemma}\label{Lem:sampling}
Let $\mathcal{S} \in \mathcal{C}^{m,p}$, $h> 0$ and $(\ve{A}, \ve{B}, \ve{C}, \ve{D})$ be a state space representation of $\Sc$. Define the matrices 
$$ \ve{A_d} = e^{\ve{A}h}, \ \ \ve{B_d} = \int_0^h e^{\ve{A}s} ds \ve{B},\ \ \ve{C_d} =\ve{ C}, \ \ \ve{D_d} = \ve{D} $$ and let $\mathcal{B} \in \mathcal{L}^{m+p}$ be the (discrete-time) behavior induced by the state space representation $(\ve{A_d}, \ve{B_d}, \ve{C_d}, \ve{D_d})$. Then $ (\ve{u},\ve{y}) \in \mathcal{B} $ if, and only if there is $ (\ve{\tilde{u}},\ve{\tilde{y}}) \in \mathcal{S}$ such that $\ve{\tilde{u}} \in \Uc_h $, and $(\ve{\tilde{u}}(hk), \ve{\tilde{y}}(hk)) = (\ve{u}(k), \ve{y}(k))$ for all $k \in \N_0$.
\end{lemma}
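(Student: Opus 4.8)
The plan is to reduce both implications to a single one-step integration of the continuous-time state equation under a constant input, via the variation of constants formula. The central computation is that, for a trajectory of~\eqref{contisystem} whose input is constant on $[kh,(k+1)h)$ with value $\ve{c_k}$, the state at consecutive sampling instants obeys
$$\ve{x}((k+1)h) = e^{\ve{A} h}\ve{x}(kh) + \left(\int_0^h e^{\ve{A} s}\,ds\right)\ve{B}\,\ve{c_k} = \ve{A_d}\,\ve{x}(kh) + \ve{B_d}\,\ve{c_k}.$$
I would derive this from $\ve{x}((k+1)h) = e^{\ve{A} h}\ve{x}(kh) + \int_{kh}^{(k+1)h} e^{\ve{A}((k+1)h - s)}\ve{B}\,\ve{u}(s)\,ds$ by pulling the constant $\ve{c_k}$ out of the integral and substituting $\tau = (k+1)h - s$ to turn the integral into $\ve{B_d}$. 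Combined with the sampled output relation $\ve{y}(kh) = \ve{C}\ve{x}(kh) + \ve{D}\ve{c_k} = \ve{C_d}\ve{x}(kh) + \ve{D_d}\ve{c_k}$, this shows that the sampled triple $(\ve{x}(hk),\ve{c_k},\ve{y}(hk))_{k\in\N_0}$ satisfies the discrete-time system equations~\eqref{isosystem} with the matrices $(\ve{A_d},\ve{B_d},\ve{C_d},\ve{D_d})$.

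For the implication from $\mathcal{S}$ to $\mathcal{B}$, I would start from a continuous-time trajectory $(\ve{\tilde{u}},\ve{\tilde{y}})\in\mathcal{S}$ with $\ve{\tilde{u}}\in\Uc_h$, fix a state trajectory $\ve{\tilde{x}}$ realizing it, and define the discrete sequences by sampling: $\ve{x}(k):=\ve{\tilde{x}}(hk)$, $\ve{u}(k):=\ve{\tilde{u}}(hk)$, $\ve{y}(k):=\ve{\tilde{y}}(hk)$. Applying the one-step identity above on each interval $[kh,(k+1)h)$ then immediately shows that $(\ve{x},\ve{u},\ve{y})$ satisfies the discrete-time system equations, so $(\ve{u},\ve{y})\in\mathcal{B}$.

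For the converse, given $(\ve{u},\ve{y})\in\mathcal{B}$ with discrete state trajectory $\ve{x}$, I would construct the continuous-time input $\ve{\tilde{u}}\in\Uc_h$ by setting $\ve{\tilde{u}}(t):=\ve{u}(k)$ for $t\in[kh,(k+1)h)$, solve~\eqref{contisystem} with initial value $\ve{\tilde{x}}(0):=\ve{x}(0)$ to obtain $(\ve{\tilde{u}},\ve{\tilde{y}})\in\mathcal{S}$, and prove $\ve{\tilde{x}}(hk)=\ve{x}(k)$ for all $k$ by induction on $k$: the base case holds by the choice of initial value, and the induction step is exactly the one-step identity. The matching of inputs and outputs at the sampling instants then follows from the definitions together with the output equation.

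The routine but essential point to get right is the change of variables in the convolution integral that produces precisely $\ve{B_d}=\int_0^h e^{\ve{A} s}\,ds\,\ve{B}$, together with the observation that the state remains continuous across step boundaries even though $\ve{\tilde{u}}$ is only piecewise constant, so that the sampled state is well defined and the induction closes. I expect no genuine obstacle beyond this bookkeeping; the only thing to state carefully is that in the converse direction the constructed $\ve{\tilde{x}}$ is the unique absolutely continuous solution of the ODE for the given piecewise-constant input, so that $(\ve{\tilde{u}},\ve{\tilde{y}})$ is indeed an element of the continuous-time behavior $\mathcal{S}$.
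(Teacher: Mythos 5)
Your proof is correct: the one-step variation-of-constants identity, the substitution $\tau=(k+1)h-s$ producing exactly $\ve{B_d}$, and the induction in the converse direction are all sound, and you correctly flag the only delicate points (continuity of the state across step boundaries and uniqueness of the absolutely continuous solution for piecewise-constant input). The paper itself offers no proof of this lemma, citing instead \cite[Chap.~4.2]{Chen.1999}, and your argument is precisely the standard derivation given there, so your write-up simply supplies the details the paper omits.
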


This state space representation $(\ve{A_d}, \ve{B_d}, \ve{C_d}, \ve{D_d})$ of $\mathcal{B} \in \mathcal{L}^{m+p}$ associated to  the state space representation $(\ve{A}, \ve{B}, \ve{C}, \ve{D})$ of $\mathcal{S} \in \mathcal{C}^{m,p}$ by Lemma~\ref{Lem:sampling} is called the (sampling) discretization of $(\ve{A}, \ve{B}, \ve{C}, \ve{D})$ with respect to the sampling time $h > 0$. In the following we consider the situation that, for an unknown continuous-time system, we can collect data from its discretization with respect to a certain sampling time and seek to draw conclusions about the relative degree and the stability of the zero dynamics of the continuous-time system. 

However, in this framework we face some obstacles. First of all, it is possible for two representations of two different continuous-time systems to have the same discretization, as the following example illustrates.

\begin{example}
Fix a sampling time $h > 0$ and define two continuous-time systems from $\mathcal{C}^{1,1}$ with representations  $(\ve{A}_i, \ve{B}_i, \ve{C}, \ve{D})$, $i=1,2$, by $\ve{A_1} = \begin{bmatrix} 0 & -1 \\ 1 & 0 \end{bmatrix}$, $\ve{A_2} = \begin{bmatrix} 0 & -1-\frac{2\pi}{h} \\ 1+\frac{2\pi}{h} & 0 \end{bmatrix}$, $\ve{B_2} = (1+\frac{2\pi}{h})\ve{B_1}$, and  arbitrary $\ve{B_1}, \ve{C}$, $\ve{D}$. We can determine the matrices that characterize the discretizations with respect to the sampling time $h >0$ via
\begin{align*}
e^{\ve{A_1}h} &= \begin{bmatrix} \cos(h) & -\sin(h) \\ \sin(h) & \cos(h) \end{bmatrix} =:\ve{A}, \\
e^{\ve{A_2}h} &= \begin{bmatrix} \cos(h+2\pi) & -\sin(h+2\pi) \\ \sin(h+2\pi) & \cos(h+2\pi) \end{bmatrix} = \ve{A}, \\ 
\int_{0}^h e^{\ve{A_1}h} ds \ve{B_1} &= \begin{bmatrix} \sin(h) & \cos(h) - 1 \\ - \cos(h) +1 & \sin(h) \end{bmatrix} \ve{B_1} =: \ve{B}, \\
\int_0^h e^{\ve{A_2} h} ds \ve{B_2} &= \frac{1}{1+\frac{2\pi}{h}}  \begin{bmatrix} \sin(h) & \cos(h) - 1 \\ - \cos(h) +1 & \sin(h) \end{bmatrix} \ve{B_2} = \ve{B}.
\end{align*}
Hence, both representations yield the discretization characterized by $(\ve{A},\ve{B},\ve{C},\ve{D})$, even though the induced continuous-time systems differ.

We can extend this example in order to observe that it is possible for the two representations of different continuous-time systems to have different relative degrees. Choose the matrices 
$$\ve{A_1} = \begin{bmatrix} 0 & -1 & 0 & 0\\ 1 & 0 & 0 & 0 \\ 0 & 0 & 0 & -1 \\ 0 & 0 & 1 & 0 \end{bmatrix} \text{ and } \ve{A_2} = \begin{bmatrix} 0 & -1-\frac{2\pi}{h} & 0 & 0\\ 1 + \frac{2\pi}{h} & 0 & 0 & 0 \\ 0 & 0 & 0 & -1-\frac{4\pi}{h} \\ 0 & 0 & 1+\frac{4\pi}{h} & 0 \end{bmatrix}.$$
By a similar argument as before, we can conclude that any choice $\ve{B_1} = \begin{bmatrix} b_1 \ b_2 \ b_3 \ b_4 \end{bmatrix}^{\top}$ and $\ve{B_2} = \begin{bmatrix} (1+2\pi)b_1 \ (1+2\pi)b_2 \ (1+4\pi)b_3 \ (1+4\pi)b_4 \end{bmatrix}^{\top}$ will result in equal discretizations with respect to the sampling time $h > 0$. However, choosing $\ve{C} = \begin{bmatrix}-b_3 \ 0 \ b_1 \ 0 \end{bmatrix}$ with $b_1 b_3\neq 0$, and $\ve{D} = 0$, we observe that the relative degree of the system $(\ve{A_1}, \ve{B_1}, \ve{C},\ve{D})$ is strictly greater than $1$, while the relative degree of $(\ve{A_2}, \ve{B_2}, \ve{C}, \ve{D})$ is~$1$.
\end{example}

\begin{remark}
Although an identification of the relative degree of the continuous-time system from data of its discretization is not possible in general, the case of relative degree~$0$ presents an exception. Since the matrix $\ve{D}$ is the same for the continuous-time system and each of its discretizations, the continuous-time system has relative degree $0$ if, and only if, every discretization has relative degree $0$. This implies that any sampled data set that is informative for relative degree~$0$ of the discrete-time system allows to identify the relative degree~$0$ of the continuous-time system.
\end{remark}

Similarly, we can construct an example of two representations of two differing continuous-time systems that share a discretization with respect to some sampling time $h > 0$, such that one has stable zero dynamics, while the other one does not.

\begin{example}
Consider the continuous-time systems with representations $(\ve{A}_i, \ve{B}_i, \ve{C}, \ve{D})$, $i=1,2$, defined by 
$$ \ve{A_1} = \begin{bmatrix} -1 & -3 \\ 3 & -1 \end{bmatrix},\ \ve{A_2} = \begin{bmatrix} -1 & -3-\frac{2\pi}{h} \\ 3+ \frac{2\pi}{h} & -1 \end{bmatrix},\ \ve{ B_1} = \begin{bmatrix} 0 \\ 1 \end{bmatrix},\ \ve{C} = \begin{bmatrix} -2 & \frac{1}{4} \end{bmatrix},\ \ve{D} = 0,$$
and $\ve{B_2}$ is yet to be determined such that both systems have the same discretization with respect to the sampling time $h = 0.1$. Observe that $e^{\ve{A_1}h} = e^{\ve{A_2}h}$, hence we need to choose~$\ve{B_2}$ such that 
$\displaystyle{\int_0^h e^{\ve{A_2}s} ds \ve{B_2} = \int_0^h e^{\ve{A_1} h} ds \ve{B_1}}$. Since both $\ve{A_1}$ and $\ve{A_2}$ are invertible, we can determine the matrix $\ve{B_2}$ uniquely via
\begin{align*}
\ve{B_2} &= \left( (e^{\ve{A_2}h} - \ve{I})\ve{A_2}^{-1} \right)^{-1}  (e^{\ve{A_1}h} - \ve{I}) \ve{A_1}^{-1} \ve{B_1}  = \ve{A_2} \ve{A_1}^{-1} \ve{B_1}\\
 &= \begin{bmatrix} -1 & -3-\frac{2\pi}{h} \\ 3+\frac{2\pi}{h} & -1 \end{bmatrix} \begin{bmatrix} 0.3 \\ -0.1 \end{bmatrix}  = \begin{bmatrix} 0.2 \frac{2\pi}{h}  \\ 1+0.3\frac{2\pi}{h} \end{bmatrix}.
\end{align*}
We claim that $(\ve{A}_1, \ve{B}_1, \ve{C}, \ve{D})$ has stable zero dynamics, while $(\ve{A}_2, \ve{B}_2, \ve{C}, \ve{D})$ has unstable zero dynamics. Due to the choice of the matrix $\ve{C}$, we observe that the subspace of states which induce a zero output is $\operatorname{span}\left\{ (1, 8)^\top \right\}$. At each state $\ve{x} = (1, 8)^\top \xi$ for $\xi \in \R$, we can calculate the necessary input 
$$ u(\xi) = \frac{-\ve{CA_1x}}{\ve{CB_1}} = -195\,  \xi, $$
yielding the zero dynamics
$$ \dot \xi(t) = -25\, \xi(t),\quad u(t) = -195\,  \xi(t), $$
which are stable. The same calculation for the second system yields the input
$$ u(\xi) = \frac{-\ve{CA_2x}}{\ve{CB_2}} \approx 140.685\, \xi, $$
which leads to the zero dynamics
$$ \dot \xi(t) \approx 356.3\, \xi(t),\quad u(t)  \approx 140.685\, \xi(t),$$
which are unstable.
\end{example}

We can conclude from the previous examples that it is, in general, impossible to draw conclusions about the  relative degree and the stability of the zero dynamics of the continuous-time system from information about a single discretization with respect to a sampling time $h > 0$. However, in both of the previous examples, we can interpret the difference of both continuous-time systems as an oscillation with a period being a multiple of the sampling time. Therefore, this issue is caused by the particular choice of the sampling time and can be resolved by determining the discretization with respect to another sampling time such that both have an irrational ratio.

\begin{theorem} \label{TwoFrequenciesSufficient}
Let $\mathcal{S} \in \mathcal{C}^{m,p}$ and $(\ve{A}, \ve{B}, \ve{C}, \ve{D})$ be a state space representation of $\Sc$, and let $h_1, h_2 > 0$ be two sampling times such that $\frac{h_1}{h_2} \notin \mathbb{Q}$. Further let $(\ve{A_i}, \ve{B_i}, \ve{C_i}, \ve{D_i})$ be the discretization of $(\ve{A}, \ve{B}, \ve{C}, \ve{D})$ with respect to the sampling time $h_i$, for $i=1,2$. Then for any representation $(\ve{\tilde{A}}, \ve{\tilde{B}}, \ve{\tilde{C}}, \ve{\tilde{D}})$ of a system $\tilde{\Sc} \in \mathcal{C}^{m,p}$ such that $(\ve{A_i}, \ve{B_i}, \ve{C_i}, \ve{D_i})$ is  the discretization of this representation with respect to the sampling time $h_i$ for $i=1,2$, we have that $(\ve{A}, \ve{B}, \ve{C}, \ve{D}) = (\ve{\tilde{A}}, \ve{\tilde{B}}, \ve{\tilde{C}}, \ve{\tilde{D}})$.
\end{theorem}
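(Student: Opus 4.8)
The plan is to dispose of the four matrices in increasing order of difficulty, reducing everything to two claims about the generator $\ve{A}$ and the input matrix $\ve{B}$. By Lemma~\ref{Lem:sampling} the discretizations satisfy $\ve{C_i} = \ve{C}$, $\ve{D_i} = \ve{D}$, and at the same time $\ve{C_i} = \ve{\tilde{C}}$, $\ve{D_i} = \ve{\tilde{D}}$, so $\ve{C} = \ve{\tilde{C}}$ and $\ve{D} = \ve{\tilde{D}}$ are immediate. Moreover, since $\ve{A_1} = e^{\ve{A}h_1} = e^{\ve{\tilde{A}}h_1}$ is an identity between concrete matrices, the two generators necessarily share the state-space dimension $n$. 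It then remains to prove $\ve{A} = \ve{\tilde{A}}$ and $\ve{B} = \ve{\tilde{B}}$.

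For the generator I would exploit the semigroup property together with the density of the subgroup $G = h_1\Z + h_2\Z$ of $\R$. From $e^{\ve{A}h_i} = e^{\ve{\tilde{A}}h_i}$ for $i = 1,2$ one obtains, by raising each identity to arbitrary integer powers (the exponential being invertible) and multiplying (the exponents are commuting multiples of $\ve{A}$, resp. $\ve{\tilde{A}}$), the equality $e^{\ve{A}t} = e^{\ve{\tilde{A}}t}$ for every $t \in G$. Since $h_1/h_2 \notin \Q$, the group $G$ is dense in $\R$, and continuity of $t \mapsto e^{\ve{A}t}$ and $t \mapsto e^{\ve{\tilde{A}}t}$ upgrades this to $e^{\ve{A}t} = e^{\ve{\tilde{A}}t}$ for all $t \in \R$. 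Differentiating at $t = 0$ then yields $\ve{A} = \ve{\tilde{A}}$.

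For the input matrix, set $\Phi(t) = \int_0^t e^{\ve{A}s}\,ds$ and $\ve{N} = \ve{B} - \ve{\tilde{B}}$. Using $\ve{A} = \ve{\tilde{A}}$, the defining identities $\ve{B_i} = \Phi(h_i)\ve{B} = \Phi(h_i)\ve{\tilde{B}}$ collapse to $\Phi(h_1)\ve{N} = \Phi(h_2)\ve{N} = \ve{0}$, and the task is to conclude $\ve{N} = \ve{0}$. As $\Phi(t)$ is a power series in $\ve{A}$, it commutes with $\ve{A}$ and preserves its generalized eigenspaces over $\C$; hence it suffices to treat a single column $\ve{v}$ of $\ve{N}$ lying in the generalized eigenspace of one eigenvalue $\lambda \in \C$ of $\ve{A}$. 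On the eigenspace for $\lambda = 0$ the restriction of $\ve{A}$ is nilpotent, so $\Phi(h_i)$ equals $h_i$ times a unipotent matrix and is invertible, forcing $\ve{v} = \ve{0}$. For $\lambda \neq 0$ the restriction of $\ve{A}$ is invertible and $\Phi(h_i) = \ve{A}^{-1}\bigl(e^{\ve{A}h_i} - \ve{I}\bigr)$, so $\Phi(h_i)\ve{v} = \ve{0}$ forces $e^{\ve{A}h_i}\ve{v} = \ve{v}$; since the only eigenvalue of $e^{\ve{A}h_i}$ on this space is $e^{\lambda h_i}$ and the remaining exponential factor is unipotent, a nonzero fixed vector requires $e^{\lambda h_i} = 1$, i.e. $\lambda h_i \in 2\pi\sqrt{-1}\,\Z$.

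The crux, and the step I expect to be the main obstacle, is precisely this final case, where the possible singularity of $\int_0^{h_i} e^{\ve{A}s}\,ds$ must be controlled: demanding $e^{\lambda h_1} = 1$ and $e^{\lambda h_2} = 1$ simultaneously gives $\lambda h_1 = 2\pi\sqrt{-1}\,k_1$ and $\lambda h_2 = 2\pi\sqrt{-1}\,k_2$ with nonzero integers $k_1, k_2$, whence $h_1/h_2 = k_1/k_2 \in \Q$, contradicting the irrationality hypothesis. Thus no nonzero $\ve{v}$ survives, $\ve{N} = \ve{0}$, and $\ve{B} = \ve{\tilde{B}}$, which together with the earlier equalities completes the proof. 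The only technical care required is the eigenspace bookkeeping over $\C$ and the treatment of a nondiagonalizable $\ve{A}$, for which the unipotence of the nilpotent exponential factor is the decisive observation.
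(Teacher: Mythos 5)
Your proof is correct, and it takes a genuinely different route from the paper's --- one that is in fact more careful at the two delicate points. For $\ve{A} = \ve{\tilde{A}}$, the paper argues through the structure of matrix logarithms: citing Horn and Johnson, it asserts that $\ve{A}h_i - \ve{\tilde{A}}h_i$ is diagonalizable with eigenvalues in $2\pi i \Z$, so that the spectrum of $\ve{A} - \ve{\tilde{A}}$ lies in $\tfrac{2\pi i}{h_1}\Z \cap \tfrac{2\pi i}{h_2}\Z = \{0\}$ and the difference, being diagonalizable with spectrum $\{0\}$, vanishes; that assertion is not automatic when the two logarithms fail to commute (for non-commuting logarithms of the same matrix the difference need not be diagonalizable, nor need its eigenvalues be purely imaginary), whereas your argument --- integer powers give $e^{\ve{A}t} = e^{\ve{\tilde{A}}t}$ on the subgroup $h_1\Z + h_2\Z$, which is dense because $h_1/h_2 \notin \Q$, continuity extends the identity to all of $\R$, and differentiation at $t=0$ finishes --- sidesteps the issue entirely using only elementary facts. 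For $\ve{B} = \ve{\tilde{B}}$ the divergence is more substantial: the paper claims $e^{\ve{A}s}$ has strictly positive eigenvalues, hence $\int_0^{h_1} e^{\ve{A}s}\,ds$ is invertible, and concludes from the sampling identity at $h_1$ alone. That claim fails in general --- for $\ve{A} = \tfrac{2\pi}{h_1}\bigl[\begin{smallmatrix}0 & -1\\ 1 & 0\end{smallmatrix}\bigr]$ the integral is the zero matrix --- so invertibility at a single sampling time cannot be taken for granted. Your generalized-eigenspace analysis identifies exactly when $\Phi(h_i)$ is singular (eigenvalues $\lambda \neq 0$ with $e^{\lambda h_i} = 1$) and invokes the irrationality of $h_1/h_2$ to exclude a common kernel direction; this is precisely what is needed to make the $\ve{B}$-step sound, and it is the only place where incommensurability enters that step. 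In short, your route costs some bookkeeping (complex generalized eigenspaces, unipotent factors) but yields a complete argument; the paper's shorter proof leans on two assertions that, as literally stated, admit counterexamples, and your proposal can be read as the repair of both.
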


\begin{proof}
First observe that $\ve{C}=\ve{C}_1=\ve{C}_2=\ve{\tilde{C}}$ and $\ve{D}=\ve{D}_1=\ve{D}_2=\ve{\tilde{D}}$. In view of Lemma~\ref{Lem:sampling},
$$ e^{\ve{A}h_1} = \ve{A_1}= e^{\ve{\tilde{A}}h_1}, $$
and therefore the difference of the matrices $\ve{A}h_1$ and $\ve{\tilde{A}}h_1$ (which, in particular, have the same size) is similar to a diagonal matrix with diagonal elements from the set $\frac{2\pi i}{h_1} \mathbb{Z}$, see \cite[Ch.~6]{Horn.1991}. The same argument applies to the matrices $\ve{A}h_2$ and $\ve{\tilde{A}}h_2$, thus the eigenvalues of $\ve{A} - \ve{\tilde{A}}$ are in the set 
$$  \frac{2\pi  i}{h_1} \mathbb{Z} \cap \frac{2 \pi i}{h_2} \mathbb{Z} = \{ 0 \}, $$
since $\frac{h_1}{h_2} \notin \mathbb{Q}$ by assumption. Hence, $\ve{A} - \ve{\tilde{A}}$ is a diagonalizable matrix with spectrum $\{ 0 \}$, yielding $\ve{A} = \ve{\tilde{A}}$.

Moreover, the matrix $e^{\ve{A}s} = e^{\ve{\tilde{A}}s}$ has strictly positive eigenvalues for all $s \in \R$, hence $\int_0^{h_1} e^{\ve{A}s} ds$ has strictly positive eigenvalues and is therefore invertible. By Lemma~\ref{Lem:sampling} we  have $\int_0^{h_1} e^{\ve{A}s} ds \ve{B} = \ve{B_1} = \int_0^{h_1} e^{\ve{\tilde{A}}s} ds \ve{\tilde{B}}$, which hence implies that $\ve{B} = \ve{\tilde{B}}$, completing the proof.
\end{proof}

In order to explicitly construct a representation of the continuous-time system, additional to the two discretizations as in Theorem~\ref{TwoFrequenciesSufficient} we require a third one, such that the sampling times of all three discretizations have  rationally independent reciprocals.

\begin{proposition} \label{ConstructFromTwoFrequencies}
Let $\mathcal{S} \in \mathcal{C}^{m,p}$ be unknown and $h_1, h_2, h_3 > 0$ be known sampling times with rationally independent reciprocals, i.e., $q_1h_1^{-1} + q_2h_2^{-1}+q_3h_3^{-1} = 0$ for any $q_1,q_2,q_3\in \Q$ implies $q_1 = q_2 = q_3 = 0$. Further let $(\ve{A_i}, \ve{B_i}, \ve{C_i}, \ve{D_i})$ be the discretization of a fixed (but unknown) representation $(\ve{A},\ve{B},\ve{C},\ve{D})$ of $\Sc$ with respect to the sampling time $h_i$ for $i=1,2,3$. Then a representation of $\Sc$ can be explicitly constructed from the three discretizations.
\end{proposition}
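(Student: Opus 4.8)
The plan is to recover the four matrices of a representation of $\Sc$ one block at a time, exploiting that the output and feedthrough matrices are invariant under sampling. Since $\ve{C}=\ve{C_1}$ and $\ve{D}=\ve{D_1}$ by Lemma~\ref{Lem:sampling}, it remains to reconstruct $\ve{A}$ and $\ve{B}$ from the relations $\ve{A_i}=e^{h_i\ve{A}}$ and $\ve{B_i}=\big(\int_0^{h_i}e^{\ve{A}s}\,ds\big)\ve{B}$ for $i=1,2,3$, where each $\ve{A_i}$ is invertible. The crux is $\ve{A}$; once $\ve{A}$ is known, $\ve{B}$ follows from a linear solve.

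First I would recover $\ve{A}$ up to finitely many integer parameters from $\ve{A_1}$ alone. Decomposing $\C^{n}$ into the generalized eigenspaces of $\ve{A_1}$ and writing $\ve{A_1}=\mu(\ve{I}+\ve{N})$ with $\ve{N}$ nilpotent on the space belonging to an eigenvalue $\mu$, every solution $\ve{L}$ of $e^{h_1\ve{L}}=\ve{A_1}$ sharing this invariant structure acts there as $\tfrac1{h_1}\big(\log(\ve{I}+\ve{N})+(\log|\mu|+i\arg\mu+2\pi i\kappa)\ve{I}\big)$, where $\log(\ve{I}+\ve{N})=\sum_{k\ge 1}(-1)^{k+1}\ve{N}^{k}/k$ terminates and $\kappa\in\Z$ is an unknown branch index. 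Thus the nilpotent parts and the real parts $\tfrac1{h_1}\log|\mu|$ of the continuous eigenvalues are pinned exactly, and $\ve{A}$ is determined up to one integer $\kappa$ per eigenvalue.

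Next I would fix every branch using $\ve{A_2}$ and $\ve{A_3}$. Imposing $e^{h_2\ve{A}}=\ve{A_2}$ and $e^{h_3\ve{A}}=\ve{A_3}$ turns, on each eigenspace, into congruences $\tfrac{h_2}{h_1}\kappa\equiv\phi_2$ and $\tfrac{h_3}{h_1}\kappa\equiv\phi_3\pmod 1$, whose right-hand sides are the phases read off from $\ve{A_2},\ve{A_3}$. A short computation shows that rational independence of $h_1^{-1},h_2^{-1},h_3^{-1}$ forces every ratio $h_i/h_j$ to be irrational, since a relation $h_i/h_j=p/q$ would give $p\,h_i^{-1}-q\,h_j^{-1}=0$. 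Consequently two branch indices satisfying the first congruence differ by an integer $d$ with $\tfrac{h_2}{h_1}d\in\Z$, impossible unless $d=0$; as the true generator supplies one admissible $\kappa$, the branch is unique. Pairwise incommensurate sampling also separates any two distinct continuous eigenvalues, so the eigenvalue classes of the three discretizations can be matched consistently before the congruences are solved. Carrying out the branch substitution yields $\ve{A}$ in closed form.

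Finally, with $\ve{A}$ in hand the matrices $\ve{F_i}:=\int_0^{h_i}e^{\ve{A}s}\,ds$ are computable, and I would obtain $\ve{B}$ from the overdetermined linear system $\ve{B_i}=\ve{F_i}\ve{B}$. Here $\ker\ve{F_i}$ is contained in the span of the generalized eigenspaces whose eigenvalue $\lambda\neq 0$ is $h_i$-resonant ($e^{\lambda h_i}=1$), and no nonzero eigenvalue is resonant for two different sampling times, so $\ker\ve{F_1}\cap\ker\ve{F_2}=\{\ve{0}\}$ and the stacked map is injective; hence $\ve{B}=\big(\sum_i\ve{F_i}^{\top}\ve{F_i}\big)^{-1}\sum_i\ve{F_i}^{\top}\ve{B_i}$. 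I expect the branch determination to be the main obstacle: turning \emph{existence of a unique consistent branch} into a genuine construction relies on the simultaneous-approximation structure furnished by the rationally independent reciprocals, and disentangling this effective construction from the mere uniqueness already granted by Theorem~\ref{TwoFrequenciesSufficient} is the delicate point where the third sampling time does its work.
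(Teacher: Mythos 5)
Your strategy runs parallel to the paper's proof: read off $\ve{C}$ and $\ve{D}$ directly, recover $\ve{A}$ as a matrix logarithm whose branch ambiguity is resolved by the extra sampling times, then obtain $\ve{B}$ from a linear solve. Your treatment of $\ve{B}$ is in fact more robust than the paper's (the paper inverts $\int_0^{h_1}e^{\ve{J}s}\,ds$, which can actually be singular when some eigenvalue $\lambda\neq 0$ satisfies $e^{\lambda h_1}=1$, whereas your stacked least-squares solve with the kernel analysis of the $\ve{F_i}$ only needs pairwise non-resonance and is correct). The recovery of $\ve{A}$, however, has a genuine gap: \emph{aliasing}. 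Your parametrization ``one integer $\kappa$ per eigenvalue of $\ve{A_1}$'' presumes that the true generator acts on each generalized eigenspace of $\ve{A_1}$ as a single scalar plus a nilpotent part, i.e.\ with a single branch. This fails exactly when two distinct eigenvalues $\lambda\neq\lambda'$ of $\ve{A}$ satisfy $\lambda-\lambda'\in\frac{2\pi i}{h_1}\Z$: their generalized eigenspaces then merge into one generalized eigenspace of $\ve{A_1}$, on which $\ve{A}$ has \emph{two} eigenvalues. Concretely, $\ve{A}=\begin{bmatrix} 0 & -2\pi/h_1\\ 2\pi/h_1 & 0\end{bmatrix}$ gives $\ve{A_1}=\ve{I}$, and your candidate family on $\C^2$ consists only of the scalar matrices $\frac{2\pi i\kappa}{h_1}\ve{I}$, none of which is $\ve{A}$. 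Hence the step ``as the true generator supplies one admissible $\kappa$, the branch is unique'' is unjustified -- the true generator need not lie in your family at all, and your congruences then have no solution. Note that this is precisely the phenomenon exhibited in the paper's examples preceding Theorem~\ref{TwoFrequenciesSufficient}, so the proposal breaks on the cases that motivate the proposition.

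The second gap is the cross-discretization matching, which you assert rather than prove. To even write the congruence $\frac{h_2}{h_1}\kappa\equiv\phi_2 \pmod 1$ you must know \emph{which} eigenvalue of $\ve{A_2}$ supplies the phase $\phi_2$ for a given eigenspace of $\ve{A_1}$; the sentence ``the eigenvalue classes of the three discretizations can be matched consistently'' is exactly the hard combinatorial content of the paper's proof (the partition of the spectra by modulus, and the cyclic/telescoping argument showing that any two admissible pairings yield the same continuous spectrum). It is also the only place where the full rational independence of $h_1^{-1},h_2^{-1},h_3^{-1}$ -- as opposed to pairwise irrationality of the ratios, which is all your argument ever invokes -- enters. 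Pairwise conditions do not suffice at this level: whenever two true eigenvalues differ by $2\pi i\bigl(\tfrac{a}{h_1}-\tfrac{b}{h_2}\bigr)$ with $a,b\in\Z$, $a\neq 0$, there is a second admissible pairing of the spectra of $\ve{A_1}$ and $\ve{A_2}$ whose witnesses form a \emph{different} eigenvalue multiset, so two-sampling consistency does not single out the true spectrum within your framework; resolving this is what the third sampling time and the paper's telescoping argument accomplish. Both gaps could be closed by a different route than the one you propose: $\ve{A_1}$ and $\ve{A_2}$ commute, any matrix commuting with $\ve{A_i}$ preserves its generalized eigenspaces, and on each \emph{joint} generalized eigenspace the true $\ve{A}$ has exactly one eigenvalue (by pairwise incommensurability), with nilpotent part given by the unipotent logarithm -- but that is a different argument from the one you gave, and from the paper's.
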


\begin{proof}
By construction of the matrices $\ve{A_i} \in \R^{n \times n}$, there exist invertible matrices $\ve{T_1}, \ve{T_2}, \ve{T_3} \in \C^{n \times n}$ such that $$ \ve{T_1}^{-1} \ve{A_1 T_1} = e^{h_1 \ve{J}}, \ve{T_2}^{-1} \ve{A_2 T_2} = e^{h_2 \ve{J}} \text{ and } \ve{T_3}^{-1} \ve{A_3 T_3} = e^{h_3 \ve{J}} ,$$where $\ve{J} \in \C^{n \times n}$ is the Jordan canonical form of the matrix $\ve{A}$. 

By the structure of the matrix exponential of a Jordan block, the entries on the diagonal of $\ve{T_i}^{-1} \ve{A_i T_i}$ are $E_i \coloneq \{ e^{\lambda_j h_i} : \lambda_j \text{ eigenvalue of } \ve{A}, 1 \leq j \leq n \} $ for $i = 1,2,3$, counted with their multiplicity. This allows us to determine the set of eigenvalues of $\ve{A}$ and their multiplicities as follows.
We seek to show that we are able to determine the triples of eigenvalues $\left(\mu_j^{(1)}, \mu_j^{(2)}, \mu_j^{(3)}\right) \in E_1 \times E_2 \times E_3$ that correspond to the same eigenvalue $\lambda_j$ of $\ve{A}$. From these, we can  determine the eigenvalues $\lambda_j$ via 
$$\operatorname{Re}(\lambda_j) = \frac{\ln(\vert \mu_j^{(1)} \vert )}{h_1},\ \ \operatorname{Im}(\lambda_j) \in \frac{\operatorname{arg}\left(\mu_j^{(1)}\right)+2\pi \Z}{h_1} \cap \frac{\operatorname{arg}\left(\mu_j^{(2)}\right)+2\pi\Z}{h_2}.$$
The latter set is nonempty by choice of the triple $\left(\mu_j^{(1)}, \mu_j^{(2)}, \mu_j^{(3)}\right)$, and hence contains exactly one element due to the incommensurability of the sampling times $h_1, h_2$, since for any two elements $a,b$ of this set, their difference satisfies $a-b \in \frac{2\pi \Z}{h_1} \cap \frac{2 \pi \Z}{h_2} = \{0 \}$, yielding a unique solution. Observe that the absolute values of the elements $\mu_j^{(i)}$ satisfy the relation 
\begin{equation} \label{Eq:RealParts}
\left\vert \mu_j^{(1)} \right\vert^{\frac{1}{h_1}} = e^{\operatorname{Re}(\lambda_j)} = \left\vert \mu_j^{(2)} \right\vert^{\frac{1}{h_2}}. 
\end{equation}
For each $i = 1,2,3$, let $(S_{i,1}, \hdots, S_{i,k})$ denote the ordered partition of $E_i$ where $\vert \mu \vert = \vert \tilde{\mu} \vert$ for each $\mu, \tilde{\mu} \in S_{i,r}$, and $\vert \mu \vert < \vert \tilde{\mu} \vert$ for each $\mu \in S_{i,r}, \tilde{\mu} \in S_{i, \tilde{r}}$ for $r < \tilde{r}$. By equation~\eqref{Eq:RealParts}, observe that $v_r \coloneq \vert S_{i,r} \vert = \vert S_{j,r} \vert$ for any $1 \leq i,j \leq 3$ and $1 \leq r \leq k$. Seeking a contradiction, assume that there is an index $1 \leq r \leq k$ such that there are two differing pairings of the eigenvalues, defined by the bijective maps $\rho, \gamma, \tilde{\rho}, \tilde{\gamma} : \{ 1,\hdots, v_r \} \rightarrow \{ 1, \hdots, v_r \}$ operating on the indices of the sets $S_{i,r} = \{ \mu_{1}^{(i)},\hdots, \mu_{v_r}^{(i)} \}$, such that for each $1\leq i\leq v_r,$ there exist $\lambda, \tilde{\lambda} \in \C$ with
$$ e^{h_1 \lambda} = \mu_{i}^{(1)},\ e^{h_2 \lambda} = \mu_{\rho(i)}^{(2)},\ e^{h_3 \lambda} = \mu_{\gamma(i)}^{(3)} $$
and, similarly,
$$ e^{h_1 \tilde{\lambda}} = \mu_{i}^{(i)},\ e^{h_2 \tilde{\lambda}} = \mu_{\tilde{\rho}(i)}^{(2)},\ e^{h_3 \tilde{\lambda}} = \mu_{\tilde{\gamma}(i)}^{(3)}. $$ We want to show that in this case, $\lambda = \tilde{\lambda}$ has to hold. Without loss of generality, assume that $(\mu_i^{(1)}, \mu_i^{(2)}, \mu_i^{(3)})_{1 \leq i \leq v_r}$ is the first such pairing, determining the eigenvalues $\lambda_i \in \C$ uniquely due to the incommensurability of the sampling times, and let $\rho, \gamma : \{ 1, \hdots, v_r \} \rightarrow \{ 1, \hdots, v_r \} $ be the bijective maps, such that $(\mu_i^{(1)}, \mu_{\rho(i)}^{(2)}, \mu_{\gamma(i)}^{(3)})_{1 \leq i \leq v_r}$ is the second pairing, determining the eigenvalues $\tilde{\lambda}_i \in \C$. Assume without loss of generality that $\rho$ is cyclic, else restrict the following argument to the orbits of $\rho$.
Fix an arbitrary index $1 \leq i \leq v_r$. Since the eigenvalue $\mu_i^{(1)}$ corresponds to both $\lambda_i$ and $\tilde{\lambda}_i$, we can conclude that $\lambda_i - \tilde{\lambda}_i \in \frac{2 \pi i}{h_1}\Z$. Moreover, since $\mu_i^{(2)}$ corresponds to both $\lambda_i$ and $\tilde{\lambda}_{\rho^{-1}(i)}$, we can conclude $\lambda_i - \tilde{\lambda}_{\rho^{-1}(i)} \in \frac{2 \pi i}{h_2} \Z$, and similarly $\lambda_i - \tilde{\lambda}_{\gamma^{-1}(i)} \in \frac{2 \pi i}{h_3} \Z$. Fix an index $1 \leq j \leq v_r$, and let $s \in \N_0$ be such that $\rho^{-s}(j) = \gamma^{-1}(j)$. Then there is $k \in \Z$ such that
\begin{align*}
\frac{2\pi i k}{h_3} &= \lambda_j - \tilde{\lambda}_{\gamma^{-1}(j)} =
\sum_{p=1}^{s} (\lambda_{\rho^{1-p}(j)}- \lambda_{\rho^{-p}(j)})\\
&= \sum_{p=1}^{s} \big(\lambda_{\rho^{1-p}(j)} - \tilde{\lambda}_{\rho^{1-p}(j)} + (\tilde{\lambda}_{\rho^{1-p}(j)} - \lambda_{\rho^{-p}(j)})\big) \\
&= \sum_{p=1}^s \frac{2 \pi i a_p}{h_1} + \frac{2 \pi i b_p}{h_2} = \frac{2 \pi i a}{h_1} + \frac{2 \pi i b}{h_2}
\end{align*}
for some coefficients $a, b, a_p, b_p \in \Z$. This yields 
$$ 0 = \frac{a}{h_1} + \frac{b}{h_2} - \frac{k}{h_3},$$
and by the assumed rational independence, we can conclude that $\lambda_j = \tilde{\lambda}_{\gamma^{-1}(j)}.$ Hence, the set of eigenvalues $\lambda_j$ of $\ve{A}$ is uniquely determined. We can uniquely determine the Jordan normal form $\ve{J}$ of the matrix $\ve{A}$, since we know the structure of the Jordan blocks, which is the same as the structure of $\ve{T_1}^{-1} \ve{A_1 T_1}$. Choose a state space transformation defined by $\ve{T}$ such that $\ve{TJT}^{-1} \in \R^{n \times n}$ is a real matrix. 
Additionally, we can apply the state space transformation defined by $\ve{T_1}$ to the remaining matrices of the first discretization, $\ve{\tilde{B}} = \ve{T_1}^{-1} \ve{B_1}, \ve{\tilde{C}} = \ve{CT_1}$, $\ve{\tilde{D}} = \ve{D}$. With these choices, $\Big(\ve{TJT}^{-1}, \ve{T} \left( \int_{0}^{h_1} e^{\ve{J}s} ds \right)^{-1} \ve{\tilde{B}}, \ve{\tilde{C}T}^{-1}, \ve{\tilde{D}}\Big)$ is a representation of~$\mathcal{S}$.
\end{proof}

\begin{remark}
The proof of the previous proposition sketches the process how to determine the triples of eigenvalues. After determining the parts $S_{i,j}$, we can calculate the set of possible triples $(\mu_j^{(1)}, \mu_j^{(2)}, \mu_j^{(3)})$ by checking the condition
\begin{equation*}
\frac{\operatorname{arg} \left( \mu_j^{(1)} \right) + 2\pi \Z}{h_1} \cap \frac{\operatorname{arg} \left( \mu_j^{(2)} \right) + 2\pi \Z}{h_2} \cap \frac{\operatorname{arg} \left( \mu_j^{(3)} \right) + 2\pi \Z}{h_3} \neq \emptyset.
\end{equation*}
Due to the claim of the proposition, there will be a unique subset of these triples such that each eigenvalue is contained exactly once. These triples can be used to determine the corresponding eigenvalue. 
\end{remark}

\begin{remark}
A priori system knowledge may be used to weaken the requirement of sampling times with rationally independent reciprocals. Assume that a bound on the spectrum is known, say $\vert \lambda_i \vert < M$ for all $1 \leq i \leq n$ for some constant $M > 0$. Then a qualitatively sufficiently incommensurable choice of sampling times is such that each nontrivial solution of $\frac{q_1}{h_1} + \frac{q_2}{h_2} + \frac{q_3}{h_3} = 0$ with $q_i \in \Q$ satisfies $\frac{q_i}{h_i} \geq \frac{M}{\pi}$ for at least one $1\leq i \leq 3$. This ensures that $\vert \lambda_j - \tilde{\lambda}_{\gamma^{-1}(j)} \vert \geq 2M$ in the proof of Proposition~\ref{ConstructFromTwoFrequencies}, hence the solution contained in $\{ \lambda : \vert \lambda \vert < M \}$ is unique.
\end{remark}

We can combine the previous result with Theorem~\ref{Fundamentallemma} in order to derive conditions on sampled data that are sufficient to identify the continuous-time system.

\begin{corollary}\label{Cor:identifyS}
Let $\mathcal{S} \in \mathcal{C}^{m,p}$ and $(\ve{A}, \ve{B}, \ve{C}, \ve{D})$ be a state space representation of $\Sc$, and let $h_1, h_2, h_3 > 0$ be three sampling times with rationally independent reciprocals. Further let $(\ve{A_i}, \ve{B_i}, \ve{C_i}, \ve{D_i})$ be the discretization of $(\ve{A}, \ve{B}, \ve{C}, \ve{D})$ with respect to the sampling time $h_i$, for $i=1,2,3$, and denote its induced behavior by $\Bc_i$. Let $\mathcal{D}_{h_i} = (\ve{u_{d,i}}, \ve{y_{d,i}})\in\Bc_{i,[0,T_i]}$ be sampled data sequences for $i=1,2,3$. If all three systems $\mathcal{B}_i$ are controllable, and the input sequences $\ve{u_{d,i}}$  each are persistently exciting of the orders $n(\mathcal{B}_i) + l(\mathcal{B}_i) + 1$, respectively, then the data set $\mathcal{D} = \mathcal{D}_{h_1} \cup \mathcal{D}_{h_2} \cup \mathcal{D}_{h_3}$ allows to explicitly construct a representation of $\Sc$, that has the same state space dimension as $(\ve{A}, \ve{B}, \ve{C}, \ve{D})$.
\end{corollary}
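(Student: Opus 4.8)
The plan is to prove the statement by assembling two results already established in the excerpt: for each individual sampling time I would use the Fundamental Lemma to identify the discrete-time behavior and extract a minimal state space representation from it, and then I would feed the three reconstructed representations into Proposition~\ref{ConstructFromTwoFrequencies} to recover a continuous-time representation of $\mathcal{S}$. Concretely, fix $i\in\{1,2,3\}$. Since $\mathcal{B}_i$ is controllable and $\ve{u_{d,i}}$ is persistently exciting of order $n(\mathcal{B}_i)+l(\mathcal{B}_i)+1$, Theorem~\ref{Fundamentallemma} applied with $L=l(\mathcal{B}_i)+1$ gives $\mathcal{B}_{i,[0,l(\mathcal{B}_i)]}=\im\begin{bmatrix}\mathcal{H}_{l(\mathcal{B}_i)+1}(\ve{u_{d,i}})\\ \mathcal{H}_{l(\mathcal{B}_i)+1}(\ve{y_{d,i}})\end{bmatrix}$. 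By Proposition~\ref{CharakterisierungenLag}, sequences of length $l(\mathcal{B}_i)+1$ determine the entire behavior, so the data set $\mathcal{D}_{h_i}$ identifies $\mathcal{B}_i$ explicitly.

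From the identified behavior I would construct a minimal state space representation $(\hat{\ve{A}}_i,\hat{\ve{B}}_i,\hat{\ve{C}}_i,\hat{\ve{D}}_i)$ of $\mathcal{B}_i$, exactly as in the proof of Lemma~\ref{xConvergence} via \cite[Thm.~6.4.2]{PoldWill98}, which makes this step constructive as well. The crucial structural observation to carry forward is that a minimal representation is unique up to a state space similarity transformation. Hence, provided the true discretization $(\ve{A_i},\ve{B_i},\ve{C_i},\ve{D_i})$ is itself minimal, there is an invertible $\ve{S_i}$ with $\hat{\ve{A}}_i=\ve{S_i}^{-1}e^{h_i\ve{A}}\ve{S_i}=e^{h_i(\ve{S_i}^{-1}\ve{A}\ve{S_i})}$ and $\hat{\ve{B}}_i=\ve{S_i}^{-1}\ve{B_i}$, so that $(\hat{\ve{A}}_i,\hat{\ve{B}}_i,\hat{\ve{C}}_i,\hat{\ve{D}}_i)$ is again a genuine discretization of the minimal continuous-time representation $(\ve{S_i}^{-1}\ve{A}\ve{S_i},\ve{S_i}^{-1}\ve{B},\ve{C}\ve{S_i},\ve{D})$ of $\mathcal{S}$, sharing the eigenvalues and Jordan structure of $e^{h_i\ve{A}}$.

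With this, I would invoke Proposition~\ref{ConstructFromTwoFrequencies} on the reconstructed representations. Its proof only uses the spectral data of the three $\hat{\ve{A}}_i$ to pin down the Jordan form $\ve{J}$ of $\ve{A}$ (here the rational independence of the reciprocals $h_i^{-1}$ resolves the aliasing ambiguity) together with $\hat{\ve{B}}_1,\hat{\ve{C}}_1$ to recover $\ve{B},\ve{C}$; since similarity preserves precisely this spectral data, the recovered $n$-dimensional representation is a valid representation of $\mathcal{S}$, and the per-frequency coordinate ambiguity is harmless because $\ve{B},\ve{C}$ are recovered from a single discretization. The main obstacle is exactly the compatibility bookkeeping that makes Proposition~\ref{ConstructFromTwoFrequencies} applicable: I must confirm that each data-derived minimal representation has dimension precisely $n$ and that all three arise as discretizations of one common continuous-time representation. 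This reduces to showing that sampling preserves the McMillan degree, $n(\mathcal{B}_i)=n$, so that each $\hat{\ve{A}}_i$ is $n\times n$ and similar to $e^{h_i\ve{A}}$; controllability of each $\mathcal{B}_i$, combined with the observability inherent in minimality, is what must rule out a collapse of the state dimension under the map $\lambda\mapsto e^{h_i\lambda}$, and verifying this carefully (rather than the routine application of the two cited results) is where the real work lies.
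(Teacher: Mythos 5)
Your overall route coincides with the paper's: its entire proof consists of citing Theorem~\ref{Fundamentallemma} to ``identify'' the three discretizations and then invoking Proposition~\ref{ConstructFromTwoFrequencies}. What you add on top of this is correct and genuinely needed: the Fundamental Lemma identifies only the behaviors $\mathcal{B}_i$, hence minimal state space representations only up to similarity, and, as you observe, the construction in Proposition~\ref{ConstructFromTwoFrequencies} tolerates an independent similarity per sampling time, since it consumes only the spectral/Jordan data of the three identified ``$\ve{A}$''-matrices together with the remaining matrices of a single discretization; the output is then a representation of $\Sc$ in the coordinates of $(\ve{S_1}^{-1}\ve{A}\ve{S_1},\ve{S_1}^{-1}\ve{B},\ve{C}\ve{S_1},\ve{D})$, which is harmless. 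The paper's two-line proof glosses over both points.

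However, your proposal is not a complete proof: it ends by reducing everything to the claim $n(\mathcal{B}_i)=n$ (minimality of the true discretizations), which you explicitly leave unproven (``where the real work lies''). This is a genuine gap, and in fact the claim cannot be derived from the stated hypotheses. First, nothing in the statement requires $(\ve{A},\ve{B},\ve{C},\ve{D})$ to be minimal: appending a decoupled, unobservable and uncontrollable state leaves every behavior $\mathcal{B}_i$, its controllability, and the persistency requirements unchanged, while forcing $n(\mathcal{B}_i)<n$, so the same data would have to produce representations of two different dimensions. Second, even for a minimal continuous-time representation, aliasing can destroy minimality of a discretization while preserving behavioral controllability: for $\ve{A}=\begin{bmatrix}0 & -\pi\\ \pi & 0\end{bmatrix}$, $\ve{B}=(0,1)^\top$, $\ve{C}=(1,0)$, $\ve{D}=0$ and $h_1=1$ one gets $\ve{A_1}=-\ve{I}$, $\ve{B_1}=(-2/\pi,0)^\top$, so $\mathcal{B}_1$ is a controllable behavior with $n(\mathcal{B}_1)=1<2$; the identified representations then have mismatched sizes and the eigenvalue-pairing step of Proposition~\ref{ConstructFromTwoFrequencies} cannot even be set up. So the step you flagged is not mere bookkeeping: it fails in general, and the corollary needs an additional hypothesis (e.g.\ $n(\mathcal{B}_i)=n$ for $i=1,2,3$) to go through as stated. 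To your credit, your proposal makes visible exactly the gap that the paper's own proof hides by silently equating identification of $\mathcal{B}_i$ with identification of $(\ve{A_i},\ve{B_i},\ve{C_i},\ve{D_i})$; under the added minimality assumption, your argument is complete.
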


\begin{proof}
Observe that the requirements on the two discretizations are sufficient for identifying them by Theorem~\ref{Fundamentallemma}. Then the assertion follows from Proposition \ref{ConstructFromTwoFrequencies}.
\end{proof}

The conditions provided in Corollary~\ref{Cor:identifyS} are sufficient to identify the underlying continuous-time system, hence they also allow to identify the relative degree of the system and to determine either the stability or the instability of its zero dynamics. Nevertheless, these conditions are not necessary and further research should reveal weaker conditions on the three data sets $\mathcal{D}_{h_1}$, $\mathcal{D}_{h_2}$ and $\mathcal{D}_{h_3}$, which allow to characterize the informativity for the (vector) relative degree and the informativity for the stability of the zero dynamics.

\section{Conclusion}\label{secConclusion}

In this work, we have developed new methods to identify the (vector) relative degree and the stability of the zero dynamics of a linear system from a data set of a sampled trajectory. The methods build upon two powerful tools from data-driven control theory: the Fundamental Lemma by Willems et al.\ and the data informativity framework. While the former allows to draw more direct conclusions, the latter comes with weaker conditions on the data; in particular, persistently exciting signals are not required. As one of the main results, we proposed an algorithm that determines whether the provided data set sampled from a discrete-time SISO system $\mathcal{B} \in \mathcal{L}^{2}$ is informative for its relative degree $r \in \N_0$, and have extended the result in order to determine the vector relative degree of MIMO systems in many cases. Further, we have developed conditions under which a data set is informative for the stability of the zero dynamics of the system.

Combining both results allows us to characterize the relative degree and the stability of the zero dynamics of sampling discretizations obtained from continuous-time systems via Zero-order Hold. We have shown how the continuous-time system can be reconstructed from three discretizations obtained at suitable sampling times. This allows to study the design and feasibility of certain control strategies such as sliding mode control and funnel control. Future work will focus on developing weaker conditions on the data to identify the required properties of the continuous-time system, possibly using other discretization methods, and on the extension of the results to nonlinear systems.

\section*{Declarations}

\textbf{Competing Interests} The authors declare no competing interests.

\bibliographystyle{abbrv}
\bibliography{lit}

@article{Bisoffi.2022,
 author = {Bisoffi, Andrea and De Persis, Claudio and Tesi, Pietro},
 year = {2022},
 title = {{Data-driven control via Petersen's lemma}},
 keywords = {Analysis of systems with uncertainty;Data-based control;Linear matrix inequalities;Optimization-based controller synthesis;Robust control of nonlinear systems;Sum-of-squares},
 pages = {110537},
 volume = {145},
 issn = {0005-1098},
 journal = {{Automatica}},
 doi = {10.1016/j.automatica.2022.110537}
}

@article{Buffington.1998,
 author = {Buffington, James M. and Enns, Dale F. and Teel, Andrew R.},
 year = {1998},
 title = {{Control Allocation and Zero Dynamics}},
 pages = {458--464},
 volume = {21},
 number = {3},
 journal = {{Journal of Guidance, Control, and Dynamics}},
 doi = {10.2514/2.4258}
}

@article{C.DePersis.2020,
 author = {De Persis, Claudio and Tesi, Pietro},
 year = {2020},
 title = {{Formulas for Data-Driven Control: Stabilization, Optimality, and Robustness}},
 pages = {909--924},
 volume = {65},
 number = {3},
 issn = {1558-2523},
 journal = {{IEEE Transactions on Automatic Control}},
 doi = {10.1109/TAC.2019.2959924}
}

@article{Camlibel.2024,
 author = {Camlibel, Kanat and Rapisarda, Paolo},
 year = {2024},
 title = {{Beyond the fundamental lemma: from finite time series to linear system}},
 doi = {10.48550/arXiv.2405.18962},
 journal = {arXiv preprint}
}

@book{Chen.1999,
 author = {Chen, Chi-Tsong},
 year = {1999},
 title = {{Linear system theory and design}},
 price = {paper : {\pounds} 25.00},
 address = {New York},
 edition = {3. ed.},
 publisher = {{Oxford Univ. Press}},
 isbn = {0-19-511777-8},
 series = {{The Oxford series in electrical and computer engineering}},
doi = {10.1002/1099-1239(20001230)10:15<1360::AID-RNC539>3.0.CO;2-C}
}

@article{Desborough.2002,
 author = {Desborough, Lane and Miller, Randy},
 year = {2002},
 title = {{Increasing Customer Value of Industrial Control Performance Monitoring - Honeywell's Experience}},
 number = {326},
 volume = {98},
pages = { 169-–189},
 journal = {{AIChE Symposium Series}},
}

@article{Ender.1993,
 author = {Ender, David B.},
 year = {1993},
 title = {{Process Control Performance: Not as Good as You Think.}},
 journal = {{Control Engineering}}, 
volume = {40},
pages = {180--190},
}

@incollection{Heij.1990,
 author = {Heij, Christiaan},
 title = {{System identifiability for the procedure of most powerful unfalsified modelling}},
 pages = {437--447},
 publisher = {{Birkh{\"a}user}},
 editor = {Kaashoek, M. A. and {van Schuppen}, J. H. and Ran, A. C. M.},
 booktitle = {{Realization and Modelling in System Theory}},
 series = {Progress in Systems and Control Theory},
 volume = 3,
 year = {1990},
 address = {Boston, MA},
 doi = {10.1007/978-1-4612-3462-3\textunderscore49}
}

@book{Horn.1991,
 year = {1991},
 title = {{Topics in Matrix Analysis}},
 address = {Cambridge},
 publisher = {{Cambridge University Press}},
 isbn = {9780521467131},
 editor = {Horn, Roger A. and Johnson, Charles R.},
doi = {10.1017/CBO9780511840371},
}

@article{Hou.2013,
 author = {Hou, Zhong-Sheng and Wang, Zhuo},
 year = {2013},
 title = {{From model-based control to data-driven control: Survey, classification and perspective}},
 keywords = {Classification;Data-based control;Data-driven control;Perspective;Survey},
 pages = {3--35},
 volume = {235},
 issn = {0020-0255},
 journal = {{Information Sciences}},
 doi = {10.1016/j.ins.2012.07.014}
}

@article{Hu.2023,
 author = {Hu, Xiao and Wang, Ping and Cai, Shuo and Zhang, Lin and Hu, Yunfeng and Chen, Hong},
 year = {2023},
 title = {{Vehicle Stability Analysis by Zero Dynamics to Improve Control Performance}},
 keywords = {Automated driving;Motion control;nonlinear control;Nonlinear control systems;Numerical simulation;Stability analysis;Vehicle dynamics;vehicle instability;vehicle motion control;Wheels;zero dynamics},
 pages = {2365--2379},
 volume = {31},
 number = {6},
 journal = {{IEEE Transactions on Control Systems Technology}},
 doi = {10.1109/TCST.2023.3257682}
}

@article{Ilchmann.2002,
title={Tracking with prescribed transient behaviour}, volume={7}, DOI={10.1051/cocv:2002064}, journal={ESAIM: Control, Optimisation and Calculus of Variations}, author={Ilchmann, Achim and Ryan, E. P. and Sangwin, C. J.}, year={2002}, pages={471–493}
}

@book{Isidori.1995,
 author = {Isidori, Alberto},
 year = {1995},
 title = {{Nonlinear Control Systems}},
 address = {London},
 edition = {3rd},
 publisher = {Springer},
 isbn = {978-1-4471-3909-6},
 series = {{Communications and Control Engineering Series}},
 doi = {10.1007/978-1-84628-615-5}
}

@article{IvanMarkovsky.2008,
 author = {Markovsky, Ivan and Rapisarda, Paolo},
 year = {2008},
 title = {{Data-driven simulation and control}},
 pages = {1946--1959},
 volume = {81},
 journal = {{International Journal of Control}},
doi = {10.1080/00207170801942170},
}

@article{J.Berberich.2021,
 author = {Berberich, Julian and K{\"o}hler, Johannes and M{\"u}ller, Matthias A. and Allg{\"o}wer, Frank},
 year = {2021},
 title = {{Data-Driven Model Predictive Control With Stability and Robustness Guarantees}},
 pages = {1702--1717},
 volume = {66},
 number = {4},
 issn = {1558-2523},
 journal = {{IEEE Transactions on Automatic Control}},
 doi = {10.1109/TAC.2020.3000182}
}

@article{J.C.Willems.2004,
 author = {Willems, Jan C.  and Rapisarda, Paolo and Markovsky, Ivan and De Moor, Bart L.M.},
 journal = {Systems \& Control Letters},
 title = {{A note on persistency of excitation}},
 pages = {325--329},
 volume = 54,
 number = 4,
 year = {2005},
 doi = {10.1016/j.sysconle.2004.09.003}
}

@inproceedings{J.Coulson.2019,
 author = {Coulson, Jeremy and Lygeros, John and D{\"o}rfler, Florian},
 title = {{Data-Enabled Predictive Control: In the Shallows of the DeePC}},
 pages = {307--312},
 booktitle = {{18th European Control Conference (ECC)}},
 year = {2019},
 doi = {10.23919/ECC.2019.8795639}
}

@article{Levant.2003,
 author = {Levant, Arie},
 year = {2003},
 title = {{High-order sliding modes: Differentiation and output feedback control}},
 pages = {427--434},
 volume = {76},
 journal = {{International Journal of Control}},
doi = {10.1080/0020717031000099029},
}

@article{Ling.2012,
 author = {Ling, Rui and Wu, Meirong and Dong, Yan and Chai, Yi},
 year = {2012},
 title = {{High order sliding-mode control for uncertain nonlinear systems with relative degree three}},
 keywords = {Chattering free;Finite-time control;High order sliding-mode control;Uncertain nonlinear systems},
 pages = {3406--3416},
 volume = {17},
 number = {8},
 issn = {1007-5704},
 journal = {{Communications in Nonlinear Science and Numerical Simulation}},
 doi = {10.1016/j.cnsns.2011.12.017}
}

@article{Markovsky.2021,
 author = {Markovsky, Ivan and D{\"o}rfler, Florian},
 year = {2021},
 title = {{Behavioral systems theory in data-driven analysis, signal processing, and control}},
 keywords = {Behavioral systems theory;Data-driven control;Missing data estimation;System identification},
 pages = {42--64},
 volume = {52},
 issn = {1367-5788},
 journal = {{Annual Reviews in Control}},
 doi = {10.1016/j.arcontrol.2021.09.005}
}

@article{Maupong.2017,
 author = {Maupong, T. M. and Rapisarda, P.},
 year = {2017},
 title = {{Data-driven control: A behavioral approach}},
 keywords = {Annihilators;Behavioral approach;Data-driven control;Interconnection},
 pages = {37--43},
 volume = {101},
 issn = {0167-6911},
 journal = {{Systems {\&} Control Letters}},
 doi = {10.1016/j.sysconle.2016.04.006}
}

@article{Mueller.2009,
 author = {Mueller, Markus},
 year = {2009},
 title = {{Normal form for linear systems with respect to its vector relative degree}},
 keywords = {Linear systems;MIMO-systems;Normal form;Vector relative degree},
 pages = {1292--1312},
 volume = {430},
 number = {4},
 issn = {0024-3795},
 journal = {{Linear Algebra and its Applications}},
 doi = {10.1016/j.laa.2008.10.014}
}

@article{Pechlivanidou.2022,
 author = {Pechlivanidou, Georgia and Karampetakis, Nicholas},
 year = {2022},
 title = {{Zero-order hold discretization of general state space systems with input delay}},
 pages = {708--730},
 volume = {39},
 number = {2},
 issn = {1471-6887},
 journal = {{IMA Journal of Mathematical Control and Information}},
 doi = {10.1093/imamci/dnac005}
}

@inproceedings{Persis.2023,
  author = {De Persis, Claudio and Gadginmath, D. and Pasqualetti, Fabio and Tesi, Pietro},
 year = {2023},
 title = {{Data-Driven Feedback Linearization with Complete Dictionaries}},
booktitle = {Proceedings of the 62nd IEEE Conference on Decision and Control},
 doi = {10.1109/CDC49753.2023.10383720},
pages = {3037--3042}
}

@BOOK{PoldWill98,
   AUTHOR    = {Polderman, Jan Willem and Willems, Jan C.},
   YEAR      = 1998,
   TITLE     = {Introduction to Mathematical Systems Theory. A Behavioral Approach},
   PUBLISHER = {Springer},
   Address   = {New York},
   doi       = {10.1007/978-1-4757-2953-5},
}

@book{SiraRamirez.2015,
 author = {{Sira Ram{\'i}rez}, Hebertt J.},
 year = {2015},
 title = {{Sliding mode control: The delta-sigma modulation approach}},
 address = {Cham},
 publisher = {Birkh{\"a}user},
 series = {{Control Engineering Series}},
doi = {10.1007/978-3-319-17257-6},
}

@article{Waarde.2020,
author={van Waarde, Henk J. and De Persis, Claudio and Camlibel, M. Kanat and Tesi, Pietro},
  journal={IEEE Control Systems Letters}, 
  title={Willems’ Fundamental Lemma for State-Space Systems and Its Extension to Multiple Datasets}, 
  year={2020},
  volume={4},
  number={3},
  pages={602-607},
  keywords={Trajectory;Linear systems;Kernel;Controllability;Computational modeling;Instruments;Identification for control;linear systems},
  doi={10.1109/LCSYS.2020.2986991}}

@article{Waarde.2020b,
 author = {van Waarde, Henk J. and Eising, Jaap and Trentelman, Harry and Camlibel, Kanat},
 year = {2020},
 title = {{Data Informativity: A New Perspective on Data-Driven Analysis and Control}},
 pages = {4753--4768},
 volume = {65},
number = {11},
 issn = {1558-2523},
 journal = {{IEEE Transactions on Automatic Control}},
 doi = {10.1109/TAC.2020.2966717}
}

@article{Willems.1986a,
 author = {Willems, Jan C.},
 year = {1986},
 title = {{From time series to linear system---Part I. Finite dimensional linear time invariant systems}},
 keywords = {((AR) systems);Linear systems;matrix polynomial equations;minimal realization;modelling;state space methods;time series analysis},
 pages = {561--580},
 volume = {22},
 number = {5},
 issn = {0005-1098},
 journal = {{Automatica}},
 doi = {10.1016/0005-1098(86)90066-X}
}

@article{Willems.1986b,
 author = {Willems, Jan C.},
 year = {1986},
 title = {{From time series to linear system---Part II. Exact modelling}},
 pages = {675--694},
 volume = {22},
 number = {6},
 journal = {{Automatica}},
doi = {10.1016/0005-1098(86)90005-1},
}

@article{Willems.1987,
 author = {Willems, Jan C.},
 year = {1987},
 title = {{From time series to linear system---Part III. Approximate modelling}},
 pages = {87--115},
 volume = {23},
 number = {1},
 journal = {{Automatica}},
doi = {10.1016/0005-1098(87)90120-8},
}

@inproceedings{Willems.1997,
 author = {Willems, Jan C.},
 title = {{Fitting Data Sequences to Linear Systems}},
 pages = {405--416},
 publisher = {{Birkh{\"a}user}},
 isbn = {978-1-4612-4120-1},
 editor = {Byrnes, Christopher I. and Datta, Biswa N. and Martin, Clyde F. and Gilliam, David S.},
 booktitle = {{Systems and Control in the Twenty-First Century}},
 year = {1997},
 address = {Boston, MA},
doi = {10.1007/978-1-4612-4120-1_22},
  series       = {Systems \& Control: Foundations \& Applications},
  volume       = {22}
}

@article{Ziegler.1942,
 author = {Ziegler, J. G. and Nichols, N. B.},
 year = {1942},
 title = {{Optimum Settings for Automatic Controllers}},
 pages = {759--765},
 number = {64},
 journal = {{Transactions of the ASME}}
}

@Article{Berger.2018,
  author           = {Thomas Berger and Huy Ho{\`{a}}ng L{\^{e}} and Timo Reis},
  year             = {2018},
  journal          = {Automatica},
  title            = {Funnel control for nonlinear systems with known strict relative degree},
  doi              = {10.1016/j.automatica.2017.10.017},
  pages            = {345--357},
  volume           = {87},
}

@article{Berger.2021,
  title={Funnel control of nonlinear systems},
  author={Berger, Thomas and Ilchmann, Achim and Ryan, Eugene P.},
  journal={Math. Control Signals Syst.},
  volume={33},
  pages={151--194},
  year={2021},
    doi              = {10.1007/s00498-021-00277-z}
}

@article{Berger.2025,
    title = {Funnel control - A survey},
    author = {Thomas Berger and Achim Ilchmann and Eugene P. Ryan},
    journal = {Annual Reviews in Control},
    year = {2025},
    volume = 60,
    pages = {Article 101024},
    doi = {10.1016/j.arcontrol.2025.101024},
}

@Book{ShteEdwa14,
  author    = {Y. Shtessel and C. Edwards and L. Fridman and A. Levant},
  year      = {2014},
  title     = {Sliding Mode Control and Observation},
  publisher = {Birkh\"{a}user},
  address = {New York, NY},
  series    = {Control Engineering Series},
doi = {10.1007/978-0-8176-4893-0},
}

@INPROCEEDINGS{ByrnWill84,
   AUTHOR    = {Byrnes, Christopher I. and Willems, Jan C.},
   YEAR      = 1984,
   TITLE     = {Adaptive stabilization of multivariable linear systems},
   BOOKTITLE = {Proc. 23rd~{IEEE} Conf. Decis. Control},
   Pages     = {1574--1577},
doi = {10.1109/CDC.1984.272346},
}

@ARTICLE{KhalSabe87,
   AUTHOR    = {Khalil, Hassan and Saberi, Ali},
   YEAR      = 1987,
   TITLE     = {Adaptive stabilization of a class of nonlinear systems using high-gain feedback},
   JOURNAL   = {{IEEE} Trans. Autom. Control},
   Volume    = 32,
number = {11},
   Pages     = {1031--1035},
doi = {10.1109/TAC.1987.1104481},
}

@ARTICLE{Mare84,
   AUTHOR    = {Mareels, Iven},
   YEAR      = 1984,
   TITLE     = {A simple selftuning controller for stably invertible systems},
   JOURNAL   = {Syst. Control Lett.},
   Volume    = 4,
   Number    = 1,
   Pages     = {5--16},
doi = {10.1016/S0167-6911(84)80045-6},
}

@INCOLLECTION{Mors83,
   AUTHOR    = {Morse, A. Stephen},
   YEAR      = 1983,
   TITLE     = {Recent problems in parameter adaptive control},
   BOOKTITLE = {Outils et Mod\`{e}les Math\'{e}matiques pour l'Automatique, l'Analyse de Syst\`{e}mes et le Traitment du Signal},
   PUBLISHER = {\'Editions du Centre National de la Recherche Scientifique (CNRS)},
   Pages     = {733--740},
   Editor    = {Landau, I. D.},
   Address   = {Paris},
   Volume    = 3,
}

@ARTICLE{IlchWirt13,
   AUTHOR    = {Ilchmann, Achim and Wirth, Fabian},
   YEAR      = 2013,
   TITLE     = {On minimum phase},
   JOURNAL   = {at - Automatisierungstechnik},
   Volume    = 61,
number = 12,
   Pages     = {805--817},
doi = {10.1515/auto.2013.1002},
}

\end{document}